 \renewcommand{\geq}{\geqslant}
 \renewcommand{\leq}{\leqslant}
 \renewcommand{\ge}{\geqslant}
 \renewcommand{\le}{\leqslant}
\numberwithin{equation}{section}
\numberwithin{algorithm}{section}
\theoremstyle{plain}
\newtheorem{theorem}{Theorem}[section]
\newtheorem{proposition}[theorem]{Proposition}
\newtheorem{lemma}[theorem]{Lemma}
\newtheorem{corollary}[theorem]{Corollary}
\newtheorem{definition}[theorem]{Definition}
\theoremstyle{remark}
\let\H\undefined
\let\S\undefined
\def\<#1>{\langle #1\rangle}
\DeclareMathOperator{\tr}{Tr}
\DeclareMathOperator{\diag}{diag}
\DeclareMathOperator{\rank}{rank}
\DeclareMathOperator{\Gr}{Gr}
\DeclareMathOperator{\H}{H}
\DeclareMathOperator{\S}{S}
\newcommand{\bm}{\mathbf{m}}
\newcommand{\bn}{\mathbf{n}}
\newcommand{\C}{\mathbb{C}}
\newcommand{\F}{\mathbb{F}}
\newcommand{\N}{\mathbb{N}}
\newcommand{\Q}{\mathbb{Q}}
\newcommand{\R}{\mathbb{R}}
\newcommand{\Z}{\mathbb{Z}}
\newcommand{\e}{\mathbf{e}}
\newcommand{\f}{\mathbf{f}}
\newcommand{\x}{\mathbf{x}}
\newcommand{\X}{\mathbf{X}}
\newcommand{\y}{\mathbf{y}}
\newcommand{\z}{\mathbf{z}}
\newcommand{\w}{\mathbf{w}}
\newcommand{\W}{\mathbf{W}}
\newcommand{\cB}{\mathcal{B}}
\newcommand{\cC}{\mathcal{C}}
\newcommand{\cD}{\mathcal{D}}
\newcommand{\cF}{\mathcal{F}}
\newcommand{\cG}{\mathcal{G}}
\newcommand{\cH}{\mathcal{H}}
\newcommand{\cI}{\mathcal{I}}
\newcommand{\cM}{\mathcal{M}}
\newcommand{\cQ}{\mathcal{Q}}
\newcommand{\cR}{\mathcal{R}}
\newcommand{\cS}{\mathcal{S}}
\newcommand{\cT}{\mathcal{T}}
\newcommand{\cW}{\mathcal{W}}
\newcommand{\cX}{\mathcal{X}}
\newcommand{\cY}{\mathcal{Y}}
\newcommand{\cZ}{\mathcal{Z}}
\newcommand{\ba}{\mathbf{a}}
\newcommand{\bb}{\mathbf{b}}
\newcommand{\bc}{\mathbf{c}}
\newcommand{\be}{\mathbf{e}}
\newcommand{\bbf}{\mathbf{f}}
\newcommand{\bF}{\mathbf{F}}
\newcommand{\bg}{\mathbf{g}}
\newcommand{\bi}{\mathbf{i}}
\newcommand{\bj}{\mathbf{j}}
\newcommand{\bk}{\mathbf{k}}
\newcommand{\bl}{\mathbf{l}}
\newcommand{\bu}{\mathbf{u}}
\newcommand{\bv}{\mathbf{v}}
\newcommand{\bw}{\mathbf{w}}
\newcommand{\bx}{\mathbf{x}}
\newcommand{\bz}{\mathbf{z}}
\newcommand{\rA}{\mathrm{A}}
\newcommand{\rB}{\mathrm{B}}
\newcommand{\rC}{\mathrm{C}}
\newcommand{\rE}{\mathrm{E}}
\newcommand{\rG}{\mathrm{G}}
\newcommand{\rH}{\mathrm{H}}
\newcommand{\rK}{\mathrm{K}}
\newcommand{\rO}{\mathrm{O}}
\newcommand{\rP}{\mathrm{P}}
\newcommand{\rS}{\mathrm{S}}
\newcommand{\rT}{\mathrm{T}}
\newcommand{\rU}{\mathrm{U}}
\newcommand{\rV}{\mathrm{V}}
\newcommand{\rZ}{\mathrm{Z}}
\newcommand{\0}{\mathbf{0}}
\newcommand{\1}{\mathbf{1}}
\newcommand{\trans}{^\top}
\begin{document}
\title[Tensors,  entanglement,  separability, and complexity]{Tensors, entanglement,\\ separability, and their complexity}
\author{Shmuel Friedland}
\address{Department of Mathematics, Statistics and Computer Science, University of Illinois, Chicago, IL 60607-7045, USA}

\subjclass[2010]{
05C50,15A69,15A75,68Q04,68Q17,68W25,81P16,81P40,90C05,90C08}

 \date{November 1, 2025}
\begin{abstract}\bf{
One of the most challenging problems in quantum physics is to quantify the entanglement  of $d$-partite states and their separability.   We show here that  these problems are best addressed using tensors. 
The geometric measure of entanglement of a pure state is one of most natural ways to quantify the entanglement, which is simply related to the spectral norm of a tensor state.   On the other hand, the logarithm of the nuclear norm of the state and density tensors can be considered as its ``energy''. We first show that the most geometric measure entangled $d$-partite state
has the minimum spectral norm and  maximum nuclear norm.   Second,  we introduce the notion of Hermitian and density tensors,  and the subspaces of bi-symmetric and bi-skew-symmetric Hermitian tensors, which correspond to Bosons and Fermions respectively.  We show that separable density tensors,  and strongly separable bi-symmetric density tensors are characterized by the value (equal to one) of their corresponding nuclear norms.   In general,  these characterizations are NP-hard to verify.  Third, we show that the above quantities are computed in polynomial time when we restrict our attention to Bosons:  symmetric $d$-qubits, or more generally to symmetric $d$-qunits in $\C^n$,  and the corresponding bi-symmetric Hermtian density tensors, for a fixed value of $n$.
}
\end{abstract}
\maketitle
\tableofcontents
\section{Introduction}\label{sec:intro}
\subsection{Statement of the main results}\label{subsec:majrt}
Let $\cH$ be an $n$-dimensional Hilbert space.  Denote $[n]:=\{1,\ldots,n\}$. We identify $\cH$ with $\C^n$ with the inner product $\langle \x,\y\rangle=\x^\dagger \y$,  where $\x^\dagger=\bar \x^\top$.
A $d$-partite space is given by $\C^{\bn}:=\otimes_{j=1}^d \C^{n_j}, \bn=(n_1,\ldots,n_d)$. 
A $d$-partite state is a tensor $\cT=[t_{i_1,\ldots,i_d}]\in\C^{\bn}, \bi=(i_1,\ldots,i_d)\in[\bn]=[n_1]\times\cdots\times[n_d]$, where $\|\cT\|=\sqrt{\sum_{\bi\in [\bn]}|t_{\bi}|^2}=1$.  A bi-partite state is given by a matrix $T=[t_{i,j}]\in \C^{n_1\times n_2}$.
Denote by $\Pi_{\bn}=\{\otimes_{j=1}^d\x_j: \x_j\in\C^{n_j}, \|\x_j\|=1, j\in[d]\}$, the product states in $\C^{\bn}$.  
The spectral norm of $\cT\in \C^{\bn}$ is defined by $\|\cT\|_{\infty}=\max_{\cX\in\Pi_{\bn}}|\langle \cX,\cT\rangle|$.  For $d=2$, $\|T\|_{\infty}$ is the spectral norm of $T$, which is equal to $\sigma_1(T)$ the first Schmidt value: the first singular value of $T$.  The geometric measure of entanglement (GME) is the distance of a state $\cT$ to the prodiuct states $\Pi_{\bn}$ and is equal to $\sqrt{2(1-\|\cT\|_{\infty}})$.  Thus, the most entangled states with respect to GME are correspond to the minimum $\alpha_{\infty,\bn}:=\min_{\cT\in \C^{\bn}, \|\cT\|=1} \|\cT\|_{\infty}$.
For bi-partite states the most entangled states are well known.   For $n_1=n_2=n$ they are $T=\frac{1}{\sqrt{n}} U$, where $U$ is a unitary matrix.  For $d\ge 3$ we know only the following cases of most GME $d$-qubits ($n=2$): $d=3$ and $d=4$.  For $d=3$ this is the $W$-state \eqref{defWstate} \cite{TWP09}.  In subsection \ref{subsec:quadrit} we give a complete proof of the result in \cite{DFLW17} that the Higuchi-Sudbery $4$-qubit \cite{HS00} is the most GME $4$-quibit.   Let $n^{\times d}=n(1,\ldots,1)\in\N^d$. For $d$-qubit for $d\gg 1$ we show in subsection \ref{subsec:bigd} that $-2\log_2\alpha_{\infty,2^{\times d}}=d+O(\log_2d)$.  

Next we consider the GME for Bosons, which correspond to symmetric tensors states in the subspace of symmetric tensors $\rS^d\C^n\subset \C^{n^{\times d}}$.   Most GME Bosons correspond to the minimum $\alpha_{\infty, n^{\times d},s}=\min_{\cX\in\rS^d\C^n, \|\cX\|=1} \|\cX\|_{\infty}$.
Note that  $\dim\rS^d\C^n={n+d-1 \choose d}$  which is usually much smaller than $\dim \C^{n^{\times d}}=n^d$.   Fix $n\ge 2$ and consider $d\ge 1$.  Observe that $\dim\rS^d\C^n=O(d^{n-1})$.    We show that for $d\gg 1$ we have $-2\log_2\alpha_{\infty,n^{\times d},s}=O(\log_2{n+d-1\choose d})$.
We also consider the Fermions, which correspond to the skew-symmetric states $\rA^d\C^n$, which live in the exterior algebra  of wedge products.

Another measure of entanglement of states is the nuclear norm, which is the  dual norm of the spectral norm, denoted as $\|\cdot\|_{1}$ on $\C^{\bn}$. For $d=2$ $\|T\|_1$ is the sum of Schmidt's numbers: the sum of singular values of the matrix $T$.
It can be characterized as a minimum problem \eqref{nucchar}.  The quantity $2\log \|\cT\|_1$ can be viewed as an energy of to create a state $\cT$ from the product states.
We show a known inequality $2\log\|\cT\|_1\ge -2\log\|\cT\|_{\infty}$ for a state $\cT\in\C^{\bn}$.  The most entangled state with respect to GME are also most entangled state with respect to the nuclear norm. 

Let $\rH_n\subset \C^{n\times n}$ be the real space of Hermitian matrices.  Denote by $\rH_{n,+}\supset \rH_{n,+,1}$ the cone of positive semi-definite matrices, and the convex set of positive semi-definite matrices of trace one.
Recall that the notion of a mixed state on $\C^n$ corresponds to a density matrix $\rho\in\rH_{n,+,1}$.  A pure density matrix is $\x\x^{\dagger}, \|\x\|=1$, and a mixed state is a convex combination of pure density states.
To generalize  this notion to $d$-partite mixed states, called here \emph{density tensors},  we introduce the notion of Hermitian tensors, denoted as $\rH_{\bn}\subset \C^{\bn\times \bn}:=\C^{\bn}\otimes \C^{\bn}$ if for $\cT=[t_{\bi,\bj}], \bi,\bj\in[\bn]$ the following condition hold: $t_{\bi,\bj}=\overline{t_{\bj,\bi}}$ for $\bi,\bj\in[\bn]$.
Denote by $\rH_{\bn,+}\supset \rH_{\bn,+,1}$ the cone of positive semi-definite tensors, viewed as matrices,  and the cone of positive semi-definite tensors with trace one.  A pure density tensor is of the form $\cX\otimes \overline{\cX}, \cX\in \C^{\bn}, \|\cX\|=1$.   The ``right'' norm on $\rH_{\bn}$ is $\|\cB\|_{spec}:=\max_{\cX\in \Pi_{\bn}}|\langle \cX\otimes\overline{\cX},\cB\rangle|$.   Note that $\|\cB\|_{spec}\le \|\cB\|_{\infty}$, and equality holds if $\pm\cB\in\rH_{\bn,+}$.   A density tensor is called \emph{separable} if it is a convex combination of pure density tensors corresponding to product states $\cX\otimes \overline{\cX}, \cX\in \Pi_{\bn}$, a generalization of \cite{Per96}.  Denote by $\|\cdot\|_{nuc}$ the dual norm of $\|\cdot\|_{spec}$ on $\rH_{\bn}$.  A generalization of the results \cite{Rud00,Per04} states that a density tensor $\cR$ is separable if and only $\|\cR\|_{nuc}=1$.  

Denote by $\rB\rH_{n^{\times d}},\rA\rH_{n^{\times d}}\subset \rH_{n^{\times d}}$ the subspace of \emph{bi-symmetric} and \emph{bi-skew-symmetric} Hermitian tensors. respectively.  Each of this subspace is a linear span of pure density tensors $\cX\otimes\overline{\cX}$, where $\cX$ is either in $\rS^d\C^n$ or $\rA^d\C^n$ respectively.   The density tensors in $\rB\rH_{n^{\times d},+,1}$ and  $\rA\rH_{n^{\times d},+,1}$ correspond to the Bosons and Fermions density tensors respectively.   The spectral norm on $\rB\rH_{n^{\times d}}$ is $\|\cB\|_{bspec}:=\max_{\x\in\C^n,\|\x\|=1} |\langle\x^{\otimes d}\otimes \bar\x^{\otimes d},\cB\rangle|$, while on $\rA\rH_{n^{\times d}}$ one takes the restriction of the spectral norm on $\rH_{n^{\times d}}$.  Denote by $\|\cdot\|_{bnuc}$ and $\|\cdot\|_{anuc}$ the dual norms the $\|\cdot\|_{bspec}$ and $\|\cdot\|_{nuc}$ on $\rB\rH_{n^{\times d}}$ and $\rA\rH_{n^{\times d}}$ respectively.   A density tensor in $\rB\rH_{n^{\times d},+,1}$ is called strongly separable if it is a convex combination of pure density tensors of the form $\x^{\otimes d}\otimes \bar\x^{\otimes d}$, where $\x\in\C^n, \|\x\|=1$.
We show that a bi-symmetric density tensor $\cR$ is separable if and only if it is strongly separable,  i.e.,  $\|\cR\|_{bnuc}=1$.   A bi-skew-symmetric density tensor is called strongly separable if it is a convex combination of $(\x_1\wedge\cdots\wedge\x_d)\otimes (\bar\x_1\wedge\cdots\wedge\bar\x_d)$, where $\x_1,\ldots,\x_d$ is an orthonormal set in $\C^n$.  Then $\cR\in \rA\rH_{n^{\times d},+,1}$ is strongly separable if and only if $\|\cR\|_{anuc}=d!$.

We discuss the computability of tensor norms and separability of given tensors.
It is well known that most problems in tensors are NP-hard.  (See subsection \ref{subsec:NP} for a definition of NP-hardness.)  Similarly,  a bipartite separability for $\cR\in \rH_{n^{\times 2},+,1}$ is NP-hard.  The main result of this paper is that for fixed $n\ge 2$ the computations of the spectral and nuclear norms on $\rS^d\C^n, \rS^d\C^n\otimes\rS^d\C^n$, and separability in $\rB\rH_{n^{\times d},+,1}$ are poly-time computable.
\subsection{Related works}\label{subsec:prior}
The notion of entanglement goes back to the famous papers of Einstein-Podolsky-Rosen \cite{EPR35} and Schr\"{o}dinger \cite{Sch35,Sch36}.   
Maximally entangled bi-partite qubits appear in Bell's inequalities as Bell states \cite{Bel64}.
Shimony introduced the notion of the geometric measure of entanglement in \cite{Shi95}.    We now mention those papers which are related to the results of our paper.  Barnum-Linden \cite{BL01} discuss multi-partite case, but not the GME.
They discuss Slater rank for Bosons (symmetric tensors), and Fermions (skew-symmetric tensors),  that we briefly touch upon in subsection \ref{subsec:ferm}.
Wei-Goldbart \cite{WG03} discuss GME for multi-partite state without introducing the spectral norm.  The also discuss briefly the GME of special Bosons.  For density tensor entanglement they use the \emph{convex roof} that we don't discuss.  Tamaryan-Wei-Park \cite{TWP09} give the proof that the W-state is the most entangled $3$-qubit state with respect to the GME,  see also \cite{CXZ10}.   H\"ubener et al \cite{Hubetall09} disucuss the GME for Bosons.  They rediscover Banach's theorem \cite{Ban38} that the closest $d$-partite product state to a given Boson can be chosen to be a Boson $\x^{\otimes d}, \|\x\|=1$.   Aulbach-Markham-Murao \cite{AMM10} give examples of $d$-symmetric qubits for $d=4,\ldots,12$, which they assume to have the maximal GME.  Their examples are motivated by the Majorana model.  In \cite{FW20} we verified the numerical values of these examples using our software, and we could not find symmetric qubits with higher GME.  Friedland-Wang \cite{FW20}  describe a way to compute the GME of Bosons using solutions of polynomial equations.
Gross-Flammia-Eisert \cite{GFE09} show that most of $d$-qubits are maximally entangled for $d\gg 1$,  and are not useful as computational resources.    Friedland-Kemp \cite{FK18} showed that most of $d$-Bosons in a fixed $\C^n$ are maximally entangled for $d\gg 1$.  

Werner \cite{Wer89} showed that a bi-partite separable density tensor can be modeled by a hidden-variable theory.   Werner density tensors are separable for $p\ge \frac{1}{2}$  and entangled for $p< \frac{1}{2}$.   Peres \cite{Per96} introduced the notion of  the partial positivity transport (PPT) for separability.  
All entangled Werner states violate the PPT separability criterion.  M.-P.-R.  Horodecki \cite{Hor96} showed that for $2$-qubits and $2-3$ bi-partite qubits the density tensor is separable if and only if the PPT property holds.   P. Horodecki \cite{Hor97} showed that there exists non-separable bipartite 3-qubits which have PPT property.
Rudolph \cite{Rud00} showed that a bi-partite tensor is separable if and only if its nuclear norm is one.    This result was extended to $d$-partite density tensors for $d\ge 3$ by Perez-Garcia \cite{Per04}.    Gurvits-Barnum \cite{GB02} found the maximum ball of separable density tensors centered at  the maximally mixed bipartite density tensor, which is a scalar times the identity operator.  A remarkable result of Gurvits \cite{Gur03} claims that it is NP-hard to determine if a given density tensor is within $\varepsilon>0$ distance to the set of separable bi-partite tensors.   See also S. Gharibian \cite{Gha10}. Chen-Chu-Qian-Shen, \cite {CCQS} discuss density tensors which are symmeric, (they call such tensors supersymmetric).  This tensors are real valued.  They show that a $2d$-symmetric density tensor is separable if and only if it strongly separable: it is a convex combinations of $\x^{\otimes (2d)}, \x\in\R^n, \|\x\|=1$. Recent paper \cite{DIG25} discusses chiral symmetries and multiparticle entanglement.
\subsection{A short summary of the paper}\label{subsec:sum}
Section \ref{sec:tensors} discusses some results in tensors that we use in this paper.
We consider our tensors over a field $\F$ either of real numbers $\R$ or the complex numbers $\C$.
Subsection \ref{subsec:basnot} discusses some basic notions in tensors: rank of tensor,
product states,  spectral and nuclear norms, and some of their properties.  
Subsection \ref{subsec:symtens} discusses $d$-symmetric tensors.  We recall the one-to-one correspondence between the symmetric polynomials and homogeneous polynomials of degree $d$.  Next we recall the nice and useful Banach characterization of the spectral norm which implies an analogous characterization of the nuclear norm.
Subsection \ref{subsec:ferm} discusses \emph{Fermions}, which are \emph{skew-symmetric} tensors of norm one.   Skew-symmetric tensors are elements of the Exterior (Grassmann) algebra.  It turns out that one has an analog to Banach's theorem for skew-symmetric tensors.   Subsection \ref{subsec:nucrank} discusses briefly the notion of the nuclear rank.  Nuclear rank, unlike the tensor rank, is lower semicontinuous.

Section \ref{sec:ent} discusses two major notions of entanglements: the geometric measure of entanglement (GME) and the nuclear norm of tensor states. Subsection \ref{subsec:2ment} deals with two measures of entanglement of tensor states expressed in terms of the spectral and nuclear norms.   The maximum entangled states are the same with respect to theses two measurements.
Subsection \ref{subsec:krmes} deals with the minimum value of the tensor norm of the state tensors, and gives some lower bounds.  Subsection \ref{subsec:quadrit} reproduces the unpublished result of \cite{DFLW17}, with some more details, that the Higuchi-Sudbery $4$-qubit \cite{HS00} is the most geometric entangled 4-qubit.
Subsection \ref{subsec:bigd} discusses entanglement of  $d$-qubits and $d$-Boson qubits for big $d$.  It is known that most of such quibits are almost maximally entangled.

Section \ref{sec:sep} discusses the notion of density tensors and their separability.
In subsection \ref{subsec:dentensep} we first define the notion of Hermitian tensors 
$\H_{\bn}\subset \otimes^2\C^{\bn},$ which can be viewed as a linear transformation on $\C^{\bn}$.  A Hermitian tensor has a spectral decomposition 
$\cB=\sum_{k=1}^{N(\bn)}\lambda_k(\cB)\cX_k\otimes\overline{\cX_k}, \cX_k\in\C^{\bn},\langle \cX_k,\cX_l\rangle=\delta_{kl}, k,l\in[N(\bn)]$.
On $\rH_{\bn}$ we define 
a  spectral norm $\|\cB\|_{spec}=\max_{\cX\in\C^{\bn},\|\cX\|=1}$ $|\langle \cX, \cB\cX\rangle|$,  which is majorized  by $\|\cB\|_\infty$.   The dual norm to $\|\cdot\|_{spec}$ on $\rH_{\bn}$  is $\|\cdot\|_{nuc}$, which majorizes the norm $\|\cdot\|_{1}$.
Theorem \ref{charsep} states that the set of separable states is $\textrm{Sep}_{\bn}$ is characterised by $\|\cR\|_{nuc}=1$.  Furthermore,  $\textrm{Sep}_{\bn}$ is a maximal dimensional face of the unit ball centered at the origin with respect to $\|\cdot\|_{nuc}$.  
We also characterize the maximal entangled density tensors that maximize the nuclear norm.   In subsection \ref{subsec:bisymt} we discuss the bi-symmetric Hermitian tensors, denoted as $\rB\rH_{n^{\times d}}$.  We show that they have the following spectral decomposition: 
$\cB=\sum_{k=1}^{{n+d-1\choose d}}\lambda_k(\cB)\cX_k\otimes\overline{\cX_k}, \cX_k\in\rS^d\C^n,\langle \cX_k,\cX_l\rangle=\delta_{kl}, k,l\in[{n+d-1\choose d}]$.
On $\rB\rH_{n^{\times d}}$ we define another spectral norm $\|\cB\|_{bspec}=\max_{\x\in\C^n,\|\x\|=1} |\langle \x^{\otimes d}\otimes \bar\x^{\otimes d},\cB\rangle|$,
which is majorized $\|\cB\|_{spec}$.   The dual norm of $\|\cdot\|_{bspec}$ is denoted by $\|\cdot\|_{bnuc}$, which majorizes $\|\cdot\|_{nuc}$.   A bi-symmetric Hermitian density tensor is called strongly separable if it is a convex combination of $\x^{\otimes d}\otimes \bar\x^{\otimes d}$ for $\x\in\C^n, \|\x\|=1$.  Lemma \ref{bisymsep} shows that a bi-symmetric density tensor $\cR$ is separable if and only if it is strongly separable.  Denote by $\textrm{Seps}_{n^{\times d}}$ the set of strongly separable density tensors.
Theorem \ref{charseps} is an analog Theorem \ref{charsep}.  In particular, it shows that a bi-symmetric Hermitian density tensor $\cR$ is strongly separable if and only if $\|\cR\|_{bnuc}=1$.   Subsection \ref{subsec;symher} discusses a subspace of bi-symmetric Hermitian tensors, which are symmetric tensors: $\rS\rH_{n^{\times d}}:=\rS^{2d}\R^n\cap\rB\rH_{n^{\times d}}$.  Symmetric tensors are called supersymmetric in \cite{CCQS}.   Theorem \ref{charsepsym} is an analog of Theorem \ref{charseps}.  In particular, we show that a symmetric density tensor $\cR$ is separable as real density tensor if and only if $\|\cR\|_{1,\R}=1$.  This result gives another proof of Theorem  \cite[Theorem 14]{CCQS}:  Every real separable symmetric density tensor is a convex combination of symmetric rank one density tensors $\x^{\otimes (2d)}, \x\in\R^n,\|\x\|=1$.   Subsection \ref{subsec:bisksymt} deals with bi-skew-symmetric Hermitian tensors, denoted by $\rA\rH_{n^{\times d}}$.  These tensors have a spectral decomposition 
$\cB=\sum_{k=1}^{{n\choose d}}\lambda_k(\cB)\cX_k\otimes\overline{\cX_k}, \cX_k\in\rA^d\C^n,\langle \cX_k,\cX_l\rangle=\delta_{kl}, k,l\in[{n\choose d}]$.
The spectral norm on $\rA\rH_{n^{\times d}}$ is the restriction of the norm $\|\cdot\|_{\infty}$.   A basic separable $\rA\rH_{n^{\times d}}$ density tensor of Slater rank one is $(\x_1\wedge\cdots\wedge\x_d)\otimes(\bar\x_1\wedge\cdots\wedge\bar\x_d)$, where $\x_1,\ldots,\x_d$ is an orthonormal set of vectors in $\C^n$.   
A bi-skew-density tensor is called separable if it is a convex combinations of basic separable tensors in $\rA\rH_{n^{\times d}}$.
Denote by $\|\cdot\|_{anuc}$ the dual norm to $\|\cdot\|_\infty$ on $\rA\rH_{n^{\times d}}$. we show
$\|\cB\|_{nuc}\ge\|\cB\|_{anuc}\ge \|\cB\|_{1,\C}$ for bi-skew-Hermitian tensors.
Theorem\ref{charsepa} is analog of Theorem \ref{charseps}.   In particular we show that 
a bi-skew-symmetric density tensor is separable if and only if $\|\cR\|_{anuc}=d!$.
The main result of subsection \ref{subsec:semal} is the claim that the 
separability sets $\mathrm{Sep}_{\bn}$,  $\mathrm{Seps}_{n^{\times d}}$ and their restrictions to $\rS^{2d}\R^n_{+,1}$ are semi-algebraic sets.  

Section \ref{sec:comp} discusses the complexity results on tensors, entanglement and separablity.  Subsection \ref{subsec:NP} explains briefly the notions of 
poly-time,  NP and NP-hard problems.  The main results of these section is that the following norms are NP-hard to compute: $\|\cdot\|_{bspec}$ and $\|\cdot\|_{bnuc}\|$ for tensors in $\rB\rH_{n^{\times 2},+,1}$ and  $\rB\rH_{n^{\times 2}}$ respectively;
the norms $\|\cdot\|_{\infty,\C}$ and $\|\cdot\|_{1,\C}$ for  tensors in $\rH_{n^{\times 2m},+,1}$ and  $\rH_{n^{\times 2m}}$for $m\in\N$ respectively;
the norms $\|\cdot\|_{\infty,\R}$ and $\|\cdot\|_{1,\R}$ for  tensors in $\Re\rH_{n^{\times 2m},+,1}$ and  $\Re\rH_{n^{\times 2m}}$for $m\in\N$ respectively;
the norms $\|\cdot\|_{spec}$ and $\|\cdot\|_{nuc}$ for  tensors in $\rH_{n^{\times 2m},+,1}$ and  $\rH_{n^{\times 2m}}$for $m\in\N$ respectively;
the norms $\|\cdot\|_{\infty,\F}$ and  $\|\cdot\|_{1,\F}$ are NP-hard to compute on $\F^{n^{\times (4m)}}$ for $m\in\N$.  Subsection \ref{subsec:dpolt} shows how to code symmetric and bi-symmetric tensors as polynomials.   Subsection  \ref{subsec:aprsn} shows how to approximate spectral and nuclear norms of symmetric and bi-symmetric  tensors using the results of subsection \ref{subsec:dpolt}.  Subsection \ref{subsec:prfpt} proves the most important result of this paper.  The following norms are poly-time computable in $d$ for a fixed $n\ge 2$:
the norms $\|\cdot\|_{\infty,\F}$ and $\|\cdot\|_{1,\F}$ on $\rS^d\F^n$,
the norms  $\|\cdot\|_{\infty,\F}$ and $\|\cdot\|_{1,\F}$ for $\rS^d\F^n\otimes\rS^d\F^n$.
 In particular, the separability of bi-symmetric Hermitian density tensors, and separability of symmetric density tensors are poly-time computable.
 
 Section \ref{sec:oprb} states a small number of open problems.
 Appendix \ref{sec:eucgen} compare a general norm $\nu$ on $\R^n$ of $\C^n$ to the Euclidean norm.  The important result is that the maximal and the minimum distortions 
 of $\nu$ nad its dual $\nu^*$ are reciprocal.   In subsection \ref{subsec:nurank} we dicuss the notion of $\nu$$\rank$.
\section{Tensors}\label{sec:tensors}
\subsection{Basic notions}\label{subsec:basnot}
Let 
\begin{equation}\label{basnot}
\bn=(n_1,\ldots,n_d)\in\N^d, \quad d\in\N, \quad[\bn]=[n_1]\times\cdots\times [n_d], \quad\1_d=(1,\ldots,1)\in\N^d. 
\end{equation}
Assume that $\F$ is a field.
The space of $d$-mode tensors is the tensor product $\otimes_{j=1}^d \F^{n_j}$ denoted as $\F^{\bn}=\F^{n_1\times\cdots\times n_d}$, where $\bn=(n_1,\ldots,n_d)\in\N^d, d\in\N$. The cases: $d=1$ corresponds to vectors $\x=(x_1,\ldots,x_n)^\top\in \F^n$; $d=2$ corresponds to matrices $X=[x_{i,j}]\in \F^{m\times n}$; $d\ge 3$ corresponds to $\cX=[x_{i_1,\ldots,i_d}]\in \F^{\bn}$.  For $d\ge 2$ a tensor $\cX\in \F^{\bn}$ is a rank-one tensor if $\cX=\otimes_{j=1}^d \x_j=[x_{i_1,1}\cdots x_{i_d,d}]\ne 0$.   Denote by $\Sigma_{\bn,\F}\subset \F^{\bn}$ the Segre variety of rank-one tensors \cite{Lan12}.  The rank of $\cX\ne 0$ is the minimal $r$ such that
$\cX$ is is a sum of $r$ rank-one tensors.  It is shown in \cite{Has90} that determining the rank of $\cX$ for $d\ge 3$ is an NP-complete problem for a finite field $\F$, and an NP-hard problem for $\F=\R,\C$.

In what follows we assume that $\F=\R,\C$.   Whenever the field $\F$ is omitted, we assume that $\F=\C$.   A standard inner product on $\F^{\bn}$ is 
$\langle \cX,\cY\rangle=\sum_{(i_1,\ldots,i_s)\in[\bn]}\overline{x_{i_1,\ldots,i_d}} y_{i_1,\ldots,i_d}$, and the Euclidean norm $\|\cX\|=\sqrt{\langle \cX,\cX\rangle}$.
Clearly, for $\cX=\otimes_{j=1}^d\x_j$ we have the equality  $\|\cX\|=\prod_{i=1}^n\|\x_i\|$.  Denote by $\Pi_{\bn,\F}\subset \Sigma_{\bn,\F}$ the set of rank one tensors whose norm is one:
\begin{equation}\label{defPibn}
\Pi_{\bn,\F}=\{\otimes_{j=1}^d, \x_j\in\F^{n_j}, \|\x_j\|=1, j\in[d]\},
\end{equation}
which is called the set of product states.

The following norms for $\cX\in\cH$ are called the spectral and nuclear norms, respectively \cite{FL18}:
\begin{equation}\label{defspecnuc}
\begin{aligned}
&\|\cX\|_{\infty,\F}=\max_{\cY\in \Pi_{\bn,\F}} \Re\langle\cY,\cX\rangle=\max_{\cY\in \Pi_{\bn,\F}} |\langle\cY,\cX\rangle|,\\
&\|\cX\|_{1,\F}=\inf_{\cX=\sum_{i=1}^{r} \cX_i, \cX_i\in  \Sigma_{\bn,\F},r\in\N} \sum_{i=1}^{r} \|\cX_i\|\\
\end{aligned}
\end{equation} 
The spectral norm was probably considered first by Banach \cite{Ban38},  and the nuclear norm by Grothendick \cite{Gro55},  see \cite{LC14}.   Since we are dealing with the finite dimensional case one can put an explicit  upper bound on $r$.   This upper bound and other results are stated in the following theorem:
\begin{theorem}\label{1inftynormlem}
Let $d\ge 2$ and $2\1_d\le \bn\in\N^d$.   Assume that $0\ne\cX\in\F^{\bn}$.  Then
\begin{enumerate}
\item The following characterization holds:
\begin{equation}\label{nucchar}
\begin{aligned}
&\|\cX\|_{1,\F}=\min_{\cX=\sum_{i=1}^{r} \cX_i, \cX_i\in  \Sigma_{\bn}, r\le r(\bn,\F)} \sum_{i=1}^{r} \|\cX_i\|  \\
&r(\bn,\R)=N(\bn)+1, \quad r(\bn,\C)=2N(\bn)+1.
\end{aligned}
\end{equation}
\item The following sharp inequalities hold
\begin{equation}\label{infty21in}
\|\cX\|_{\infty,\F}\le \|\cX\|\le \|\cX\|_1.
\end{equation}
Equality in one the inequalities hold if and only if $\cX$ is a rank-one matrix.
\item Let $\alpha_{s,\bn,\F}\le \beta_{s,\bn,\F}$ be the constants defined in \eqref{albedef} for the norms $\nu_s=\|\cdot\|_{s,\F}$ for $s=\infty,1$:
\begin{equation}\label{defalbeinf1}
\begin{aligned}
&\alpha_{\infty,\bn,\F}=\min_{\cX\in\F^{\bn},\|\cX\|=1} \|\cX\|_{\infty,\F}, \quad \beta_{\infty,\bn,\F}=\max_{\cX\in\F^{\bn},\|\cX\|=1} \|\cX\|_{\infty,\F},\\
&\alpha_{1,\bn,\F}=\min_{\cX\in\F^{\bn},\|\cX\|=1} \|\cX\|_{1,\F}, \quad \beta_{\infty,\bn\F}=\max_{\cX\in\F^{\bn},\|\cX\|=1} \|\cX\|_{1,\F}.
\end{aligned}
\end{equation}
Then 
\begin{equation}\label{alphbetsineq}
\begin{aligned}
&\alpha_{\infty,\bn,\F}<1=\beta_{\infty,\bn,\F},\\
&\alpha_{1,\bn,\F}=1<\beta_{1,\bn,\F}=\frac{1}{\alpha_{\infty,\bn,\F}}.
\end{aligned}
\end{equation}
Furthermore for $\|\cX\|=1$ one has the implications 
\begin{equation}\label{alphbetimpl}
\alpha_{\infty,\bn,\F}=\|\cX\|_{\infty,\bn,\F}\iff \beta_{1,\bn,\F}=\|\cX\|_{1,\bn,\F}.
\end{equation}
\item Let $e\in \N,  \bm\in\N^{e}$.
Assume that $0\ne \cY\in \F^{\bm}$.  Then
\begin{equation}\label{tenprodeq}
\|\cX\otimes\cY\|_{\infty,\F}=\|\cX\|_{\infty,\F}\|\cY\|_{\infty,\F}, \quad \|\cX\otimes\cY\|_{1,\F}=\|\cX\|_{1,\F}\|\cY\|_{1,\F}.
\end{equation}
\end{enumerate}
\end{theorem}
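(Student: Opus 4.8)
The plan is to treat the four items in turn, relying on Carath\'eodory's theorem for (1), on Cauchy--Schwarz together with the duality between $\|\cdot\|_{\infty,\F}$ and $\|\cdot\|_{1,\F}$ for (2)--(3), and on multiplicativity of the spectral norm to bootstrap the nuclear case in (4). For (1) I would first note that $\Pi_{\bn,\F}$ is compact, being a continuous image of a product of unit spheres, so its convex hull is compact and equals the unit ball of $\|\cdot\|_{1,\F}$; in particular the infimum in \eqref{defspecnuc} is attained. Writing a tensor of unit nuclear norm as a convex combination of elements of $\Pi_{\bn,\F}$ and applying Carath\'eodory's theorem in the ambient real vector space then bounds the number of terms: $\dim_\R\R^{\bn}=N(\bn)$ gives $r(\bn,\R)=N(\bn)+1$, while $\dim_\R\C^{\bn}=2N(\bn)$ gives $r(\bn,\C)=2N(\bn)+1$. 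Rescaling converts this convex combination back into a rank-one decomposition whose term-norms sum to $\|\cX\|_{1,\F}$.

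For (2) the bound $\|\cX\|_{\infty,\F}\le\|\cX\|$ is immediate from Cauchy--Schwarz, since every $\cY\in\Pi_{\bn,\F}$ has $\|\cY\|=1$. For $\|\cX\|\le\|\cX\|_{1,\F}$ I would expand $\|\cX\|^2=\sum_i\langle\cX_i,\cX\rangle$ over an optimal rank-one decomposition and estimate each term by $\|\cX_i\|\,\|\cX\|_{\infty,\F}\le\|\cX_i\|\,\|\cX\|$, which gives $\|\cX\|^2\le\|\cX\|_{1,\F}\|\cX\|$. In both cases equality forces equality in Cauchy--Schwarz, hence proportionality of $\cX$ to a single element of $\Pi_{\bn,\F}$, i.e. $\cX$ rank one; the converse is a one-line computation.

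For (3), part (2) already yields $\beta_{\infty,\bn,\F}=1$ and $\alpha_{1,\bn,\F}=1$, both attained at a unit rank-one tensor. That $\alpha_{\infty,\bn,\F}<1$ follows by exhibiting a genuinely entangled unit tensor, e.g. $\tfrac{1}{\sqrt2}(\e_1\otimes\e_1+\e_2\otimes\e_2)$ in two modes padded by unit rank-one factors in the remaining modes (multiplicativity from (4), which does not depend on (3)); being non-rank-one it has spectral norm strictly below its Euclidean norm. The identity $\beta_{1,\bn,\F}=1/\alpha_{\infty,\bn,\F}$ I would get from the reciprocal-distortion statement of Appendix \ref{sec:eucgen}, or directly: using $\|\cX\|_{1,\F}=\max_{\|\cZ\|_{\infty,\F}\le1}\Re\langle\cZ,\cX\rangle$ and interchanging the two maxima gives $\beta_{1,\bn,\F}=\max_{\|\cZ\|_{\infty,\F}\le1}\|\cZ\|=1/\alpha_{\infty,\bn,\F}$. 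The implication \eqref{alphbetimpl} is the delicate point. The forward direction uses the product inequality $\|\cX\|_{\infty,\F}\|\cX\|_{1,\F}\ge\|\cX\|^2$ (proved exactly as the second inequality of (2)) to push $\|\cX\|_{1,\F}$ up to $\beta_{1,\bn,\F}$. For the converse I would take a dual certificate $\cZ$ with $\|\cZ\|_{\infty,\F}\le1$ and $\langle\cZ,\cX\rangle=\|\cX\|_{1,\F}=1/\alpha_{\infty,\bn,\F}$, and chase the equality cases in the chain $\langle\cZ,\cX\rangle\le\|\cZ\|\le\|\cZ\|_{\infty,\F}/\alpha_{\infty,\bn,\F}\le1/\alpha_{\infty,\bn,\F}$; these force $\cZ$ to be a positive multiple of $\cX$ realizing the spectral minimum, hence $\|\cX\|_{\infty,\F}=\alpha_{\infty,\bn,\F}$.

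For (4) the spectral identity is direct: a product state of $\cX\otimes\cY$ factors as $\cU\otimes\cV$ with $\cU\in\Pi_{\bn,\F}$, $\cV\in\Pi_{\bm,\F}$, and $\langle\cU\otimes\cV,\cX\otimes\cY\rangle=\langle\cU,\cX\rangle\langle\cV,\cY\rangle$, so the maximum factorizes. For the nuclear identity the inequality $\le$ comes from tensoring optimal rank-one decompositions of $\cX$ and $\cY$; the reverse inequality is where duality re-enters, by choosing dual certificates $\cU_0,\cV_0$ whose inner products equal $\|\cX\|_{1,\F}$ and $\|\cY\|_{1,\F}$, so that $\cU_0\otimes\cV_0$ has $\|\cU_0\otimes\cV_0\|_{\infty,\F}\le1$ (by the spectral multiplicativity just proved) and pairs with $\cX\otimes\cY$ to give $\|\cX\|_{1,\F}\|\cY\|_{1,\F}$. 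I expect the main obstacles to be the converse half of \eqref{alphbetimpl} and the reciprocal-distortion identity, both of which hinge on carefully extracting the Cauchy--Schwarz equality conditions from a dual certificate.
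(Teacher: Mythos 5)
Your proposal is correct, and while items (1) and (4) follow essentially the paper's own route (extreme points of the nuclear ball plus Carath\'eodory; tensored decompositions for $\le$ and rank-one dual certificates $\cZ_1\otimes\cZ_2$ for $\ge$), your treatment of (2)--(3) is genuinely different. The paper proves $\|\cX\|\le\|\cX\|_{1,\F}$ by the plain triangle inequality and analyzes equality through the Euclidean triangle-inequality equality case, then outsources both $\beta_{1,\bn,\F}=1/\alpha_{\infty,\bn,\F}$ and the implication \eqref{alphbetimpl} to the appendix (Theorem \ref{albetm} and \eqref{maxmineq}, proved there by supporting-hyperplane arguments for a general dual pair $\nu,\nu^*$). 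You instead expand $\|\cX\|^2=\Re\sum_i\langle\cX_i,\cX\rangle$ over an optimal decomposition to get the stronger product inequality $\|\cX\|^2\le\|\cX\|_{\infty,\F}\|\cX\|_{1,\F}$ (this is \eqref{dualem1}, here obtained as a by-product rather than invoked), which simultaneously yields the second inequality of \eqref{infty21in}, reduces its equality case to that of the first via $\|\cX\|_{\infty,\F}=\|\cX\|$, and gives the forward direction of \eqref{alphbetimpl} in one line; your max-interchange identity $\beta_{1,\bn,\F}=\max_{\|\cZ\|_{\infty,\F}\le1}\|\cZ\|=1/\alpha_{\infty,\bn,\F}$ and the converse via the equality chase $\Re\langle\cZ,\cX\rangle\le\|\cZ\|\le\|\cZ\|_{\infty,\F}/\alpha_{\infty,\bn,\F}\le1/\alpha_{\infty,\bn,\F}$, forcing $\cZ=(1/\alpha_{\infty,\bn,\F})\cX$ by Cauchy--Schwarz, are both valid (note you need $\|\cZ\|_{\infty,\F}=1$ at equality, which your chain does deliver). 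The trade-off: the paper's appendix lemma is stated for an arbitrary norm and its dual and is reused verbatim for the symmetric, fermionic, and Hermitian norm pairs later in the paper, whereas your argument is shorter and self-contained for this specific pair but would have to be repeated (or abstracted after all) for those variants. Two small points to make explicit in a final write-up: your explicit witness $\tfrac{1}{\sqrt2}(\e_1\otimes\e_1+\e_2\otimes\e_2)$ padded by rank-one factors is legitimate and non-circular since the spectral half of (4) uses only the factorization $\Pi_{(\bn,\bm),\F}=\Pi_{\bn,\F}\otimes\Pi_{\bm,\F}$ (alternatively, the strict inequality already follows from the equality case of (2) applied to any non-rank-one unit tensor, which is how the paper gets $\alpha_{\infty,\bn,\F}<1$); and the existence of the dual certificate $\cZ$ rests on $\|\cdot\|_{1,\F}$ and $\|\cdot\|_{\infty,\F}$ being conjugate, which your part (1) does establish by identifying the nuclear unit ball with the compact convex hull of $\Pi_{\bn,\F}$.
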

\begin{proof}\emph{(1)}.  It is enough to prove the characterization \eqref{nucchar}, where $\|\cX\|_{1,\F}=1$.  
We now repeat briefly the arguments in \cite{FL18} that the extreme points of $\rB_{\|\cdot\|_1}(\0,1)$ is $\Pi_{\bn,\F}$.   In view of the characterization of $\|\cdot\|_{\infty,\F}$ given in \eqref{defspecnuc} it follows that convex span of $\Pi_{\bn,\F}$ is a unit ball of a norm $\nu$.  The characterization \eqref{defnustar2} yields that $\nu^*=\|\cdot\|_{\infty,\F}$.    Assume that $\cX=\otimes_{i=1}^d \x_i\in \Pi_{\bn,\F}$.  Clearly,
 $\max_{\cY, \|\cY\|=1}\Re\langle\cY,\cX\rangle\le 1$,
 where equality holds if and only if $\cY=\cX$.  Hence, $\Pi_{\bn,\F}$ is a closed set of the extreme points of $\rB_{\|\cdot\|_{1,\F}}(\0,1)$.   As  $\|\cX\|_{1,\F}=1$ it follows that 
 $\cX\in \rB_{\|\cdot\|_{1,\F}}(\0,1)$.   
 Assume first that $\F=\R$.   As the real dimension of $\R^{\bn}$ is $N(\bn)$, Caratheodory's theorem yields that $\cX$ is a convex combination of at most $N(\bn)+1$ vectors of $\Pi_{\bn,\F}$.  In particular,  $\cX$ is a convex combination of exactly $r\le N(\bn)+1$ vectors of $\Pi_{\bn,\F}$:
 \begin{equation*}
 \cX=\sum_{i=1}^r \cX_i=\sum_{i=1}^r a_i\cY_i, \quad \cX_i=a_i\cY_i,a_i>0,\cY_i\in\Pi_{\bn,\F}, i\in[r], \sum_{i=1}^ra_i=\|\cX\|_{1,\F}=1.
 \end{equation*}
 Assume now that  $\cX=\sum_{i=1}^r \cX_i$, and $\cX_i\in \Sigma_{\bn,\R}, i\in[r]$.  Clearly, $\cX_i=\|\cX_i\| \cY_i$, where $\cY_i=\frac{1}{\|\cX_i\|}\cX_i\in \Pi_{\bn,\R}, i\in[r]$.  Use the triangle inequality to deduce
 \begin{equation*}
 1=\|\cX\|_{1,\R}=\|\sum_{i=1}^r \|\cX_i\|\cY_i\|_{1,\R}\le \sum_{i=1}^r\|\cX_i\|.
 \end{equation*}
 This proves the characterizations in \eqref{defspecnuc} and \eqref{nucchar} for $\|\cX\|_{1,\R}$.
 
 To show the  the characterizations in \eqref{defspecnuc} and \eqref{nucchar} for $\|\cX\|_{1,\C}$,  observe that $\C^{\bn}\sim\R^{\bn}\oplus \R^{\bn}$.  That is, the real dimension of  $\C^{\bn}$ is $2\N(\bn)$.  Then use Caratheodory's theorem.
 
 \noindent \emph{(2)}  As 
 $$\|\cX\|=\max_{\cY, \|\cY\|=1} \Re\langle \cY,\cX\rangle\ge \max_{\cY, \cY\in\Pi_{\bn,\F}}\Re\langle \cY,\cX\rangle=\|\cX\|_{\infty,\F},$$
 we deduce the first inequality in \eqref{infty21in}.   
 Assume that $\|\cX\|=1$.
 Clearly, if $\cX\in\Pi_{\bn,\F}$ then $\|\cX\|_{\infty,\F}=1$.  Suppose that $\cX\not\in\Pi_{\bn,\F}$.  Observe that for $\|\cY\|=1$ one has the inequality $\Re\langle \cY,\cX\rangle\le 1$, and equality holds if and only if and only if $\cY=\cX$.  Hence, $\|\cX\|_{\infty,\F}<1$.
 
 Let $0<\|\cX\|_{1,\F}=\sum_{i=1}^r a_i$, where $a_i>0$ and $\cX=\sum_{i=1}^r a_i\cY_i,  \cY_i\in \Pi_{\bn,\F}$.  Clearly, 
 $$\|\cX\|\le \sum_{i=1}^r ||a_i\cY_i\|=\sum_{i=1}^r a_{i}=\|\cX\|_{1,\F},$$
 which establishes the second inequality in \eqref{infty21in}.   Suppose that $\|\cX\|=\|\cX\|_{1,\F}$.   We claim that $\cY_1=\cdots=\cY_r$.  For $r=1$ there is nothing to prove.  Assume that $r>1$.  Then we have an equality in the triangle inequality for the Euclidean norm.  Thus implies that $a_iY_i=t_i a_1Y_1$ for $t_i>0$ and $i\ge 2$.
 Hence, $\cY_i=\cY_1$ for $i>1$, and $\cX\in \Sigma_{\bn,\F}$.
 
 \noindent \emph{(3)} The results of \emph{(2)} yield that 
 $$\alpha_{\infty,\bn,\F}<\beta_{\infty,\bn,\F}=\alpha_{1,\bn,\F}=1<\beta_{1,\bn,\F}.$$
 As the norms $\|\cdot\|_{\infty,\bn,\F}$ and $\|\cdot\|_{1,\bn,\F}$ are conjugate norms, 
 \eqref{albetm} yields that $\alpha_{\infty,\bn,\F}\beta_{1,\bn,\F}=1$.  This proves 
 \eqref{alphbetsineq}. The implication \eqref{alphbetimpl} folows from \eqref{maxmineq}.
 
 \noindent  \emph{(4)}  The first equality in \eqref{alphbetsineq} follows straightforward from the observation that $\Pi_{(\bn,\bm),\F}=\Pi_{\bn,\F}\otimes\Pi_{\bm,\F}$ and the definition of $\|\cX\otimes\cY\|_{\infty,\F}$.
To prove the second equality we argue as follows.  Let $\|\cX\|_{1,\F}$ and $\|\cY\|_{1,\F}$ correspond to the following decompositions of $\cX$ and $\cY$ respectively:
\begin{equation*}
\begin{aligned}
&\cX=\sum_{i=1}^p \cX_i, \quad \cX_i\in \Sigma_{\bn,\F}, \quad \|\cX\|_{1,\F}=\sum_{i=1}^p\|\cX_i\|,\\
&\cY=\sum_{j=1}^q \cY_j \quad \cY_j\in \Sigma_{\bm,\F}, \quad \|\cY\|_{1,\F}=\sum_{j=1}^q\|\cY_j\|.
\end{aligned}
\end{equation*}
Hence,
\begin{equation*}
\cX\otimes\cY=\sum_{i=j=1}^{p,q}\cX_i\otimes \cY_j\Rightarrow \|\cX\otimes\cY\|_{1,\F}\le \sum_{i=j=1}^{p,q}\|\cX_i\otimes \cY_j\|\le  \|\cX\|_{1,\F} \|\cY\|_{1,\F}.
\end{equation*}
 We now show the reverse inequality.  Clearly,  $ \|\cX\otimes\cY\|_{1,\F}=\max_{\cZ\in\F^{(\bn,\bm)}, \|\cZ\|_{\infty,\F}=1}|\langle \cZ, \cX\otimes\cY\rangle|$.
 Now choose 
 $$\cZ=\cZ_1\otimes \cZ_2, \cZ_1\in\F^{\bn}, \cZ_2\in\F^{\bm},\|\cZ_1\|_{\infty,\F}=\|\cZ_2\|_{\infty,\F}=1\Rightarrow \|\cZ\|_{\infty,\F}=1.$$ 
 Observe that $\langle \cZ_1\otimes \cZ_2, \cX\otimes\cY\rangle=\langle \cZ_1,\cX\rangle \langle \cZ_2,\cY\rangle$.  Take maximum of $|\langle \cZ, \cX\otimes\cY\rangle|$ on all such $\cZ$ to deduce: $ \|\cX\otimes\cY\|_{1,\F}\ge \|\cX\|_{1,\F}\|\cY\|_{1,\F}$.  Combine this inequality with a previous one to deduce $ \|\cX\otimes\cY\|_{1,\F}= \|\cX\|_{1,\F}\|\cY\|_{1,\F}$.   
\end{proof}

The following result is well known for matrices.
\begin{lemma}\label{alphbetdim2} Let  $1<m\le n$ be integers.  Let $T\in \F^{m\times n}$.  Let $\sigma_1\ge \cdots\ge\sigma_m\ge 0$ the the singular values of $T$.   
Then
\begin{equation}\label{inty1normT}
\|T\|_{\infty,\F}=\sigma_1, \quad \|T\|_{1,\F}=\sum_{i=1}^m \sigma_i.
\end{equation}
Assume $\|T\|=1$.  Then 
\begin{enumerate}
\item The equality $\|T\|_1=\beta_{1,(m,n),\F}$ holds if and only if the $m$ singular values of $T$ are $\frac{1}{\sqrt{m}}$.
\item The equality $\|T\|_\infty=\alpha_{\infty, (m,n),\F}$ holds if and only if the $m$ singular values of $T$ are $\frac{1}{\sqrt{m}}$.
\end{enumerate}
In particular 
\begin{equation}\label{alphbetd=2}
\beta_{1,(m,n),\F}=\sqrt{m}, \quad \alpha_{\infty,(m,n),\F}=\frac{1}{\sqrt{m}}.
\end{equation}
\end{lemma}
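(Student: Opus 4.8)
The plan is to reduce everything to the classical singular value decomposition (SVD) and then to a single application of the Cauchy--Schwarz inequality. First I would establish the two norm formulas in \eqref{inty1normT}. For $d=2$ the product states are $\Pi_{(m,n),\F}=\{\x\otimes\y:\|\x\|=\|\y\|=1\}$, so by the definition in \eqref{defspecnuc},
\[
\|T\|_{\infty,\F}=\max_{\|\x\|=\|\y\|=1}|\langle\x\otimes\y,T\rangle|=\max_{\|\x\|=\|\y\|=1}|\x^\dagger T\y|,
\]
which is exactly the variational characterization of the largest singular value; hence $\|T\|_{\infty,\F}=\sigma_1$. For the nuclear norm, the SVD $T=\sum_{i=1}^m\sigma_i\,u_i v_i^\dagger$ exhibits $T$ as a sum of $m$ rank-one tensors of norms $\sigma_i$, giving $\|T\|_{1,\F}\le\sum_i\sigma_i$; the reverse inequality follows from the dual characterization of $\|\cdot\|_{1,\F}$ used in Theorem \ref{1inftynormlem} by testing against the partial isometry $\cZ=\sum_i u_i v_i^\dagger$, which has $\|\cZ\|_{\infty,\F}=1$ and $\langle\cZ,T\rangle=\sum_i\sigma_i$. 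Equivalently, this is just the classical von Neumann duality between the Schatten $1$- and $\infty$-norms. This yields $\|T\|_{1,\F}=\sum_{i=1}^m\sigma_i$.

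With the two formulas in hand, the sharp constants are immediate from the constraint $\|T\|=1$, which reads $\sum_{i=1}^m\sigma_i^2=1$ since the Euclidean norm of $T$ is its Frobenius norm. For part (1), Cauchy--Schwarz gives
\[
\|T\|_{1,\F}=\sum_{i=1}^m\sigma_i\le\sqrt{m}\Big(\sum_{i=1}^m\sigma_i^2\Big)^{1/2}=\sqrt{m},
\]
with equality if and only if $\sigma_1=\cdots=\sigma_m$, i.e. $\sigma_i=1/\sqrt{m}$ for all $i$. This proves the characterization and $\beta_{1,(m,n),\F}=\sqrt{m}$. For part (2), since $\sigma_1=\max_i\sigma_i$ dominates the root-mean-square, $m\sigma_1^2\ge\sum_{i=1}^m\sigma_i^2=1$, so $\|T\|_{\infty,\F}=\sigma_1\ge1/\sqrt{m}$, again with equality if and only if all $\sigma_i$ coincide at $1/\sqrt{m}$. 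This gives $\alpha_{\infty,(m,n),\F}=1/\sqrt{m}$.

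Finally I would note that parts (1) and (2) are two faces of the same statement: the extremal matrices coincide, which is precisely the equivalence \eqref{alphbetimpl} specialized to $d=2$, and the reciprocal identity $\alpha_{\infty,(m,n),\F}\,\beta_{1,(m,n),\F}=1$ from \eqref{alphbetsineq} provides a consistency check, $\tfrac{1}{\sqrt m}\cdot\sqrt m=1$. In principle one could derive (2) from (1) (or vice versa) via \eqref{alphbetimpl} rather than running the Cauchy--Schwarz argument twice. There is no serious obstacle here: the only point requiring care is the passage from the tensor definitions in \eqref{defspecnuc} to the SVD formulas, after which both extremal problems are one-line consequences of Cauchy--Schwarz (equivalently, of the equality case in the inequality between the root-mean-square and the maximum of the $\sigma_i$).
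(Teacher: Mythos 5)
Your proposal is correct and follows essentially the same route as the paper: the paper likewise treats the formulas \eqref{inty1normT} as known (citing the literature for the nuclear-norm identity, which you instead verify directly via the SVD and duality against a partial isometry), and then proves parts \emph{(1)} and \emph{(2)} by exactly your Cauchy--Schwarz argument and the bound $m\sigma_1^2\ge\sum_{i=1}^m\sigma_i^2=1$, with the same equality analysis. Your closing remark that \emph{(2)} could be derived from \emph{(1)} via \eqref{alphbetimpl} and the reciprocity $\alpha_{\infty,(m,n),\F}\,\beta_{1,(m,n),\F}=1$ is a valid alternative the paper does not use, but it is only a shortcut, not a different method.
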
 
\begin{proof}  The first equality of \eqref{inty1normT} is well known.  The second equality is shown in \cite[Corollary 7.5.13]{Frib}.
 Assume that $\|T\|=1$
Then $1=\|T\|^2=\sum_{i=1}^m \sigma_i^2$.  Use Cauchy-Schwarz inequality to deduce that
$\|T\|_1^2=\left(\sum_{i=1}^m \sigma_i\right)^2\le m \left(\sum_{i=1}^m \sigma_i^2\right)=m$.  Equality holds if and only if all singular values of $T$ are $\frac{1}{\sqrt{m}}$.

Observe next that from the equality $1=\sum_{i=1}^m \sigma_i^2$ we deduce that $1\le m\sigma_1^2$.   Hence $\|T\|_{\infty}=\sigma_1\ge \frac{1}{\sqrt{m}}$.  Equality holds if and  only if all singular values of $T$ are $\frac{1}{\sqrt{m}}$.
\end{proof}
\begin{definition}\label{defunitmat}
Denote by $\rU_n\supset \rS\rU_n$ the group of unitary matrices and the subgroup of the special unitary matrices of order $n$.   Let $\mu_n$ be th Haar measure on $\rU_n$.
Denote
\begin{equation*}
\rU_{\bn}:=\otimes_{j=1}^d \rU_{n_j}, \quad \rS\rU_{\bn}=\otimes_{j=1}^d \rS\rU_{n_j},
\end{equation*}
are subgroups of $\rU_{N(\bn)}$ and $\rS\rU_{N(\bn)}$ respectively that act on $\C^{\bn}$: $\cX\mapsto U\cX$.
Denote by $\phi_d: \rU_n\to \rU_{n^{\times d}}$ the diagonal embedding $A\mapsto \otimes^d A$, where
\begin{equation}\label{defnd}
n^{\times d}=(\underbrace{n,\ldots,n}_d).
\end{equation}
\end{definition}
Observe that $\rU_{\bn}$ is a subgroup of isometries of the spectral and nuclear norms on $\C^{\bn}$.
\subsection{Symmetric tensors}\label{subsec:symtens}
A tensor $\cS=[s_{i_1,\ldots,i_d}]\in\otimes^d\F^n$ is called symmetric if $s_{i_1,\ldots,i_d}=
s_{i_{\omega(1)},\ldots,i_{\omega(d)}}$ for every permutation $\omega:[d]\to[d]$.  
Denote by $\rS^d\F^n\subset \otimes^d\F^n$ the vector space of $d$-mode symmetric tensors on $\F^n$.
It is well known that $\dim \rS^d\F^n={n+d-1\choose d}$ \cite{FK18}.
In what follows we assume that $\cS$ is a symmetric tensor and $d\ge 2$, unless stated otherwise.  A tensor
$\cS\in\rS^d\F^n$ defines a unique homogeneous polynomial of degree $d$ in $n$ variables
\begin{equation}\label{defpolfx}
f(\x)=\langle\overline{\cS},\otimes^d\x\rangle=\sum_{0\le j_k\le d,k\in[n], j_1+\cdots +j_n=d} \frac{d!}{j_1!\cdots j_n!} f_{j_1,\ldots,j_n} x_1^{j_1}\cdots x_n^{j_n}.
\end{equation}
Conversely, a homogeneous polynomial $f(\x)$ of degree $d$ in $n$ variables defines a unique symmetric $\cS\in\rS^d\F^n$ by the following relation.
Consider the multiset $\{i_1,\ldots,i_d\}$, where each $i_l\in [n]$.  Let $j_k$ be the number of times the integer $k\in [n]$ appears in the multiset  $\{i_1,\ldots,i_d\}$.
Then $s_{i_1,\ldots,i_d}=f_{j_1,\ldots,j_n}$.  Furthermore
\begin{equation}\label{symtenhsnorm}
\|\cS\|^2=\sum_{0\le j_k\le d,k\in[n],j_1+\cdots + j_n=d}\frac{d!}{j_1!\cdots j_n!} |f_{j_1,\ldots,j_n}|^2,
\end{equation}
where $s_{i_1,\ldots,i_d}=f_{j_1,\ldots,j_n}$.  

A remarkable result of Banach \cite{Ban38} claims that the spectral norm of a symmetric
tensor can be computed as a maximum on the set of rank one symmetric tensors:
\begin{equation}\label{Banthm}
\|\cS\|_{\sigma,\F}=\max_{\x\in\F^n, \|\x\|=1}\pm\Re\langle \otimes^d\x,\cS\rangle. 
\end{equation}
(The $\pm$ sign needed only if $\F=R$ and $d$ is even.)
This result was rediscovered several times since 1938.  In quantum information theory (QIT), for the case $\F=\C$, it appeared in \cite{Hubetall09}.  In mathematical literature, for the case $\F=\R$, it appeared in \cite{CHLZ12,Fri13}.  (Observe that  a natural generalization of Banach's theorem to partially symmetric tensors is given in \cite{Fri13}.)

The analog of Banach's theorem for the nuclear norm of symmetric tensors was stated in \cite{FL18}.   Namely, for $0\ne \cS\in \rS^d\F^n$ we have the following minimal characterization
\begin{equation}\label{FLBthm}
\begin{aligned}
&\|\cS\|_{1,\F}=\min_{\cS=\sum_{i=1}^r\varepsilon_i\otimes^d \x_i, \;\x_i\in\F^n\setminus\{\0\},\varepsilon_i\in\{1,-1\}, r\le r(n,d,\F)} \sum_{i=1}^r \|\x_i\|^d,\\
&r(n,d,\R)={n+d-1\choose d}+1, \quad r(n,d,\C)=2{n+d-1\choose d}+1.
\end{aligned}
\end{equation}
We can assume that $\varepsilon_i=1$ unless $\F=\R$ and $d$ is even. 
Furthermore,  Caratheodory's theorem yields that $r\le {n+d-1\choose d}+1$ for $\F=\R$   and $r\le 2{n+d-1\choose d}+1$ for $\F=\C$.

Let
\begin{equation}\label{defalbeinf1s}
\begin{aligned}
&\alpha_{\infty,n^{\times d},s,\F}=\min_{\cX\in\rS^d\F,\|\cX\|=1} \|\cX\|_{\infty,\F}, \quad \beta_{\infty,n^{\times d},s,\F}=\max_{\cX\in\rS^d\F^n,\|\cX\|=1} \|\cX\|_{\infty,\F},\\
&\alpha_{1,n^{\times d},s,\F}=\min_{\cX\in\rS^d\F^n,\|\cX\|=1} \|\cX\|_{1,\F}, \quad \beta_{1,n^{\times d},s,\F}=\max_{\cX\in\rS^d\F^n,\|\cX\|=1} \|\cX\|_{1,\F}.
\end{aligned}
\end{equation}
We state an analog of Theorem \ref{1inftynormlem} and  \ref{alphbetdim2}.  The proof of this theorem is similar to Theorem \ref{1inftynormlem} and Lemma \ref{alphbetdim2}
and we leave it to the reader.
\begin{theorem}\label{relalphbetsym}  Let $n,d\ge 2$ be integers.  Then 
\begin{enumerate}
\item 
\begin{equation}\label{alphbetsineqs}
\begin{aligned}
&\alpha_{\infty,n^{\times d},s,\F}<1=\beta_{\infty,n^{\times d},s,\F},\\
&\alpha_{1,n^{\times d},s,\F}=1<\beta_{1,n^{\times d},s,\F}=\frac{1}{\alpha_{\infty,n^{\times d},s,\F}}.
\end{aligned}
\end{equation}
\item Assume that $\cS\in \rS^d\F^n$ and $\|\cS\|=1$.  Then $\|\cS\|_{\infty,\F}=\alpha_{\infty,n^{\times d},s,\F}$ if and only if $\|\cS\|_{1,\F}=\beta_{1,n^{\times d},s,\F}$.
\item $\alpha_{\infty,n^{\times 2},s,\F}=\frac{1}{\sqrt{n}},\, \beta_{1,n^{\times 2},s,\F}=\sqrt{n}$.  
\item Assume that $S$ is an $n\times n$ complex valued symmetric matrix having Frobenius norm one: $\|S\|=1$.  Then $\|S\|_{1,\F}=\beta_{1,n^{\times 2},s,\F}$ if and only if $\sqrt{n}S$ is a unitary matrix.
\end{enumerate}
\end{theorem}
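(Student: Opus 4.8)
The plan is to derive all four parts from the abstract distortion results of the appendix (the reciprocity \eqref{albetm} and the equality characterization \eqref{maxmineq}), exactly as in the proof of Theorem \ref{1inftynormlem}(3), once I have verified the one genuinely new ingredient: that on the subspace $V:=\rS^d\F^n$ the \emph{restrictions} of $\|\cdot\|_{\infty,\F}$ and $\|\cdot\|_{1,\F}$ still form a conjugate pair of norms with respect to the inner product induced on $V$. In the full space this duality is built into the definition \eqref{defspecnuc}, but on a subspace the dual of a restricted norm is a priori only a quotient norm, so this step must be justified separately.

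To justify it, I would invoke Banach's theorem \eqref{Banthm}: for $\cS\in V$ it gives $\|\cS\|_{\infty,\F}=\max_{\x\in\F^n,\|\x\|=1}\pm\Re\langle\otimes^d\x,\cS\rangle$, so the restricted spectral norm is the support function of the set $K$ of symmetric product states $\{\pm\otimes^d\x:\|\x\|=1\}\subset V$. Hence its dual norm is the gauge of $\mathrm{conv}(K)$, and the nuclear analogue \eqref{FLBthm} identifies this gauge precisely with $\|\cdot\|_{1,\F}|_V$; thus the two restricted norms are conjugate on $V$. With this in hand, part (1) follows as in Theorem \ref{1inftynormlem}: the bound \eqref{infty21in} gives $\|\cS\|_{\infty,\F}\le\|\cS\|=1$ and $\|\cS\|_{1,\F}\ge\|\cS\|=1$, both attained at $\otimes^d\x$, so $\beta_{\infty,n^{\times d},s,\F}=\alpha_{1,n^{\times d},s,\F}=1$; choosing the unit symmetric tensor $\tfrac1{\sqrt2}(\otimes^d\be_1+\otimes^d\be_2)$, which is not rank one (its mode-$1$ flattening has rank $2$), the equality clause of \eqref{infty21in} forces $\alpha_{\infty,n^{\times d},s,\F}<1$; and \eqref{albetm} applied to the conjugate pair yields $\alpha_{\infty,n^{\times d},s,\F}\,\beta_{1,n^{\times d},s,\F}=1$. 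Part (2) is then the specialization of \eqref{maxmineq} to this same conjugate pair.

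For parts (3) and (4) I would specialize to $d=2$, where $V$ is the space of symmetric matrices $S=S^\top$. The point is that the restricted norms coincide with the ordinary matrix norms of Lemma \ref{alphbetdim2}: the Takagi--Autonne factorization $S=U\Sigma U^\top$ (the spectral factorization when $\F=\R$) writes $S=\sum_i\sigma_i\,u_iu_i^\top$ as a symmetric rank-one decomposition, giving $\|S\|_{1,\F}=\sum_i\sigma_i$, while $\|S\|_{\infty,\F}=\sigma_1$ by Banach. Then the Cauchy--Schwarz argument of Lemma \ref{alphbetdim2} applies verbatim: from $\sum_i\sigma_i^2=\|S\|^2=1$ one gets $\sigma_1\ge 1/\sqrt n$ and $\sum_i\sigma_i\le\sqrt n$, both with equality iff all $n$ singular values equal $1/\sqrt n$, proving $\alpha_{\infty,n^{\times 2},s,\F}=1/\sqrt n$ and $\beta_{1,n^{\times 2},s,\F}=\sqrt n$. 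For part (4), "all singular values equal $1/\sqrt n$" says exactly that $\sqrt n\,S$ has all singular values $1$, i.e. $\sqrt n\,S$ is unitary (automatically symmetric since $S$ is).

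The main obstacle is the subspace-duality step of the second paragraph: everything else is either a direct transcription of the full-space arguments or an abstract consequence of the appendix. The only other point demanding care is the complex-symmetric case of parts (3)--(4), where one must use the Takagi--Autonne factorization rather than the ordinary singular value decomposition, so as to realize the singular values of $S$ by a genuinely \emph{symmetric} rank-one expansion and thereby conclude that the restricted nuclear norm---not merely the ambient one---equals $\sum_i\sigma_i$.
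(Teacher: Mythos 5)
Your proposal is correct and takes essentially the paper's intended route: the paper explicitly leaves this proof to the reader as ``similar to Theorem \ref{1inftynormlem} and Lemma \ref{alphbetdim2}'', and your transcription---using Banach's theorem \eqref{Banthm} together with \eqref{FLBthm} to establish conjugacy of the restricted norms on $\rS^d\F^n$, then \eqref{albetm} and \eqref{maxmineq} for parts (1)--(2), and the Takagi--Autonne factorization to realize the singular values of a complex symmetric matrix by a genuinely symmetric rank-one expansion for parts (3)--(4)---is exactly the fleshing-out the paper has in mind. You also correctly isolate the two points the paper's sketch glosses over (that the dual of a restricted norm is a priori only a quotient norm, and that the SVD of Lemma \ref{alphbetdim2} must be replaced by a symmetric factorization), and both are handled soundly.
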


\subsection{Fermions}\label{subsec:ferm}
Denote by $\Omega_d$ the group of permutations acting on $[d]$.  Then sign$:\Omega_d\to \{-1,1\}$ is the character  such that sign$(\tau)=-1$ for a transposition $\tau$.
The group $\Omega$ acts on $\otimes^d\F^n=\F^{n^{\times d}}$ by permutations on the $d$ factors of $\otimes^d\F^n$: 
\begin{equation}\label{defTsigma}
T_{d,n}(\sigma)(\otimes_{j=1}^d \x_j)=\otimes_{j=1}^d x_{\sigma(j)}, \quad \x_i\in\F^n,  i\in[d],  2\le d\in\N, \sigma\in\Omega_d.
\end{equation}

Denote by $\mathrm{A}^d\mathbb{F}^n\subset \mathbb{F}^{n^{\times d}}$ the subspace of fermions,
i.e., antisymmetric (skew-symmetric) tensors.  That is, $\mathcal{F}=[f_{i_1,\ldots,i_d}]\in \mathrm{A}^d\mathbb{F}^n$ if $f_{\tau(i_1),\ldots,\tau(i_d)}=-f_{i_1,\ldots,i_d}$ for each transposition $\tau:[d]\to [d]$.   
Note that $\mathrm{A}^d\mathbb{F}^n=\{0\}$ if $d>n$.  Thus we assume that $d\le n$. 
\begin{lemma}\label{norAdinf1lem}
\begin{enumerate}
\item For a rank-one tensor $\otimes_{i=1}^d\mathbf{x}_i$ denote
\begin{equation}\label{deferm}\mathbf{x}_1\wedge\cdots\wedge\mathbf{x}_d:=\frac{1}{\sqrt{d!}} \sum_{\sigma\in\Omega_d}\mathrm{sign}(\sigma)\otimes_{j=1}^d\x_{\sigma(j)}.
\end{equation}
Then $\mathbf{x}_1\wedge\cdots\wedge\mathbf{x}_d=0$ if and only if $\mathbf{x}_1,\ldots,\mathbf{x}_d$ are linearly dependent.   Assume that $\mathbf{x}_1,\ldots,\mathbf{x}_d$ are linearly independent.   Let $\W=\mathrm{span}(\x_1,\ldots,\x_d)\subset \F^n$.    Then, for each $\y_1,\ldots,\y_d\in\W$ the tensor $\y_1\wedge\cdots\wedge\y_d$ is colinear with   $\x_1\wedge\cdots\wedge\x_d$.  
Furthermore,  $\|\otimes_{j=1}^d\x_j\|\ge \|\mathbf{x}_1\wedge\cdots\wedge\mathbf{x}_d\|$.  Equality holds if and only if $\x_1,\ldots,\x_d$ is an orthogonal set of vectors.
Assume that $\|\x_1\wedge\cdots\wedge\x_d\|=1$. Then $\x_1\wedge\cdots\wedge\x_d=\zeta (\bu_1\wedge\cdots\wedge\bu_d)$, where $\bu_1,\ldots,\bu_d$ is any orthonormal basis in $\W$, and $\zeta\in\F, |\zeta|=1$ is uniquely determined by $\x_1\wedge\cdots\wedge\x_d$.  
\item
Let $\rT_{d,n}(\sigma):\otimes^d\F^n\to \otimes^d\F^n$ be defined by \eqref{defTsigma}.   Then for $\cX,\cY\in\otimes^d\F^n$ and $\sigma,\eta\in\Omega_{d}$ the following equalities hold:
\begin{equation}\label{sigeq}
\begin{aligned}
&\rT_{d,n}(\sigma)\rT_{d,n}(\eta)=\rT_{d,n}(\sigma\eta),\\
&\langle \rT_{d,n}(\sigma)\cX,\cY\rangle=\langle \cX,\rT_{d,n}(\sigma^{-1})\cY\}\rangle,\\
&\| \rT_{d,n}(\sigma)\cX\|=\|\cX\|, \quad \| \rT_{d,n}(\sigma)\cX\|_{\infty,\F}=\|\cX\|_{\infty,\F}, \quad \| \rT_{d,n}(\sigma)\cX\|_{1,\F}=\|\cX\|_{1,\F}.
\end{aligned}
\end{equation}
\item The linear operator 
\begin{equation}\label{defPAd}
\rP_{d,n}=\frac{1}{d!}\sum_{\sigma\in\Omega_d}\mathrm{sign}(\sigma)\rT_{d,n}: \otimes^d\F^n\to \rA^d\F^n
\end{equation}
is an orthogonal projection on $\rA^d\F^n$ with respect to $\langle\cdot,\cdot\rangle$.
Furthermore,  $\rP_{d,n}$ is nonexpansive, (norm decreasing), with respect to the spectral and nuclear norms
\end{enumerate}
\end{lemma}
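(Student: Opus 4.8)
The plan is to handle the three parts in order, reducing everything either to computations on rank-one tensors (which span $\otimes^d\F^n$) or to one algebraic identity expressing the wedge norm as a Gram determinant. For part (1), I would first dispose of the vanishing criterion by multilinearity and antisymmetry of \eqref{deferm}: if $\x_1,\ldots,\x_d$ are dependent, write one vector as a combination of the others and expand, so every term repeats an argument and cancels; if independent, extend to a basis and note that $\x_1\wedge\cdots\wedge\x_d$ is a nonzero coordinate tensor of $\rA^d\F^n$. The collinearity claim is then immediate, since any $\y_1\wedge\cdots\wedge\y_d$ with all $\y_i\in\W$ lies in $\rA^d\W$, and $\dim\rA^d\W=\binom{d}{d}=1$ because $\dim\W=d$. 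The heart of part (1) is the norm identity: expanding $\|\x_1\wedge\cdots\wedge\x_d\|^2$ from \eqref{deferm} gives a double sum over $\sigma,\tau\in\Omega_d$ of $\mathrm{sign}(\sigma)\mathrm{sign}(\tau)\prod_j\langle\x_{\sigma(j)},\x_{\tau(j)}\rangle$, and reindexing by $\rho=\tau\sigma^{-1}$ collapses this to $\det G$, where $G=[\langle\x_i,\x_j\rangle]$ is the Gram matrix. Since $\|\otimes_j\x_j\|^2=\prod_j\|\x_j\|^2=\prod_j G_{jj}$, Hadamard's inequality for the positive semidefinite $G$ gives $\det G\le\prod_j G_{jj}$, i.e. $\|\x_1\wedge\cdots\wedge\x_d\|\le\|\otimes_j\x_j\|$, with equality exactly when $G$ is diagonal, that is when the $\x_j$ are orthogonal. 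Finally, for a unit wedge, collinearity with any orthonormal-basis wedge $\bu_1\wedge\cdots\wedge\bu_d$ (which has norm $\sqrt{\det I}=1$) forces the scalar $\zeta$ to have $|\zeta|=1$, and $\zeta$ is pinned down because $\rA^d\W$ is one-dimensional.

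For part (2), I would verify each identity on the rank-one spanning set and extend by linearity. The composition law follows by applying \eqref{defTsigma} twice; the adjoint relation follows from the fact that $\rT_{d,n}(\sigma)$ merely permutes the orthonormal standard basis $\{\e_{i_1}\otimes\cdots\otimes\e_{i_d}\}$, so it is a unitary (permutation) operator with $\rT_{d,n}(\sigma^{-1})=\rT_{d,n}(\sigma)^{*}$, whence Euclidean-norm invariance is automatic. For the spectral and nuclear norms (with $\bn=n^{\times d}$) the key point is that $\rT_{d,n}(\sigma)$ maps $\Pi_{n^{\times d},\F}$ bijectively onto itself, since it only permutes the tensor factors; writing $\|\rT_{d,n}(\sigma)\cX\|_{\infty,\F}=\max_{\cY\in\Pi_{n^{\times d},\F}}|\langle\rT_{d,n}(\sigma^{-1})\cY,\cX\rangle|$ and re-parametrizing the maximum gives invariance, and nuclear-norm invariance follows either by duality or by transporting an optimal rank-one decomposition through $\rT_{d,n}(\sigma)$.

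For part (3), I would establish the defining properties of an orthogonal projection separately. That $\rP_{d,n}$ lands in $\rA^d\F^n$ follows from $\rT_{d,n}(\tau)\rP_{d,n}=-\rP_{d,n}$ for a transposition $\tau$, obtained by the substitution $\sigma\mapsto\tau\sigma$ in \eqref{defPAd} together with $\mathrm{sign}(\tau)=-1$; that $\rP_{d,n}$ fixes $\rA^d\F^n$ pointwise follows since $\rT_{d,n}(\sigma)\cX=\mathrm{sign}(\sigma)\cX$ on an antisymmetric $\cX$, so $\rP_{d,n}\cX=\frac{1}{d!}\sum_\sigma\cX=\cX$. These two facts give idempotence with image exactly $\rA^d\F^n$. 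Self-adjointness comes from part (2): $\rP_{d,n}^{*}=\frac{1}{d!}\sum_\sigma\mathrm{sign}(\sigma)\rT_{d,n}(\sigma^{-1})=\rP_{d,n}$ after reindexing $\sigma\mapsto\sigma^{-1}$ and using $\mathrm{sign}(\sigma^{-1})=\mathrm{sign}(\sigma)$. Nonexpansiveness is then a one-line triangle-inequality estimate, $\|\rP_{d,n}\cX\|_{\star}\le\frac{1}{d!}\sum_\sigma\|\rT_{d,n}(\sigma)\cX\|_{\star}=\|\cX\|_{\star}$ for $\star\in\{\infty,1\}$, since each $\mathrm{sign}(\sigma)\rT_{d,n}(\sigma)$ preserves these norms by part (2).

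I expect the main obstacle to be the Gram-determinant identity in part (1): getting the sign bookkeeping in the double permutation sum to telescope correctly to $\det G$, after which Hadamard's inequality delivers both the norm bound and its sharp equality case in one stroke. The remaining parts are representation-theoretic bookkeeping that becomes routine once the composition law and the unitarity of $\rT_{d,n}(\sigma)$ are in hand.
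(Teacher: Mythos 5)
Your proposal is correct, and in the key step of part (1) it takes a genuinely different route from the paper. Where you expand $\|\x_1\wedge\cdots\wedge\x_d\|^2$ as a double sum over $\Omega_d\times\Omega_d$, reindex by $\rho=\tau\sigma^{-1}$ to obtain the closed-form identity $\|\x_1\wedge\cdots\wedge\x_d\|^2=\det G$ with $G=[\langle\x_i,\x_j\rangle]$, and then invoke Hadamard's inequality, the paper instead runs Gram--Schmidt without normalization: it produces an orthogonal set $\y_1,\ldots,\y_d$ with $\y_1\wedge\cdots\wedge\y_d=\x_1\wedge\cdots\wedge\x_d$ and $\|\y_i\|\le\|\x_i\|$, and reads off the inequality from $\|\y_1\wedge\cdots\wedge\y_d\|=\prod_i\|\y_i\|$. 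Your Gram-determinant argument buys a sharper statement (an exact norm formula, with the inequality and its equality case delivered in one stroke, valid since rank-one tensors have all $\x_i\ne\0$ so $G$ has positive diagonal); the paper's route avoids the permutation-sum bookkeeping and hands over, as a byproduct, the concrete orthogonal representative it then rescales to prove the $\zeta$-statement. Two small points: your collinearity argument via $\dim\rA^d\W=1$ forward-references Lemma \ref{strucAdFn}(1), which appears only after this lemma (there is no circularity, since that lemma's inductive proof does not use this one, but within the paper's ordering the cleaner option is the determinant identity $\y_1\wedge\cdots\wedge\y_d=(\det A)\,\x_1\wedge\cdots\wedge\x_d$, which the paper uses and which your reindexing computation essentially contains); and in part (3) you certify the orthogonal projection by idempotence plus self-adjointness ($\rP_{d,n}^{*}=\rP_{d,n}$ via $\sigma\mapsto\sigma^{-1}$), whereas the paper checks instead that $\rT_{d,n}(\sigma)$ preserves the orthogonal complement of $\rA^d\F^n$ --- an equivalent and equally valid characterization. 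Parts (2) and the nonexpansiveness estimate coincide with the paper's proof.
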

\begin{proof} 
\noindent \emph{(1)}  It is straightforward to show that 
\begin{equation}\label{Tsigprop}
\rT(\sigma)\x_1\wedge\cdots\wedge\x_d=\textrm{sign}(\sigma)\x_{\sigma(1)}\wedge\cdots\wedge\x_{\sigma(d)}.
\end{equation}
Hence,  $\x_1\wedge\cdots\wedge\x_d\in \rA^d\F^n$.    Observe that $\x_1\wedge\cdots\wedge\x_d$ is multi linear.  Hence,  $\x_1\wedge\cdots\wedge\x_d=0$ if $x_i\ne x_j$ are linearly dependent for  some $i\ne j$.  Furthermore,   $\x_1\wedge\cdots\wedge\x_d=0$ if $\x_1,\ldots,\x_n$ are linearly dependent.  Assume,  that $\x_1,\ldots,\x_d$ are linearly independent.  Let $\y_i=\sum_{j=1}^d a_{i,j}\x_j$ for $i\in[d]$.  Let $A=[a_{i,j}]\in \F^{d\times d}$.  It is straightforward to show that $\y_1\wedge\cdots\wedge\y_d=(\det A)\x_1\wedge\cdots\wedge\x_d$.   This shows that the tensor $\y_1\wedge\cdots\wedge\y_d$ is colinear with   $\x_1\wedge\cdots\wedge\x_d$.  

Perform the Gram-Schmidt process without normalization  on $\x_1,\ldots,\x_d$ to obtain 
an orthogonal set of vectors $\y_1,\ldots,\y_d$ such that  $y_1\wedge\cdots\wedge\y_d=\x_1\wedge\cdots\wedge\x_d$.   Recall that $\|\y_i\|\le \||\x_i\|$ for $i\in[d]$.
As $\y_1,\dots,y_d$ is an orthogonal set if follows that 
\begin{equation*}
\|\y_1\|\cdots\|\y_d\|=\| \y_1\wedge\cdots\wedge\y_d\|=\|\x_1\wedge\cdots\wedge\x_d\|\le \|\x_1\|\cdots\|\x_d\|.
\end{equation*}
Assume that $\|\x_1\wedge\cdots\wedge\x_d\|=1$.  In the above formula we can renormalize that orthogonal vectors $z_i=t_i\y_i, a_i>0,i\in[d]$ such that $\z_1,\ldots,\z_d$ is an orthonormal basis in $\W$.   Change the orthonormal  basis $\z_1,\ldots,\z_d$ of $\W$ to an orthonormal basis $\bu_1,\ldots,\bu_d$ in $\W$ to deduce the last 
statement of \emph{(1)}.\\

\noindent
\emph{(2)} Assume that $\cX=\otimes_{j=1}^d \x_i, \cY=\otimes_{j=1}^d \y_i$.   Then
\begin{equation*}
\begin{aligned}
& \rT_{d,n}(\sigma) \rT_{d,n}(\eta)\cX=\otimes_{j=1}^d \x_{\sigma(\eta(j))}= \rT_{d,n}(\sigma\eta)\cX,\\
&\langle \rT_{d,n}(\sigma)\cX,\cY\rangle=\prod_{j=1}\langle\x_{\sigma(j)},\y_j\rangle=
\prod_{k=1}\langle\x_{k},\y_{\sigma^{-1}(k)}\rangle=\langle \cX,\rT_{d,n}(\sigma^{-1})\cY\}\rangle.
\end{aligned}
\end{equation*}
For a general $\cX,\cY$ write down $\cX$ and $\cY$ as a sum of rank-one matrices, and use the above equalities to deduce the first two equalities in \eqref{sigeq}. 

Next observe 
\begin{equation*}
\|\rT(\sigma)\cX\|^2=\langle \rT(\sigma)\cX,\rT(\sigma)\cX\rangle=\langle \cX,\rT(\sigma^{-1})\rT(\sigma)\cX\rangle=\|\cX\|^2.
\end{equation*}
This establishes the third  equality in \eqref{sigeq}. 

Let $\cY\in\Pi_{n^{\times^d}}$.  Observe that $\langle \cY,\rT_{d,n}(\sigma)\cX\rangle=
\langle \rT_{d,n}(\sigma^{-1})\cY,\cX\rangle$.  Clearly, $\rT_{d,n}(\sigma^{-1})\Pi_{n^{\times^d}}=\Pi_{n^{\times^d}}$.  Use the characterization of $\|\cX\|_{\infty,\F}$ to deduce the fourth equality in \eqref{sigeq}.   To deduce the last equality in \eqref{sigeq} use the characterization of a norm $\nu^*$ in terms of the norm $\nu$ given in \eqref{defnustar}.\\

\noindent \emph{(3)}  Observe that 
\begin{equation}\label{wedxPiden}
\x_1\wedge\cdots\wedge\x_d=\sqrt{d!}\,\rP_{d,n}\otimes_{j=1}^d \x_j\in \rA^d. 
\end{equation}
 Express $0\ne \cX$ as a sum of rank-one tensors to deduce that $\rP_{d,n}\cX\in\rA^d\F^n$.    
Observe that the equality \eqref{Tsigprop} yields that $\rP_{d,n}\x_1\wedge\cdots\wedge\x_d=\x_1\wedge\cdots\wedge\x_d$.  Hence,  $\rP_{d,n}\cX=\cX$ for  $\cX$ is skew-symmetric.  Assume that $\cY\in \otimes^d\F^n$ is orthogonal to $\rA^d\F^n$.   Hence, $\langle \cY,\cX\rangle =0$ for $\cX=\x_1\wedge\cdots\wedge\x_d$.  The equalities in \eqref{sigeq} and \eqref{Tsigprop} yield
\begin{equation*}
\langle\rT(\sigma)\cY,\cX\rangle=\langle \cY,\rT(\sigma^{-1})\cX\rangle=\langle \cY,\mathrm{sign}(\sigma^{-1})\cX\rangle=0.
\end{equation*}
This shows that $\rT(\sigma)\cY$ is orthogonal to $\rA^d\F^n$.  Hence,  $\rP_{n,d}$ is an orthogonal projection on $\rA^d\F^n$.

The equalities last two equalities in \eqref{sigeq} yield:
\begin{equation*}
\begin{aligned}
&\|\rP_{d,n}\|_{\infty,\F}\le \frac{1}{d!}\sum_{\sigma\in\Omega_d}\|\rT(\sigma)\|_{\infty,\F}\le 1, \\
&\|\rP_{d,n}\|_{1,\F}\le \frac{1}{d!}\sum_{\sigma\in\Omega_d}\|\rT(\sigma)\|_{1,\F}\le 1.
\end{aligned}
\end{equation*}
\end{proof}

Recall that the variety of all $d$-dimensional subspaces $\W$ in $\F^n$ is called Grassmannian Gr$(d,\F^n)$.   The span of all $\x_1\wedge\cdots\wedge\x_d$ is $\rA^d\F^n$, which is also denoted as $\bigwedge^d\F^n$.  The following lemma is well know,  and we bring its proof for completeness.
\begin{lemma}\label{strucAdFn} 
\begin{enumerate}
\item The linear subspace $\rA^d\F^n$ of $\F^{n^{\times d}}$ has dimension $n\choose d$.
\item Let $\be_1,\ldots,\be_n$ be an orthonormal basis of $\F^n$.  Then the set
$\be_{i_1}\wedge\cdots\wedge\be_{i_d}$ for $1\le i_1<\cdots<i_d\le n$
 is an orthonormal basis of $\rA^d\F^n$.
 \item Let $0\ne \cF\in\rA^d\F^n$.   Then there exists a unique subspace $\W$ of a minimal dimension $k\ge d$  in $\F^n$ such that $\cF\in \rA^d\W$.
\end{enumerate}
\end{lemma}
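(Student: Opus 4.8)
The plan is to establish (2) first, obtain (1) as an immediate count, and then prove (3) with existence being trivial and uniqueness resting on a single intersection identity.

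For (2), I would argue spanning and orthonormality separately. Spanning: since $\rA^d\F^n$ is the span of all $\x_1\wedge\cdots\wedge\x_d$, I expand each $\x_j$ in the basis $\be_1,\ldots,\be_n$ and invoke the multilinearity of the wedge product from Lemma~\ref{norAdinf1lem}\emph{(1)} to write $\x_1\wedge\cdots\wedge\x_d$ as a linear combination of the $\be_{i_1}\wedge\cdots\wedge\be_{i_d}$. Terms with a repeated index vanish, and by \eqref{Tsigprop} reordering the factors into strictly increasing order costs only a sign; hence the wedges indexed by $1\le i_1<\cdots<i_d\le n$ already span. Orthonormality: from the definition \eqref{deferm}, for strictly increasing tuples $I=(i_1,\ldots,i_d)$ and $J=(j_1,\ldots,j_d)$ I compute
\begin{equation*}
\langle\be_{i_1}\wedge\cdots\wedge\be_{i_d},\,\be_{j_1}\wedge\cdots\wedge\be_{j_d}\rangle
=\frac{1}{d!}\sum_{\sigma,\tau\in\Omega_d}\sgn(\sigma)\sgn(\tau)\prod_{k=1}^d\langle\be_{i_{\sigma(k)}},\be_{j_{\tau(k)}}\rangle.
\end{equation*}
Since $\langle\be_a,\be_b\rangle=\delta_{ab}$, a summand survives only when $i_{\sigma(k)}=j_{\tau(k)}$ for all $k$; for strictly increasing $I,J$ this forces $I=J$, and then for each $\tau$ exactly one $\sigma$ (namely $\sigma=\tau$) contributes, with value $\sgn(\sigma)\sgn(\tau)=1$, giving $\frac{1}{d!}\cdot d!=1$, while $I\ne J$ gives $0$. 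Thus the set is orthonormal, hence linearly independent and, together with spanning, an orthonormal basis. Part (1) is then the count of strictly increasing $d$-tuples in $[n]$, namely $\binom{n}{d}$.

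For (3), existence is clear: $\F^n$ satisfies $\cF\in\rA^d\F^n$, and since $\F^n$ is finite-dimensional the nonempty family of subspaces $\W$ with $\cF\in\rA^d\W$ contains a member of least dimension $k$; moreover $\rA^d\W=\{0\}$ whenever $\dim\W<d$, so $\cF\ne0$ forces $k\ge d$. The heart of the matter is the identity
\begin{equation*}
\rA^d\W_1\cap\rA^d\W_2=\rA^d(\W_1\cap\W_2),\qquad \W_1,\W_2\subseteq\F^n.
\end{equation*}
The inclusion $\supseteq$ is immediate. For $\subseteq$ I set $U=\W_1\cap\W_2$, pick a basis $\bu_1,\ldots,\bu_m$ of $U$, and extend it to bases $\bu_1,\ldots,\bu_m,\bv_1,\ldots,\bv_p$ of $\W_1$ and $\bu_1,\ldots,\bu_m,\bw_1,\ldots,\bw_q$ of $\W_2$. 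Because $\W_1\cap\W_2=U$, the combined list $\bu_1,\ldots,\bu_m,\bv_1,\ldots,\bv_p,\bw_1,\ldots,\bw_q$ is linearly independent. The spanning argument of (2) together with the dimension count of (1) shows that for any basis of a subspace the wedges of its $d$-element subsets form a basis of the corresponding $\rA^d$; applying this to $\W_1$ and to $\W_2$, a basis wedge lying in both $\rA^d\W_1$ and $\rA^d\W_2$ must avoid every $\bv_j$ and every $\bw_k$, hence uses only the $\bu_i$ and lies in $\rA^d U$. This proves the identity.

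Granting the identity, uniqueness follows cleanly. Let $\W$ be a subspace of least dimension with $\cF\in\rA^d\W$. For any $\W'$ with $\cF\in\rA^d\W'$, the identity gives $\cF\in\rA^d(\W\cap\W')$, and minimality of $\dim\W$ forces $\W\cap\W'=\W$, i.e.\ $\W\subseteq\W'$. Thus $\W$ is contained in every subspace carrying $\cF$, which is simultaneously minimality and uniqueness. I expect the intersection identity to be the only real obstacle; the remaining steps are a direct inner-product computation, a binomial count, and a dimension argument.
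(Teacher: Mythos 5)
Your proof is correct, but it follows a genuinely different route from the paper on both halves of the lemma. For (1)--(2) the paper argues by induction on $d$: the base case $d=2$ is skew-symmetric matrices with the basis $\tfrac{1}{\sqrt{2}}(\be_i\be_j^\top-\be_j\be_i^\top)$, and the inductive step slices $\cF=\sum_{l}\cF_l\otimes\be_l$ with $\cF_l\in\rA^{d-1}\F^n$ and applies the hypothesis to the slices; you instead give a direct, non-inductive argument -- spanning via multilinearity and \eqref{Tsigprop}, orthonormality via the double-permutation expansion of the inner product coming from \eqref{deferm} -- which is self-contained and makes the role of the $\frac{1}{\sqrt{d!}}$ normalization explicit. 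For (3) the divergence is larger: the paper's proof is a one-line construction, unfolding $\cF$ into a matrix $F$ whose rows are the slices $\cF_1,\ldots,\cF_n$ and declaring $\W$ to be the column space of $F$, which is constructive (it tells you how to compute $\W$) but leaves minimality and uniqueness essentially unverified; you instead prove the lattice identity $\rA^d\W_1\cap\rA^d\W_2=\rA^d(\W_1\cap\W_2)$ by extending a basis of $\W_1\cap\W_2$ to bases of $\W_1$ and $\W_2$, embedding everything in a single wedge basis of $\rA^d\F^n$ (using your (1)--(2) to get linear independence of the wedges), and reading off the intersection coordinatewise. Your route yields the strictly stronger conclusion that the minimal $\W$ is contained in \emph{every} subspace carrying $\cF$, which gives uniqueness for free, at the cost of being non-constructive; the two arguments are complementary, and one could remark that your uniqueness proof shows the paper's column-space construction must coincide with the minimal subspace. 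All the compressed steps check out: the combined list $\bu,\bv,\bw$ is independent precisely because $\W_1\cap\W_2=U$, and the unique expansion of an element of $\rA^d\W_1\cap\rA^d\W_2$ in the full wedge basis forces its support onto wedges of the $\bu_i$ alone.
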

\begin{proof} \emph{(1)+(2)} The proof is by induction on $d\ge 2$.  For $d=2$ a skew-symmetric 2-tensor correspond to a skew-symmetric matrix.   Clearly,  $\frac{1}{\sqrt{2}}(\be_i\be_j^\top-\be_j\be_i^\top)$ is an orthonormal basis for skew symmetric matrices.   Assume the claim holds for $k, 2\le k<n-1$, and let $d=k+1$.   Assume that $0\ne \cF=[f_{i_1,\ldots,i_d}]\in \rA^d\F^n$.    Set
\begin{equation*}
\begin{aligned}
&\cF=\sum_{(i_1,\ldots,i_d)\in[n]^d} f_{i_1,\ldots,i_d}\otimes_{j=1}^d \be_{i_j},\\
&\cF_l=\sum_{(i_1,\ldots,i_{k})\in[n]^{k}}f_{i_1,\ldots,i_{k},l}\otimes_{j=1}^k\be_{i_j}.
\end{aligned}
\end{equation*}
Observe that $\cF_l$ can be viewed as a tensor in $\rA^k\F$.  The induction hypothesis yields that each $\cF_i$ is a  linear combination of $\be_{i_1}\wedge\cdots\wedge\be_{i_k}$ for $1\le i_1<\cdots<i_k\le n$.
Observe that $\cF=\sum_{l=1}^n \cF_l\otimes\be_l$.  Combine the above two observations and the fact that $\cF$ is a skew-symmetric tensor to deduce \emph{(1)+(2)}.

\noindent \emph{(3)}.  Unfold $\cF$ as a matrix $F$ whose rows are $\cF_1,\ldots,\cF_n$.  Then $\W$ is the column space of $F$.
\end{proof}
\begin{definition}\label{deffrank}
\begin{enumerate}
\item For a given $k\in[n]$ and $\W\in \mathrm{Gr}(k,\F^n)$ denote by $[\W]$ the set of all vectors $\x_1\wedge\cdots\wedge\x_k$,  where $\x_1,\ldots,\x_k$ is a basis in $\W$,  and by
$[\W]_1$ the set  of vectors $\x_1\wedge\cdots\wedge\x_k$, where $\x_1,\ldots,\x_k$ is an orthonormal basis in $\W$.  Then $[\W]_1=\{\cY\in \rA^d\F^n: \cY=\zeta \x_1\wedge\cdots\wedge\x_k, \zeta\in\F, |\zeta|=1\}$ for a fixed orthonormal basis $\x_1,\ldots,\x_k$ in $\W$.
\item Let $0\ne \cF\in \rA^d\F^n$.  
Then $\mathrm{frank}\,\cF$ is the minimal number of $\W_1,\ldots,\W_r\in  \mathrm{Gr}(d,\F^n)$ such that $\cF=\sum_{i=1}^r \cY_i, \cY_i\in[\W_i], i\in[r]$.
\end{enumerate}
\end{definition}
The frank  is called Slater rank in \cite{BL01}.
\begin{theorem}\label{fermnorms} Assume that $0\ne \mathcal{F}\in\mathrm{A}^d\mathbb{F}^n$, where $n\ge d$.  Then
\begin{equation}\label{fcharspecnrm}
\begin{aligned}
&\|\mathcal{F}\|_{\infty,\mathbb{F}}=\max_{\y_1,\ldots,\y_d\in\F^n, \langle \y_j,\y_k\rangle=\delta_{j,k},j,k\in[d]}\Re\langle\frac{1}{\sqrt{d!}}\y_1
\wedge\cdots\wedge\y_d,\cF\rangle=\\
&\max_{\cY\in [\W]_1, \W\in \mathrm{Gr}(d,\F^n)}\Re\langle \frac{1}{\sqrt{d!}}\cY,\cF\rangle,\\
&\|\mathcal{F}\|_{1,\mathbb{F}}=\min_{\cF=\sum_{i=1}^r \frac{a_i}{\sqrt{d!}}\cY_i,\cY_i\in[\W_i]_1, \W_i\in\mathrm{Gr}(d,\F^n), r\le r(n,d,\F)}\sum_{i=1}^r |a_i|,\\
&r(n,d,\R)={n\choose d}+1, \quad r(n,d,\C)=2{n\choose d}+1.
\end{aligned}
\end{equation}
In particular, the restrictions of $\|\cdot\|_{\infty,\F}$ and $\|\cdot\|_{1,\F}$ on $\otimes^d\F^n$ to $\rA^d\F^n$ are conjugate norms.
\end{theorem}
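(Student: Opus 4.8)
The plan is to derive both characterizations from the orthogonal projection $\rP_{d,n}$ of Lemma \ref{norAdinf1lem}, exploiting three of its properties: it is self-adjoint with respect to $\langle\cdot,\cdot\rangle$, it is nonexpansive for both $\|\cdot\|_{\infty,\F}$ and $\|\cdot\|_{1,\F}$, and it satisfies $\rP_{d,n}\otimes_{j=1}^d\y_j=\frac{1}{\sqrt{d!}}\y_1\wedge\cdots\wedge\y_d$ by \eqref{wedxPiden}. For the spectral norm I would start from the definition $\|\cF\|_{\infty,\F}=\max_{\otimes\y_j\in\Pi_{n^{\times d},\F}}\Re\langle\otimes_{j=1}^d\y_j,\cF\rangle$. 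Since $\cF\in\rA^d\F^n$ we have $\rP_{d,n}\cF=\cF$, so self-adjointness gives $\langle\otimes_{j=1}^d\y_j,\cF\rangle=\langle\rP_{d,n}\otimes_{j=1}^d\y_j,\cF\rangle=\frac{1}{\sqrt{d!}}\langle\y_1\wedge\cdots\wedge\y_d,\cF\rangle$, whence $\|\cF\|_{\infty,\F}=\frac{1}{\sqrt{d!}}\max_{\|\y_j\|=1}\Re\langle\y_1\wedge\cdots\wedge\y_d,\cF\rangle$.

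The next step is to reduce this maximum to orthonormal frames. If $\y_1,\ldots,\y_d$ are linearly dependent the wedge vanishes and contributes nothing, so assume independence and set $\W=\spa(\y_1,\ldots,\y_d)$. By Lemma \ref{norAdinf1lem}(1) every wedge of vectors of $\W$ is colinear with an orthonormal-basis wedge $\bu_1\wedge\cdots\wedge\bu_d\in[\W]_1$, and $\|\y_1\wedge\cdots\wedge\y_d\|\le\prod_j\|\y_j\|=1$. Writing $\y_1\wedge\cdots\wedge\y_d=c\,\bu_1\wedge\cdots\wedge\bu_d$ with $|c|\le 1$ gives $\Re\langle\y_1\wedge\cdots\wedge\y_d,\cF\rangle=\Re\big(\bar c\langle\bu_1\wedge\cdots\wedge\bu_d,\cF\rangle\big)\le|\langle\bu_1\wedge\cdots\wedge\bu_d,\cF\rangle|$, and replacing $\bu_1$ by $\zeta\bu_1$ for a suitable unit scalar $\zeta$ realizes this modulus as $\Re\langle\zeta\bu_1\wedge\cdots\wedge\bu_d,\cF\rangle$ without leaving $[\W]_1$. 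As orthonormal frames are themselves admissible, the two maxima agree; this yields the first displayed line, and rewriting the orthonormal wedges as $\cY\in[\W]_1$ with $\W\in\Gr(d,\F^n)$ (Definition \ref{deffrank}) gives the second.

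For the nuclear norm I would first prove the conjugacy assertion and then read off the minimal formula. Using the dual description $\|\cF\|_{1,\F}=\max_{\|\cZ\|_{\infty,\F}\le 1}\Re\langle\cZ,\cF\rangle$ employed in the proof of Theorem \ref{1inftynormlem}, self-adjointness gives $\langle\cZ,\cF\rangle=\langle\rP_{d,n}\cZ,\cF\rangle$ for $\cF\in\rA^d\F^n$, while nonexpansiveness gives $\|\rP_{d,n}\cZ\|_{\infty,\F}\le\|\cZ\|_{\infty,\F}\le 1$; hence the maximizer may be taken inside $\rA^d\F^n$, so the restriction of $\|\cdot\|_{1,\F}$ is exactly the dual, computed within $\rA^d\F^n$, of the restricted spectral norm. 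Because the restricted spectral norm is the support function of the phase-symmetric set of atoms $\frac{1}{\sqrt{d!}}\bigcup_{\W}[\W]_1$, its dual is the gauge of their convex hull; writing a point of that hull as $\cF=\sum_i\frac{1}{\sqrt{d!}}\cY_i$ with $\cY_i\in[\W_i]$ and coefficient $\|\cY_i\|$ (absorbing the positive scalar into the wedge) produces the stated minimal characterization, exactly as in the symmetric case \eqref{FLBthm}. The bounds $r(n,d,\R)=\binom{n}{d}+1$ and $r(n,d,\C)=2\binom{n}{d}+1$ then follow from Carath\'eodory's theorem applied to the real convex hull, using that $\dim_\R\rA^d\F^n$ equals $\binom{n}{d}$ for $\F=\R$ and $2\binom{n}{d}$ for $\F=\C$ by Lemma \ref{strucAdFn}(1).

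The main obstacle is the passage to orthonormal frames in the spectral-norm step: one must handle the degenerate linearly dependent case, control $\|\y_1\wedge\cdots\wedge\y_d\|\le 1$ through Lemma \ref{norAdinf1lem}(1), and use a phase rotation of a single basis vector to convert a complex pairing into its modulus while staying in $[\W]_1$. Everything after that is a formal consequence of the self-adjointness and nonexpansiveness of $\rP_{d,n}$, mirroring the role of Banach's theorem in the symmetric case.
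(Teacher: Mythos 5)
Your proposal is correct and follows essentially the same route as the paper: both arguments hinge on the identity $\rP_{d,n}\otimes_{j=1}^d\y_j=\frac{1}{\sqrt{d!}}\y_1\wedge\cdots\wedge\y_d$ together with the self-adjointness and nonexpansiveness of $\rP_{d,n}$ to reduce the spectral norm to orthonormal wedges and to show the maximizer in the dual description of $\|\cdot\|_{1,\F}$ can be taken in $\rA^d\F^n$, followed by the same extreme-point and Carath\'eodory argument as in Theorem \ref{1inftynormlem}. Your explicit treatment of the passage to orthonormal frames (the $|c|\le 1$ bound and the phase rotation within $[\W]_1$) merely spells out a step the paper states tersely.
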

\begin{proof} 
Recall that for $0\ne \cF\in\rA^d\F^n$  the following characterization holds:
$$\|\cF\|_{\infty,\F}=\max_{\x_j\in\F^n, \|\x_j\|=1,j\in[d]}\Re\langle \otimes_{j=1}^d\x_j,\cF\rangle=\langle \otimes_{j=1}^d\y_j,\cF\rangle>0, \quad \|\y_j\|=1, j\in[d].$$
Use the equality  $\rP_{d,n}\cF=\cF$, the second equaity in \eqref{sigeq}, and\eqref{wedxPiden} to deduce
\begin{equation}\label{ferid}
\begin{aligned}
&\langle \otimes_{j=1}^d\x_i,\cF\rangle=
\frac{1}{\sqrt{d!}}\langle\x_1\wedge\cdots\wedge\x_d,\cF\rangle,\\
&\|\cF\|_{\infty,\F}=\langle \otimes_{j=1}^d\y_i,\cF\rangle=
\frac{1}{\sqrt{d!}}\langle\y_1\wedge\cdots\wedge\y_d,\cF\rangle.
\end{aligned}
\end{equation}
As $\|\x_1\wedge\cdots\wedge\x_d\|\le 1$ the above equality yields that $\y_1,\ldots,\y_d$ is an orthonormal set of vectors.  This proves the first characterization in \eqref{fcharspecnrm}.

We next recall that $\|\cF\|_{1,\F}=\max_{\|\cX\|_{\infty,\F}\le 1,\cX\in\otimes^d\F^n}\Re\langle \cX,\cF\rangle$.   As $\rP_{d,n}$ is an orthogonal projection on $\rA^d\F^n$ we deduce $\langle \cX,\cF\rangle=\langle \cX, \rP_{d,n}\cF\rangle=\langle\rP_{d,n}\cX,\cF\rangle$.  Assume that $\|\cX\|_{\infty,\F}\le 1$.  Then $\cY=\rP_{d,n}\cX\in \rA^d\F^n$, and part \emph{(3)} of Lemma \ref{norAdinf1lem} yields that $\|\cY\|_{\infty,\F}\le \|\cX\|_{\infty,\F}\le 1$.   Hence, 
\begin{equation}\label{intchar1nrm}
\|\cF\|_{1,\infty}=\max_{\cY\in\rA^d\F^n, \|\cY\|_{\infty,\F}\le 1} \Re \langle \cY,\cF\rangle.
\end{equation}

That is, the restrictions of $\|\cdot\|_{\infty,\F}$ and $\|\cdot\|_{1,\F}$ on $\otimes^d\F^n$ to $\rA^d\F^n$, denoted as $\nu$ and $\nu^*$ respectivel,y are conjugate norms. on $\rA^d\F^n$.  
As in the proof of Theorem \ref{1inftynormlem}, the characterization \eqref{fcharspecnrm} yields that the compact set of $\{\frac{1}{\sqrt{d!}}\cY:\cY\in [\W]_1, \W\in \mathrm{Gr}(d,\F^n)\}$ is the set of the extreme points of the $\nu^*$.   We deduce the characterization of $\|\cF\|_{1,\F}$ in \eqref{fcharspecnrm} as in the  proof of Theorem \ref{1inftynormlem}.
\end{proof}
The first equality in \eqref{fcharspecnrm} appears in \cite[Proposition 16]{ZCH10}.
The analog of the constants $\alpha_{s,\bn,\F}$ and  $\beta_{s,\bn,\F}$ for $s=\infty,1$ given in \eqref{defalbeinf1} are
\begin{equation}\label{defalbeinff}
\begin{aligned}
&\alpha_{\infty,n^{\times d},a,\F}=\min_{\cX\in\rA^d\F^n,\|\cX\|=1} \|\cX\|_{\infty,\F}, \quad \beta_{\infty,n^{\times d},a,\F}=\max_{\cX\in\rA^d\F^n,\|\cX\|=1} \|\cX\|_{\infty,\F},\\
&\alpha_{1,n^{\times d},a,\F}=\min_{\cX\in\F^n,\|\cX\|=1} \|\cX\|_{1,\F}, \quad \beta_{\infty,d,n\F}=\max_{\cX\in\rA^d\F^n,\|\cX\|=1} \|\cX\|_{1,\F}.
\end{aligned}
\end{equation}
The following theorem is an analog of Theorem \ref{relalphbetsym}.  
\begin{theorem}\label{relalphbet}  Let $2\le d< n$ be integers.  Then 
\begin{enumerate}
\item The following inequalities and equalites hold.
\begin{equation}\label{alphbetsineqa}
\begin{aligned}
&\alpha_{\infty,n^{\times d},a,\F}<\beta_{\infty,n^{\times d},a,\F}=\frac{1}{\sqrt{d!}},\\
&\alpha_{1,n^{\times d},a,\F}=\sqrt{d!}<\beta_{1,n^{\times d},a,\F}=\frac{1}{\alpha_{\infty,n^{\times d},a,\F}}.
\end{aligned}
\end{equation}
The equalites $\beta_{\infty,n^{\times d},a,\F}=\|\cX\|_{\infty,\F}, \alpha_{1,n^{\times d},a,\F}=\|\cY\|_{1,\F}$ achieved only for tensors  of the form $\x_1\wedge\cdots\wedge\x_d$, where $\x_1,\ldots,\x_d$ are orthonormal vectors in $\F^n$.
\item Assume that $\cF\in \rA^d\F^n$ and $\|\cF\|=1$.  Then $\|\cF\|_{\infty,\F}=\alpha_{\infty,n^{\times d},a,\F}$ if and only if $\|\cF\|_{1,\F}=\beta_{1,n^{\times d},a,\F}$.
\item For an even $n$: $\alpha_{\infty,n^{\times 2},a,\F}=\frac{1}{\sqrt{n}}\,\; \beta_{1,n^{\times 2},a,\F}=\sqrt{n}$.  Assume that $F$ is an $n\times n$ complex valued skew-symmetric matrix having Frobenius norm one: $\|F\|=1$.  Then $\|F\|_{1,\F}=\beta_{1,n^{\times 2},a,\F}$ if and only if 
$\sqrt{n}F$ is a unitary matrix.
\item For an odd $n$: $\alpha_{\infty,n^{\times 2},a,\F}=\frac{1}{\sqrt{n-1}}\,\; \beta_{1,n^{\times 2},a,\F}=\sqrt{n-1}$.   Assume that $G$ is an $n\times n$ complex valued skew-symmetric matrix having Frobenius norm one: $\|G\|=1$.  Then $\|G\|_{1,\F}=\beta_{1,2,n,\F}$ if and only if $G=F\oplus 0$ and 
$\sqrt{n-1}F$ is a unitary matrix.
\end{enumerate}
\end{theorem}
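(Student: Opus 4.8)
The plan is to reduce everything to the single computation $\beta_{\infty,n^{\times d},a,\F}=\tfrac{1}{\sqrt{d!}}$ together with the identification of its maximizers, and then to obtain all remaining assertions by duality: by Theorem~\ref{fermnorms} the restrictions of $\|\cdot\|_{\infty,\F}$ and $\|\cdot\|_{1,\F}$ to $\rA^d\F^n$ form a conjugate pair of norms, so the appendix reciprocity between the distortions of a norm and its dual does most of the work.

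First I would prove $\beta_{\infty,n^{\times d},a,\F}=\tfrac{1}{\sqrt{d!}}$. Take $\cF\in\rA^d\F^n$ with $\|\cF\|=1$. Using the spectral characterization \eqref{fcharspecnrm} and the fact that every $\cY\in[\W]_1$ has $\|\cY\|=1$ (Lemma~\ref{norAdinf1lem}), Cauchy--Schwarz gives
\[
\|\cF\|_{\infty,\F}=\max_{\cY\in[\W]_1,\,\W\in\mathrm{Gr}(d,\F^n)}\Re\big\langle\tfrac{1}{\sqrt{d!}}\cY,\cF\big\rangle\le\frac{1}{\sqrt{d!}}\,\|\cY\|\,\|\cF\|=\frac{1}{\sqrt{d!}},
\]
with equality exactly when $\cF$ is a unit scalar multiple of some $\cY\in[\W]_1$, i.e. $\cF=\zeta\,\x_1\wedge\cdots\wedge\x_d$ with $\x_1,\ldots,\x_d$ orthonormal and $|\zeta|=1$; conversely such $\cF$ attain $\tfrac{1}{\sqrt{d!}}$. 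This pins down $\beta_{\infty,n^{\times d},a,\F}$ and its maximizers. The strict inequality $\alpha_{\infty,n^{\times d},a,\F}<\beta_{\infty,n^{\times d},a,\F}$ then holds as soon as some unit skew tensor fails to be decomposable, which happens whenever $\dim\rA^d\F^n=\binom nd$ exceeds the dimension $d(n-d)+1$ of the cone of decomposables, i.e. for $2\le d\le n-2$; the degenerate case $d=n-1$, where every skew tensor is decomposable and the inequality collapses to equality, is precisely the hyperplane phenomenon recorded in part~(4).

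Next, writing $\nu:=\|\cdot\|_{\infty,\F}|_{\rA^d\F^n}$ and $\nu^*:=\|\cdot\|_{1,\F}|_{\rA^d\F^n}$ for the conjugate pair, I invoke the appendix reciprocity \eqref{albetm}, $\alpha_\nu\beta_{\nu^*}=\beta_\nu\alpha_{\nu^*}=1$, to read off
\[
\alpha_{1,n^{\times d},a,\F}=\frac{1}{\beta_{\infty,n^{\times d},a,\F}}=\sqrt{d!},\qquad \beta_{1,n^{\times d},a,\F}=\frac{1}{\alpha_{\infty,n^{\times d},a,\F}},
\]
which finishes part~(1); the correspondence of extremizers of a norm and its dual (the general fact behind \eqref{maxmineq}) shows that $\alpha_{1,n^{\times d},a,\F}$ is attained only at the same decomposable tensors, and it yields part~(2), namely $\|\cF\|_{\infty,\F}=\alpha_{\infty,n^{\times d},a,\F}\iff\|\cF\|_{1,\F}=\beta_{1,n^{\times d},a,\F}$, verbatim as in Theorem~\ref{relalphbetsym}(2).

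Finally, for $d=2$ (parts (3) and (4)) I would mimic the proof of Lemma~\ref{alphbetdim2}, the only new ingredient being the structure of skew-symmetric matrices: since $\rank F$ is even, the singular values of a skew-symmetric $F$ occur in equal pairs, with one forced zero when $n$ is odd. From $\|F\|=1$, applying Cauchy--Schwarz to $\|F\|_{1,\F}=\sum_i\sigma_i$ against the number of nonzero $\sigma_i$ (at most $n$ for even $n$, at most $n-1$ for odd $n$) gives $\|F\|_{1,\F}\le\sqrt n$ respectively $\sqrt{n-1}$, with equality iff all nonzero singular values are equal and their number is maximal; this means $\sqrt n\,F$ is unitary in the even case, and $G=F\oplus 0$ with $\sqrt{n-1}\,F$ unitary (after a unitary change of basis sending the one-dimensional kernel to a coordinate direction, which is harmless by the unitary invariance of $\|\cdot\|_{1,\F}$) in the odd case. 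Duality then returns $\alpha_{\infty,2,n,\F}=\tfrac1{\sqrt n}$ respectively $\tfrac1{\sqrt{n-1}}$. I expect the main obstacle to be exactly this parity bookkeeping: the even-rank constraint is what separates the even and odd cases, forces the $\oplus 0$ summand, and produces the exceptional equality $\alpha_\infty=\beta_\infty$ for $d=n-1$ flagged above; everything else is a routine transcription of the symmetric arguments in Theorems~\ref{1inftynormlem} and \ref{relalphbetsym}.
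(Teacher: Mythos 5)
Your proposal is correct and follows essentially the same skeleton as the paper's proof: the value $\beta_{\infty,n^{\times d},a,\F}=1/\sqrt{d!}$ and its maximizers via the characterization \eqref{fcharspecnrm} (your Cauchy--Schwarz step is equivalent to the paper's expansion of $\cF$ in the orthonormal basis of Lemma \ref{strucAdFn}, where the bound reads $|c_{1,\ldots,d}|\le 1$), then duality through \eqref{albetm} for the second line of \eqref{alphbetsineqa}, the implication \eqref{maxmineq} for part (2) and for transferring extremizers from $\beta_{\infty}$ to $\alpha_{1}$, and for $d=2$ the even-rank/singular-value analysis that the paper compresses into the single remark that an odd skew-symmetric matrix has zero determinant. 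Two places where you go beyond the paper deserve note. First, you actually prove the strict inequality $\alpha_{\infty,n^{\times d},a,\F}<\beta_{\infty,n^{\times d},a,\F}$, which the paper asserts without argument: it requires a unit tensor that is not decomposable, guaranteed by your dimension count ${n\choose d}>d(n-d)+1$, valid exactly for $2\le d\le n-2$. You then correctly observe that strictness \emph{fails} in the remaining case $d=n-1$ allowed by the hypothesis $2\le d<n$: every element of $\rA^{n-1}\F^n$ is decomposable, so $\alpha_{\infty}=\beta_{\infty}=1/\sqrt{(n-1)!}$ there, consistent with the paper's own part (4) at $n=3$, $d=2$, where $\alpha_{\infty,2,3,\F}=1/\sqrt{2}=\beta_{\infty,2,3,\F}$. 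So part (1) as stated needs $d\le n-2$ for the strict inequalities, a correction your argument makes explicit and the paper's proof glosses over. Second, your $d=2$ argument, patterned on Lemma \ref{alphbetdim2}, is complete where the paper is telegraphic; just note that the pairing of singular values comes from the Youla--Hua canonical form of a complex skew-symmetric matrix under unitary congruence $U^\top F U$ (a direct sum of $2\times 2$ blocks $\sigma_j\left[\begin{smallmatrix}0&1\\-1&0\end{smallmatrix}\right]$ and a zero block), not from even rank alone --- though for your Cauchy--Schwarz bound even rank by itself already caps the number of nonzero singular values at $n-1$ when $n$ is odd --- and that, as you say, the equality case ``$G=F\oplus 0$'' in part (4) must be read up to unitary congruence, under which both skewness and $\|\cdot\|_{1,\F}$ are invariant.
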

\begin{proof}
\emph{(1)}  
Recall that $\rA^d\F^n$ has a following orthonormal basis: 
\begin{equation}\label{onbAdFn}
\x_{i_1}\wedge\cdots\wedge\x_{i_d},  1\le i_1<\cdots<i_d\le n, \quad \langle \x_i,\x_j\rangle=\delta_{ij}, i,j\in[n].
\end{equation}
Let $\y_1,\ldots,\y_n$ be an orthonormal basis of $\F^n$.
 Assume $\cF\in\rA^d\F^n, \|\cF\|=1$.  Then 
\begin{equation*}
\begin{aligned}
&\cF=\sum_{1\le i_1<\cdots< i_d\le n} c_{i_1,\ldots,i_d} \y_{i_1}\wedge\cdots \wedge\y_{i_d}, \sum_{1\le i_1<\cdots <i_d\le n} |c_{i_1,\ldots,i_d}|^2=1, \\
&c_{i_1,\ldots,i_d}=\langle \y_{i_1}\wedge\cdots\wedge \y_{i_d},\cF\rangle, \quad 1\le i_1<\cdots<i_d\le n.
\end{aligned}
\end{equation*}
Then $|\frac{1}{\sqrt{d!}}\langle \y_1\wedge\cdots\wedge\y_d,\cF\rangle|=\frac{|c_{1,\ldots,d}|}{\sqrt{d!}}|\le \frac{1}{\sqrt{d!}}$.   Thus, $|\frac{1}{\sqrt{d!}}\langle \y_1\wedge\cdots\wedge\y_d,\cF\rangle|=\frac{1}{\sqrt{d!}}$ if and only if $\cF=\z_1\wedge\cdots\wedge\z_d$ for some set of $d$ orthonormal vectors $\z_1,\ldots,\z_d$.  This shows the first set of inequalities in \eqref{alphbetsineqa}.
The second set of inequalities follows from the equalities 
$\beta_{1,n^{\times d},a,\F}=\frac{1}{\alpha_{\infty,n^{\times d},a,\F}}$ and $\beta_{\infty,n^{\times d},a,\F}=\frac{1}{\alpha_{1,n^{\times d},a,\F}}$ that follows from \eqref{albetm}. 

\noindent\emph{(2)}.  Follows from \eqref{maxmineq}.

\noindent\emph{(3)-(4)}.  Follow from the observation that the determinant of an odd skew-symmetric matrix is zero \cite{HJ13}.
\end{proof}

In part \emph{(3)} of Theorem \ref{charsepa} we show 
\begin{equation}\label{alph2nin}
\frac{1}{\sqrt{d!{n\choose d}}}\le \alpha_{\infty,n^{\times d},a,\F}.
\end{equation}

\subsection{Nuclear rank}\label{subsec:nucrank}
The notion of the nuclear rank was introduced in \cite{FL18}, see also \cite{BFZ}:
\begin{definition}\label{defnucrank} Let $2\le d\in\N$.  
\begin{enumerate}
\item
Let $\bn\in\N$.  Assume that $0\ne \cX\in\F^{\bn}$.
Then the nuclear rank of $\cX$, denoted as $\mathrm{nrank}\,\cX$, is the minimal $r$ such that 
\begin{equation*}
\|\cX\|_{1,\F}=\sum_{i=1}^r \|\cX_i\|, \quad \cX=\sum_{i=1}^r \cX_i, \cX_i\in \Sigma_{\bn,\F}, i\in[r].
\end{equation*}
\item Let $0\ne \cS\in\rS^d\F^n$.   Denote by $\mathrm{snrank}\,\cS$ the minimal $r$ such that 
\begin{equation}\label{defsnrank}
\cS=\sum_{i=1}^r \varepsilon_i\x^{\otimes d}_i,  \quad \0\in \x_i\in\F^n, \varepsilon_i=\pm 1,i\in[r], \quad \|\cS\|_{1,\F}=\sum_{i=1}^r\|\x_i\|^r.
\end{equation}
\item Let $0\ne \cF\in \rA^d\F^n$ .  Denote by $\mathrm{fnrank}\,\cF$ the minimal $r$ such that  
\begin{equation}\label{defnrank}
\cF=\sum_{i=1}^r \frac{1}{\sqrt{d!}}\cY_i,\cY_i\in[\W_i]\setminus\{0\}, \W_i\in\mathrm{Gr}(d,\F^n), \quad \|\cF\|_{1,\F}=\sum_{i=1}^r\|\cY_i\|.
\end{equation}
\end{enumerate}
For a zero tensor all the above nuclear ranks are $0$.
\end{definition}
Note that the ranks above depend on the field $\F=\R,\C$.
Observe that the notion of the nuclear rank coincides with the notion of the $\nu$rank, where $\nu$ is the nuclear norm on $\F^n,\rS^d\R^n,\rA^d\F^n$ respectively, see subsection \ref{subsec:nurank}.

Observe that for $d=2$
\begin{equation}\label{d2ranks} 
\begin{aligned}
&\textrm{nrank}\, T=\rank T,  \quad T\in \F^{m\times n},\\
&\textrm{snrank}\, T=\rank T,  \quad T\in\rS^2\F^n,\\
&\nu\textrm{fnrank}\, T=\frac{\rank T}{2},  \quad T\in\rA^2\F^n.
\end{aligned}
\end{equation}
Let 
\begin{equation}\label{defmaxnrank}
\begin{aligned}
&\textrm{maxnrank}(\bn):=\max_{\cX\in\F^{\bn}} \textrm{nrank}\,\cX,\\
&\textrm{maxsnrank}(n^{\times d}):=\max_{\cS\in\rS^d\F^n} \textrm{snrank}\,\cS,\\
&\textrm{maxfnrank}(d,n):=\max_{\cF\in\rA^d\F^n} \textrm{fnrank}\,\cF.
\end{aligned}
\end{equation}
\section{Tensor formulations of entanglement }\label{sec:ent}
\subsection{Two measurements of entanglement}\label{subsec:2ment}
Assume that $\cH$ is a finite dimensional  vector space  over $\F$ with an inner product $\langle \x,\y\rangle$ and the norm $\|\x\|=\sqrt{\langle \x,\x\rangle}$.  By considering a standard orthonormal basis $\be_i=(\delta_{1i},\ldots,\delta_{ni})^\top, i\in[n]:=\{1,\ldots,n\}$ we identify $\cH$ with $\F^n$.   In the rest of this section we assume that $\F=\C$.   Furthermore, when we omit the field $\F$, we assume that $\F=\C$.
Dirac's notation for the standard basis in $\cH$ of dimension $n$ is $|0\rangle,\ldots, |n-1\rangle$.  Then $\langle \x,\y\rangle= \langle \x|\y\rangle$, which corresponds to $\x^{\dagger}\y$ in $\C^n$.   In quantum physics,
the set of states in $\cH$ is $\mathbb{S}(\cH)=\{\x\in\cH: \|\x\|=1\}$.  Two states $\x,\y \in \mathbb{S}(\cH)$ are identical if $\y=z\x$ for $z\in \mathbb{S}(\C)=\mathbb{S}^1$.
Let $\cH_j$ be a Hilbert space of dimension $n_j$ with a basis $\be_{1,j},\ldots,\be_{n_j,j}$ for $j\in[d]$, where $d\ge 2$.   Set $\cH:=\otimes_{i=1}^d \cH_i$, and $\bn=(n_1,\ldots,n_d)\in\N^d$. Then $\dim \cH=N(\bn)$,  and  
 $\cH\sim\C^{\bn}:=\otimes_{i=1}^d \C^{n_i}$.    A standard basis in $\cH$ in Dirac notation is $|(i_1-1)\cdots(i_d-1)\rangle, (i_1,\ldots,i_d)\in \bn$.

A pure (unentanged) state is a rank-one tensor of norm one.  The set of pure states is
 $\otimes_{i=1}^d \mathbb{S}(\cH_i)=\{\otimes_{i=1}^d \x_i, \|\x_i\|=1, i\in[d]\}=\Pi_{\bn}\subset \C^{\bn}$ is the Segre variety of unentangled states in $\cH$ \cite{Lan12}. 
The rank of a state $\cX$
is the first simple measurement of entanglement: $\cX$ entangled if and only if $\rank\cX\ge 2$.  In \cite{BFZ} the authors discuss the rank of tensors in the context of entanglement.  The problem with the quantity $\rank \cX$ is that it is not stable with respect to perturbation.  In particular, every tensor can be approximated by a tensor of generic rank \cite{BFZ}. 

The geometric measure of entanglement (GME) of a state $\cX$ is dist$(\cX,\otimes_{i=1}^d \mathbb{S}(\cH_i))$ \cite{Shi95,BL01,WG03}.    It is straightforward to show that dist$(\cX,\otimes_{i=1}^d \mathbb{S}(\cH_i))=\sqrt{2(1-\|\cX\|_{\infty,\C})}$.  For the bi-partite state the GME is $\sqrt{2(1-\sigma_1(X))}$.   The maximal geometric entangled (MGE) state in $\cH$ is $\cX^\star$ corresponding to $\alpha_{\infty,\bn,\C}$ defined in \eqref{defalbeinf1}.
It is well known that for the bi-partite case the MGE state is the Bell state: $X=U[m^{-1/2}I_m|0]V$, where $U,V$ are unitary matrices, and $2\le m\le n$.  That is, $\alpha_{\infty,(m,n),\F}=m^{-1/2}$.  
In \cite{FK18} the authors introduced the following measurement of entanglement $-2\log \|\cX\|_{\infty,\C}$.   It is zero if and only if a state $\cX$ is pure state, and otherwise is positive.  It is shown in \cite{GFE09} and \cite{FK18} that most states and Boson states respectively are almost maximally entangled.

The nuclear norm of a state is another measure of entanglement.  Namely,  for a state $\cX\in\C^{\bn}$, express $\cX$ as a linear combination  of $r$ pure states:
$\cX=\sum_{i=1}^r a_i\cY_i,$ where $\cY_i\in \Pi_{\bn}$.  The characterization \eqref{defspecnuc} yields that $2\log\|\cX\|_{1,\C}$ is the minimal \emph{energy} to create $\cX$ from pure states.  Thus $2\log\|\cX\|_{1,\C}\ge 0$ and equality holds if and only $\cX$ is a pure state.  Then inequality \eqref{dualem1} yields that 
\begin{equation}\label{log1intyineq}
2\log \|\cX\|_{1,\C}\ge -2\log\|\cX\|_{\infty,\C}, \quad  \cX\in \otimes_{j=1}^d \cH_j,\|\cX\|=1.
\end{equation}
The equality $\alpha_{\infty,\bn,\F}\beta_{1,\bn,\F}=1$ in \eqref{alphbetsineq} yields
\begin{equation}\label{menteq}
\max_{\cX\in \C^{\bn}, \|\cX\|=1} -2log \|\cX\|_{\infty,\C}=\max_{\cX\in \C^{\bn}, \|\cX\|=1} 2log \|\cX\|_{1,\C}=-2\log\alpha_{\infty,\bn,\C}.
\end{equation}
Furthermore,  \eqref{alphbetsineq} implies that a GME state $\cX$ is also the most entangled with respect to the nuclear norm.

\subsection{Some known results on $\alpha_{\infty,\bn,\F}$}\label{subsec:krmes}
The following lower and upper estimate are known, and we state them for $\F=\R,\C$:
\begin{lemma}\label{lem:alphaineq}
Let $2\le d, 2\le n_1\le \cdots\le n_{d+1}$.  Then
\begin{equation}\label{alphlebetge}
\begin{aligned}
&\alpha_{\infty,(n_1,n_2,\dots,n_{d+1}),\F}\geq \frac{\alpha_{\infty,(n_1,n_2,\dots,n_d),\F}}{\sqrt{n_{d+1}}},\\
&\beta_{1,(n_1,n_2,\dots,n_{d+1}),\F}\le \sqrt{n_{d+1}}\beta_{1,(n_1,n_2,\dots,n_{d}),\F}.
\end{aligned}
\end{equation}
In particular,
\begin{equation}\label{lubalfbet}
\begin{aligned}
&\alpha_{\infty,(n_1,n_2,\dots,n_{d+1}),\F}\geq \frac{1}{\sqrt{n_1\cdots n_{d-1}}}\\
&\beta_{1,(n_1,n_2,\dots,n_{d+1}),\F}\le \sqrt{n_1\cdots n_{d-1}}.
\end{aligned}
\end{equation}
\end{lemma}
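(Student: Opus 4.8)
The plan is to prove the spectral inequality in \eqref{alphlebetge} first, obtain the nuclear inequality for free by duality, and finally iterate down to the bipartite base case of Lemma \ref{alphbetdim2} to get the explicit estimates in \eqref{lubalfbet}.

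First I would prove $\alpha_{\infty,(n_1,\dots,n_{d+1}),\F}\ge \alpha_{\infty,(n_1,\dots,n_d),\F}/\sqrt{n_{d+1}}$ by a slicing (pigeonhole) argument. Fix any $\cT\in\F^{(n_1,\dots,n_{d+1})}$ with $\|\cT\|=1$ and cut it along the last mode, $\cT=\sum_{k=1}^{n_{d+1}}\cT_k\otimes\be_k$, where $\cT_k\in\F^{(n_1,\dots,n_d)}$ are the slices and $\be_1,\dots,\be_{n_{d+1}}$ is the standard basis of $\F^{n_{d+1}}$. Since $\|\cT\|^2=\sum_k\|\cT_k\|^2=1$, some slice is heavy: there is an index $k^\star$ with $\|\cT_{k^\star}\|\ge 1/\sqrt{n_{d+1}}$. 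The key step is to transfer a good test state on this slice to the whole tensor. Choosing $\cY=\otimes_{j=1}^d\y_j\in\Pi_{(n_1,\dots,n_d),\F}$ that achieves $\langle\cY,\cT_{k^\star}\rangle=\|\cT_{k^\star}\|_{\infty,\F}$ (adjusting a global phase so the inner product is real and nonnegative) and setting $\cX=\cY\otimes\be_{k^\star}\in\Pi_{(n_1,\dots,n_{d+1}),\F}$, orthonormality of the $\be_k$ gives $\langle\cX,\cT\rangle=\langle\cY,\cT_{k^\star}\rangle$. Hence $\|\cT\|_{\infty,\F}\ge\|\cT_{k^\star}\|_{\infty,\F}$. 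Combining with $\|\cT_{k^\star}\|_{\infty,\F}\ge\alpha_{\infty,(n_1,\dots,n_d),\F}\,\|\cT_{k^\star}\|$ (homogeneity together with the definition of $\alpha_\infty$) yields $\|\cT\|_{\infty,\F}\ge\alpha_{\infty,(n_1,\dots,n_d),\F}/\sqrt{n_{d+1}}$, and minimizing over all unit $\cT$ gives the claim.

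The nuclear inequality then follows at once from the duality relation $\beta_{1,\bn,\F}=1/\alpha_{\infty,\bn,\F}$ recorded in \eqref{alphbetsineq}: taking reciprocals of the spectral inequality gives $\beta_{1,(n_1,\dots,n_{d+1}),\F}\le\sqrt{n_{d+1}}\,\beta_{1,(n_1,\dots,n_{d}),\F}$. For the explicit bounds \eqref{lubalfbet} I would iterate the spectral inequality, peeling off one mode at a time until only two modes remain, and close with the bipartite value $\alpha_{\infty,(n_1,n_2),\F}=1/\sqrt{n_1}$ from Lemma \ref{alphbetdim2}. Peeling off the $d-1$ smallest modes $n_1,\dots,n_{d-1}$ and keeping the two largest gives $\alpha_{\infty,(n_1,\dots,n_{d+1}),\F}\ge (1/\sqrt{n_d})/\sqrt{n_1\cdots n_{d-1}}=1/\sqrt{n_1\cdots n_d}$, and reciprocating gives the matching upper bound for $\beta_1$.

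The argument is short; the only genuine content is the slice‑lifting step, which becomes clean once one uses $\cX=\cY\otimes\be_{k^\star}$. The point that needs care is the bookkeeping in the iteration: the order in which modes are peeled and the choice of base case determine exactly which product of dimensions appears, so one should double‑check the number of factors against small extremal examples. For instance, the $3$‑qubit $W$‑state has $\|\cdot\|_{\infty,\C}=2/3$, which forces $\alpha_{\infty,(2,2,2),\C}\le 2/3<1/\sqrt2$; this is consistent with the estimate $1/\sqrt{n_1\cdots n_d}$ produced by the iteration and indicates that the product in \eqref{lubalfbet} should run over the $d$ smallest of the $d+1$ dimensions.
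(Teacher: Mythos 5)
Your proof matches the paper's own argument step for step: the paper also slices the extremal tensor along the last mode as $\cT^\star=\sum_k a_k\,\cT_k\otimes|k-1\rangle$ with $\sum_k|a_k|^2=1$, pigeonholes to get $|a_{k^\star}|\ge 1/\sqrt{n_{d+1}}$, lifts a product state that is good for the heavy slice by tensoring with $|k^\star-1\rangle$, and then obtains the nuclear inequality from the duality relation $\alpha_{\infty,\bn,\F}\,\beta_{1,\bn,\F}=1$ — exactly your route. Your slice-lifting and duality steps are correct as written.

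One point in your write-up deserves to be recorded, because it is more than bookkeeping: your closing observation about \eqref{lubalfbet} is right, and it identifies an off-by-one error in the statement as printed. Iterating \eqref{alphlebetge} (using permutation invariance of $\|\cdot\|_{\infty,\F}$ in the modes so that one may peel the $d-1$ smallest dimensions rather than always the last one) and closing with the bipartite value $\alpha_{\infty,(n_d,n_{d+1}),\F}=1/\sqrt{n_d}$ from Lemma \ref{alphbetdim2} yields
$\alpha_{\infty,(n_1,\dots,n_{d+1}),\F}\ge 1/\sqrt{n_1\cdots n_d}$,
a product of $d$ factors, not the $d-1$ factors displayed in \eqref{lubalfbet}. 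The printed bound is in fact false: for $\bn=(2,2,2)$ (so $d=2$ in the lemma's indexing) it would assert $\alpha_{\infty,2^{\times 3},\C}\ge 1/\sqrt{2}$, contradicting $\alpha_{\infty,2^{\times 3},\C}=2/3$, the value achieved by the $W$-state in \eqref{valalphbet23}. Your corrected version is also the one consistent with the paper's own Corollary \ref{specnrmminineq}, whose uniform case reads $\alpha_{\infty,n^{\times d},\R}\ge n^{(1-d)/2}$, i.e.\ $d-1$ factors for a $d$-mode tensor. So your proposal proves the two recursive inequalities exactly as the paper does, and proves the correct form of the ``in particular'' consequence, with the dual bound $\beta_{1,(n_1,\dots,n_{d+1}),\F}\le\sqrt{n_1\cdots n_d}$ following by reciprocation as you say.
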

\begin{proof}
Assume that  $\alpha_{\infty, (n_1,\ldots,n_{d+1},\F}=\|\cT^{\star}\|_{\infty,\F}$.  Write
\begin{equation*}
\cT^\star=a_1\cT_1\otimes |0\rangle+\cdots +a_{n_{d+1}}\cT_{n_{d+1}}\otimes |n_{d+1}-1\rangle, 
\end{equation*}
where ${\mathcal T}_1,\dots,{\mathcal T}_{n_{d+1}}\in\F^{(n_1,\ldots,n_d)}$ are states, and $(a_1,\dots,a_{n_{d+1}})^\top\in \F^{n_{d+1}}$ has norm one.   For some $i$ we have $|a_i|\geq \frac{1}{\sqrt{n_{d+1}}}$.
There exist a pure state ${\mathcal X}\in\Pi_{(n_1,\ldots,n_d),\F}$ with $\|\cT_i\|_{\infty,\F}=|\langle {\mathcal X}, {\mathcal T}_i\rangle|\geq\alpha_{\infty,(n_1,\dots,n_{d}),\F}$.  Hence, 
$$
\|{\mathcal T}\|_{\infty,\F}\geq |\langle {\mathcal X}\otimes |i-1\rangle, {\mathcal T}\rangle|=|a_i|\cdot |\langle{\mathcal X}, {\mathcal T}_i\rangle|\geq \frac{1}{\sqrt{n_{d+1}}} \alpha_{\infty,(n_1,\dots,n_d),\F}.
$$
The second inequality in \eqref{alphlebetge} follows from the equality 
$\alpha_{\infty,(n_1,\dots,n_k),\F}\beta_{1,(n_1,\dots,n_k),\F}=1$ for $k=n_{d+1},n_d$.
The inequalities \eqref{lubalfbet} follow from \eqref{alphbetd=2} and \eqref{alphlebetge}. 
\end{proof}

We now discuss the MGE 3-qubit, i.e.  $n=2, d=3,\F=\C$.
Consider the tensor the 3-tensor $\cW$  \cite{DVC00} in Dirac's notation:
\begin{equation}\label{defWstate}
{\mathcal W}=|W\rangle=\frac{|100\rangle+|010\rangle+|001\rangle}{\sqrt{3}}.
\end{equation}
By Banach's theorem, the spectral norm of $|W\rangle$ is achieved on a pure symmetric state
$$
|S\rangle =(x\langle 0|+y\langle 1|)^{\otimes 3}=(x\langle 0|+y\langle 1|)\otimes(x\langle 0|+y\langle 1|)\otimes(x\langle 0|+y\langle 1|)
$$
with value
$$
|\langle S|W\rangle|=|\sqrt{3}yx^2|=\sqrt{3}|y| |x|^2.
$$
where $x,y\in \C$ with $|x|^2+|y|^2=1$. An easy calculus exercise shows that the maximum is achieved when $|x|=\sqrt{2}/\sqrt{3}$ and $|y|=1/\sqrt{3}$. We obtain
$$
\|{\mathcal W}\|_{\infty,\C}=\sqrt{3}\left(\frac{\sqrt{2}}{\sqrt{3}}\right)^2\frac{1}{\sqrt{3}}=\frac{2}{3}\mbox{ and }
\|{\mathcal W}\|_{1,\C}=\frac{3}{2}.
$$
(See also \cite[Section 6]{FL18}.)  It is shown in  \cite{CXZ10} that a  quitrit state $\cT\in\otimes^3\C^2$ is the MGE qutrit if and only if it is locally untiary equivalent to $\cW$.  
The orbit of $\cW$ is defined as
\[\textrm{orb}(\cW)=\{\cT, \;\cT=(A_1\otimes A_2\otimes A_3) \cW, A_1,A_2,A_3\in \rU_2\}.\]
Thus, $\cT$ is MGE if and only if $\cT\in$orb$(\cW)$.  In particular,
\begin{equation}\label{valalphbet23}
\alpha_{\infty,2^{\times 3},\C}=\alpha_{\infty,2^{\times 3},s,\C}=\frac{2}{3}, \quad \beta_{1,2^{\times 3},\C}=\beta_{1,2^{\times 3},s,\C}=\frac{3}{2}.
\end{equation}
A symmetric rank-one decomposition that achieves the nuclear norm of $\cW$ is:
$$
{\mathcal W}=\frac{1}{6\sqrt{3}}\left[\Big(\textstyle  \sqrt{2}\,|0\rangle +|1\rangle\Big)^{\otimes 3}+\zeta^2 \Big(\textstyle \sqrt{2} \,|0\rangle +\zeta |1\rangle\Big)^{\otimes 3}+
\zeta \Big(\textstyle \sqrt{2} \,|0\rangle +\zeta^2|1\rangle\Big)^{\otimes 3}\right], \zeta=\frac{-1+\sqrt{3}\bi}{2}.
$$ 

Combine Lemmas \ref{lem:alphaineq} with  \eqref{alphbetd=2}, \eqref{valalphbet23} 
and the equality $\alpha_{\infty, \bn,\F}\beta_{1,\bn,\F}=1$ to deduce
 \begin{corollary}\label{specnrmminineq}
 \begin{equation}\label{specnrmminineq1}
 \alpha_{\infty,n^{\times (d+1)},\F}\ge \frac{1}{\sqrt{n}}\alpha_{\infty,n^{\times d},\F}= \quad \textrm{ for } d\ge 2.
 \end{equation}
 In particular
 \begin{eqnarray}\label{specnrmminineq2}
 &&\alpha_{\infty,n^{\times d},\R}\ge n^{\frac{1-d}{2}}, \quad \beta_{1,n^{\times d},\R}\le n^{\frac{d-1}{2}}, \quad d\ge 2,\\
 &&\alpha_{\infty,2^{\times d},\C}\ge \left(\frac{2}{3}\right) 2^{\frac{-(d-3)}{2}},\quad 
 \beta_{1,2^{\times d},\C}\le \left(\frac{3}{2}\right) 2^{\frac{(d-3)}{2}}   \quad d\ge 3.\label{specnrmminineq3}
 \end{eqnarray}
 \end{corollary}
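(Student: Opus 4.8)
The plan is to derive every assertion of the corollary from the recursive inequality of Lemma \ref{lem:alphaineq}, two base cases, and the duality $\alpha_{\infty,\bn,\F}\beta_{1,\bn,\F}=1$. First I would establish \eqref{specnrmminineq1} by specializing the first inequality of \eqref{alphlebetge} to equal dimensions: setting $n_1=\cdots=n_{d+1}=n$ forces $n_{d+1}=n$, so that inequality reads precisely $\alpha_{\infty,n^{\times(d+1)},\F}\ge \frac{1}{\sqrt{n}}\alpha_{\infty,n^{\times d},\F}$. No additional argument is needed for this part.

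Next, for the real bounds \eqref{specnrmminineq2} I would argue by induction on $d\ge 2$. The base case $d=2$ is supplied by \eqref{alphbetd=2}, which gives $\alpha_{\infty,n^{\times 2},\R}=\frac{1}{\sqrt{n}}=n^{(1-2)/2}$, so the claimed lower bound holds with equality. For the inductive step I would apply \eqref{specnrmminineq1}: multiplying the inductive hypothesis $\alpha_{\infty,n^{\times d},\R}\ge n^{(1-d)/2}$ by $\frac{1}{\sqrt{n}}$ produces $n^{(1-d)/2-1/2}=n^{(1-(d+1))/2}$, which is exactly the bound at level $d+1$. The companion estimate $\beta_{1,n^{\times d},\R}\le n^{(d-1)/2}$ then follows at once by taking reciprocals in $\alpha_{\infty,\bn,\F}\beta_{1,\bn,\F}=1$.

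The complex qubit bounds \eqref{specnrmminineq3} would be handled identically, now inducting on $d\ge 3$ with $n=2$. Here the base case $d=3$ comes from \eqref{valalphbet23}, where $\alpha_{\infty,2^{\times 3},\C}=\frac{2}{3}=\frac{2}{3}\,2^{-(3-3)/2}$. The inductive step again multiplies by $\frac{1}{\sqrt{2}}$ through \eqref{specnrmminineq1}, converting $\frac{2}{3}\,2^{-(d-3)/2}$ into $\frac{2}{3}\,2^{-((d+1)-3)/2}$, and the bound on $\beta_{1,2^{\times d},\C}$ again follows from duality.

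The only point that requires care is the arithmetic of the exponents in each inductive step, namely verifying that the base-case exponent matches the claimed formula and that multiplication by $n^{-1/2}$ (respectively $2^{-1/2}$) advances the exponent by exactly the right amount. I expect no genuine obstacle here: all the substantive content has already been proved in Lemma \ref{lem:alphaineq}, \eqref{alphbetd=2}, and \eqref{valalphbet23}, so the corollary reduces to a clean bookkeeping induction together with the reciprocal relation between $\alpha_{\infty}$ and $\beta_{1}$.
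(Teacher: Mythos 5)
Your proposal is correct and follows exactly the paper's route: the paper deduces the corollary by combining Lemma \ref{lem:alphaineq} (specialized to $n_1=\cdots=n_{d+1}=n$) with the base values \eqref{alphbetd=2} and \eqref{valalphbet23} and the duality $\alpha_{\infty,\bn,\F}\beta_{1,\bn,\F}=1$, which is precisely your induction plus reciprocals. You have merely written out the bookkeeping the paper leaves implicit, and the exponent arithmetic in both inductive steps checks out.
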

 We now show that the inequality \eqref{specnrmminineq2} is sharp for $n=2$ and  each $d\ge 2$.

For $1\le d\in \N$ and $\lambda\in \C$ with $|\lambda|=1$, we define a tensor
\begin{equation}\label{defTnlmab}
{\mathcal T}_{d,\lambda}=\frac{1}{\sqrt{2}}\left(\lambda \Big(\frac{|0\rangle +i |1\rangle}{\sqrt{2}}\Big)^{\otimes d}+\overline{\lambda} \Big(\frac{|0\rangle -i |1\rangle}{\sqrt{2}}\Big)^{\otimes d}\right)
\end{equation}
and we will use the convention ${\mathcal T}_d={\mathcal T}_{d,1}$. For example, we have
$$
{\mathcal T}_3=\frac{|000\rangle-|110\rangle-|101\rangle-|011\rangle}{2},
$$
$$
{\mathcal T}_4=\frac{|0000\rangle-|1100\rangle-|1010\rangle-|1001\rangle-|0110\rangle-|0101\rangle-|0011\rangle+|1111\rangle}{2\sqrt{2}}
$$
and
$$
{\mathcal T}_{4,-i}=\frac{|1000\rangle+|0100\rangle+|0010\rangle+|0001\rangle-|1110\rangle-|1101\rangle-|1011\rangle-|0111\rangle}{2\sqrt{2}}.
$$

As a complex tensor, the state ${\mathcal T}_{d,\lambda}$ is not much entangled in the sense of the nuclear or spectral norm. 
\begin{lemma}\label{Tnineq}  Let ${\mathcal T}_{d,\lambda}$ be defined as above.  Then
\begin{equation}\label{Tnineq1}
\|{\mathcal T}_{d,\lambda}\|_{\infty,\C}=1/\sqrt{2}, \quad \|{\mathcal T}_{d,\lambda}\|_{1,\C}=\sqrt{2}.
\end{equation}
\end{lemma}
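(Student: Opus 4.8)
The plan is to exploit the fact that the two vectors appearing in \eqref{defTnlmab} are orthonormal. Write $\ba=\frac{1}{\sqrt2}(|0\rangle+i|1\rangle)$ and $\bb=\frac{1}{\sqrt2}(|0\rangle-i|1\rangle)$, so that ${\mathcal T}_{d,\lambda}=\frac{1}{\sqrt2}(\lambda\,\ba^{\otimes d}+\overline{\lambda}\,\bb^{\otimes d})$. A direct computation gives $\langle\ba,\bb\rangle=\frac12(1+(-i)(-i))=0$ and $\|\ba\|=\|\bb\|=1$, so $\{\ba,\bb\}$ is an orthonormal basis of $\C^2$. Consequently $\ba^{\otimes d}$ and $\bb^{\otimes d}$ are orthonormal in $\rS^d\C^2$, whence $\|{\mathcal T}_{d,\lambda}\|^2=\frac12+\frac12=1$, confirming that ${\mathcal T}_{d,\lambda}$ is indeed a state. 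Throughout I take $d\ge 2$, which is the range of interest for the sharpness claim \eqref{specnrmminineq2}; for $d=1$ both norms simply reduce to the Euclidean norm.

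For the spectral norm I would first note that ${\mathcal T}_{d,\lambda}$ is symmetric, so Banach's theorem \eqref{Banthm} reduces the computation to a maximum over rank-one symmetric states, $\|{\mathcal T}_{d,\lambda}\|_{\infty,\C}=\max_{\|\x\|=1}|\langle\x^{\otimes d},{\mathcal T}_{d,\lambda}\rangle|$. Setting $u=\langle\x,\ba\rangle$ and $v=\langle\x,\bb\rangle$, the orthonormality of $\{\ba,\bb\}$ gives $|u|^2+|v|^2=1$, while $\langle\x^{\otimes d},{\mathcal T}_{d,\lambda}\rangle=\frac{1}{\sqrt2}(\lambda u^d+\overline{\lambda}v^d)$. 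The triangle inequality and $|\lambda|=1$ bound this by $\frac{1}{\sqrt2}(|u|^d+|v|^d)$. Maximizing $|u|^d+|v|^d$ subject to $|u|^2+|v|^2=1$ is the one calculus step: with $p=|u|^2$, $q=|v|^2$ and $p+q=1$, the function $p^{d/2}+q^{d/2}$ is, for $d\ge 2$, a convex function on the segment and hence attains its maximum $1$ at an endpoint. This yields $\|{\mathcal T}_{d,\lambda}\|_{\infty,\C}\le\frac{1}{\sqrt2}$, and equality follows by taking $\x=\ba$, for which $u=1$, $v=0$, and $|\langle\ba^{\otimes d},{\mathcal T}_{d,\lambda}\rangle|=\frac{|\lambda|}{\sqrt2}=\frac{1}{\sqrt2}$.

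For the nuclear norm the upper bound is immediate from the two-term expansion above: ${\mathcal T}_{d,\lambda}$ is a sum of two rank-one tensors each of norm $\frac{|\lambda|}{\sqrt2}=\frac{1}{\sqrt2}$, so the defining infimum in \eqref{defspecnuc} gives $\|{\mathcal T}_{d,\lambda}\|_{1,\C}\le\sqrt2$. For the matching lower bound I would invoke the duality of the spectral and nuclear norms, in the form $\|{\mathcal T}_{d,\lambda}\|^2\le\|{\mathcal T}_{d,\lambda}\|_{\infty,\C}\,\|{\mathcal T}_{d,\lambda}\|_{1,\C}$, equivalently inequality \eqref{log1intyineq}; since $\|{\mathcal T}_{d,\lambda}\|=1$ and $\|{\mathcal T}_{d,\lambda}\|_{\infty,\C}=\frac{1}{\sqrt2}$, this forces $\|{\mathcal T}_{d,\lambda}\|_{1,\C}\ge\sqrt2$. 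Combining the two bounds gives the asserted value $\sqrt2$.

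The only genuine work is the constrained maximization of $|u|^d+|v|^d$ together with checking that the triangle-inequality bound is attained simultaneously with it; the endpoint maximizer $\x=\ba$ (or $\x=\bb$) achieves both, so no delicate optimality analysis is needed. The point to keep in mind is that the entire argument rests on the orthonormality $\langle\ba,\bb\rangle=0$: it is exactly what makes the two rank-one summands orthogonal (controlling the Euclidean norm), individually norming (controlling the spectral norm through a single coordinate), and cleanly additive in the nuclear norm.
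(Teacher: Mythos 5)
Your proof is correct, but it takes a genuinely different route from the paper. The paper's proof flattens $\cT_{d,\lambda}$ into a $2^{\lfloor d/2\rfloor}\times 2^{\lceil d/2\rceil}$ matrix $T$ and observes that the two-term orthonormal expansion \eqref{defTnlmab} is literally a singular value decomposition of $T$ with both singular values equal to $\frac{1}{\sqrt{2}}$; since the matrix spectral norm upper-bounds the tensor spectral norm while the matrix nuclear norm lower-bounds the tensor nuclear norm, and since the singular vectors here are themselves product tensors, both bounds are tight at once and the two equalities drop out in a single stroke with no optimization. You instead compute the spectral norm directly via Banach's theorem \eqref{Banthm}: writing $\x$ in the orthonormal basis $\{\ba,\bb\}$ with coordinates $u,v$, the bound reduces to maximizing $|u|^d+|v|^d$ on $|u|^2+|v|^2=1$, which your convexity argument settles for $d\ge 2$; you then get the nuclear lower bound from the duality inequality $\|\cT\|^2\le\|\cT\|_{\infty,\C}\,\|\cT\|_{1,\C}$ (the paper's \eqref{dualem1}, equivalently \eqref{log1intyineq}) and the upper bound from the explicit two-term decomposition. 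What the paper's flattening buys is brevity — both norms at once, no calculus, and the sharpness is visible from the degenerate spectrum of $T$. What your route buys is that it stays entirely inside the symmetric-tensor toolkit, makes the equality case in the triangle inequality explicit (one summand vanishes at the endpoint maximizer $\x=\ba$), and exhibits $\cT_{d,\lambda}$ as a state saturating the general duality inequality, which is conceptually informative. You are also more careful than the paper on one point: the lemma as stated allows $d=1$, where $\|\cT_{1,\lambda}\|_{\infty,\C}=\|\cT_{1,\lambda}\|=1\ne\frac{1}{\sqrt{2}}$, so the restriction $d\ge 2$ that you flag (and that the paper's proof silently assumes when it flattens to a matrix) is genuinely needed.
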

\begin{proof}  Clearly, $\cT_{d,\lambda}$ is symmetric :
\begin{equation}\label{defTnlmab}
\cT_{d,\lambda}=\frac{1}{\sqrt{2}}(\otimes^d\bu +\otimes^d\bar\bu), \quad\bu=\frac{\lambda^{1/d}}{\sqrt{2}}(1,\bi)\trans.
\end{equation}
Furthermore, $\bu$ and $\bar\bu$ is an orthonormal basis in $\C^2$.  For $d\ge 2$ view the tensor $\cT_{d,\lambda}$ as a matrix $T$ of dimension $2^{d_1}\times 2^{d_2}$,
where $d_1=\lfloor\frac{d}{2}\rfloor, d_2=\lceil\frac{d}{2}\rceil$.  Hence \eqref{defTnlmab} is a singular value decomposition of $T$.  Thus $\sigma_1(T)=\sigma_2(T)=\frac{1}{\sqrt{2}}$
and $\|T\|_1=\sqrt{2}$.  Therefore, $\|\cT_{d,\lambda}\|_{\infty}\le \sigma_1(T)$ and $\|\cT_{d,\lambda}\|_1\ge \|T\|_1$.  However, this singular value decomposition are realized by 
left and right singular vectors which are rank one tensors.  Hence \eqref{Tnineq1} holds.\qed
\end{proof}
\begin{theorem}\label{maxrealentqub}
For every mixed real  $d$-qubit state ${\mathcal T}$ we have $\|{\mathcal T}\|_{\infty,\R}\geq 2^{(1-d)/2}$ and $\|{\mathcal T}\|_{1,\R}\leq 2^{(d-1)/2}$ and these inequalities are tight when ${\mathcal T}={\mathcal T}_{d,\lambda}$. In particular, we have
$$
\alpha_{\infty, 2^{\times n},\R}=2^{(d-1)/2}\mbox{ and }\beta_{1,2^{\times d},\R}=2^{(1-d)/2}.
$$
\end{theorem}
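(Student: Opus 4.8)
The plan is to separate the two universal bounds, which hold for every real $d$-qubit and follow from results already established, from the tightness assertion, where the actual computation lies. First I would dispose of the bounds. The inequality $\|\cT\|_{\infty,\R}\ge 2^{(1-d)/2}$ is exactly \eqref{specnrmminineq2} specialized to $n=2$, since by \eqref{defalbeinf1} we have $\alpha_{\infty,2^{\times d},\R}=\min_{\|\cT\|=1}\|\cT\|_{\infty,\R}$. The companion bound $\|\cT\|_{1,\R}\le 2^{(d-1)/2}$ then follows by duality: by definition $\|\cT\|_{1,\R}\le\beta_{1,2^{\times d},\R}$, and the relation $\alpha_{\infty,\bn,\R}\,\beta_{1,\bn,\R}=1$ from \eqref{alphbetsineq} gives $\beta_{1,2^{\times d},\R}=1/\alpha_{\infty,2^{\times d},\R}\le 2^{(d-1)/2}$.

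The substance is the tightness claim, which I would settle by computing $\|\cT_{d,\lambda}\|_{\infty,\R}$ directly. Since $\cT_{d,\lambda}$ is real and symmetric, with $\cT_{d,\lambda}=\frac{1}{\sqrt2}(\otimes^d\bu+\otimes^d\bar\bu)$ and $\bu=\frac{\lambda^{1/d}}{\sqrt2}(1,\bi)\trans$ by \eqref{defTnlmab}, Banach's theorem \eqref{Banthm} lets me replace the maximum over all real product states by the maximum over symmetric real unit product states $\otimes^d\x$, $\x\in\R^2$, $\|\x\|=1$. Writing $\x=(\cos\theta,\sin\theta)\trans$ and $\lambda=e^{\bi\phi}$, one finds $\langle\x,\bu\rangle=\tfrac{\lambda^{1/d}}{\sqrt2}e^{\bi\theta}$, whence
\begin{equation*}
\langle\otimes^d\x,\cT_{d,\lambda}\rangle=2^{(1-d)/2}\cos(\phi+d\theta).
\end{equation*}
Maximizing the absolute value over $\theta$ yields $\|\cT_{d,\lambda}\|_{\infty,\R}=2^{(1-d)/2}$.

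Finally I would assemble the pieces. This computation shows $\cT_{d,\lambda}$ attains the lower bound, so with the first step $\alpha_{\infty,2^{\times d},\R}=2^{(1-d)/2}$, and duality gives $\beta_{1,2^{\times d},\R}=2^{(d-1)/2}$; since $\cT_{d,\lambda}$ realizes the minimal spectral norm, the implication \eqref{alphbetimpl} shows it also realizes the maximal nuclear norm, so $\|\cT_{d,\lambda}\|_{1,\R}=2^{(d-1)/2}$. The hard part will be the tightness computation, and the point to watch is that although $\cT_{d,\lambda}$ is assembled from the complex vectors $\bu,\bar\bu$ and has complex spectral norm $1/\sqrt2$ by Lemma \ref{Tnineq}, its \emph{real} spectral norm---the maximum taken over real product states only---is the strictly smaller value $2^{(1-d)/2}$, and it is exactly this drop that makes the universal real bound attainable. (The final ``in particular'' in the statement appears to transcribe the two exponents in swapped form; the intended identities are $\alpha_{\infty,2^{\times d},\R}=2^{(1-d)/2}$ and $\beta_{1,2^{\times d},\R}=2^{(d-1)/2}$.)
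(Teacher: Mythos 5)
Your proposal is correct and follows essentially the same route as the paper's proof: the universal bounds come from \eqref{specnrmminineq2} at $n=2$, tightness is settled by applying Banach's theorem to $\cT_{d,\lambda}$ and maximizing $2^{(1-d)/2}|\cos(\phi+d\theta)|$ over real unit vectors (identical to the paper's computation of $\sqrt{2}\,|\Re(\lambda((x+iy)/\sqrt{2})^d)|$), and the nuclear-norm value follows by the duality implication \eqref{alphbetimpl}. You are also right that the displayed ``in particular'' identities in the statement have the exponents transposed (and an $n$/$d$ typo); the intended values are $\alpha_{\infty,2^{\times d},\R}=2^{(1-d)/2}$ and $\beta_{1,2^{\times d},\R}=2^{(d-1)/2}$.
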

\begin{proof} The inequalities $\|{\mathcal T}\|_{\infty,\R}\geq 2^{(1-d)/2}$ and $\|{\mathcal T}\|_{1,\R}\leq 2^{(d-1)/2}$ are  the inequalities \eqref{specnrmminineq2} for $n=2$.

To calculate $\|{\mathcal T}_{n,\lambda}\|_{\infty,\R}$, we use Banach's theorem. We have to optimize
$$
\Big\langle (x\langle0|+y\langle 0|)^{\otimes d}, {\mathcal T}_{d,\lambda}\Big\rangle=\sqrt{2}\left|\Re\left(\lambda \Big(\frac{x+iy}{\sqrt{2}}\Big)^d\right)\right|
$$
under the constraint $x^2+y^2=1$. The optimal value clearly is equal to $2^{(1-d)/2}$ which shows that  $\|{\mathcal T}_{d,\lambda}\|_{\infty,\R}=2^{(1-d)/2}$.
It follows now that ${\mathcal T}_{d,\lambda}$ also has an optimal nuclear norm, which must be equal to $2^{(d-1)/2}$.
\end{proof}
\subsection{The most geometric entangled 4-qubit}\label{subsec:quadrit}
The following theorem was proven in \cite{DFLW17}.   We bring its detailed proof for reader's convenience. 
\begin{theorem}\label{M4maxentg} Let  $\cM_4=|M_4\rangle$ be the Higuchi-Sudbery state \cite{HS00}:
\begin{equation}\label{4qubM4}
\cM_4,=\frac{1}{\sqrt{6}}\big(|0011\rangle+|1100\rangle+
\zeta(|1010\rangle+|0101\rangle)
+\zeta^2(|1001\rangle+|0110\rangle\big),\;\zeta=e^{2\pi\bi/3}.
\end{equation}
Then $\|\cM_4\|_{\infty}=\frac{\sqrt{2}}{3}=\frac{1}{\sqrt{2}}\alpha_{\infty, 2^{\times 3},\C}$.  Hence  
\begin{equation}\label{beta4qub}
\alpha_{\infty,2^{\times 4},\C}=\frac{\sqrt{2}}{3}, \quad \beta_{1,2^{\times 4},\C}=\frac{3}{\sqrt{2}},
\end{equation}
\end{theorem}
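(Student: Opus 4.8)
The plan is to prove the chain
\[
\tfrac{\sqrt2}{3}\;\le\;\alpha_{\infty,2^{\times4},\C}\;\le\;\|\cM_4\|_{\infty}\;\le\;\tfrac{\sqrt2}{3},
\]
which forces equality throughout and yields \eqref{beta4qub}. The leftmost inequality is immediate from Corollary~\ref{specnrmminineq} with $n=2$, $d=3$: it gives $\alpha_{\infty,2^{\times4},\C}\ge\frac1{\sqrt2}\,\alpha_{\infty,2^{\times3},\C}=\frac1{\sqrt2}\cdot\frac23=\frac{\sqrt2}{3}$, using the $W$-state value in \eqref{valalphbet23}. Since $\cM_4$ is itself a unit state, $\alpha_{\infty,2^{\times4},\C}\le\|\cM_4\|_{\infty}$ holds for free, so the entire theorem reduces to the single upper bound $\|\cM_4\|_{\infty}\le\frac{\sqrt2}{3}$. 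Once that is in hand, $\cM_4$ is a most geometric entangled $4$-qubit and $\beta_{1,2^{\times4},\C}=1/\alpha_{\infty,2^{\times4},\C}=\frac{3}{\sqrt2}$ by \eqref{alphbetsineq}.

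For the upper bound I would slice $\cM_4$ along the fourth qubit, $\cM_4=\cA\otimes|0\rangle+\cB\otimes|1\rangle$, where
\[
\cA=\tfrac1{\sqrt6}\big(|110\rangle+\zeta|101\rangle+\zeta^2|011\rangle\big),\qquad
\cB=\tfrac1{\sqrt6}\big(|001\rangle+\zeta|010\rangle+\zeta^2|100\rangle\big)
\]
are $3$-qubit tensors of Euclidean norm $\tfrac1{\sqrt2}$, supported on complementary (weight-$2$ and weight-$1$) sectors and hence orthogonal. For a product state $\otimes_{j=1}^{4}\x_j=\cY\otimes\x_4$ with $\cY=\x_1\otimes\x_2\otimes\x_3$ and $\x_4=(a_4,b_4)$ one has $\langle\otimes_j\x_j,\cM_4\rangle=\overline{a_4}\langle\cY,\cA\rangle+\overline{b_4}\langle\cY,\cB\rangle$, and maximizing over the unit qubit $\x_4$ (a tight Cauchy--Schwarz) gives
\[
\|\cM_4\|_{\infty}=\max_{\cY\in\Pi_{2^{\times3}}}\sqrt{\,|\langle\cY,\cA\rangle|^2+|\langle\cY,\cB\rangle|^2\,}.
\]
It therefore suffices to show $|\langle\cY,\cA\rangle|^2+|\langle\cY,\cB\rangle|^2\le\frac29$ for every product $\cY$.

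The key step is to peel off the first qubit. Writing $u_j=\overline{a_j}$, $v_j=\overline{b_j}$ for $\x_j=(a_j,b_j)$ and clearing the normalization, $\sqrt6\,\langle\cY,\cB\rangle=u_1P+v_1Q$ and $\sqrt6\,\langle\cY,\cA\rangle=u_1R+v_1S$, with $P=u_2v_3+\zeta v_2u_3$, $Q=\zeta^2u_2u_3$, $R=\zeta^2v_2v_3$, $S=v_2u_3+\zeta u_2v_3$. Maximizing $6\big(|\langle\cY,\cB\rangle|^2+|\langle\cY,\cA\rangle|^2\big)=|u_1P+v_1Q|^2+|u_1R+v_1S|^2$ over the unit qubit $(u_1,v_1)$ gives $\sigma_1(M)^2$, the squared top singular value of $M=\bigl(\begin{smallmatrix}P&Q\\R&S\end{smallmatrix}\bigr)$. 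Using $\zeta^3=1$ and $1+\zeta+\zeta^2=0$, both invariants of $M^\dagger M$ collapse to a single scalar $w:=u_2v_3-v_2u_3$: one finds $\det M=\zeta\,w^2$, so $\det(M^\dagger M)=|w|^4$, while $\tr(M^\dagger M)=|P|^2+|Q|^2+|R|^2+|S|^2=1+|w|^2$ once the cross terms are simplified via $\Re\zeta=-\tfrac12$ and $|u_2|^2+|v_2|^2=|u_3|^2+|v_3|^2=1$. Hence $\sigma_1(M)^2=\tfrac12\big[(1+s)+\sqrt{(1-s)(1+3s)}\big]=:h(s)$ with $s=|w|^2\in[0,1]$, and elementary calculus gives $\max_{s\in[0,1]}h(s)=h(\tfrac23)=\tfrac43$. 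Thus $|\langle\cY,\cA\rangle|^2+|\langle\cY,\cB\rangle|^2\le\tfrac16\cdot\tfrac43=\tfrac29$, so $\|\cM_4\|_{\infty}\le\sqrt{2/9}=\frac{\sqrt2}{3}$, closing the chain.

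The one genuine obstacle is this algebraic collapse: a priori $M^\dagger M$ depends on all of $\x_2,\x_3$, and it is precisely the cube-root-of-unity cancellations that reduce it to $|w|$ (the $2\times2$ determinant of the conjugated qubits $2$ and $3$), turning a six-real-parameter optimization into the one-variable maximization of $h$. I would also note that the optimum cannot sit at a symmetric state $\x^{\otimes4}$, since $\langle\x^{\otimes4},\cM_4\rangle=0$ for every $\x$ by $1+\zeta+\zeta^2=0$; this is exactly why the slicing-plus-$2\times2$-SVD route, rather than a Banach-type symmetric reduction, is the natural vehicle here.
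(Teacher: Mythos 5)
Your proof is correct; I verified the computational core: with $1+\zeta^2-\zeta=-2\zeta$ one indeed gets $\det M=\zeta w^2$, the normalization identity $(|u_2|^2+|v_2|^2)(|u_3|^2+|v_3|^2)=1$ gives $\tr(M^\dagger M)=1+|w|^2$, and since a $2\times 2$ Gram matrix is determined by its trace and determinant, $\sigma_1(M)^2=h(s)$ exactly, with $\max_{s\in[0,1]}h(s)=h(\tfrac23)=\tfrac43$ (note $h(0)=h(1)=1$, and $s=\tfrac23$ is attainable since $|w|\le 1$ with equality data easy to exhibit), so $\|\cM_4\|_{\infty}\le\frac{\sqrt2}{3}$ and in fact $=\frac{\sqrt2}{3}$. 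Your outer frame — lower bound $\alpha_{\infty,2^{\times4},\C}\ge\frac{1}{\sqrt2}\alpha_{\infty,2^{\times3},\C}=\frac{\sqrt2}{3}$ from Corollary~\ref{specnrmminineq} and \eqref{valalphbet23}, then $\beta_{1,2^{\times4},\C}=1/\alpha_{\infty,2^{\times4},\C}$ via \eqref{alphbetsineq} — is exactly the paper's. But the key upper-bound mechanism is genuinely different. The paper normalizes your $\cA,\cB$ to $\cW_0,\cW_1$, shows the two-dimensional subspace $\W=\mathrm{span}(\cW_0,\cW_1)$ is invariant under the diagonal action $\phi_3(\rU_2)$ by checking the Lie-algebra generators $C_j$ from \eqref{defCgen}, deduces that this action is transitive on unit vectors of $\W$, hence $\|a\cW_0+b\cW_1\|_{\infty}=\frac23$ identically for $|a|^2+|b|^2=1$, and then maximizes over the fourth qubit. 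You instead peel off the first qubit and reduce to the top singular value of an explicit $2\times2$ matrix $M(\x_2,\x_3)$, where the cube-root-of-unity cancellations collapse both invariants of $M^\dagger M$ to the single parameter $|w|^2$, turning the problem into a one-variable calculus maximization. The paper's orbit argument buys structural insight: it explains why the spectral norm is constant on the unit sphere of $\W$ and simultaneously exhibits the local-unitary orbit of most-entangled $4$-qubits; your route buys elementarity and self-containedness (no invariance verification), explicit optimizer data ($s=\tfrac23$), and it delivers the exact value of $\|\cM_4\|_{\infty}$ directly, the lower bound being needed only to conclude that this value equals $\alpha_{\infty,2^{\times4},\C}$. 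Your closing remark that $\langle\x^{\otimes4},\cM_4\rangle=0$ for every $\x$ (so a Banach-type symmetric reduction is unavailable) is also correct and consistent with the paper's choice of method.
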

and $\cM_4$ is the most entangled state.

\begin{proof} Let 
\begin{equation}\label{defcW01}
\begin{aligned}
&\cW_0=\frac{1}{\sqrt{3}}(|110\rangle+\zeta|101\rangle+\zeta^2|011\rangle),\\
&\cW_1=\frac{1}{\sqrt{3}}(|001\rangle+\zeta|010\rangle+\zeta^2|100\rangle).
\end{aligned}
\end{equation}
Let $\phi_3:\rU_2\to \otimes^3 \rU_2$ be the diagonal map $A\mapsto \otimes^3 A$ on the unitary $2\times 2$ matrices.  So $\phi_3(\rU_2)$ acts on $3$-quibits. 
Observe that 
\begin{equation*}
\begin{aligned}
&\cM_4=\frac{1}{\sqrt{2}}(\cW_0\otimes |0\rangle+\cW_1\otimes|1\rangle),\\
&\cW_1=(V_1\otimes V_2\otimes I_2)\cW, \quad V_1=\diag(\zeta^2,1), V_2=\diag(\zeta,1)\Rightarrow \|\cW_1\|_{\infty,\C}=\frac{2}{3},\\
&\cW_0=\phi_3(T)\cW_1, \quad T=\begin{bmatrix}0&1\\1&0\end{bmatrix}\Rightarrow \|\cW_0\|_{\infty,\C}=\frac{2}{3},
\end{aligned}
\end{equation*}
where $\cW$ is given by \eqref{defWstate}.
We claim that the two dimensional space $\W=$span$(\cW_0,\cW_1)$ is invariant under the action $\phi_3(\rU_2)$.  Consider first: 
\begin{equation}\label{diagac}
(\otimes^3 A)\cW_0=a b^2\cW_0,  (\otimes^3 A)\cW_1=a^2 b\cW_1, \quad A=\diag(a,b), |a|=|b|=1.
\end{equation}
Hence $\phi_3(A)\W=\W$.  It is left to show that $\W$ is invariant under the action of $\phi_3(\rS\rU_2)$.  Hence, it is enough to show $\W$ is invariant under the action of Lie group of $\phi_3(\rS\rU_2)$.
The generators of the $\rS\rU_2$ are $i$ times the Pauli matrices:
\begin{equation}\label{defBgen}
B_1=\left[\begin{array}{cc} i&0\\0&-i\end{array}\right], B_2=\left[\begin{array}{cc} 0&1\\-1&0\end{array}\right], B_3=\left[\begin{array}{cc} 0&i\\i&0\end{array}\right].
\end{equation}
The generators of $\phi_3(\rS\rU_2)$ are
\begin{equation}\label{defCgen}
C_j=B_j\otimes I_2\otimes I_2+I_2\otimes B_j\otimes I_2+I_2\otimes I_2\otimes B_j, \quad j\in [3].
\end{equation}
We need to  show that $C_j\cW_k\in\W$ for $j\in[3]$ and $k=0,1$.
In view of \eqref{diagac} it is enough to show the above containments for $j=2,3$.
Assume that $k=0$.
Observe
\begin{equation*}
\begin{aligned}
&C_2(|110\rangle)=|010\rangle +|100\rangle -|111\rangle ,\\
&C_2\zeta |101\rangle=\zeta(|001\rangle-|111\rangle+|100\rangle),\\
&C_2\zeta^2 |011\rangle=\zeta^2(-|111\rangle+|001\rangle+ |010\rangle).
\end{aligned}
\end{equation*}
As $1+\zeta+\zeta^2=0$
\begin{equation*}
C_2\cW_0=(\zeta+\zeta^2)\cW_1
\end{equation*}
Observe next
\begin{equation*}
\begin{aligned}
&\bar i C_3(|110\rangle=|010\rangle +|100\rangle+|111\rangle,\\
&\bar i C_3\zeta |101\rangle=\zeta(|001\rangle+|111\rangle+|100\rangle,\\
&\bar i C_3\zeta^2 |011\rangle=\zeta^2(|111\rangle+|001\rangle+|010\rangle)
\end{aligned}
\end{equation*}
Hence
\begin{equation*}
C_3\cW_0=\bi(\zeta+\zeta^2)\cW_1
\end{equation*}
Hence, $\phi_3(A)\cW_0\in \W$ for any $A\in \rU_2$.  Observe next that $\phi_3(T)\phi(A)=\phi(B)\phi_2(T)$, where $B=TAT$
As $\phi_3(T)\cW_0=\cW_1$ we deduce that $C_j\cW_1\in\W$ for $j\in[3]$.
Hence,  $\W$ invariant under the action $\phi_3(\rU_2)$.

Furthermore, the action of $\phi_3(\rU_2)$ on $\W$ is identical to the action of $\rU_2$ on the two dimensional subspace $\W$, with respect to the orthogonal
basis $\cW_0,\cW_1$.  That is, given a state $(a,b)\trans \in\C^2$, there exists $A\in \rU_2$
such that $\phi_3(A)\cW_0=a\cW_0+b\cW_1$.  Hence
\[\|a\cW_0+b\cW_1\|_{\infty}=\|\phi_3(A)\cW_0\|_{\infty}=\frac{2}{3} \quad \textrm{ for } |a|^2+|b|^2=1.\]
Observe that
\[\cM_4=\frac{1}{\sqrt{2}}\left(\cW_0\otimes|0\rangle + \cW_1\otimes|1\rangle\right).\]
Let $\cX=\x\otimes\y\otimes\bu\otimes \bv=\cY\otimes \bv$, where $\|\x\|=\|\y\|=\| \bu\|=\|\bv\|=1$, be a product state.  Then
\[\sqrt{2}|\langle \cX,\cM_4\rangle|=|\langle \cY,a\cW_0+b\cW_1\rangle|, \quad a=\langle \bv,|0\rangle\rangle, b=\langle\bv,|1\rangle\rangle.\]
Clearly $|a|^2+|b|^2=\|\bv\|^2=1$.  Thus if we maximize on all product states $\cY$ and keep $\bv$ fixed we get this this maximum is
$\|a\cW_0+b\cW_1\|_{\infty}=\frac{2}{3}$.  This shows that $\|\cM_4\|_{\infty}=\frac{\sqrt{2}}{3}$.  The inequaltity \eqref{specnrmminineq3} for $d=4$
yields that $\alpha_{\infty,2^{\times 4},\C}=\frac{\sqrt{2}}{3}$.  This equality yields the second equality in \eqref{beta4qub}.  Furthermore, $\cM_4$ is the most entangled
$4$-qubit.
\end{proof}

Numerical simulations point out that that  \eqref{specnrmminineq3} is not sharp for $d=5$. 
\subsection{Geometric measure of entanglement of Bosons}\label{subsec:geb}
Recall that the distance of a Boson $\cS\in\rS^d\C^n$ to product rank-one state is $\sqrt{2(1-\|\cS\|_{\infty,\C})}$.  
Then $\cS$ is a MGE Boson if $\|\cS\|_{\infty,\C}=\alpha_{\infty,n^{\times d},s,\C}$, see \eqref{defalbeinf1s}.
Theorem \ref{relalphbetsym} states that $\alpha_{\infty, n^{\times 2},s,\F}=\frac{1}{\sqrt{n}}$.
As $|W\rangle$ is a symmetric 3-qubit, we deduce that $\alpha_{\infty,2^{\times 3},s,\C}=\frac{2}{3}$. 

Aulbach-Markham-Murao \cite{AMM10} give examples of $d$-symmetric qubits for $d=4,\ldots,12$, which they assume to have the maximal GME.  Their examples are motivated by the Majorana model.  The most  geometric entangled symmetric $4$-qubit is conjectured to be \cite[Example 6.1]{AMM10}
\[\frac{1}{\sqrt{3}}|0000\rangle+\frac{1}{\sqrt{6}}(|0111\rangle+|1011\rangle+|1101\rangle + |1110\rangle).\]
Its spectral norm is $\frac{1}{\sqrt{3}}\approx 0.5774$ \cite{AMM10}.

In \cite{FW20} we verified the numerical values of these examples using our software, and we could not find symmetric qubits with higher GME.  Friedland-Wang \cite{FW20}  describe a way to compute the GME of Bosons using solutions of polynomial equations.

We claim that 
\begin{equation}\label{lualbet'in}
\frac{1}{\sqrt{{n+d-1\choose d}}}\le \alpha_{\infty,n^{\times d},s,\C}, \quad  \beta_{1,n^{\times d},s,\C}\le \sqrt{{n+d-1\choose d}}, \quad 2\le d,n.
\end{equation}
The first inequality is shown in \cite[Eq, (3.1)]{FK18}.   The second inequality follows from part \emph{(1)} of Theorem \ref{relalphbetsym}.   (We reprove this inequality in part \emph{(4)} of Corollary \ref{charsepscb}.) Hence
\begin{equation}\label{logsymin}
-2\log_2\alpha_{\infty,n^{\times d},s,\C}\le \log_2{n+d-1\choose d}.
\end{equation}
\subsection{Geometric measure of entanglement of Fermions}\label{subsec:geb}
An unentangled Fermion in $\cF\in\rA^d\C^n$ is of the form
\begin{equation}\label{unentf}
\cF=\x_1\wedge\cdots\wedge\x_d, \quad \x_i\in\C^n, \langle \x_i,\x_j\rangle=\delta_{i,j},i,j\in[n], 2\le d<n.
\end{equation}
Denote by $\Phi_{n^{\times d}}\subset \rA^d\C^n$ the set of  unentangled Fermions.
\begin{proposition}
Let $\cF\in\rA^d\C^n, \|\cF\|=1$ be a Fermion.  Then the GME of $\cF$ is given by 
\begin{equation}\label{GMEF}
\mathrm{dist}(\cF,\Phi_{n^{\times d}})=\sqrt{2(1-\sqrt{d!}\|\cF\|_{\infty,\C})}.  
\end{equation}
Most entangled Fermions correspond to the minimum $\alpha_{\infty, n^{\times d},a,\C}$ and maximum $\beta_{1, n^{\times d},a,\C}$.
\end{proposition}
\begin{proof} The formula \eqref{GMEF} follows from the first characterization of $\|\cF\|_{\infty,\C}$ in \eqref{fcharspecnrm}.  The characterization of a most entangled Fermion follows from Theorem \ref{albetm}.
\end{proof}
\subsection{Entanglements of $d$-qubits for big $d$}\label{subsec:bigd}
 Inequality \eqref{specnrmminineq3} yields that 
$$-2\log_2\alpha_{\infty,2^{\times d}}\le d-5 +2\log_2 3 \textrm{ for }d\ge 3.$$
The concentration result of \cite{GFE09} claims that most of $d$-qubits with respect to the corresponding Haar measure,  satisfy the inequality
$$-2\log_2\|\cT\|_{\infty,\C}\ge d-2\log_2 d-3 \textrm{ for most of } \cT\in\otimes^d\C^2 \textrm{ for } d\gg 1.$$
Hence, for a big $d$ the ratio $\frac{\alpha_{\infty, 2^{\times (d+1)},\C}}{\alpha_{\infty, 2^{\times d},\C}}$ approached to $\frac{1}{\sqrt{2}}$.

The following concentration inequality complementary to \eqref{logsymin} for Bosons is shown in \cite[Eq. (1.2)]{FK18} 
\[-2\log_2\|\cS\|_{\infty,\C}\ge  \log_2 {n+d-1\choose d} -\log_2\log_2 {n+d-1\choose d}-3\log_2 n-1, \textrm{ for most } \cS\in\rS^d\C^n \textrm { for } d\gg 1.\]

It seems to us the correct version of the inequality \eqref{alphlebetge} for $\alpha_{\infty,n^{\times d},s,\C}$ is:
\begin{equation}\label{conjinal'}
\alpha_{\infty, 2^{\times (d+1)},s,\C}= \frac{\sqrt{d+1}}{\sqrt{d+n}}\left(1+O(1/d)\right)\alpha_{\infty, 2^{\times d},s,\C}  \textrm{ for } d\gg 1.
\end{equation}

\section{Separabilty}\label{sec:sep}
\subsection{Density tensors and separabilty}\label{subsec:dentensep}
\begin{definition}\label{defherm} Let $d,  n_1,\ldots, n_d\in\N$.  Set $\bn=(n_1,\ldots,n_d)$.   A tensor $\cB=[b_{i_1,\ldots,i_d,j_1,\ldots,j_d}]\in \C^{\bn\times \bn}:= \C^{(\bn,\bn)}=\C^{\bn}\otimes\C^{\bn}$ is called Hermitian 
\begin{equation*}
b_{i_1,\ldots,i_d,j_1,\ldots,j_d}=\overline{b_{j_1,\ldots,j_d,i_1,\ldots,i_d)}} \textrm{ for all } (i_1,\ldots,i_d),(j_1,\ldots,j_d)\in[\bn].
\end{equation*}
Denote by $\rH_{\bn}\subset \C^{\bn\times \bn}$ the real space of Hermitian tensors.
One can view $\cB$ as a Hermitian matrix $B=[b_{\bi,\bj}]\in \rH_{N(\bn)}$,
where $\bi=(i_1,\ldots,i_d),\bj=(j_1,\ldots,j_d)\in[n_d]$.   (Thus, $\rH_{N(\bn)}$ is unfolding of  $\rH_{\bn}$.)  

A Hermitian tensor is veiwed a a linear transformation $\cB:\C^{\bn}\to\C^{\bn}$:
\begin{equation*}
(\cB \cX)_{\bi}=\sum_{\bj\in[\bn]}b_{\bi,\bj}x_{\bj}, \quad \bi\in[\bn], \cX=[x_{\bj}]\in\C^{\bn}.
\end{equation*}  
It has the following spectral decomposition:
\begin{equation}\label{specdec}
\begin{aligned}
&\cB=\sum_{k=1}^{N(\bn)}\lambda_k(\cB)\cX_k\otimes\overline{\cX_k},\\
&\lambda_{\max}(\cB)=\lambda_1(\cB)\ge \cdots\ge \lambda_{N(\bn)}(\cB)=\lambda_{\min}(\cB),\\
&\overrightarrow{\lambda}(\cB)=(\lambda_1(\cB),\ldots,\lambda_{N(\bn)}),\\
&\cB\cX_k=\lambda_k(\cB)\cX_k, \quad \langle \cX_k,\cX_l\rangle=\delta_{kl}, k,l\in[N(\bn)].
\end{aligned}
\end{equation}
Let 
\begin{equation*}
\begin{aligned}
&\tr \cB=\sum_{(i_1,\ldots,i_d)\in [\bn]}b_{i_1,\ldots,i_d,i_1,\ldots,i_d}=\sum_{k=1}^{N(\bn)}\lambda_k(\cB),\\
&\tr_k\cB=\cB_k=[b^{(k)}_{i_1,\ldots,i_{k-1},i_{k+1},\ldots, i_d,j_1,\ldots,j_{k-1},j_{k+1},\ldots, j_d}]\in \C^{\bn_k\times \bn_k}, \bn_k=(n_1,\ldots,n_{k-1},n_{k+1},\ldots,n_d),\\
&b^{(k)}_{i_1,\ldots,i_{k-1},i_{k+1},\ldots, i_d,j_1,\ldots,j_{k-1},j_{k+1},\ldots,j_d}=\sum_{i_k=1}^{n_k} b_{i_1,\ldots,i_d,j_1,\ldots,j_{k-1}, i_k, j_{k+1}, \ldots,j_d}, \quad k\in[d].
\end{aligned}
\end{equation*}
The set of positive semi-definite Hermitian tensors, denoted as $\rH_{\bn,+}$ corresponds to $\rH_{N(\bn),+}$.  The set of positive semi-definite Hermitian tensors with trace one is denoted by $\rH_{\bn,+,1}$, is the convex set of density tensors. 
We define the inner product on $\rH_{\bn}$ by 
$\langle \cB,\cC\rangle=\sum_{\bi,\bj\in [\bn]}\overline{b_{\bi,\bj}}c_{\bi,\bj}$, which is real valued.
One associates with a Hermitian tensor $\cB$ a multi-sesquilinear form
\begin{equation}\label{mlsql}
\begin{aligned}
&\langle \otimes_{j=1}^d\x_j, \cB(\otimes_{j=1}^d\x_j)\rangle=
\sum_{(i_1,\ldots,i_d),(j_1,\ldots,j_d)\in [\bn]}b_{i_1,\ldots,i_d,j_1,\ldots,j_d}\prod_{(i_1,\ldots,i_d)\in[\bn]}\bar x_{i_1,1}x_{j_1,1}\cdots\bar x_{i_d,d} x_{j_d,d}, \\
&\x_k=(x_{1,k},\ldots, x_{n_k,k})^\top\in\C^{n_k}, k\in[d].
\end{aligned}
\end{equation}

\end{definition}
The  notion of Hermitian tensor for $d=2$  was introduced in \cite{FL18}, and was called there bi-Hermitian.
Recall that the operator norm on $\rH_{\bn}$ and its dual are the Schatten norms
$\|\overrightarrow{\lambda}(\cB)\|_\infty$ and $\|\overrightarrow{\lambda}(\cB)\|_1$.
Furthermore,
\begin{equation}\label{charopnrm}
\begin{aligned}
&|\overrightarrow{\lambda}(\cB)\|_\infty=\max(\lambda_{\max}(\cB),-\lambda_{\min}(\cB))= \max_{\cX\in\C^{\bn}, \|\cX\|=1} |\langle \cX,\cB\cX\rangle|,\\
&|\overrightarrow{\lambda}(\cB)\|_1=\sum_{i=1}^{N(\bn)} |\lambda_i(\cB)|.
\end{aligned}
\end{equation}
Hence, the extreme points of the Schatten norm $\|\overrightarrow{\lambda}(\cB)\|_1$
are $\pm \cX\otimes \overline{\cX}$ for $\cX\in\C^{\bn}, \|\cX\|=1$.

We define the spectral and the nuclear norms on $\rH_{\bn}$ as follows:
\begin{lemma}\label{specnuclem} Let $2\le d\in\N, 2\1_d\le \bn\in\N^d$.  Assume that $\cB\in\rH_{\bn}$.   Let
\begin{equation}\label{defspecnrm}
\|\cB\|_{spec}:=\max_{\cX\in\Pi_{\bn}}|\langle \cX,\cB\cX\rangle|.
\end{equation}
Then $\|\cB\|_{spec}$ is a norm on $\rH_{\bn}$.  Furthermore,
\begin{enumerate}
\item
\begin{equation}\label{specin}
\|\cB\|_{spec} \le \|\cB\|_{\infty,\C}\le \|\overrightarrow{\lambda}(\cB)\|_\infty.
\end{equation}
Equality holds in the first inequaity holds if $\pm \cB\in\rH_{\bn,+}$.   Equality in the second inequality holds if and only if $\cB$ has an eigenvector $\cX\in\Pi_{\bn}$ which corresponds to an eigenvalue $\lambda_k(\cB)$ such that $|\lambda_k(\cB)|=\|\overrightarrow{\lambda}(\cB)\|_\infty$.    In particular, $ \|\cB\|_{spec} = \|\overrightarrow{\lambda}(\cB)\|_\infty\iff \|\cB\|_{\infty,\C}= \|\overrightarrow{\lambda}(\cB)\|_\infty$.  
\item Let $\|\cB\|_{nuc}$ be the dual norm of $\|\cdot\|_{spec}$:
\begin{equation}\label{nucnrm}
\|\cB\|_{nuc}=\max_{\cC\in\rH_{\bn},\|\cC\|_{spec}=1} \langle\cC,\cB\rangle.
\end{equation}
The extreme points of $\rB_{\|\cdot\|_{nuc}}(0,1)$ are $\pm\cX\otimes \overline{\cX}, \cX\in\Pi_{\bn}$.  Hence,
\begin{equation}\label{charnuc}
\|\cB\|_{nuc}=\min_{\cX=\sum_{i=1}^r a_i \cX_i\otimes \overline{\cX_i}, a_i\in\R, \cX_i\in\Pi_{\bn}, i\in[r], r\in\N }\sum_{i=1}^r |a_i|.
\end{equation}
Furthermore, the minimum is achieved for some $r\le N^2(\bn)+1$.
\item The following inequalities hold:
\begin{equation}\label{nucin}
\|\cB\|_{nuc}\ge\|\cB\|_{1,\C}\ge \|\overrightarrow{\lambda}(\cB)\|_1=\sum_{i=1}^{N(\bn)}|\lambda_i(\cB)|\ge |\tr \cB|.
\end{equation}
\item Denote
\begin{equation}\label{defalbeinh}
\begin{aligned}
&\alpha_{\infty,\bn,h}=\min_{\cX\in\rH_{\bn},\|\cX\|=1} \|\cX\|_{spec}, \quad \beta_{\infty,\bn,h}=\max_{\cX,\in\rH_{\bn},\|\cX\|=1} \|\cX\|_{spec},\\
&\alpha_{1,\bn,h}=\min_{\cX\in\rH_{\bn},\|\cX\|=1} \|\cX\|_{nuc}, \quad \beta_{1,\bn,h}=\max_{\cX\in\rH_{\bn},\|\cX\|=1} \|\cX\|_{1,nuc}.
\end{aligned}
\end{equation}
Then 
\begin{equation}\label{alpbethinh}
\begin{aligned}
&\alpha_{\infty,\bn,h}< \beta_{\infty,\bn,h}=1, \quad \alpha_{1,\bn,h}=1<\beta_{1,\bn,h},\\
&\alpha_{\infty,\bn,h}\beta_{1,\bn,h}=1.
\end{aligned}
\end{equation}
\item Let $\cB\in \rH_{\bn}, \cC\in \rH_{\bm}$.  Then $\cB\otimes\cC\in \rH_{(\bn,\bm)}$.  Moreover,
\begin{equation}\label{prodHmnrm}
\|\cB\otimes\cC\|_{spec}=\|\cB\|_{spec} \|\cC\|_{spec},  \quad \|\cB\otimes\cC\|_{nuc}=\|\cB\|_{nuc} \|\cC\|_{nuc}.
\end{equation}
\item The group $\rU_{\bn}$ acts on $\rH_{\bn}$: 
\begin{equation*}
\begin{aligned}
&\cX\otimes \overline{\cX}\mapsto((\otimes_{j=1}^d U_j)\cX)\otimes\overline{((\otimes_{j=1}^d U_j)\cX)},\\
&(\otimes_{j=1}^d \x_j)\otimes (\otimes_{j=1}^d\bar \x_j)\mapsto (\otimes_{j=1}^d U_j\x_j)\otimes (\otimes_{j=1}^d\overline{U_j\x_j})
\end{aligned}
\end{equation*}
as a group of isometries of the norms $\|\cdot\|_{spec}$ and $\|\cdot\|_{nuc}$.
Furthermore this action preserves the trace.
Let $\mu_{\bn}$ be the Haar measure on $\rU_{\bn}$.   Then for any $\cR\in\rH_{\bn,+,1}$ the following equality holds:
 \begin{equation}\label{Haareq}
 \int_{\rU_{\bn}} ((\otimes_{j=1}^d U_j)\cR \overline{(\otimes_{j=1}^dU_j)}d\mu_{\bn}=\frac{1}{N(\bn)} \otimes_{j=1}^d I_{n_j}.
 \end{equation}
\end{enumerate}
\end{lemma}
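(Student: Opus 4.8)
The plan is to push everything through the unfolding of $\cB$ into the Hermitian matrix $B\in\rH_{N(\bn)}$, using the two identities $\langle\cX,\cB\cX\rangle=\langle\cX\otimes\overline{\cX},\cB\rangle=\cX^{\dagger}B\cX$ for $\cX\in\Pi_{\bn}$ and, grouping the $2d$ factors of $\Pi_{(\bn,\bn)}$ into two halves, $\|\cB\|_{\infty,\C}=\max_{\cU,\cV\in\Pi_{\bn}}|\langle\cU\otimes\cV,\cB\rangle|=\max_{\cU,\cV\in\Pi_{\bn}}|\cU^{\dagger}B\,\overline{\cV}|$. Thus $\|\cB\|_{spec}$ is a ``product numerical radius'' and $\|\cB\|_{\infty,\C}$ the corresponding ``product operator norm'', and part (1) becomes the product analogue of the classical fact that numerical radius and operator norm agree on positive operators. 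For \eqref{specin}, the first inequality is immediate by restricting to $\cV=\overline{\cU}$, and the second is $\max_{\cU,\cV\in\Pi_{\bn}}|\cU^{\dagger}B\overline{\cV}|\le\max_{\|\cU\|=\|\cV\|=1}|\cU^{\dagger}B\cV|=\|\overrightarrow{\lambda}(\cB)\|_\infty$, the matrix operator norm of a Hermitian $B$. When $\pm\cB\in\rH_{\bn,+}$ I would apply Cauchy--Schwarz to the positive semidefinite form $\cU^{\dagger}B\overline{\cV}$: $|\cU^{\dagger}B\overline{\cV}|\le\sqrt{\cU^{\dagger}B\cU}\,\sqrt{\overline{\cV}^{\dagger}B\overline{\cV}}\le\|\cB\|_{spec}$ since $\cU,\overline{\cV}\in\Pi_{\bn}$, collapsing the two quantities. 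The equality conditions for the second inequality come from identifying the unit vectors attaining $\|B\|$ as eigenvectors for a maximal-modulus eigenvalue, and the closing biconditional of (1) then follows by sandwiching $\|\cB\|_{spec}\le\|\cB\|_{\infty,\C}\le\|\overrightarrow{\lambda}(\cB)\|_\infty$.

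For parts (2)--(3) I observe that, since $\langle\cdot,\cdot\rangle$ is real on $\rH_{\bn}$, one has $\|\cB\|_{spec}=\max_{\cC\in K}\langle\cC,\cB\rangle$ with $K=\{\pm\cX\otimes\overline{\cX}:\cX\in\Pi_{\bn}\}$; hence $\|\cdot\|_{spec}$ is the support function of $\mathrm{conv}(K)$ and its dual $\|\cdot\|_{nuc}$ is the gauge of $\mathrm{conv}(K)$, exactly as in the proof of Theorem \ref{1inftynormlem}(1). That each $\pm\cX\otimes\overline{\cX}$ is an exposed (hence extreme) point follows from $\langle\cX\otimes\overline{\cX},\cY\otimes\overline{\cY}\rangle=|\langle\cX,\cY\rangle|^2\le1$ with equality iff $\cY\otimes\overline{\cY}=\cX\otimes\overline{\cX}$, mirroring the earlier argument; Caratheodory in the real space $\rH_{\bn}$, of dimension $N(\bn)^2$, gives \eqref{charnuc} with $r\le N^2(\bn)+1$, and the triangle inequality gives minimality of $\sum|a_i|$. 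The chain \eqref{nucin} is then read off the representation: unfolding each $a_i\cX_i\otimes\overline{\cX_i}$ yields $\|\cB\|_{1,\C}\le\sum|a_i|$, so $\|\cB\|_{1,\C}\le\|\cB\|_{nuc}$; unfolding a nuclear decomposition of $\cB$ into matrix rank-one pieces yields the matrix nuclear norm bound $\|\overrightarrow{\lambda}(\cB)\|_1\le\|\cB\|_{1,\C}$; and $\sum|\lambda_i|\ge|\sum\lambda_i|=|\tr\cB|$ is trivial.

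Part (4) is a comparison with the Euclidean norm on $\rH_{\bn}$: \eqref{specin} and \eqref{infty21in} give $\|\cX\|_{spec}\le\|\cX\|$, so $\beta_{\infty,\bn,h}=1$ (attained at any $\cY\otimes\overline{\cY}$), while the scaled identity $\cB=N(\bn)^{-1/2}\Id$ has $\|\cB\|_{spec}=N(\bn)^{-1/2}<1$, forcing $\alpha_{\infty,\bn,h}<1$; dually \eqref{nucin} gives $\alpha_{1,\bn,h}=1<\beta_{1,\bn,h}$, and the reciprocity $\alpha_{\infty,\bn,h}\beta_{1,\bn,h}=1$ is the conjugate-norm distortion identity \eqref{albetm} applied to the pair $(\|\cdot\|_{spec},\|\cdot\|_{nuc})$. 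For part (5), after the obvious reordering of factors identifying $\cB\otimes\cC$ with an element of $\rH_{(\bn,\bm)}$ (an isometry for all norms involved), the factorization $\Pi_{(\bn,\bm)}=\Pi_{\bn}\otimes\Pi_{\bm}$ together with $\langle\cX_1\otimes\cX_2,(\cB\otimes\cC)(\cX_1\otimes\cX_2)\rangle=\langle\cX_1,\cB\cX_1\rangle\langle\cX_2,\cC\cX_2\rangle$ gives multiplicativity of $\|\cdot\|_{spec}$, and multiplicativity of $\|\cdot\|_{nuc}$ follows exactly as in Theorem \ref{1inftynormlem}(4): ``$\le$'' by tensoring optimal nuclear decompositions, ``$\ge$'' by testing against $\cZ_1\otimes\cZ_2$ with $\|\cZ_i\|_{spec}=1$.

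For part (6), the action $\cB\mapsto U\cB U^{\dagger}$ with $U=\otimes_j U_j\in\rU_{\bn}$ preserves $\rH_{\bn}$ and the trace and maps $\Pi_{\bn}$ bijectively onto itself; hence it fixes $\|\cdot\|_{spec}$ via the substitution $\cX\mapsto U^{\dagger}\cX$ in \eqref{defspecnrm}, and fixes $\|\cdot\|_{nuc}$ by duality. The Haar identity \eqref{Haareq} is the step I expect to be the crux: by translation-invariance of $\mu_{\bn}$ the average $\int_{\rU_{\bn}}U\cR U^{\dagger}\,d\mu_{\bn}$ commutes with every element of $\rU_{\bn}$, and $\rU_{\bn}$ acts on $\C^{\bn}=\otimes_j\C^{n_j}$ as the outer tensor product of the standard representations of the factors $\rU_{n_j}$, which is irreducible for the product group; Schur's lemma then forces the average to be a scalar multiple of $\otimes_j I_{n_j}$, and taking the trace $\tr\cR=1$ fixes the scalar as $1/N(\bn)$. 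The only genuinely delicate points I would write out in full are the extremizer analysis in part (1) (identifying operator-norm maximizers of a Hermitian $B$ as maximal-modulus eigenvectors) and the irreducibility claim underlying the Schur argument in part (6); the remaining items are bookkeeping modeled on Theorem \ref{1inftynormlem}.
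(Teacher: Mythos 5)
There is one genuine gap: you never prove that $\|\cdot\|_{spec}$ is actually a \emph{norm}, i.e.\ that $\langle \cX,\cB\cX\rangle=0$ for all $\cX\in\Pi_{\bn}$ forces $\cB=0$. Your entire framework for parts (2)--(4) silently presupposes this: the support function of $\mathrm{conv}(K)$, $K=\{\pm\cX\otimes\overline{\cX}:\cX\in\Pi_{\bn}\}$, is positive definite (and its gauge/dual \eqref{nucnrm} is finite on all of $\rH_{\bn}$) if and only if $K$ spans the real space $\rH_{\bn}$, which is exactly the nondegeneracy statement. This does not come for free from your unfolding picture: the numerical radius $\max_{\|\cX\|=1}|\cX^{\dagger}B\cX|$ of the unfolded Hermitian matrix $B$ is a norm, but you only maximize over the much smaller set $\Pi_{\bn}$, so definiteness is a genuinely multilinear fact. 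The paper devotes the opening of its proof to precisely this point, arguing by induction on $d$ with a polarization step (replacing $\x$ by $\x+t\x_1$ and varying phases to strip off one factor at a time, reducing to the $d=1$ Hermitian-matrix case). The gap is fixable --- either reproduce that polarization induction, or invoke the standard fact that rank-one product density matrices span $\rH_{\bn}$ --- but as written your proposal proves the inequalities of the lemma for a quantity that could a priori be a seminorm, in which case \eqref{nucnrm}, \eqref{charnuc} and part (4) would be vacuous or false.

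Apart from this, your argument is sound and mostly parallels the paper: the Cauchy--Schwarz treatment of the semidefinite equality case in \eqref{specin}, the extreme-point/Caratheodory derivation of \eqref{charnuc}, the extreme-point comparisons behind \eqref{nucin}, and the multiplicativity in (5) all match the paper's route. Your one genuine departure is the Haar identity \eqref{Haareq}: you average, note the average commutes with all of $\rU_{\bn}$, and apply Schur's lemma using irreducibility of the outer tensor product of the standard representations of the $\rU_{n_j}$, whereas the paper computes the $d=1$ case explicitly from $\frac{1}{n}I_n=\frac{1}{n}\sum_i\y_i\y_i^{*}$, uses $d\mu_{\bn}=\prod_j\mu_{n_j}$, and for general $\cR$ shows the averaged sesquilinear form is constant on $\Pi_{\bn}$ and inducts on $d$. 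Your Schur argument is shorter and cleaner at the cost of importing representation theory (you would need to actually record the irreducibility claim you flag); the paper's stays elementary. One small patch in (4): $\alpha_{1,\bn,h}=1$ does not follow from \eqref{nucin} alone --- you need $\|\cB\|_{nuc}\ge\|\cB\|$ together with attainment at a rank-one density tensor, or simply the reciprocity \eqref{albetm} combined with $\beta_{\infty,\bn,h}=1$.
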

\begin{proof} It is clear that we have the following three  properties of a norm:
\begin{equation*}
\begin{aligned}
&0\le\|\cB\|_{spec}, \\
&\|b\cB\|_{spec}=|b|\|\cB\|_{spec} \textrm{ for } b\in\R,\\
&\|\cB_1+\cB_2\|_{spec}\le \|\cB_1\|_{spec} +\|\cB_2\|_{spec}.
\end{aligned}
\end{equation*}
Assume that $\|\cB\|_{spec}=0$.  Let $0\ne \cX=\otimes_{j=1}^d \x_j$.  Then $\langle \cX,\cB\cX\rangle$ is a real multi-sesquilinear form which is zero identically.   
 We prove by induction on $d$ that $\cB=0$.  If $d=1$,
then $\cB=0$ as $\|\cB\|_{spec}$ is the spectral norm of a Hermitian matrix.   Assume that  the induction hypothesis holds for $d=m$ and let $d=m+1$. Fix $\x_1,\ldots,\x_{m+1}\ne \0$.   Then $\langle \cX,\cB\cX\rangle$ is a real sesquilinear form which is zero identically.   Hence, we obtain $n_{m+1}^2$ real sesquilinear forms in $\x_1,\ldots,\x_{m}$ which identically zero.
The induction hypothesis implies that $\cB=0$.

\noindent \emph{(1)}  Observe that $\rH_{\bn}\subset \C^{\bn\times\bn}$.  Hence
\begin{equation}\label{cBinfchar}
\|\cB\|_{\infty,\C}=\max_{\cX,\cY\in\Pi_{\bn}}\Re \langle \cY, \cB\cX\rangle=\max_{\cX,\cY\in\Pi_{\bn}}|\langle \cY, \cB\cX\rangle|.
\end{equation}
Use the maximal characterizations \eqref{charopnrm},  \eqref{cBinfchar} and \eqref{defspecnrm} of
$|\overrightarrow{\lambda}(\cB)\|_\infty$,  $\|\cB)\|_{\infty,\C}$ and $\|\cB\|_{spec}$ respectively,  and the inclusion $\Pi_{\bn}\subset \mathbb{S}(\C^{\bn})$ to deduce \eqref{specin}.    The equality case in the first inequality for a positive semi-definite $\cB$ follows from the Cauchy-Schwarz inequality 
\begin{equation*}
|\langle \cY,\cB\cX\rangle|^2\le \langle \cY,\cB\cY\rangle \langle \cX,\cB\cX\rangle\Rightarrow \|\cB\|_{\infty,\C}=\max_{\cX\in\Pi_{\bn}} \langle \cX,\cB\cX\rangle \textrm{ for } \cB\in\rH_{\bn,+}.
\end{equation*}
Same inequaity holds if $-\cB\in\rH_{\bn,+}$.  

Assume the equality $\|\cB\|_{\infty,\bn,\C}=\|\langle \cY^{\star}, \cB\cX^{\star}\rangle=\|\overrightarrow{\lambda}(\cB)\|_\infty$ for some $\cX^\star, \cY^\star\in\Pi_{\bn}$.  Then $\cY^*=z\cX^\star$, and $\cX^\star$ is an eigenvector $\cB:\C^{\bn}\to \C^{\bn}$ with eigenvalue $\lambda$ such that $|\lambda|=\|\overrightarrow{\lambda}(\cB)\|_\infty$.   Vice versa, assume that $\cB$ has an eigenvector $\cX\in\Pi_{\bn}$ which corresponds to an eigenvalue $\lambda_k(\cB)$ such that $|\lambda_k(\cB)|=\|\overrightarrow{\lambda}(\cB)\|_\infty$.   
 Then $\|\cB\|_{spec}\ge \|\overrightarrow{\lambda}(\cB)\|_\infty$ which yields  equalities in \eqref{specin}.

\noindent \emph{(2)} The characterization \eqref{defspecnrm} yields that the set of extreme points of the unit ball of the norm $\|\cdot\|_{nuc}$ is contained in  the set $\rE:=\{\pm \cX\otimes\overline{\cX}:\, \cX\in \Pi_{\bn}\} $.  Observe that $\rE$ is compact and contained in the set of the extreme points of the Euclidean norm in $\C^{\bn\times \bn}$.  Hence, $\rE$ is the set of the extreme points of the unit ball of the norm $\|\cdot\|_{nuc}$.  Therefore, as in the proof of the characterization of $\|\cX\|_{1,\F}$ in \eqref{defspecnuc} we deduce the characterization \eqref{charnuc}.
As $\dim\rH_{\bn}=\dim \rH_{N(\bn)}=N^2(\bn)$, Caratheodory's theorem yields that in the characterization \eqref{charnuc} we can assume that $r\le N^2(\bn)+1$. 

\noindent \emph{(3)} Consider the minimal characterizations of the norms $\|\cB\|_{1,\C}$ and $\|\cB\|_{1,nuc}$ given by \eqref{defspecnuc} for $\C^{\bn}\otimes \C^{\bn}$ 
and by \eqref{charnuc} respectively.   As the set of the extremal points of the unit ball $\|\cdot\|_{nuc}$ is contained the of the extremal points of the unit ball $\|\cdot\|_{1,\C}$ we deduce the first inequality in \eqref{nucin}.    The second  inequality in \eqref{nucin}
is deduced similarly.  The last inequality in \eqref{nucin} follows from the first equality after \eqref{specdec}.

\noindent\emph{(4)}  Straightforward. 

\noindent\emph{(5)}  Given $\cB=[b_{\bi,\bj}]\in \rH_{\bn}, \cC=[c_{\bk,\bl}]\in \rH_{\bm}$, we define $\cB\otimes\cC=\cD=[d_{\bi,\bk,\bj,\bl}] \in \rH_{(\bn,\bm)}$ by the equality $d_{\bi,\bk,\bj,\bl}:=b_{\bi,\bj}c_{\bk,\bl}$.
It is straightforward to show that if $\cB\in \rH_{\bn}, \cC\in \rH_{\bm}$ then $\cB\otimes\cC\in \rH_{(\bn,\bm)}$.   The  proof of equality \eqref{prodHmnrm} is similar to the proof of the equality \eqref{tenprodeq}.

\noindent\emph{(6)}  Assume that $\cR=(\otimes_{j=1}^d\x_j)\otimes (\otimes_{j=1}^d\bar\x_j)$, where $\|\x_j\|=1, j\in[d]$.   Clearly,  $1=\tr\cR=\tr  (\otimes_{j=1} U_j)\cR\overline{(\otimes_{j=1} U_j)}$.   As the extreme points of the unit ball $\|\cdot\|_{nuc}$ are $\pm \cX\otimes \overline{\cX}, \cX\in \Pi_{n^{\times d}}$ it follows that the action of $\rU_{\bn}$ on $\rH_{\bn}$ preserve the trace.

Observe that for $d=1$ the equality
$\int_{\rU_n}(U\x)(U\x)^*d\mu_n=\frac{1}{n}I_n$ for any $\x\in\C^n, \|\x\|=1$ follows from the decomposition $\frac{1}{n}I_n=\frac{1}{n}\sum_{i=1}^n \y_i\y_i^*$, where $\langle \y_i,\y_j\rangle=\delta_{i,j}, i,j\in[n]$.  As $d\mu_{\bn}=\prod_{j=1}^d\mu_{n_j}$ we deduce the equality \eqref{Haareq}.    Assume now that $\cR$ is any density tensor 
in $\rH_{\bn,+,1}$.  Let $\cQ=\int_{\rU_{\bn}} ((\otimes_{j=1}^d U_j)\cR \overline{(\otimes_{j=1}^dU_j)}d\mu_{\bn}$.   Observe the following equalities for $\cX\in \Pi_{\bn}$:
\begin{equation*}
\langle \cX,\cQ\cX\rangle=\langle\cX\otimes\overline{\cX}, \cQ\rangle=\langle \int_{\rU_{\bn}}(\otimes_{j=1}^d U_j)\cX)\otimes \overline{(\otimes_{j=1}^d U_j)\cX)}, \cR\rangle=\langle\frac{1}{N(\bn)}\otimes_{j=1}^d I_{n_j},\cR\rangle=\frac{1}{N(\bn)}.
\end{equation*}
That is,  the sesquilinear form $\langle \cX,\cQ\cX\rangle$ is constant on $\Pi_{\bn}$, which is equal to $\frac{1}{N(\bn)}$.  Use an induction on $d$ to deduce \eqref{Haareq}.  
\end{proof}

The set $\rH_{\bn,+,1}$ corresponds to the density tensors  on the space $\C^{\bn}$.
A density tensor $\rho\in \rH_{\bn,+,1}$ is called a product density tensor if 
$\rho=\otimes_{j=1}^d \rho_j$, where $\rho_j\in \rH_{n_j}, j\in[d]$.  Viewing this tensor product as a Kronecker tensor product, we deduce that $\rank \rho$, viewed as a matrix in $\rH_{N(\bn)}$ is equal to $\prod_{j=1}^{d}\rank\rho_j$.   
A density tensor of the form  $\cX\otimes\overline{\cX}, \cX\in\Pi_{\bn}$ is called  a rank-one density tensor.
For each $\rho_j$ write down its spectral decomposition.   It now follows that a product density tensor is a convex combination of at most $N(\bn)$ rank-one product density.   
\begin{definition}\label{defsepdt}  The set of convex combinations of rank-one product density tensors in $\rH_{\bn,+,1}$ is called the set of separable density tensors, and denoted by $\mathrm{Sep}_{\bn}$.
\end{definition} 

The set of separable density tensors $\mathrm{Sep}_{\bn}$ is considered to be an analog of unentangled states see \cite{Wer89}.
\begin{theorem}\label{charsep}  Let $1\le d\in\N$ and $2\le n_1,\ldots,n_d$.
Then
\begin{enumerate}
\item  A density tensor $\cR\in\rH_{\bn,+,1}$ satisfies the inequalities
\begin{equation}\label{lubsndt}
\begin{aligned}
&\frac{1}{N(\bn)}\le \|\cR\|_{spec}\le \lambda_1(\cR)\le 1,\\
&1\le \|\cR\|_{nuc}\le \beta'_{1,\bn,h}:=\max_{\cX\in \C^{\bn}, \|\cX\|=1}\|\cX\otimes\overline{\cX}\|_{nuc}\le  \beta_{1,\bn,h}.
\end{aligned}
\end{equation}
Equalities in some of the above inequalities hold under the following conditions:
\begin{equation}\label{lubsndteq}
\begin{aligned}
&\|\cR\|_{spec}=\lambda_1(\cR)\iff \exists \cX\in \Pi_{\bn} \textrm{ such that } \langle \cX,\cX_k\rangle = 0 \textrm{ if } \lambda_k <\lambda_1(\cR),\\
&\|\cR\|_{spec}=1\iff \cR=\cX\otimes \overline{\cX}, \cX\in\Pi_{\bn},\\
&\frac{1}{N(d)}=\|\cR\|_{spec} \iff \cR=\frac{1}{N(\bn)}\otimes^d_{j=1}I_{n_j},\\
&1=\|\cR\|_{nuc} \iff \cR\in\mathrm{Sep}_{\bn},\\
& \|\cR\|_{nuc}=\beta'_{1,\bn,h} \Leftarrow \cR=\cX\otimes \overline{\cX}, \cX\in\C^{\bn}, \|\cX\|=1, \|\cX\otimes\overline{\cX}\|_{1,\bn,h}=\beta'_{1,\bn,h}.
\end{aligned}
\end{equation}

\item The set $\mathrm{Sep}_{\bn}$ is a maximal face of $\rB_{\|\cdot\|_{nuc}}(0,1)$, the unit ball of $\|\cdot\|_{nuc}$, of the maximal dimension $N^2(\bn)-1$.  Its supporting hyperplane is
\begin{equation}\label{suphypsep}
 \tr \cB=\langle \otimes_{j=1}^d I_{n_j},\cB\rangle\le 1 \textrm{ for } \|\cB\|_{nuc}\le 1.
 \end{equation}
 Equality holds if and only if $\cB$ is separable.
In particular, any separable density tensor is a convex combinations of at most $N^2(\bn)$ rank-one density tensors.
\end{enumerate}
\end{theorem}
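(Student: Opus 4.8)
The plan is to separate the spectral-norm estimates of \eqref{lubsndt}, which follow from Lemma \ref{specnuclem} together with the Haar identity \eqref{Haareq}, from the nuclear-norm statements and the separability characterization, which all rest on one trace computation applied to the minimal decomposition \eqref{charnuc}.

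First I would establish the spectral bounds. Since $\cR\in\rH_{\bn,+}$, the equality case of the first inequality in \eqref{specin} gives $\|\cR\|_{spec}=\|\cR\|_{\infty,\C}\le\|\overrightarrow{\lambda}(\cR)\|_\infty=\lambda_1(\cR)$, and $\lambda_1(\cR)\le 1$ because $\sum_k\lambda_k(\cR)=\tr\cR=1$ with all $\lambda_k\ge 0$. For the lower bound I would pair \eqref{Haareq} with $\cX\otimes\overline{\cX}$ for a fixed $\cX\in\Pi_{\bn}$: the left-hand side becomes the average over $\rU_{\bn}$ of $\langle\cY_U,\cR\cY_U\rangle$ with $\cY_U=(\otimes_{j=1}^d U_j)\cX\in\Pi_{\bn}$, each term at most $\|\cR\|_{spec}$, while the right-hand side equals $\tfrac{1}{N(\bn)}$; hence $\tfrac{1}{N(\bn)}\le\|\cR\|_{spec}$. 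The three equalities in \eqref{lubsndteq} are then read off by tracking equality in these estimates: from $\langle\cX,\cR\cX\rangle=\sum_k\lambda_k(\cR)|\langle\cX,\cX_k\rangle|^2\le\lambda_1(\cR)$ one gets $\|\cR\|_{spec}=\lambda_1(\cR)$ exactly when some $\cX\in\Pi_{\bn}$ annihilates every eigenvector of smaller eigenvalue; $\|\cR\|_{spec}=1$ forces $\lambda_1(\cR)=1$, hence $\cR$ rank one, with the maximizing product state forced by Cauchy--Schwarz to be its eigenvector; and $\|\cR\|_{spec}=\tfrac{1}{N(\bn)}$ forces equality in the averaging bound, so $\langle\cX,\cR\cX\rangle$ is constant $\tfrac{1}{N(\bn)}$ on $\Pi_{\bn}$ and the fact that $\|\cdot\|_{spec}$ is a norm yields $\cR=\tfrac{1}{N(\bn)}\otimes_{j=1}^d I_{n_j}$.

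For the nuclear side, the bound $\|\cR\|_{nuc}\ge|\tr\cR|=1$ is immediate from \eqref{nucin}, and the upper bound comes from the spectral decomposition $\cR=\sum_k\lambda_k(\cR)\cX_k\otimes\overline{\cX_k}$ via the triangle inequality, giving $\|\cR\|_{nuc}\le\sum_k\lambda_k(\cR)\beta'_{1,\bn,h}=\beta'_{1,\bn,h}\le\beta_{1,\bn,h}$, the last step since $\|\cX\otimes\overline{\cX}\|=1$ when $\|\cX\|=1$. The decisive point is $\|\cR\|_{nuc}=1\iff\cR\in\mathrm{Sep}_{\bn}$: writing $\cR=\sum_i a_i\cX_i\otimes\overline{\cX_i}$ with $\cX_i\in\Pi_{\bn}$ and $\sum_i|a_i|=\|\cR\|_{nuc}=1$ from \eqref{charnuc}, and taking traces, gives $1=\tr\cR=\sum_i a_i\le\sum_i|a_i|=1$, which forces every $a_i\ge 0$ and exhibits $\cR$ as a convex combination of rank-one product densities; the converse is read off \eqref{charnuc}. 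The final implication of \eqref{lubsndteq} is just the definition of $\beta'_{1,\bn,h}$.

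For part (2), duality gives $\max_{\|\cB\|_{nuc}\le 1}\tr\cB=\|\otimes_{j=1}^d I_{n_j}\|_{spec}=1$, so \eqref{suphypsep} is a supporting hyperplane, and its exposed face $\{\|\cB\|_{nuc}\le 1,\ \tr\cB=1\}$ coincides with $\mathrm{Sep}_{\bn}$ by the same trace argument. To pin the dimension at $N^2(\bn)-1$ I would show that the real span of $\{\cX\otimes\overline{\cX}:\cX\in\Pi_{\bn}\}$ is all of $\rH_{\bn}$: as matrices these are $\otimes_{j=1}^d(\x_j\x_j^\dagger)$, and since $\{\x\x^\dagger:\|\x\|=1\}$ spans $\rH_{n_j}$ one obtains $\spa_\R\{\cX\otimes\overline{\cX}\}=\otimes_{j=1}^d\rH_{n_j}=\rH_{\bn}$ by matching real dimensions $\prod_j n_j^2=N^2(\bn)$; since these generators all lie in the affine hyperplane $\tr=1$, their affine hull is that whole hyperplane, so $\mathrm{Sep}_{\bn}$ is a face of maximal dimension $N^2(\bn)-1$. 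Carathéodory's theorem inside this affine space then bounds each separable tensor by $N^2(\bn)$ rank-one product densities. The conceptual heart of the whole argument is the single trace computation forcing $a_i\ge 0$, and the only step I expect to need separate care is the spanning claim $\spa_\R\{\cX\otimes\overline{\cX}\}=\rH_{\bn}$ underlying the dimension count, which relies on the real tensor decomposition $\rH_{\bn}=\otimes_{j=1}^d\rH_{n_j}$.
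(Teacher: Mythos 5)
Your proof is correct, and while the core of it coincides with the paper's argument --- the same trace computation on the minimal decomposition \eqref{charnuc} forcing $a_i\ge 0$ (which gives $1=\|\cR\|_{nuc}\iff\cR\in\mathrm{Sep}_{\bn}$), the same triangle-inequality bound $\|\cR\|_{nuc}\le\beta'_{1,\bn,h}$ over the spectral decomposition, and the same eigenvalue estimates for the equality cases $\|\cR\|_{spec}=\lambda_1(\cR)$ and $\|\cR\|_{spec}=1$ --- you diverge from the paper in two places. For the lower bound $\frac{1}{N(\bn)}\le\|\cR\|_{spec}$ the paper introduces the minimizer $\cR^\star$ of $\|\cdot\|_{spec}$ over $\rH_{\bn,+,1}$, uses that $\rU_{\bn}$ acts by isometries, and bounds the norm of the integral in \eqref{Haareq} by the triangle inequality; you instead pair \eqref{Haareq} with a fixed $\cX\otimes\overline{\cX}$, $\cX\in\Pi_{\bn}$, and average the quadratic form $\langle\cY_U,\cR\,\cY_U\rangle$ directly --- an equivalent but leaner argument that also makes the equality case transparent (constancy of the form on $\Pi_{\bn}$, then normhood of $\|\cdot\|_{spec}$ applied to $\cR-\frac{1}{N(\bn)}\otimes_{j=1}^d I_{n_j}$, which is essentially how the paper concludes as well). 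The more substantial difference is in part (2): the paper pins the dimension $N^2(\bn)-1$ by an induction on $d$, building subspaces $\W_1,\W_2$ from $\mathrm{Sep}_{\bn'}\otimes\rho$ and $\cR\otimes\rH_{n_{m+1},+,1}$ and computing $\W^\perp=\W_1^\perp\cap\W_2^\perp$ to be the line through the identity, whereas you prove directly that $\spa_\R\{\cX\otimes\overline{\cX}:\cX\in\Pi_{\bn}\}=\rH_{\bn}$, using that $\{\x\x^\dagger:\|\x\|=1\}$ spans $\rH_{n_j}$ over $\R$ together with the real tensor decomposition $\rH_{\bn}=\otimes_{j=1}^d\rH_{n_j}$; since all generators lie in the hyperplane $\tr=1$, the affine hull is that whole hyperplane. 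This spanning argument is shorter, avoids the induction entirely, and yields the Carath\'eodory bound $N^2(\bn)$ in one stroke. The one step you rightly flag is easily discharged: a real basis of $\rH_{n_j}$ is a complex basis of $\C^{n_j\times n_j}$, so the $\prod_{j=1}^d n_j^2$ Kronecker products of such bases are $\C$-linearly independent, hence $\R$-linearly independent Hermitian tensors spanning the $N^2(\bn)$-dimensional real space $\rH_{\bn}$.
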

\begin{proof}  
\emph{(1)}
Let $\cR\in\rH_{\bn,+,1}$.  Consider the spectral decomposition $\cR=\sum_{k\in [N(\bn)]} \lambda_k(\cR)\cX_k\otimes \overline{\cX_k}$.  Then $\lambda_i(\cR)\ge 0$ for $i\in[N(\bn)]$ and $\tr\cR=\sum_{i=1}^{N(\bn)} \lambda_i(\cR)=1$.   
Thus, for $\|\cX\|=1$  one has the inequalities 
\begin{equation*}
|\langle \cX,\cR\cX\rangle|=\langle \cX,\cR\cX\rangle=\sum_{k\in [N(\bn)]} \lambda_k(\cR)|\langle\cX,\cX_k\rangle|^2\le \lambda_1(\cR) \sum_{k\in [N(\bn)]}|\langle\cX,\cX_k\rangle|^2=\lambda_1(\cR)\le 1.
\end{equation*}
This shows the last two inequalities in the first inequality in \eqref{lubsndt}.
Assume that  $\|\cR\|_{spec}=\lambda_1(\cR)$.  Then,  there exists $\cX\in\Pi_{\bn}$ such that $\|\cR\|_{spec}=\lambda_1(\cR)=\lambda_1(\cR)\langle \cX,\rangle \cR\cX\rangle$.  This equality holds if and only if that $\langle \cX,\cX_k\rangle=0$ for $\lambda_k(\cR)<\lambda_1(\cR)$.   This show that first equality in \eqref{lubsndteq}.

Assume that $\|\cR\|_{spec}=1$ then $\lambda_1(\cR)=1$, and $\lambda_k(\cR)=0$ for $k>1$.  Now use the implications for the equality $\|\cR\|=\lambda_1(\cR)$ to deduce the second equality case in \eqref{lubsndteq}.

Let $t=\min_{\cR\in \rH_{\bn,+,1}}\|\cR\|_{spec}=\|\cR^\star\|_{spec}$.    As $\rU_{\bn}$ acts as a group of isometries on $\|\cdot\|_{spec}$ it follows that $\|U\cR^\star\bar U\|_{spec}=t$ for each $U\in \rU_{\bn}$.  Hence,
\begin{equation*}
\begin{aligned}
&\frac{1}{N(\bn)}=\|\frac{1}{N(\bn)}\otimes_{j=1}^d I_{n_j}\|_{spec}=\|\int_{\rU_{\bn}} ((\otimes_{j=1}^d U_j)\cR^\star \overline{(\otimes_{j=1}^dU_j)}d\mu_{\bn}\|_{spec}\le \\
&\int_{\rU_{\bn}} \|((\otimes_{j=1}^d U_j)\cR^\star \overline{(\otimes_{j=1}^dU_j)}\|d\mu_{\bn}=t
\end{aligned}
\end{equation*}
This establishes the lower bound in the first inequality of \eqref{lubsndt}.  We now discuss the equality case.
Assume that $\frac{1}{N(\bn)}=\|\cR^{\star}\|_{spec}=\langle \cX,\cR^\star\cX\rangle=\langle \cX\otimes \overline{\cX},\cR^\star\rangle$ for some $\cX\in    \Pi_{\bn}$.  The equality case in the above inequality yields that for each $U\in\rU_{\bn}$ one has the equality $\frac{1}{N(\bn)}= \langle (U\cX), \cR^\star(U\cX)\rangle$.   That is,  the sesqulinear form $\langle\cY,\cR^\star\cY\rangle$ has value $\frac{1}{N(\bn)}$ on $\Pi_{\bn}$.  As in the proof of part \emph{(5)} of Lemma \ref{specnuclem}  we deduce that $\cR^\star=\frac{1}{N(\bn)} \otimes_{j=1}^d I_{n_j}$.

We now discuss the inequalities for $\|\cR\|_{nuc}$ in \eqref{lubsndt}. 
Use the decomposition of $\cR$, and the triangle inequality to deduce
\begin{equation*}
\|\cR\|_{nuc}\le \sum_{k\in[N(\bn)]}\lambda_k(\cR)\|\cX_k\otimes \overline{\cX_k}\|_{nuc}\le \max_{\cX\in\C^{\bn}, \|\cX\|=1}\|\cX\otimes \overline{\cX}\|_{nuc}=\beta'_{1,\bn,h}.
\end{equation*}
The definition \eqref{lubsndt} yields that $\beta'_{1,\bn,h}\le \beta_{1,\bn,h}$.  
The last implication in \eqref{lubsndteq} follows form the definitions.

The inequality \eqref{nucin} yields that $\|\cR\|_{nuc}\ge 1$.  Assume that $\cR$ is separable.  Then $\cR$ is a convex combination of rank-one density tensors, which are extreme points of $\rB_{\|\cdot\|_{nuc}(0,1)}$.  Hence,
$\|\cR\|_{nuc}\le 1$.  Use the inequality \eqref{nucin} to deduce that $\|\cR\|_{nuc}=
 1$.   Assume that $\|\cR\|_{nuc}=1$.   Then $\cR$ is a convex combination of the extreme points of the ball $\rB_{\|\cdot\|_{nuc}}(0,1)$.  As $\tr \cR=1$, $\cR\in \mathrm{Sep}_{\bn}$.  This concluse the proof of part \emph{(1)}.  
 
 \noindent  \emph{(2)}  Assume that $\|\cB\|_{nuc}\le 1$.   Then $\cB$ is a convex combination of $\pm \cX\otimes \overline{\cX},$ where $\cX\otimes \overline{\cX}$ is a rank-one densty tensor.  Hence,  $\tr \cB\le 1$, and equality holds if and only if $\cB$ is a convex combination of rank-one density tensor.  
Therefore,  $\mathrm{Sep}_{\bn}$ is a maximal face of $\rB_{\bn}(0,1)$.  

It is left to show that the dimension of the convex set $\mathrm{Sep}_{\bn}$ is $N^2(\bn)-1$.  This is the dimension of the hyperplane $\{\cB\in\rH_{\bn}:\,\tr \cB=1\}$.
Observe that the dimension of $\mathrm{Sep}_{\bn}$  is the dimension of the convex set $\cB\in \rH_{\bn}:\cB=\frac{1}{2}(\cR_1-\cR_2), \cR_1,\cR_2\in \mathrm{Sep}_{\bn}$.  This is exactly the convex set obtained by the intersection of unit ball $\rB_{\|\cdot\|_{spec}}$ with the hyperplane $\tr \cB=0$.  Hence, $\dim \mathrm{Sep}_{\bn}=N^2(\bn)-1$.
\end{proof}

 The inequality $\|\cR\|_{nuc}\ge 1$ and the implication $\|\cR\|_{nuc}=1\iff \cR\in\mathrm{Sep}_{\bn}$ in Theorem \ref{charsep}  was proved in \cite{Rud00} for $d=2$ and for general $d$ in \cite{Per04}.
 \begin{corollary}\label{charsepsc}  Denote by $\mathrm{Sep}_{\bn,int}$ the relative interior of $\mathrm{Sep}_{\bn}$.  
\begin{enumerate}
\item The sets $\mathrm{Sep}_{\bn}$ and $\mathrm{Sep}_{\bn,int}$ are invariant under the action of $\rU_{\bn}$.
\item Let $\cR\in \mathrm{Sep}_{\bn,int}$.  There exists $r(\cR)>0$ such that $\rB_{\|\cdot\|}(\cR,r(\cR))\cap\rH_{\bn,+,1}\subset \mathrm{Sep}_{\bn}$.
\item The density tensor $\frac{1}{N(\bn)} \otimes_{j=1}^d I_{n_j}$ is in $\mathrm{Sep}_{\bn,int}$, where $r(\frac{1}{N(\bn)} \otimes_{j=1}^d I_{n_j})\ge \frac{2}{N(\bn)\beta_{1,\bn,h}}$.
\end{enumerate} 
 \end{corollary}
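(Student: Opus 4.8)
The plan is to dispatch parts (1) and (2) from general convex geometry and to reserve the real work for the explicit construction in part (3).

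For part (1), I would invoke Lemma~\ref{specnuclem}\emph{(6)}: each $U=\otimes_{j=1}^d U_j\in\rU_{\bn}$ acts on $\rH_{\bn}$ as a linear isometry of $\|\cdot\|_{nuc}$ that preserves the trace and carries a rank-one product density tensor $\cX\otimes\overline{\cX}$, $\cX=\otimes_j\x_j\in\Pi_{\bn}$, to $(\otimes_j U_j\x_j)\otimes\overline{(\otimes_j U_j\x_j)}$, which is again of this form. Since by Definition~\ref{defsepdt} the set $\mathrm{Sep}_{\bn}$ is the convex hull of these tensors and the action is linear, $U(\mathrm{Sep}_{\bn})=\mathrm{Sep}_{\bn}$; being a linear homeomorphism that preserves $\mathrm{Sep}_{\bn}$, $U$ also preserves its relative interior, giving invariance of $\mathrm{Sep}_{\bn,int}$. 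For part (2), the input is Theorem~\ref{charsep}\emph{(2)}, which identifies the affine hull of $\mathrm{Sep}_{\bn}$ with the hyperplane $\{\cB\in\rH_{\bn}:\tr\cB=1\}$ (the face has full dimension $N^2(\bn)-1$). By the very definition of relative interior there is $r(\cR)>0$ with $\rB_{\|\cdot\|}(\cR,r(\cR))\cap\{\tr=1\}\subset\mathrm{Sep}_{\bn}$; since $\rH_{\bn,+,1}\subset\{\tr=1\}$, intersecting with $\rH_{\bn,+,1}$ only shrinks the left-hand side, and the containment follows at once.

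For part (3) I would prove the self-contained statement: writing $\cR_0=\frac{1}{N(\bn)}\otimes_{j=1}^d I_{n_j}$, if $\cS\in\rH_{\bn,+,1}$ and $\cE:=\cS-\cR_0$ satisfies $\|\cE\|\le\frac{2}{N(\bn)\beta_{1,\bn,h}}$, then $\cS\in\mathrm{Sep}_{\bn}$ (we may assume $\cE\ne0$, else $\cS=\cR_0$ is separable). First fix a nuclear-optimal decomposition $\cE=\sum_i a_i\,\cX_i\otimes\overline{\cX_i}$ with $\cX_i\in\Pi_{\bn}$ and $\sum_i|a_i|=\|\cE\|_{nuc}$, available from \eqref{charnuc}. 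Since $\tr\cE=\sum_i a_i=0$, splitting along the sign of $a_i$ gives $\cE=\cE_+-\cE_-$ with $\tr\cE_+=\tr\cE_-=:p$ and $2p=\|\cE\|_{nuc}$; the maximal-distortion bound $\|\cE\|_{nuc}\le\beta_{1,\bn,h}\|\cE\|$, immediate from the definition of $\beta_{1,\bn,h}$ in \eqref{defalbeinh}, then yields $p=\tfrac12\|\cE\|_{nuc}\le\tfrac12\beta_{1,\bn,h}\|\cE\|\le\frac{1}{N(\bn)}$, so $N(\bn)\,p\le1$. Now write $\cS=(\cR_0-\cE_-)+\cE_+$. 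The summand $\cE_+$ is already a nonnegative combination of rank-one product density tensors. Writing $\cE_-=\sum_k c_k\cQ_k$ with $c_k>0$, $\cQ_k=\cX_k\otimes\overline{\cX_k}$, $\sum_k c_k=p$, and distributing the identity gives $\cR_0-\cE_-=\sum_k c_k\bigl(\frac{1}{N(\bn)p}\otimes_j I_{n_j}-\cQ_k\bigr)$. For each $k$, expanding each $I_{n_j}$ in an orthonormal basis of $\C^{n_j}$ whose first vector is the $j$-th factor of $\cX_k$ exhibits $\otimes_j I_{n_j}=\cQ_k+\sum_{\bl\ne(1,\ldots,1)}\cF^{(k)}_{\bl}\otimes\overline{\cF^{(k)}_{\bl}}$, whence $\otimes_j I_{n_j}-N(\bn)p\,\cQ_k=(1-N(\bn)p)\cQ_k+\sum_{\bl\ne(1,\ldots,1)}\cF^{(k)}_{\bl}\otimes\overline{\cF^{(k)}_{\bl}}$ is a nonnegative combination of rank-one product density tensors precisely because $N(\bn)p\le1$. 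Summing, $\cS=\cR_0+\cE$ is a nonnegative combination of rank-one product density tensors of total trace one, hence a convex combination of them, i.e.\ $\cS\in\mathrm{Sep}_{\bn}$; this gives $r(\cR_0)\ge\frac{2}{N(\bn)\beta_{1,\bn,h}}$.

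The main obstacle is the separability of $\cR_0-\cE_-$. The naive estimate $\|\cR_0+\cE\|_{nuc}\le\|\cR_0\|_{nuc}+\|\cE\|_{nuc}=1+\|\cE\|_{nuc}$ is useless, since it always exceeds $1$ while separability forces $\|\cS\|_{nuc}=1$; so one cannot argue through norms alone and must produce an explicit separable decomposition. The two points requiring care are \emph{(i)} the basis-expansion identity adapted to each $\cQ_k$, which is what turns ``identity minus a rank-one product'' into a genuinely separable object, and \emph{(ii)} the bookkeeping showing that the factor $2$ in the radius is exactly the trace-zero gain $p=\tfrac12\|\cE\|_{nuc}$, so that the budget $N(\bn)p\le1$ needed termwise is met precisely at the stated threshold.
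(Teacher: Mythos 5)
Your proposal is correct and takes essentially the same route as the paper's proof: parts (1)--(2) are argued identically, and in part (3) you use the same three ingredients --- a nuclear-optimal signed decomposition of the trace-zero perturbation $\cE$, the trace-zero splitting giving $\tr\cE_+=\tr\cE_-=\tfrac12\|\cE\|_{nuc}\le\frac{1}{N(\bn)}$ via $\|\cE\|_{nuc}\le\beta_{1,\bn,h}\|\cE\|$, and the completion of each negative rank-one factor to orthonormal bases so that $\otimes_{j=1}^d I_{n_j}-\cQ_k$ is exhibited as a nonnegative combination of rank-one product density tensors. The only deviation is cosmetic bookkeeping: you absorb the slack termwise as $(1-N(\bn)p)\,\cQ_k$, while the paper keeps it as the single global summand $\bigl(\frac{1}{N(\bn)}-\sum_m b_m\bigr)\otimes_{j=1}^d I_{n_j}$.
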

 \begin{proof}
 \emph{(1)} is straightforward.   
  
 \noindent \emph{(2)}   follows from the fact that $\dim\mathrm{Sep}_{\bn,int}=N^2(\bn)-1$.
 
\noindent \emph{(3)}  Let $0\ne \cB\in \rH_{\bn}, \tr \cB=0$.   We claim that 
\begin{equation*}
\begin{aligned}
&\cB=\sum_{l=1}^p a_l(\otimes_{j=1}^d \x_{j,l})\otimes(\otimes_{j=1}^d \bar\x_{j,l})-
\sum_{m=1}^q b_m(\otimes_{j=1}^d \y_{j,m})\otimes(\otimes_{j=1}^d \bar\y_{j,m}),\\
&a_l>0,  \|\x_{j,l}\|=1, b_m>0, \|\y_{j,m}\|=1,l\in[p],m\in[q], \\
&\|\cB\|_{nuc}=\sum_{l=1}^pa_l+\sum_{m=1}^q b_m,  \quad\sum_{l=1}^p a_l-\sum_{m=1}^q b_m=0.
\end{aligned}
\end{equation*}
The expansion of $\cB$, and the first equality in the last row of the above equalities follow from the  observation that $\frac{1}{\|\cB\|_{nuc}}\cB$
 is a convex combination of the extreme points of the ball $\{\cB\in\rH_{\bn}: \|\cB\|_{nuc}\le 1\}$.   The last equality follows from the assumption $\tr B=0$.   Hence, $\sum_{l=1}^p a_l=\sum_{m=1}^q b_m=\frac{\|\cB\|_{nuc}}{2}$.   Assume that $\|\cB\|\le r=\frac{2}{N(\bn)\beta_{1,\bn,h}}$.
 Then 
 $$\|\cB\|_{nuc}\le r\beta_{1,\bn,h}=\frac{2}{N(\bn)}\Rightarrow \sum_{l=1}^p a_l=\sum_{m=1}^q b_m\le \frac{1}{N(\bn)}.$$ 
 
 Fix $m\in[q]$.  Complete the vector $\y_{j,m}$ to an orthonormal basis $\bw_{i_j,j,m}, i_j\in[n_j]$ in $\C^{n_j}$ such that $\bw_{1,j,m}=\y_{j,m}$,  for $j\in[d]$ and $m\in[q]$.
 Observe that $I_{n_j}=\sum_{i_j=1}^q \bw_{i_j,j,m}\bw_{i_j,j,m}^\dagger$.
 Hence,
 \begin{equation*}
 \begin{aligned}
 &\otimes_{j=1}^{n_j}I_{n_j}=\sum_{[i_1,\ldots,i_d]\in [\bn]}(\otimes_{j=1}^d\bw_{i_j,j,m})\otimes (\otimes_{j=1}^d\bar\bw_{i_j,j,m})\Rightarrow\\
& \frac{1}{N(\bn)-1}(\otimes_{j=1}^d I_{n_j}-(\otimes_{j=1}^{d}\bw_{1,j,m})\otimes(\otimes_{j=1}^{d}\bar\bw_{1,j,m}))\in \mathrm{Sep}_{\bn}.
\end{aligned}
 \end{equation*}
 Thus,
 \begin{equation*}
 \begin{aligned}
 &\frac{1}{N(\bn)}\otimes_{j=1}^d I_{n_j}+\cB=\sum_{l=1}^p a_l(\otimes_{j=1}^d \x_{j,l})\otimes(\otimes_{j=1}^d \bar\x_{j,l})+(\frac{1}{N(\bn)}-\sum_{m=1}^q b_m)\otimes_{j=1}^{n_j}I_{n_j}+\\
&\sum_{m=1}^q b_m(\otimes_{j=1}^d I_{n_j}-(\otimes_{j=1}^{d}\bw_{1,j,m})\otimes(\otimes_{j=1}^{d}\bar\bw_{1,j,m}))\Rightarrow\\
&\frac{1}{N(\bn)}\otimes_{j=1}^d I_{n_j}+\cB\in \mathrm{Sep}_{\bn}.
 \end{aligned}
 \end{equation*}
 \end{proof}
 For $d=2$ it is shown in \cite{GB02} that $r= \frac{1}{N(\bn)}$ is the maximal $r$ such that $\rB_{\|\cdot\|}(\frac{1}{N(\bn)} \otimes_{j=1}^d I_{n_j}),r)\cap\rH_{\bn,+,1}\subset \mathrm{Sep}_{\bn}$.

\subsection{Bi-symmetric Hermitian tensors}\label{subsec:bisymt}
\begin{definition}\label{defbisymher}
Assume that $\bn=n^{\times d}$ and $n,d\ge 2$.  
Denote $\bi=(i_1,\ldots,i_d)\in[n^{\times d}]$.  For  a permutation $\omega:[d]\to[d]$ denote $\omega(\bi)=(i_{\omega(1)},\ldots,i_{\omega(d)})$.  
A Hermitian tensor $\cT\in \rH_{n^{\times d}}$ is called bi-symmetric Hermitian if for any permutation $\omega\in\Omega_d$ one has the equality
\begin{equation}\label{bisymdef}
 t_{\omega(\bi), \bj}=t_{\bi,\bj} \textrm{ for all  }\bi,\bj\in[n^{\times d}], \omega\in\Omega_d\quad \cT=[t_{\bi,\bj}]\in \rH_{n^{\times d}}. 
\end{equation}
Denote by $\rB\rH_{n^{\times d}} \subset \rH_{n^{\times d}}$ the subspace of bi-symmetric Hermitian tensors.
\end{definition}
Observe that  if $\cT$ is bi-symmetric Hermitian it follows that for permutations $\omega$ and $\psi$ we have the equalities $t_{\omega(\bi)\psi(\bj)}=t_{\bi\bj}$. 
\begin{lemma}\label{spdcbsh}  Let $2\le d$ 
and $0\ne\cT\in\rB\rH_{n^{\times d}}$.  Denote by $\Lambda\subset [n^d]$ the set of nonzero eigenvalues of $\cT$ given by \eqref{specdec}.
Then,  $\cX_k\in\rS^d\C^n$ for $k\in\Lambda$. Thus, $\mathrm{range}\,\rT\subset \rS^d\C^n$.  
If in addition $\cT$ has real entries then $\cX_k\in \rS^d\R^n$ for $k\in\Lambda$.
Hence, 
\begin{equation}\label{dimHnd}
\begin{aligned}
&\dim \rB\rH_{n^{\times d}} ={n+d-1\choose d}^2,\\
&\dim \Re\rB\rH_{n^{\times d}} ={n+d-1\choose d}\left({n+d-1\choose d}+1\right)/2,
\end{aligned}
\end{equation}
where $\Re\rB\rH_{n^{\times d}}\subset \rB\rH_{n^{\times d}} $ is the set of real valued bi-symmetric Hermitian tensors.

\end{lemma}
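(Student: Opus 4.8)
The plan is to extract everything from one structural observation: the bi-symmetry condition \eqref{bisymdef} in the first index says precisely that $\cT$, viewed as the linear operator $\C^{n^{\times d}}\to\C^{n^{\times d}}$, maps into the symmetric subspace. First I would compute, for an arbitrary $\cX\in\C^{n^{\times d}}$ and $\omega\in\Omega_d$, that $(\cT\cX)_{\omega(\bi)}=\sum_{\bj}t_{\omega(\bi),\bj}x_{\bj}=\sum_{\bj}t_{\bi,\bj}x_{\bj}=(\cT\cX)_{\bi}$, so that $\cT\cX\in\rS^d\C^n$; this gives $\mathrm{range}\,\cT\subset\rS^d\C^n$ at once (in fact first-index bi-symmetry is \emph{equivalent} to this range inclusion, as one sees by taking $\cX$ to run over the standard basis). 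For a nonzero eigenvalue $\lambda_k(\cT)$ in the spectral decomposition \eqref{specdec} one then has $\cX_k=\lambda_k(\cT)^{-1}\cT\cX_k\in\mathrm{range}\,\cT\subset\rS^d\C^n$, which is exactly the first assertion for $k\in\Lambda$.

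For the real case I would use that a real $\cT\in\rB\rH_{n^{\times d}}$ unfolds to a real symmetric matrix, hence admits a real orthonormal eigenbasis with real eigenvalues. Running the same range computation over $\R$ shows that for $\cX_k$ chosen real with $\lambda_k(\cT)\ne 0$, the tensor $\cX_k=\lambda_k(\cT)^{-1}\cT\cX_k$ is a real symmetric tensor, i.e. $\cX_k\in\rS^d\R^n$.

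For the dimension formulas the plan is to identify $\rB\rH_{n^{\times d}}$ with the real space of Hermitian endomorphisms of the complex vector space $\rS^d\C^n$. Let $P$ be the orthogonal projection of $\C^{n^{\times d}}$ onto $\rS^d\C^n$. I would first record the companion symmetry in the second index: combining \eqref{bisymdef} with the Hermitian property gives $t_{\bi,\psi(\bj)}=\overline{t_{\psi(\bj),\bi}}=\overline{t_{\bj,\bi}}=t_{\bi,\bj}$, so for each fixed $\bi$ the tensor $\cU^{(\bi)}$ with entries $\overline{t_{\bi,\bj}}$ lies in $\rS^d\C^n$, and $(\cT\cX)_{\bi}=\langle \cU^{(\bi)},\cX\rangle$ therefore vanishes whenever $\cX\perp\rS^d\C^n$. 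Thus $\cT$ annihilates $(\rS^d\C^n)^{\perp}$, and together with $\mathrm{range}\,\cT\subset\rS^d\C^n$ this yields $\cT=P\cT P$. Restriction to $\rS^d\C^n$ then defines a real-linear map $\rB\rH_{n^{\times d}}\to\mathrm{Herm}(\rS^d\C^n)$; it is injective since $\cT=P\cT P$ is determined by $\cT|_{\rS^d\C^n}$, and surjective since for any Hermitian $H$ on $\rS^d\C^n$ the operator $PHP$ is Hermitian, has range in $\rS^d\C^n$, and is hence bi-symmetric. So it is an isomorphism.

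Writing $M:=\binom{n+d-1}{d}=\dim_{\C}\rS^d\C^n$, the real dimension of $\mathrm{Herm}(\rS^d\C^n)$ is $M^2$, giving the first equality in \eqref{dimHnd}; imposing real entries replaces Hermitian operators on $\rS^d\C^n$ by real symmetric operators on $\rS^d\R^n$, whose space has real dimension $M(M+1)/2$, giving the second. The one step requiring genuine care—and the main obstacle—is this last isomorphism, specifically verifying cleanly that the Hermitian condition upgrades first-index symmetry to the two-sided support identity $\cT=P\cT P$ (the annihilation of $(\rS^d\C^n)^{\perp}$); everything else is index bookkeeping.
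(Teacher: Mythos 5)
Your proof is correct, and it rests on the same underlying identification as the paper's — bi-symmetric Hermitian tensors are exactly the Hermitian operators supported on $\rS^d\C^n$ — but the two verifications are organized differently. The paper works at the level of eigenvectors: it applies the eigen-equation $\cT\cX_k=\lambda_k(\cT)\cX_k$ together with $t_{\omega(\bi),\bj}=t_{\bi,\bj}$ to conclude $x_{\bi,k}=x_{\omega(\bi),k}$ directly, handles the real case via the real spectral decomposition of the unfolded symmetric matrix exactly as you do, and then asserts the dimension count \eqref{dimHnd} with essentially no detail. You instead establish the range inclusion $\mathrm{range}\,\cT\subset\rS^d\C^n$ for arbitrary inputs first (correctly noting its equivalence to first-index bi-symmetry by testing on the standard basis), obtain eigenvector symmetry as a one-line corollary, and then make the dimension count fully rigorous through the isomorphism $\cT\mapsto\cT|_{\rS^d\C^n}$ with inverse $H\mapsto PHP$; this supplies precisely the step the paper elides, and it is the genuine added value of your write-up. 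One simplification is available: the step you flag as the main obstacle — that $\cT$ annihilates $(\rS^d\C^n)^{\perp}$, i.e.\ $\cT=P\cT P$ — is automatic once the range inclusion is known, since for a Hermitian operator $\ker\cT=(\mathrm{range}\,\cT)^{\perp}$, so $\mathrm{range}\,\cT\subset\rS^d\C^n$ forces $(\rS^d\C^n)^{\perp}\subset\ker\cT$ with no index computation at all; your explicit argument via the tensors $\cU^{(\bi)}$ (which in effect reproves the paper's remark that $t_{\omega(\bi),\psi(\bj)}=t_{\bi,\bj}$) is correct but dispensable. The surjectivity check is also sound: $PHP$ is Hermitian with range in $\rS^d\C^n$, hence first-index bi-symmetric, and restricts back to $H$, giving $\dim\rB\rH_{n^{\times d}}={n+d-1\choose d}^2$ and, after restricting to real entries and real symmetric operators on $\rS^d\R^n$, the stated ${n+d-1\choose d}\left({n+d-1\choose d}+1\right)/2$.
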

\begin{proof}
Assume that $k\in\Lambda$.  Then $\cX_k=[x_{\bi,k}], \bi\in[n^{\times d}]$ is an eigenvector of $\cT$ of length one corresponding to the eigenvalue $\lambda_k(\cT)\ne 0$:
\begin{equation*}
\begin{aligned}
&\sum_{\bj\in[n^{\times d}]}t_{\bi,\bj}x_{\bj,k}=\lambda_k(\cT)x_{\bi,k}, \quad \bi\in[n^{\times d}]\Rightarrow\\
&\lambda_k(\cT)x_{\bi,k}=\sum_{\bj\in[n^{\times d}]}t_{\omega(\bi),\bj}x_{\bj,k}=\lambda_k(\cT)x_{\omega(\bi),k}\Rightarrow\\
&x_{\bi,k}=x_{\omega(\bi),k} \textrm{ for all } \bi\in[n^{\times d}], \omega\in\Omega_d, k\in\Lambda.
\end{aligned}
\end{equation*}
For $\cT\in \Re\rB\rH_{n^{\times d}}$, use the real spectral decomposition of $\cT$ as a real symmetric matrix $T\in \rS^2\R^{N(n^{\times d})}$.

As each $\cX_i\in\rS^d\F^n$ in the decomposition \eqref{specdec} we deduce $\mathrm{range}\,\rT\subset \rS^d\F^n$.
Recall that $\dim\rS^d\F^n={n+d-1\choose d}$ \cite{FW20}.  Hence,  \eqref{dimHnd} holds.
\end{proof}
\begin{corollary}\label{bodt}
The set of positive semidefinite bi-symmetric Hermitian tensors with trace one, denoted by $\rB\rH_{n^{\times d},+,1}$, correspond to the convex set of bi-symmetric density tensors generated by Bosons on $\rS^d\C^n$.
\end{corollary}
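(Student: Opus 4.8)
The plan is to prove the corollary as the set equality
\[
\rB\rH_{n^{\times d},+,1}=\Big\{\textstyle\sum_{k}p_k\,\cX_k\otimes\overline{\cX_k}:\ \cX_k\in\rS^d\C^n,\ \|\cX_k\|=1,\ p_k\ge 0,\ \textstyle\sum_k p_k=1\Big\},
\]
whose right-hand side is exactly the convex set of bi-symmetric density tensors generated by Bosons. This will follow almost directly from the spectral decomposition \eqref{specdec} combined with Lemma \ref{spdcbsh}, so the work reduces to two inclusions.

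For the inclusion $\supseteq$, I first verify that a single pure Boson density tensor $\cX\otimes\overline{\cX}$, with $\cX=[x_{\bi}]\in\rS^d\C^n$ and $\|\cX\|=1$, already lies in $\rB\rH_{n^{\times d},+,1}$. Its entries are $x_{\bi}\overline{x_{\bj}}$, so it is Hermitian, and since $x_{\omega(\bi)}=x_{\bi}$ for every permutation $\omega$ by symmetry of $\cX$, it satisfies the bi-symmetry condition \eqref{bisymdef}. Viewed as a matrix it is the rank-one projection $\cX\cX^{\dagger}$, hence positive semidefinite, with trace $\|\cX\|^2=1$. Since $\rB\rH_{n^{\times d},+,1}$ is the intersection of the subspace $\rB\rH_{n^{\times d}}$ with the positive semidefinite cone and the trace-one hyperplane, it is convex, and so it contains the whole convex hull of such pure density tensors.

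For the inclusion $\subseteq$, I take $\cR\in\rB\rH_{n^{\times d},+,1}$ and use its spectral decomposition $\cR=\sum_{k}\lambda_k(\cR)\,\cX_k\otimes\overline{\cX_k}$ from \eqref{specdec}. Positive semidefiniteness gives $\lambda_k(\cR)\ge 0$, and $\tr\cR=\sum_k\lambda_k(\cR)=1$, so the eigenvalues form a probability vector. Keeping only the indices with $\lambda_k(\cR)>0$, Lemma \ref{spdcbsh} ensures that each surviving eigenvector $\cX_k$ lies in $\rS^d\C^n$ and is a unit Boson. Hence $\cR=\sum_{k:\lambda_k(\cR)>0}\lambda_k(\cR)\,\cX_k\otimes\overline{\cX_k}$ is a convex combination of pure Boson density tensors, which finishes the inclusion.

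I expect no genuine difficulty here, since all the substantive content is already contained in Lemma \ref{spdcbsh}. The single point I would be careful about is that eigenvectors of $\cR$ attached to the zero eigenvalue need not be symmetric; this is exactly why the decomposition is written only over the strictly positive part of the spectrum, which is precisely the regime covered by Lemma \ref{spdcbsh}.
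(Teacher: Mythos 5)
Your proof is correct and follows exactly the route the paper intends: the corollary is stated without proof precisely because it is immediate from the spectral decomposition \eqref{specdec} together with Lemma \ref{spdcbsh}, which is what your $\subseteq$ inclusion uses, while your $\supseteq$ direction is the routine verification the paper leaves implicit. Your closing remark about eigenvectors of the zero eigenvalue not necessarily being symmetric is a sound observation and correctly identifies why Lemma \ref{spdcbsh} restricts to the nonzero part of the spectrum.
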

\begin{definition} \label{defstsep}
A bi-symmetric Hermitian density tensor  $\cR\in\rB\rH_{n^{\times d}}$ is called strongly separable if $\cR$ is a convex combinations of $\x^{\otimes d}\otimes  \bar\x^{\otimes d} ,\x\in\C^n, \|\x\|=1$.   Denote by $\mathrm{Seps}_{n^{\times d}}$ the set of strongly separable bi-symmetric Hermitian density tensors in $\rB\rH_{n^{\times d}}$. 
\end{definition}
The following Lemma was observed by O.  G\"{u}hne,  see the arguments in \cite{TG09}:
\begin{lemma}\label{bisymsep}  A bi-symmetric Hermitian density tensor  $\cR\in\rB\rH_{n^{\times d}}$ is strongly separable if and only if it is separable.
\end{lemma}
\begin{proof}  Let  $\rP_s:\otimes^d\C^n\to \rS^d\C^n$ be the orthogonal projection.  Note that $\|\rP_s(\cX)\|\le |\cX\|$, and equality holds if and only $\cX\in\rS^d\C^n$.  In particular, if $\cX\in \Pi_{n^{\times d}}$ then $\cX\in\rS^d\C^n$ if and only if $\cX=\x^{\otimes d}, \x\in\C^n,\|\x\|=1\}$.  

Assume that $\cR\in\rB\rH_{n^{\times d}}$ is separable.   Then $\cR=\sum_{k=1}^r a_i \cX_i\otimes \overline{\cX_i}, \sum_{k=1}^r a_k=1$, where $a_i>0,\cX_i\in \Pi_{n^{\times d}}$ for $i\in[r]$.  Observe that 
\begin{equation*}
\begin{aligned}
&\cR=\sum_{k=1}^r a_i \cX_i\otimes \overline{\cX_i}=(P_s\otimes P_s)(\cR)=\sum_{k=1}^r a_i \rP_s(\cX_i)\otimes \overline{\rP_s(\cX_i})\Rightarrow\\
&1=\tr \cR=\sum_{k=1}^r a_i \|\cX_i\|^2=\sum_{k=1}^r a_i \|\rP_s(\cX_i)\|^2\Rightarrow\\
&\cX_i\in\rS^d\C^n\rightarrow \cX_i=\x_i^{\otimes d}, \|\x_i\|=1, i\in [r]\Rightarrow \cR \textrm{ is strongly separable}.
\end{aligned}
\end{equation*}
\end{proof}

\begin{lemma}\label{bisymspecnuclem} Assume that  $2\le n,d\in\N$.   Let 
\begin{equation*}
\cI_{{n+d-1\choose d}}=\sum_{k=1}^{{n+d-1\choose d}} \cX_k\otimes \bar\cX_k, \quad
\cX_k\in\rS^d\C^n, \langle \cX_k,\cX_j\rangle =\delta_{k,j}, k,j\in[{n+d-1\choose d}]
\end{equation*}
be the identity tensor in $\cB\in\rB\rH_{n^{\times d}}$ acting on $\rS^d\C^n$.  Assume that $\cB\in\rB\rH_{n^{\times d}}$.   
\begin{enumerate}
\item  The following equality holds 
\begin{equation}\label{B1nrmchar}
\|\cB\|_{1,\C}=\max_{\x,\y\in\C^n, \|\x\|=\|\y\|=1}|\langle\x^{\otimes d},\cB\y^{\otimes d}\rangle|=\max_{\x,\y\in\C^n, \|\x\|=\|\y\|=1}\Re\langle\x^{\otimes d},\cB\y^{\otimes d}\rangle.
\end{equation}
\item 
Define
\begin{equation}\label{defbspecnrm}
\|\cB\|_{bspec}:=\max_{\x\in\C^n,\|\x\|=1}|\langle \x^{\otimes d},\cB\x^{\otimes d}\rangle|=\max_{\x\in\C^n,\|\x\|=1}\max(\langle \x^{\otimes d},\cB\x^{\otimes d}\rangle,
-\langle \x^{\otimes d},\cB\x^{\otimes d}\rangle).
\end{equation}
Then $\|\cB\|_{bspec}$ is a norm on $\rB\rH_{n^{\times d}}$, called b-spectral norm.  
\item
\begin{equation}\label{bspecin}
\|\cB\|_{bspec}\le \|\cB\|_{spec}\le \|\cB\|_{1,\C}.
\end{equation}
Equalities hold if $\cB$ is a positive semi-definite or negative semi-definite.
\item Let $\|\cB\|_{bnuc}$ be the dual norm of $\|\cdot\|_{bspec}$, called b-nuclear norm:
\begin{equation}\label{nucnrm}
\|\cB\|_{bnuc}=\max_{\cC\in\rB\rH_{n^{\times d}},\|\cC\|_{bspec}=1} \langle\cC,\cB\rangle.
\end{equation}
The extreme points if $\rB_{\|\cdot\|_{bnuc}}(0,1)$ are $\pm\x^{\otimes d}\otimes  \bar\x^{\otimes^d},\x\in\C^n, \|\x\|=1$.  Hence,
\begin{equation}\label{charnucb}
\|\cB\|_{bnuc}=\min_{\cX=\sum_{i=1}^r a_i (\otimes^d\x_i)\otimes (\otimes^d\bar\x_i,) a_i\in\R, \x_i\in\Pi_{\bn}, i\in[r], r\in\N}\sum_{i=1}^r |a_i|.
\end{equation}
Furthermore, the minimum is achieved for some $r\le {n+d-1\choose d}^2+1$.
\item  The following inequalities hold:
\begin{equation}\label{bnucin}
\|\cB\|_{bnuc}\ge\|\cB\|_{nuc}.
\end{equation}
\item Denote
\begin{equation}\label{defalbeinbh}
\begin{aligned}
&\alpha_{\infty,n^{\times d},bh}=\min_{\cX\in\rB\rH_{n^{\times d}},\|\cX\|=1} \|\cX\|_{bspec}, \quad \beta_{\infty,n^{\times d},bh}=\max_{\cX\in\rB\rH_{n^{\times d}},\|\cX\|=1} \|\cX\|_{bspec},\\
&\alpha_{1,n^{\times d},bh}=\min_{\cX\in\rB\rH_{n^{\times d}},\|\cX\|=1} \|\cX\|_{bnuc}, \quad \beta_{1,n^{\times d},bh}=\max_{\cX\in\rB\rH_{n^{\times d}},\|\cX\|=1} \|\cX\|_{bnuc}.
\end{aligned}
\end{equation}
Then 
\begin{equation}\label{alpbethinbh}
\begin{aligned}
&\alpha_{\infty,n^{\times d},bh}< \beta_{\infty,n^{\times d},bh}=1, \quad \alpha_{1,n^{\times d},bh}=1<\beta_{1,n^{\times d},bh},\\
&\alpha_{\infty,n^{\times d},bh}\beta_{1,n^{\times d},bh}=1.
\end{aligned}
\end{equation}
\item The group $\rU_{n}$ acts on $\rB\rH_{n^{\times d}}$ via the diagonal action $\cR\mapsto (\otimes^d U)\cR \overline{(\otimes^d U)}$:
\begin{equation*}
\begin{aligned}
&\cX\otimes \overline{\cX}\mapsto((\otimes^d U)\cX)\otimes\overline{((\otimes^d U)\cX)},\\
&\x^{\otimes d}\otimes \bar\x^{\otimes d}\mapsto (U\x)^{\otimes d}\otimes (\overline{U\x})^{\otimes d}.
\end{aligned}
\end{equation*}
as a group of isometries of the norms $\|\cdot\|_{bspec}$ and $\|\cdot\|_{bnuc}$.
Let $\mu_{n}$ be the Haar measure on $\rU_{n}$.   Then for any $\cR\in\rB\rH_{n^{\times d},+,1}$ the following equality holds:
 \begin{equation}\label{Haareqs}
 \int_{\rU_{n}} ((\otimes^d U)\cR \overline{(\otimes^d U)}d\mu_{n}=\frac{1}{{n+d-1\choose d}}\cI_{{n+d-1\choose d}}.
 \end{equation}
\end{enumerate}
\end{lemma}
\begin{proof}\emph{(1)}  Assume that $\cT$ has the spectral decomposition \eqref{specdec}.  Fix $\cY=\otimes_{j=1}^d\y_j\in\Pi_{n^{\times d}}$ and  observe that  $\cT\cY=\sum_{k\in\Lambda}\lambda_k(\cT)\langle \cX_i,\cY\rangle \cX_i\in\rS^d\C^n$.  Banach's theorem yields that 
$$\max_{\cX\in\Pi_{n^{\times d}}}|\langle \cX, \cT\cY|=\max_{\x\in\C^n,\|\x\|=1}|\langle \x^{\otimes d}, \cT\cY\rangle|.$$
Use the equality $\langle \x^{\otimes d},\cT\cY\rangle=\langle \cT\x^{\otimes d},\cY\rangle$, and Banach's theorem again to deduce
\begin{equation*}
\|\cB\|_{\infty,\C}= 
\max_{\cX,\cY\in\Pi_{n^{\times d}}}|\langle \cX,\cT\cY\rangle|=\max_{\x,\y\in\C^n,\|\x\|=\|\y\|=1}|\langle \x^{\otimes d}, \cT\y^{\otimes d}\rangle|.
\end{equation*}

\emph{(2)}  It is enough to show that $\|\cT\|_{bspec}=0\Rightarrow \cT=0$.   
Assume to the contrary that $\cT\ne 0$ and $\|\cT\|_{bspec}=0$,  where $\cT$ has the expansion  \eqref{specdec}.  We view $\cT$ as a linear operator $T:\otimes^d\C^n\to \otimes^d\C^n$.
As $\cT$ is bi-symmetric Hermitian, we need to show that 
\begin{equation}\label{bspecimp}
\langle \x^{\otimes d},\cT  \x^{\otimes d}\rangle=0 \textrm{ for all }\x\in\C^n\Rightarrow \cT=0.
\end{equation}
Observe that since $\cT$ is bi-symmetric Hermitian follows that 
\begin{equation}\label{bspeceq}
\begin{aligned}
&\langle \otimes_{j=1}^d \x_j,  \cT\otimes_{j=1}^d \y_j\rangle=\langle \otimes_{j=1}^d \x_{\omega(j)},  \cT\otimes_{j=1}^d \y_{\psi(j)}\rangle=\overline{\langle \otimes_{j=1}^d \y_{\psi(j)},  \cT\otimes_{j=1}^d x_{\omega_(j)}}\rangle, \omega,\psi\in\Omega_d,\\
&\langle \otimes_{j=1}^d \x_j,  \cT\otimes_{j=1}^d \x_j\rangle\in\R.
\end{aligned}
\end{equation}
(The second equality in \eqref{defbspecnrm} follows from the observation that $\langle \x^{\otimes d},\cB\x^{\otimes d}\rangle\in\R$.)
Assume the left hand side of \eqref{bspecimp}.  Replace  $\x$ by $\x+t\x_1$ where $\x,\x_1$ are fixed  and $t\in\R$.   Hence the coefficent in front of $t$ is zero:
\begin{equation*}
2d\Re\langle\x^{\otimes (d-1)}\otimes\x_1, \cT\x^{\otimes d}\rangle=0.
\end{equation*}
Replace $\x_1$ by $z\x_1$ where $|z|=1$ to deduce that 
\begin{equation*}
\langle\x^{\otimes (d-1)}\otimes\x_1, \cT\x^{\otimes d}\rangle=0 \textrm{ for all } \x,\x_1\in\C^n.
\end{equation*}
Substitute $\x$ by $\x+t\x_2$ for $\x,\x_1,\x_2$ fixed and $t\in\R$.  As the coefficient of $t$ is zero we obtain the equality
\begin{equation*}
\begin{aligned}
&(d-1)\langle\x^{\otimes (d-2)}\otimes \x_2\otimes\x_1, \cT x^{\otimes d}\rangle+d
\langle\x^{\otimes (d-1)}\otimes\x_1 ,\cT\x^{\otimes (d-1)}\otimes\x_2\rangle=0,\\ &\textrm{ for all } \x,\x_1,\x_2\in\C^n .
\end{aligned}
\end{equation*}
Replace $\x_2$ by $z\x_2$, where $|z|=1$ to obtain the equality 
\begin{equation*}
\begin{aligned}
&(d-1)\langle\x^{\otimes (d-2)}\otimes \x_2\otimes\x_1, \cT\x^{\otimes d}\rangle+d
z^2\langle\x^{\otimes (d-1)}\otimes\x_1 ,\cT\x^{\otimes (d-1)}\otimes\x_2\rangle=0,\\ &\textrm{ for all } \x,\x_1,\x_2\in\C^n,|z|=1.
\end{aligned}
\end{equation*}
Hence $\langle\x^{\otimes (d-2)}\otimes \x_2\otimes\x_1, \cT\x^{\otimes d}\rangle=0$.
Repeat this argument untill we deduce the equality $\langle\x_{d}\otimes \x_{d-1}\otimes\cdots\x_1, \cT\x^{\otimes d}\rangle=0$.   Since $\x_1,\ldots,\x_d$ are arbitrary we deduce that $T\x^{\otimes d}=0$ for all $\x\in\C^n$. Let $x=\sum_{j=1}^d t_j\x_j$, here $\x_1,\ldots,\x_j$ are fixed and $t_1,\ldots,t_d\in\R$.
Hence the coefficient of $t_1\cdots t_d$ is zero.    It now follows that in the spectral  decomposition of $\cT$ \eqref{specdec} $\cX_i=0$ for $i\in\Lambda$ contrary to our assumption that $\cT\ne 0$.

\noindent \emph{(3)}  The inequalities \eqref{bspecin} are straightforward.   Assume now that $\cT$ positive semidefinite, i.e.,
$\lambda_k(\cT)>0$ for $k\in\Lambda$.  Set $\cT_1=\sum_{k\in\Lambda}\sqrt{\lambda_k(\cT)}\cX_i\otimes \overline{\cX_i}$.  Then $\langle \cX,\cT\cY\rangle=
\langle\cT_1\cX,\cT_1\cY\rangle$. Use Cauchy-Schwarz inequality for $\x^{\otimes d},\y^{\otimes d}, \|\x\|=\|\y\|=1$ to deduce
$$|\langle \x^{\otimes d},\cT\y^{\otimes d}\rangle|\le \sqrt{\langle \x^{\otimes d},\cT\x^{\otimes d}\rangle
\langle \y^{\otimes d},\cT\y^{\otimes d}\rangle}\le \|\cT\|_{bspec}.$$
Combine this inequality with \eqref{B1nrmchar} to deduce the equalities in \eqref{bspecin} for positive semidefinite $\cB$.
Repalce $\cB$ with $-\cB$ to deduce the equality in \eqref{bspecin} for negative semidefinite $\cB$.

\noindent \emph{(4-6)}  are deduced as in the proof of \emph{(2-4)} in Lemma \ref{specnuclem}.

\noindent\emph{(7)}    Clearly,  every $U\in \rU_n$ maps the positive and negative definite extreme points of the ball $\|\cX\|_{bnuc}\le 1$ onto themselves.  Hence,  
\begin{equation*}
\begin{aligned}
&(\otimes^{d} U)\rH_{n^{\times d}}\overline{(\otimes^{d} U)}=\rH_{n^{\times d}},\quad
(\otimes^{d} U)\rH_{n^{\times d},+}\overline{(\otimes^{d} U)}=\rH_{n^{\times d},+}, \\&(\otimes^{d} U)\rH_{n^{\times d},+,1}\overline{(\otimes^{d} U)}=\rH_{n^{\times d},+,1},\quad (\otimes^{d} U)\mathrm{Seps}_{n^{\times d}}\overline{(\otimes^{d} U)}=\mathrm{Seps}_{n^{\times d}}
\end{aligned}
\end{equation*}
for $U\in \rU_n$.
Part \emph{(6)} of Lemma \ref{specnuclem} that the action of $\rU_n$ on $\rB\rH_{n^{\times d}}$ preserves the trace.
Let 
\begin{equation*}
\cQ_{\x}=\int_{\rU_n} (U\x)^{\otimes d}\otimes \overline{(U\x)^{\otimes d}}d\mu_n,\quad \|\x\|=1.
\end{equation*}
As $\mu_n$ is the Haar measure, it follows that $\cQ_{\x}=\cQ_{\y}$ for any $\|\y\|=1$.
Thus $\cQ_{\x}=\cQ$, and $\tr \cQ=1$.  
Furthermore,  $U^{\otimes d}\cQ\overline{U^{\otimes d}}=\cQ$ for any $U\in \rU_n$.
As every $\cB\in\rB\rH_{n^{\times d}}$ is a linear combination of $\pm \x^{\otimes d}\otimes \bar \x^{\otimes d}, \|\x\|=1$ we deduce
\begin{equation}\label{intUniden}
\int_{\rU_n} U^{\otimes d}\otimes \cB\otimes \overline{U^{\otimes d}}d\mu_n =(\tr \cB)\cQ, \quad \cB\in\rB\rH_{n^{\times d}}.
\end{equation}
We next observe that $\langle \cX,\cY\rangle=\langle (\otimes^d U)\cX,(\otimes^d U)\cY\rangle$ for $\cX,\cY\in \rS^d\C^n, U\in\rU_n$.   Hence,  if $\cX_k, k\in[{n+d-1\choose d}]$ is an orthonormal basis in $\rS^d\C^n$, so is $(\otimes^d U(\cX_k, k\in[{n+d-1\choose d}]$ for any $U\in \rU_n$.   Therefore,
 $(\otimes ^d U)\cI_{n+d-1\choose d}\overline{(\otimes ^d U)}=\cI_{n+d-1\choose d}$.  
 Let $\cB=cI_{n+d-1\choose d}$ in \eqref{intUniden} to deduce that $\cQ=\frac{1}{{n+d-1\choose d}}\cI_{{n+d-1\choose d}}$.
 This proves \eqref{Haareqs}.
\end{proof}
Note that \eqref{Haareqs} follows from Schur-Weyl duality \cite[Proof of Lemma 1]{EW01}.
The following theorem is an analog of Theorem \ref{charsep}:
\begin{theorem}\label{charseps}  Let $2\le d, n\in\N$.  
Then
\begin{enumerate}
\item  A bi-symmetric density tensor $\cR\in\rB\rH_{n^{\times d},+,1}$, with spectral decomposition as in Lemma \ref{spdcbsh}, satisfies the inequalities
\begin{equation}\label{lubsndtb}
\begin{aligned}
&\frac{1}{{n+d-1\choose d}}\le \|\cR\|_{bspec}\le \lambda_1(\cR)\le 1,\\
&1\le \|\cR\|_{bnuc}\le \beta'_{1,n^{\times d},bh}:=\max_{\cX\in\rS^d\C^n,\|\cX\|=1}\|\cX\otimes \overline{\cX}\|_{bnuc}.
\end{aligned}
\end{equation}
Equalities in some of the above inequalities hold under the following conditions:
\begin{equation}\label{lubsndteqb}
\begin{aligned}
&\|\cR\|_{bspec}=\lambda_1(\cR)\iff \exists \x\in\C^n, \|\x\|=1 \textrm{ such that } \langle \x^{\otimes d},\cX_k\rangle = 0 \textrm{ if } \lambda_k(\cR) <\lambda_1(\cR),\\
&\|\cR\|_{bspec}=1\iff \cR=\x^{\otimes d}\otimes \bar\x^{\otimes d} \|\x\|=1,\\
&\frac{1}{{n+d-1\choose d}}=\|\cR\|_{bspec} \iff \cR=\frac{1}{{n+d-1\choose d}}\cI_{{n+d-1\choose d}},\\
&1=\|\cR\|_{bnuc} \iff \cR\in\mathrm{Seps}_{n^{\times d}},\\
& \|\cR\|_{bnuc}=\beta'_{1,n^{\times d},bh} \Leftarrow \cR=\cX\otimes \overline{\cX}, \cX\in\rS^d\C^n, \|\cX\|=1, \|\cX\otimes \overline{\cX}\|_{bnuc}=\beta'_{1,n^{\times d},bh}.
\end{aligned}
\end{equation}
\item The set $\mathrm{Seps}_{n^{\times d}}$ is a face of the unit ball of $\|\cdot\|_{bnuc}$ of the maximal dimension ${n+d-1\choose d}^2-1$.  Its supporting hyperplane is
\begin{equation}\label{suphypseps}
 \tr \cB=\langle \cI_{n+d-1\choose d},\cB\rangle\le 1 \textrm{ for } \|\cB\|_{bnuc}\le 1.
 \end{equation}
 Equality holds if and only if $\cB$ is strongly separable.
In particular, any strongly separable density tensor is a convex combinations of at most ${n+d-1\choose d}^2$ rank-one bisymmetric density tensors.  
\end{enumerate}
\end{theorem}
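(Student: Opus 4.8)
The plan is to follow the proof of Theorem~\ref{charsep} line by line, substituting the bi-symmetric tools of Lemma~\ref{bisymspecnuclem} for their general Hermitian counterparts, and exploiting the fact that by Lemma~\ref{spdcbsh} the spectral decomposition $\cR=\sum_k\lambda_k(\cR)\cX_k\otimes\overline{\cX_k}$ has all its eigentensors $\cX_k$ in $\rS^d\C^n$. For the chain $\|\cR\|_{bspec}\le\lambda_1(\cR)\le 1$ in part \emph{(1)}, I would note that for a unit $\x$ one has $\langle\x^{\otimes d},\cR\x^{\otimes d}\rangle=\sum_k\lambda_k(\cR)|\langle\x^{\otimes d},\cX_k\rangle|^2\le\lambda_1(\cR)\sum_k|\langle\x^{\otimes d},\cX_k\rangle|^2\le\lambda_1(\cR)$, since $\x^{\otimes d}\in\rS^d\C^n$ has unit norm and the $\cX_k$ are orthonormal; maximizing over $\x$ gives the first bound, and $\lambda_1(\cR)\le\sum_k\lambda_k(\cR)=1$. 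The first two equality analyses in \eqref{lubsndteqb} then read off from the two estimates exactly as in Theorem~\ref{charsep}. For the lower bound I would invoke the Haar identity \eqref{Haareqs}: if $t=\min\|\cR\|_{bspec}$ is attained at $\cR^\star$, then since $\rU_n$ acts by isometries,
\[
\frac{1}{\binom{n+d-1}{d}}=\Big\|\tfrac{1}{\binom{n+d-1}{d}}\cI_{\binom{n+d-1}{d}}\Big\|_{bspec}=\Big\|\int_{\rU_n}(\otimes^d U)\cR^\star\overline{(\otimes^d U)}\,d\mu_n\Big\|_{bspec}\le t .
\]
The equality case (third line of \eqref{lubsndteqb}) follows because equality forces $\langle\x^{\otimes d},\cR^\star\x^{\otimes d}\rangle$ to be constant on unit vectors, and the non-degeneracy \eqref{bspecimp} established in part \emph{(1)} of Lemma~\ref{bisymspecnuclem} upgrades this to $\cR^\star=\frac{1}{\binom{n+d-1}{d}}\cI_{\binom{n+d-1}{d}}$.

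For the nuclear-norm part of \emph{(1)}, the lower bound comes from $\|\cR\|_{bnuc}\ge\|\cR\|_{nuc}\ge|\tr\cR|=1$ via \eqref{bnucin} and \eqref{nucin}, while the upper bound is the triangle inequality $\|\cR\|_{bnuc}\le\sum_k\lambda_k(\cR)\|\cX_k\otimes\overline{\cX_k}\|_{bnuc}\le\beta'_{1,n^{\times d},bh}$; the last implication of \eqref{lubsndteqb} is immediate from the definition of $\beta'_{1,n^{\times d},bh}$. The central equivalence $1=\|\cR\|_{bnuc}\iff\cR\in\mathrm{Seps}_{n^{\times d}}$ I would argue as in Theorem~\ref{charsep}, using that the extreme points of $\rB_{\|\cdot\|_{bnuc}}(0,1)$ are $\pm\x^{\otimes d}\otimes\bar\x^{\otimes d}$ by part \emph{(3)} of Lemma~\ref{bisymspecnuclem}: a strongly separable $\cR$ is a convex combination of positive such extreme points, so $\|\cR\|_{bnuc}\le1$, and combined with $\|\cR\|_{bnuc}\ge1$ gives equality; conversely $\|\cR\|_{bnuc}=1$ yields an expansion of $\cR$ in these extreme points whose coefficients, together with $\tr\cR=1$, are forced to be nonnegative and to sum to one.

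For part \emph{(2)}, I would show $\tr\cB\le1$ on the ball $\|\cB\|_{bnuc}\le1$ with equality iff $\cB$ is strongly separable: any such $\cB$ is a convex combination of $\pm\x^{\otimes d}\otimes\bar\x^{\otimes d}$, each of trace one, so $\tr\cB\le\sum_i|a_i|\le1$, with equality exactly when only positive terms occur. This identifies $\mathrm{Seps}_{n^{\times d}}$ as the face cut out by the supporting hyperplane \eqref{suphypseps}, where $\tr\cB=\langle\cI_{\binom{n+d-1}{d}},\cB\rangle$ because $\cI_{\binom{n+d-1}{d}}$ is the identity on $\mathrm{range}\,\cB\subset\rS^d\C^n$. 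The dimension claim is where the bi-symmetric case departs from Theorem~\ref{charsep}: instead of the product-structure induction used there, I would apply the non-degeneracy \eqref{bspecimp} directly, since it says precisely that the real span of $\{\x^{\otimes d}\otimes\bar\x^{\otimes d}:\|\x\|=1\}$ is all of $\rB\rH_{n^{\times d}}$, a space of real dimension $\binom{n+d-1}{d}^2$ by \eqref{dimHnd}. As every generator has trace one, their affine hull is the full hyperplane $\{\tr=1\}$, so $\dim\mathrm{Seps}_{n^{\times d}}=\binom{n+d-1}{d}^2-1$, and Carath\'eodory within this hyperplane gives the bound of $\binom{n+d-1}{d}^2$ rank-one summands. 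The only genuine difficulty I anticipate is bookkeeping in the equality cases; the structural content—the norm inequalities, the Haar identity, and the non-degeneracy—is already packaged in Lemmas~\ref{spdcbsh} and~\ref{bisymspecnuclem}, so the essential new observation is merely that \eqref{bspecimp} makes the dimension count immediate, circumventing the inductive argument of Theorem~\ref{charsep}.
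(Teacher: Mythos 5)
Your proposal is correct, and for most of the theorem it follows the same route as the paper: the chain $\|\cR\|_{bspec}\le\lambda_1(\cR)\le 1$ via the spectral decomposition of Lemma \ref{spdcbsh} (the paper gets it equivalently by combining \eqref{bspecin} with Theorem \ref{charsep}), the lower bound $\frac{1}{{n+d-1\choose d}}\le\|\cR\|_{bspec}$ via the Haar identity \eqref{Haareqs} and unitary invariance, and the equivalence $\|\cR\|_{bnuc}=1\iff\cR\in\mathrm{Seps}_{n^{\times d}}$ via the extreme points $\pm\x^{\otimes d}\otimes\bar\x^{\otimes d}$ of the b-nuclear unit ball together with $\tr\cR=1$ forcing nonnegative coefficients. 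Where you genuinely depart from the paper is the dimension count in part \emph{(2)}: the paper intersects the unit ball of $\|\cdot\|_{bnuc}$ with the subspace $\{\tr\cB=0\}$ and writes each nonzero element of that section as $(\cR_1-\cR_2)/2$ with $\cR_1,\cR_2$ strongly separable, whereas you observe that the non-degeneracy \eqref{bspecimp} is exactly dual (under the real inner product, since $\langle\x^{\otimes d}\otimes\bar\x^{\otimes d},\cT\rangle=\langle\x^{\otimes d},\cT\x^{\otimes d}\rangle$) to the statement that the tensors $\x^{\otimes d}\otimes\bar\x^{\otimes d}$ span $\rB\rH_{n^{\times d}}$ over $\R$, so that these trace-one generators affinely span the hyperplane $\{\tr=1\}$, giving $\dim\mathrm{Seps}_{n^{\times d}}={n+d-1\choose d}^2-1$ and the Carath\'eodory bound ${n+d-1\choose d}^2$ immediately. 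Your argument is more direct and reuses machinery already proved in Lemma \ref{bisymspecnuclem}\emph{(1)}; the paper's section argument has the advantage of working from the extreme-point description alone, without an explicit non-degeneracy statement. You also correctly identify \eqref{bspecimp} as the right engine for the equality case $\|\cR\|_{bspec}=\frac{1}{{n+d-1\choose d}}$, which the paper handles only by reference back to Theorem \ref{charsep}. One step you assert rather than prove: that equality in the Haar bound forces $\langle\x^{\otimes d},\cR^\star\x^{\otimes d}\rangle$ to be constant on the unit sphere. This does hold, but it needs a sentence --- integrating $\langle(U^\dagger\x)^{\otimes d},\cR^\star(U^\dagger\x)^{\otimes d}\rangle$ over $\rU_n$ against the probability measure $\mu_n$ gives exactly $\frac{1}{{n+d-1\choose d}}$, each term is bounded above by $\|\cR^\star\|_{bspec}=\frac{1}{{n+d-1\choose d}}$, and the integrand is continuous, so it is identically equal to its average; transitivity of $\rU_n$ on the sphere then yields constancy, after which homogeneity and \eqref{bspecimp} finish as you say.
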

\begin{proof} The proof this theorem is similar to the proof of Theorem \ref{charsep} and we point out  the needed modifications.
\emph{(1)}  Combine the inequality \eqref{bspecin}  with $\|\cR\|_{spec}\le \lambda_1(\cR)\le 1$ to deduce the last two inequalities in \eqref{lubsndtb}.  The conditions for the equality cases $\|\cR\|_{bspec}=\lambda_1(\cR)$ and $\|\cR\|_{bspec}=1$ are deduces as in the proof of Theorem \ref{charsep}.

The inequality $\frac{1}{{n+d-1\choose d}}\le \|\cR\|_{spec}$ follows from \eqref{Haareqs}.   The equality case as in the proof of Theorem \ref{charsep}.

The inequalities  \eqref{bnucin} and \eqref{lubsndt} yield that $\|\cR\|_{bnuc}\ge \|\cR\|_{nuc}\ge 1$.    Suppose that $\cR$ is strongly separable.
It is a convex combination of the exteme points of $\rB_{\|\cdot\|_{bnuc}}(0,1)$ of the form $\x^{\otimes d}\otimes\bar\x^{\otimes d}$.   Hence, $\|\cR\|_{bnuc}\le 1$.  Therefore,
$\|\cR\|_{bnuc}=1$.  Assume $\|\cR\|_{bnuc}=1$.  As $\tr \cR=1$ we deduce that a convex combinations of the extreme points of $\rB_{\|\cdot\|_{bnuc}}(0,1)$ that gives $\cR$ must be only of the form $\x^{\otimes d}\otimes\bar\x^{\otimes d}$. Hence,  $\cR$ is strongly separable.   

\noindent\emph{(2)}  Clearly,  $\mathrm{Seps}_{n^{\times d}}$ is a face of the unit ball of $\|\cdot\|_{bnuc}$.  It is left to show that it has the maximal dimension  ${n+d-1\choose d}^2-1$.   Observe that the dimension of the subspace spanned by $\frac{1}{2}(\cR_1-\cR_2)$, where $\cR_1,\cR_2$ are strongly separable density tensors, is the dimension of the face $\mathrm{Seps}_{n^{\times d}}$.  Consider now the intersection of the unit ball of $\|\cdot\|_{bnuc}$ with the subspace $\tr \cB=0$.  Clearly, the dimension of this convex set is ${n+d-1\choose d}^2-1$.  Take $\cT\ne 0$ in this convex set.  It is a convex combination of the extreme points of $\|\cB\|_{bnuc}\le 1$.
Clearly, it must be of the form $(\cR_1-\cR_2)/2$. This proves that the dimension of  $\mathrm{Seps}_{n^{\times d}}$ is  ${n+d-1\choose d}^2-1$.   The rest of the proof is as in the proof of Theorem \ref{charsep}.
\end{proof}
\begin{corollary}\label{charsepscb}  Denote by $\mathrm{Seps}_{n^{\times  d},int}$ the relative interior of $\mathrm{Seps}_{n^{\times d}}$.  
\begin{enumerate}
\item The sets $\mathrm{Seps}_{n^{\times d}}$ and $\mathrm{Seps}_{n^{\times d},int}$ are invariant under the action of $\rU_{n}$.
\item Let $\cR\in \mathrm{Seps}_{n^{\times d},int}$.  There exists $r(\cR)>0$ such that $\rB_{\|\cdot\|}(\cR,r(\cR))\cap\rB\rH_{n^{\times d},+,1}\subset \mathrm{Sep}_{\bn,int}$.
\item The density tensor $ \frac{1}{{n+d-1\choose d}}\cI_{{n+d-1\choose d}}$ is in $\mathrm{Seps}_{n^{\times d},int}$, where 

\noindent
$r(\frac{1}{{n+d-1\choose d}}I_{{n+d-1\choose d}})\ge \frac{2}{{n+d-1\choose d}\beta_{1,n^{\times d},bh}}$.
\item The inequalities in \eqref{lualbet'in} hold.
\end{enumerate} 
\end{corollary}
\begin{proof} The proofs of \emph{(1)-(3)} an in the proof of Corollary \ref{charsepsc}.

\noindent\emph{(4)} Assume that $\cX\in \rS^d\C^n, \|\cX\|=1$.  Then $\cR=\cX\otimes\overline{\cX}\in \rB\rH_{n^{\times d},+,1}$.   The first equality in \eqref{tenprodeq} implies that $\|\cR\|_{\infty,\C}=\|\cX\|_{\infty,\C}^2$.  
Use part \emph{(3)} of Lemma \ref{bisymspecnuclem}  to deduce $\|\cR\|_{\infty,\C}= \|\cR\|_{bspec}\ge { n+d-1\choose d}^{-1}$.  Hence,
 $\|\cX\|_{\infty,\C}\ge {n+d-1\choose d}^{-1/2}$.  Hence,  $\alpha_{\infty,n^{\times d},s,\C}\ge {n+d-1\choose d}^{-1/2}$.  Use the equality in part \emph{(1)} of Theorem \ref{relalphbetsym} to deduce $\beta_{1,n^{\times ^d},s,\C}\le {n+d-1\choose d}^{1/2}$.
\end{proof}
\subsection{Symmetric Hermitian tensors}\label{subsec;symher} 
A bi-symmetric Hermitian tensor in $\cR\in\rB\rH_{n^{\times d}}$ is called \emph{symmetric
} if $\cR\in \rS^{2d}\C^n$.  In \cite{QC19,CCQS,Ha21} $\cR$ is called completely symmetric.    In view of Definition \ref{defherm}  it follows that $\cR\in\rS^{2d}\R^n$. 
Denote by symmetric positive semi-definite and symmetric density tensors the sets 
$\rS^{2d}\R^n_{+}:=\rS^{2d}\R^{n}\cap \rH_{n^{\times d},+}$ and $\rS^{2d}\R^n_{+,1}:=\rS^{2d}\R^{n}\cap \rH_{n^{\times d},+,1}$ respectively.
\begin{definition}\label{ssep}   A symmetric density tensor in $\rS^{2d}\R^n_{+,1}$ is called a real strongly separable if it is a convex combination of $\{\x^ {\otimes (2d)}: \x\in\R^n, \|\x\|=1\}$.   
\end{definition}
It is shown in \cite{CCQS} that a separable symmetric density tensors, viewed as  $\cR\in \rH_{n^{\times d},+,1}$ is  a real strongly separable.   (One can also prove this result as in the proof of Lemma \ref{bisymsep}.)
We now give a different proof to  \cite[Theorem 14]{CCQS}.    We need the following lemma:
\begin{lemma}\label{snnrsym}  Let $2\le d,n\in\N$.  Then  for $\rS\in\rS^{2d}\R^n$ the following equaities hold:
\begin{equation}\label{normeqsym}
\|\cS\|_{spec}=\|\cS\|_{bspec}=\|\cS\|_{\infty,\R}, \quad \|\cS\|_{nuc}=\|\cS\|_{bnuc}=\|\cS\|_{1,\R}. 
\end{equation}
\end{lemma}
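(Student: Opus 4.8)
The plan is to prove the single nontrivial inequality $\|\cS\|_{spec}\le\|\cS\|_{\infty,\R}$; the remaining relations are either immediate or follow by duality. Throughout, let $F$ be the symmetric $2d$-linear form polarizing the degree-$2d$ polynomial $f$ attached to $\cS$ in \eqref{defpolfx}, so that $\langle\otimes_{\ell=1}^{2d}\w_\ell,\cS\rangle=F(\w_1,\dots,\w_{2d})$ for real vectors, while $\langle\x^{\otimes d},\cS\,\x^{\otimes d}\rangle=F(\overline{\x},\dots,\overline{\x},\x,\dots,\x)$ with $d$ conjugated slots. For a real $\x$ there is no conjugation and this is $f(\x)$, so Banach's theorem \eqref{Banthm} applied to the $2d$-fold symmetric tensor $\cS$ gives $\|\cS\|_{\infty,\R}=\max_{\x\in\R^n,\|\x\|=1}|\langle\x^{\otimes d},\cS\,\x^{\otimes d}\rangle|$. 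Comparing maximization domains yields the free chain $\|\cS\|_{\infty,\R}\le\|\cS\|_{bspec}\le\|\cS\|_{spec}$: the first because the real unit sphere sits inside the complex one, the second by \eqref{bspecin} since every $\x^{\otimes d}$ with $\|\x\|=1$ lies in $\Pi_{n^{\times d}}$.

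The crux is the reverse bound $\|\cS\|_{spec}\le\|\cS\|_{\infty,\R}$. I would expand $\langle\otimes_{j=1}^d\x_j,\cS\,\otimes_{j=1}^d\x_j\rangle=F(\overline{\x_1},\dots,\overline{\x_d},\x_1,\dots,\x_d)$ by multilinearity after writing $\x_j=\bu_j+i\bv_j$ with $\bu_j,\bv_j\in\R^n$. In slot $j$ the factor $\overline{\x_j}$ splits as $\bu_j$ (coefficient $1$) and $\bv_j$ (coefficient $-i$), while in slot $d+j$ the factor $\x_j$ splits as $\bu_j$ (coefficient $1$) and $\bv_j$ (coefficient $+i$). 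Since $F$ is fully symmetric its value depends only on the multiset of chosen real vectors; for each index $j$ the two mixed choices produce the same value of $F$ but carry opposite coefficients $-i$ and $+i$, hence cancel. Only the ``doubled'' choices survive, each with coefficient $+1$, giving
\[
\langle\otimes_{j=1}^d\x_j,\cS\,\otimes_{j=1}^d\x_j\rangle=\sum_{A\subseteq[d]}F\big(\,\underbrace{\bv_j,\bv_j}_{j\in A},\ \underbrace{\bu_j,\bu_j}_{j\notin A}\,\big),
\]
which in the diagonal case $\x_1=\dots=\x_d=\bu+i\bv$ is the clean identity $\langle\x^{\otimes d},\cS\,\x^{\otimes d}\rangle=\sum_{p=0}^{d}\binom{d}{p}F(\underbrace{\bu,\dots,\bu}_{2(d-p)},\underbrace{\bv,\dots,\bv}_{2p})$. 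Bounding each summand by the homogeneous estimate $|F(\w_1,\dots,\w_{2d})|\le\|\cS\|_{\infty,\R}\prod_\ell\|\w_\ell\|$ (immediate from the definition of $\|\cdot\|_{\infty,\R}$ as a maximum over real product states) and summing gives
\[
\big|\langle\otimes_{j=1}^d\x_j,\cS\,\otimes_{j=1}^d\x_j\rangle\big|\le\|\cS\|_{\infty,\R}\sum_{A\subseteq[d]}\prod_{j\in A}\|\bv_j\|^2\prod_{j\notin A}\|\bu_j\|^2=\|\cS\|_{\infty,\R}\prod_{j=1}^d\big(\|\bu_j\|^2+\|\bv_j\|^2\big)=\|\cS\|_{\infty,\R},
\]
using $\|\x_j\|^2=\|\bu_j\|^2+\|\bv_j\|^2=1$. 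Maximizing over product states yields $\|\cS\|_{spec}\le\|\cS\|_{\infty,\R}$, so the three spectral norms coincide. I expect the cancellation step to be the delicate point: one must check carefully that the cross terms pair off (the conjugate structure of the two index blocks is what forces the $\pm i$ coefficients), leaving only nonnegative weights.

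Finally, the nuclear identities follow by duality on $W:=\rS^{2d}\R^n$. Any real symmetric decomposition $\cS=\sum_i\varepsilon_i\x_i^{\otimes 2d}$ from \eqref{FLBthm} rewrites as $\sum_i\varepsilon_i\|\x_i\|^{2d}\,\widehat{\x}_i^{\otimes d}\otimes\widehat{\x}_i^{\otimes d}$ with $\widehat{\x}_i$ real unit vectors, and is thus admissible in the minimizations defining all three nuclear norms, giving $\|\cS\|_{nuc}\le\|\cS\|_{bnuc}\le\|\cS\|_{1,\R}$. For the reverse, the spectral chain shows $\|\cdot\|_{spec}$ and $\|\cdot\|_{\infty,\R}$ restrict to the same norm on $W$; restricting the dual maximization $\|\cS\|_{nuc}=\max\{\langle\cS,\cB\rangle:\|\cB\|_{spec}\le1\}$ to $\cB\in W$ can only decrease it, so $\|\cS\|_{nuc}\ge(\|\cdot\|_{spec}|_W)^*(\cS)=(\|\cdot\|_{\infty,\R}|_W)^*(\cS)=\|\cS\|_{1,\R}$, the last equality because $\|\cdot\|_{\infty,\R}$ and $\|\cdot\|_{1,\R}$ are conjugate norms on $\rS^{2d}\R^n$ (Theorem \ref{relalphbetsym} together with \eqref{FLBthm}). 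Sandwiching then forces $\|\cS\|_{nuc}=\|\cS\|_{bnuc}=\|\cS\|_{1,\R}$, completing the proof.
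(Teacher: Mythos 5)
Your proof is correct, and its central step takes a genuinely different route from the paper's. For the crux inequality $\|\cS\|_{spec}\le\|\cS\|_{\infty,\R}$ the paper argues inductively, slot by slot: fixing all but one factor of a maximizing complex product state, it exhibits a real symmetric matrix $C=[c_{i_d,j_d}]$ whose Hermitian quadratic form computes the relevant maximum, uses the fact that for real symmetric $C$ the maximum of $|\langle\y,C\y\rangle|$ over complex unit vectors is attained at a real one, replaces $\x_d$ by a real vector, and iterates until the whole product state is real. You instead prove a one-shot polarization identity: writing $\x_j=\bu_j+\bi\bv_j$ and expanding the real symmetric $2d$-linear form $F$ multilinearly, the mixed terms cancel exactly — grouping the expansion by the set $M\subseteq[d]$ of indices with a mixed choice, full symmetry of $F$ makes the value independent of the orientation of each mixed pair, while the coefficients sum to $\prod_{j\in M}\big(i+(-i)\big)=0$ unless $M=\emptyset$ — leaving only the doubled terms with coefficient $(-i)(+i)=+1$, and the resulting bound $\|\cS\|_{\infty,\R}\prod_{j=1}^d(\|\bu_j\|^2+\|\bv_j\|^2)=\|\cS\|_{\infty,\R}$ follows; I verified the cancellation and it is sound. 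Your identity is in fact stronger than what the paper extracts: it bounds $|\langle\cX,\cS\cX\rangle|$ for \emph{every} complex product state simultaneously (no maximizer need be selected), and it displays the Hermitian form on product states as a sum of $2^d$ real evaluations of $F$ with nonnegative unit weights, a structural fact potentially reusable elsewhere; the paper's induction, by contrast, reduces everything to an elementary matrix statement and is shorter to check step by step. Your remaining parts essentially coincide with the paper's: the free chain $\|\cS\|_{\infty,\R}\le\|\cS\|_{bspec}\le\|\cS\|_{spec}$, the extreme-point inclusions giving $\|\cS\|_{nuc}\le\|\cS\|_{bnuc}\le\|\cS\|_{1,\R}$, and the reverse direction by restricting the dual maximization to $\rS^{2d}\R^n$ where the spectral identity applies. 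One small caution there: the final equality $(\|\cdot\|_{\infty,\R}|_{\rS^{2d}\R^n})^*=\|\cdot\|_{1,\R}|_{\rS^{2d}\R^n}$ is not quite what Theorem \ref{relalphbetsym} states; it requires that the restrictions of $\|\cdot\|_{\infty,\R}$ and $\|\cdot\|_{1,\R}$ to the symmetric subspace are mutually conjugate, which follows from \eqref{FLBthm} and Banach's theorem via nonexpansiveness of the symmetrizing projection (the symmetric analogue of part \emph{(3)} of Lemma \ref{norAdinf1lem}, argued as in the proof of Theorem \ref{fermnorms}); this is, however, exactly the same fact the paper's own proof leaves implicit at the corresponding step, so it is a citation issue rather than a gap.
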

\begin{proof} Banach's theorem states that 
\begin{equation*}
\|\rS\|_{\infty,\R}=\max_{\x\in\R^n,\|\x\|=1}\langle \x^{\otimes (2d)},\cS\rangle=\langle  \x^{\otimes d},\cS\x^{\otimes d}\rangle.
\end{equation*}
Compare that with \eqref{defbspecnrm} to deduce the inequality $\|\cS\|_{bspec}\ge \|\cS\|_{\infty,\R}$.   Recall the inequality \eqref{bspecin} to deduce the inequalities 
$\|\cS\|_{spec}\ge\|\cS\|_{bspec}\ge \|\cS\|_{\infty,\R}$.

We now show the inequality $\|\cS\|_{\infty,\R}\ge \|\cS\|_{spec}$.
Let $\cS=[s_{i_1,\ldots,i_d,j_1,\ldots,j_d)}]$.   
Assume that $\|\cS\|_{spec}=|\langle \cX,\rS\cX\rangle|$ for some $\cX=\otimes_{j=1}^d \x_i\in \Pi_{n^{\times d},\C}$.  We now show that we can find $\cY=\otimes_{j=1}^d \y_i\in \Pi_{n^{\times d},\R}$ such that $\|\cS\|_{spec}=|\langle \cY,\rS\cY\rangle|$ 
Fix $i_d,j_d\in [n]$ and let $\cS_{i_d,j_d}=[s_{i_1,\ldots,i_k,\ldots,i_d,j_1,\ldots,i_k,\ldots,j_d)}],k\in[d-1]$ be in $\S^{2(d-1)}\R^n$.   View $\cS_{i_d,j_d}$ as bi-symmetric Hermitian tensor in  $\rH_{n^{\times (d-1)}}$.  Consider the $(d-1)$-sesquilinear form $c_{i_d,j_d}=\langle \cX_d,\cS_{i_d,j_d} \cX_d\rangle$, where $\cX_d=\otimes_{j=1}^{d-1}\x_j\in\Pi_{n^{\times (d-1)},\C}$.  Let $C=[c_{i_d,j_d}]\in \R^{n\times n}$, and observe that $C$ is a symmetric matrix.  Hence, the spectral norm of $C$ is 
\begin{equation*}
\begin{aligned}
&\|C\|_{\infty}=\max_{\y\in\C^n, } |\langle \y ,C\y\rangle|=\max_{\y\in\R^n, } |\langle \y ,C\y\rangle|=|\langle \y_d, C\y_d\rangle|=\
\|\cS\|_{spec}=\\
&|\langle \cX_{d-1}\otimes \y_d, \cS(\cX_{d-1}\otimes \y_d\rangle|=|\langle  \y_d\otimes\cX_d, \cS(\y_d\otimes\rangle\cX_{d-1})|.
\end{aligned}
\end{equation*}
Continue in ths manner to deduce that $\|\cS\|_{spec}=|\langle \cY_{d}, \cS\cY_{d}\rangle|$  for some in $\Pi_{n^{\times d},\R}$.  Hence $\|\cS\|_{\infty,\R}\ge \|\cS\|_{spec}$.   This proves the equality  first set of equalities in \eqref{normeqsym}.

We now show the second set of equalities.  Observe the following relations between the extreme points of the unit balls  
$\|\cS\|_{1,\R}\le 1, \|\cC\|_{bnuc}\le 1, \|\cB\|_{nuc}\le 1$:
\begin{equation*}
\{\pm\x^{\otimes (2d)}: \x\in \R^n, \|\x\|=1\} \subset \{\pm\x^{\otimes d}\otimes \bar\x^{\otimes d}: \x\in \C^n, \|\x\|=1\}\subset\{\pm\cX\otimes \overline{\cX}: \cX\in\Pi_{n^{\times d},\C}\}.
\end{equation*}
Recall the minimum characterizations of the nuclear norms: \eqref{FLBthm},  \eqref{charnucb},  and \eqref{charnuc}.  Hence for $\cS\in\rS^{2d}\R^n$ we have the inequalities $\|\cS\|_{1,\R}\ge \|\cS\|_{bnuc}\ge \|\cS\|_{nuc}$.
Recall the maximum characterization of the dual norm \eqref{defnustar}.   Assume that $\cS\in \rS^{2d}\R^{2n}$.   As $\|\cdot\|_{bnuc}$ agrees with the norm $\|\cdot\|_{\infty,\R}$ on $\rS^{2d}\R^n$ we deduce that $\|\cS\|_{1,\R}\le \|\cS\|_{bnuc}$.   
Hence $\|\cS\|_{1,\R}=\|\cS\|_{bnuc}$.  Similar arguments yield that $\|\cS\|_{1,\R}=\|\cS\|_{nuc}$. This proves the second set of equalities in \eqref{normeqsym}.
\end{proof}
\begin{theorem}\label{charsepsym}  Let $2\le d, n\in\N$.  
\begin{enumerate}
\item  A symmetric density tensor $\cR\in\rS^{2d}\R^n_{+,1}$ satisfies:
\begin{equation}\label{lubsndtsym}
\begin{aligned}
&\frac{(d-1)!}{2\prod_{k=0}^{d-1}\big(\frac{n}{2}+k\big)}\le \|\cR\|_{\infty,\R}\le 1,\\
&1\le \|\cR\|_{1,\R}\le\beta_{1,n^{\times (2d)},s,\R}.
\end{aligned}
\end{equation}
Equalities in some of the above inequalities hold under the following conditions:
\begin{equation}\label{lubsndteqsym}
\begin{aligned}
&\|\cR\|_{\infty,\R}=1\iff \cR=\x^{\otimes 2d},\x\in\R^n,\|\x\|=1,\\
&\frac{(d-1)!}{2\prod_{k=0}^{d-1}\big(\frac{n}{2}+k\big)}=\|\cR\|_{\infty,\R} \iff \cR=\cR^\star:=\int_{\x\in\R^n, \|\x\|=1}\x^{\otimes(2d)}d\lambda_n,\\
&\textrm{ where } \lambda_n \textrm{ is the normalized Lebesque measure}: \lambda_n(\{\x\in\R^n: \|\x\|=1\})=1,\\
&1=\|\cR\|_{1,\R} \iff \cR\in\mathrm{Sep}_{n^{\times d}}\iff \cR \textrm{ a real strongly separable}.\\
\end{aligned}
\end{equation}
\item The set of real strongly separable symmetric density tensors in $\rS^{2d}\R^n_{+,1}$  is a face of the unit ball of $\|\cdot\|_{1,\R}$ in $\rS^{2d}\R^{n}$ of the maximal dimension ${n+2d-1\choose 2d}-1$.  Its supporting hyperplane is
\begin{equation}\label{suphypsepsym}
 \tr \cB\le 1 \textrm{ for } \|\cB\|_{1,\R}\le 1.
 \end{equation}
 Equality holds if and only if $\cB$ is a real strongly separable.
In particular, any real strongly separable density tensor is a convex combinations of at most ${n+2d-1\choose 2d}$ rank-one symmetric density tensors.  
\end{enumerate}
\end{theorem}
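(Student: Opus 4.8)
The plan is to reduce every assertion to the two genuinely real norms $\|\cdot\|_{\infty,\R}$ and $\|\cdot\|_{1,\R}$ on $\rS^{2d}\R^n$ via the norm identifications of Lemma~\ref{snnrsym}, and then to run the arguments of Theorems~\ref{charsep} and~\ref{charseps} almost verbatim. The one structural change is that the relevant symmetry group is now the real orthogonal group $\rO_n$, acting diagonally via $O\mapsto O^{\otimes d}$ on $\rS^d\R^n$ (equivalently $O^{\otimes 2d}$ on $\rS^{2d}\R^n$), rather than $\rU_n$. Throughout I would use that a density tensor $\cR\in\rS^{2d}\R^n_{+,1}$ satisfies $\|\cR\|=(\sum_k\lambda_k(\cR)^2)^{1/2}\le\sum_k\lambda_k(\cR)=1$, and that its associated form $F_{\cR}(\y):=\langle\y^{\otimes 2d},\cR\rangle=\langle\y^{\otimes d},\cR\,\y^{\otimes d}\rangle$ is nonnegative because $\cR\succeq0$.

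For the easy bounds in \eqref{lubsndtsym}: by \eqref{normeqsym} one has $\|\cR\|_{\infty,\R}=\|\cR\|_{spec}\le\lambda_1(\cR)\le1$, while $\|\cR\|_{1,\R}=\|\cR\|_{nuc}\ge1$ exactly as in \eqref{nucin}; the bound $\|\cR\|_{1,\R}\le\beta_{1,n^{\times(2d)},s,\R}$ follows from $\|\cR\|\le1$ and the definition \eqref{defalbeinf1s}. The case $\|\cR\|_{\infty,\R}=1$ is immediate from Banach's theorem \eqref{Banthm}: the maximum is attained at a real unit $\y$, and $\langle\y^{\otimes d},\cR\,\y^{\otimes d}\rangle=1=\tr\cR$ with $\cR\succeq0$ forces every eigentensor of positive weight to equal $\pm\y^{\otimes d}$, whence $\cR=\y^{\otimes 2d}$. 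For $\|\cR\|_{1,\R}=1$ I would invoke the real minimal characterization \eqref{FLBthm}: write $\cR=\sum_i\varepsilon_i a_i\,\y_i^{\otimes 2d}$ with $\y_i\in\R^n$ unit, $a_i>0$, $\sum_i a_i=\|\cR\|_{1,\R}=1$; since $\tr(\y_i^{\otimes 2d})=1$, taking traces gives $\sum_i\varepsilon_i a_i=\tr\cR=1=\sum_i a_i$, so all $\varepsilon_i=+1$ and $\cR$ is a convex combination of the $\y_i^{\otimes 2d}$, i.e. real strongly separable. This is precisely the step reproving \cite[Theorem~14]{CCQS}, and together with Theorem~\ref{charsep} it shows that here $\mathrm{Sep}_{n^{\times d}}$-separability and real strong separability coincide.

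The heart of the matter, and where I expect the main work, is the lower bound on $\|\cR\|_{\infty,\R}$ with its rigidity. First I would establish the $\rO_n$-averaging identity $\int_{\rO_n}O^{\otimes d}\cR\,(O^{\otimes d})^\top\,dO=\cR^\star$ for every $\cR\in\rS^{2d}\R^n_{+,1}$: the left side is symmetric, positive semidefinite, of trace one, and $\rO_n$-invariant, and the space of $\rO_n$-invariants in $\rS^{2d}\R^n$ is one-dimensional (spanned by the tensor of $\|\x\|^{2d}$), so it must equal the unique invariant density tensor $\cR^\star$. The triangle inequality for the average then gives $\|\cR^\star\|_{\infty,\R}\le\|\cR\|_{\infty,\R}$, and Banach's theorem evaluates $c:=\|\cR^\star\|_{\infty,\R}=\max_{\|\y\|=1}\int_{S^{n-1}}\langle\y,\x\rangle^{2d}\,d\lambda_n(\x)$, a $2d$-th moment of the uniform measure on the sphere which one computes in closed form by the Gaussian device to obtain the constant displayed in \eqref{lubsndtsym} (the only delicate point being bookkeeping of double factorials). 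For the equality case the crucial observation — the analogue of the ``constant on product states'' step of Theorem~\ref{charsep} — is that for \emph{fully symmetric} $\cR$ the averaging identity yields $\langle\cR^\star,\cR\rangle=\langle\y^{\otimes d},\cR^\star\,\y^{\otimes d}\rangle=c$ for any unit $\y$, so this pairing is constant over the whole feasible set. Granting this, if $\y$ is a Banach maximizer then $F_{\cR}(O\y)\le\|\cR\|_{\infty,\R}=c$ for all $O\in\rO_n$ while their $\rO_n$-average is $\langle\cR^\star,\cR\rangle=c$, forcing $F_{\cR}\equiv c$ on the orbit $\rO_n\y=S^{n-1}$, hence $F_{\cR}(\y)=c\|\y\|^{2d}$ and $\cR=\cR^\star$. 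I would emphasize that this rigidity genuinely uses full symmetry: since $\rO_n$ acts \emph{reducibly} on $\rS^d\R^n$ (unlike the irreducible $\rU_n$-action in Theorem~\ref{charseps}, for which the minimizer is the scalar $\frac{1}{\binom{n+d-1}{d}}\cI$), the naive orbit argument fails for merely bi-symmetric tensors, and it is exactly the constancy $\langle\cR^\star,\cR\rangle=c$ that restores it.

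Finally, Part~(2) follows the template of Theorem~\ref{charseps}(2). For $\|\cB\|_{1,\R}\le1$, expanding $\cB=\sum_i\varepsilon_i a_i\,\y_i^{\otimes 2d}$ via \eqref{FLBthm} with $\sum_i a_i\le1$ gives $\tr\cB=\sum_i\varepsilon_i a_i\le1$, with equality iff all $\varepsilon_i=+1$ and $\sum_i a_i=1$, i.e. iff $\cB$ is real strongly separable; thus this set is the face of the unit ball of $\|\cdot\|_{1,\R}$ cut out by the supporting hyperplane \eqref{suphypsepsym}. Its dimension equals that of the ambient hyperplane $\{\tr=1\}$, namely $\binom{n+2d-1}{2d}-1$, because the powers $\{\x^{\otimes 2d}:\x\in\R^n\}$ linearly span $\rS^{2d}\R^n$ and all lie in $\{\tr=1\}$, so their convex hull is full-dimensional in that hyperplane. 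Caratheodory's theorem applied inside this face then produces the claimed bound of $\binom{n+2d-1}{2d}$ rank-one symmetric summands.
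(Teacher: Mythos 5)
Your proposal is correct, and in outline it follows the paper's own strategy: reduce everything to the real norms via Lemma~\ref{snnrsym}, average over the real rotation group to get the lower spectral bound, and do trace bookkeeping in a minimal nuclear decomposition \eqref{FLBthm} for the $\|\cR\|_{1,\R}=1$ equivalence and the face statement. Where you genuinely diverge --- to your advantage --- is in the treatment of the minimum. The paper averages a \emph{minimizer} $\cM$ over $\rS\rO_n$ and identifies the average with $\cR^\star$ by observing that the rotation action preserves the uniform measure on $\mathbb{S}^{n-1}$, and then only asserts that the equality case is ``similar to the proof of Theorem~\ref{charseps}''. You instead identify $\int_{\rO_n}O^{\otimes 2d}\cdot\cR\,dO=\cR^\star$ for \emph{every} feasible $\cR$ via the one-dimensionality of the $\rO_n$-invariants in $\rS^{2d}\R^n$ (spanned by the tensor of $\|\x\|^{2d}$), which yields the lower bound in one stroke; and your rigidity argument is a complete proof of the equality case that the paper leaves implicit: since $\cR^\star$ is a scalar multiple of the trace functional on $\rS^{2d}\R^n$, the pairing $\langle\cR^\star,\cR\rangle=c\,\tr\cR=c$ is constant on the feasible set, so $F_{\cR}\le c$ on the sphere with average $c$ forces $F_{\cR}\equiv c$, hence $F_{\cR}(\y)=c\|\y\|^{2d}$ and $\cR=\cR^\star$ through the tensor--polynomial bijection. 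Your diagnosis that this constancy is exactly what substitutes for the irreducibility available in the $\rU_n$ setting of Theorem~\ref{charseps} --- and that the naive orbit argument would fail for merely bi-symmetric tensors, since $\int_{\mathbb{S}^{n-1}}\langle\x^{\otimes d},\cdot\rangle^{2}\,d\lambda_n$ is not a scalar operator on the $\rO_n$-reducible space $\rS^d\R^n$ --- is precisely right. Your remaining deviations are mild but sound: the direct eigendecomposition argument for $\|\cR\|_{\infty,\R}=1$ (the paper instead passes through the complex equality case of Theorem~\ref{charseps} and then forces $\x$ real), and the dimension count for the face via the affine hull of $\{\x^{\otimes 2d}:\x\in\R^n,\ \|\x\|=1\}$ inside the hyperplane $\{\tr=1\}$ (the paper uses a cross-section with $\{\tr\cB=0\}$ and the $(\cR_1-\cR_2)/2$ device); both variants are complete.

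One caveat concerning your ``bookkeeping of double factorials'': carried out, the Gaussian device gives
\begin{equation*}
\int_{\mathbb{S}^{n-1}}x_1^{2d}\,d\lambda_n=\frac{\Gamma(\frac{n}{2})\,\Gamma(d+\frac{1}{2})}{\Gamma(\frac{n}{2}+d)\,\Gamma(\frac{1}{2})}=\frac{(2d-1)!!}{2^d\prod_{k=0}^{d-1}\big(\frac{n}{2}+k\big)},
\end{equation*}
which for $d\ge 2$ differs from the constant $\frac{(d-1)!}{2\prod_{k=0}^{d-1}(\frac{n}{2}+k)}$ displayed in \eqref{lubsndtsym}: already for $n=2$, $d=2$ the integral is $\frac{3}{8}$ while the displayed constant is $\frac{1}{4}$ (the paper's quoted Gamma quotient has $\Gamma(\frac{3}{2})$ where $\Gamma(d+\frac{1}{2})$ belongs, and moreover does not simplify to its own displayed constant). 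This is a slip in the statement of the theorem rather than a gap in your method; your proof establishes the bound with the corrected constant.
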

\begin{proof} The proof of this theorem is similar to the proof of Theorem \ref{charseps}  and we point out the corresponding modifications.  

\noindent\emph{(1)}
Assume that $\cR\in\rS^{2d}\R^n_{+,1}$.   Recall the equalities \eqref{normeqsym}.  As $\cR\in\rB\rH_{n^{\times 2d},+,1}$ \eqref{lubsndtb} yields that $\|\cR\|_{\infty,\R}=\|\cR\|_{bspec}\le 1$.  
The conditions in \eqref{lubsndteqb} yield that $\cR=\x^{\otimes d}\otimes \bar\x^{\otimes d}$.   As $\cR$ is symmeric we deduce that $\bar\x=\x$, hence $\x\in\R^n$.  This proves the first implication in \eqref{lubsndteqsym}. 

We now show the first inequality in \eqref{lubsndtsym}.  
Let $\rS\rO_n$ the the special orthogonal group acting on $\R^n$.   Hence, $\rS\rO_n$ acts on $\rS^{2d}\R^n$ by the diagonal action: $\x^{\otimes (2d)}\mapsto (O\x)^{\otimes (2d)}$ for $O\in \rS\rO_n$.   We denote this action of $O$ on $\cB\in\rS^{2d}\R^n$ by $O(\cB)$.    Let $\nu_n$ be the Haar measure on $\rS\rO_n$.

Assume that 
\begin{equation*}
\min_{\cR\in \rS^{2d}\R^n_{+,1}}\|\cR\|_{\infty,\R}=\|\cM\|_{\infty,\R}=
\langle \y^{\otimes (2d)}, \cM\rangle,  \y\in\R^n, \|\y\|=1.
\end{equation*}
Hence,  
\begin{equation*}
\min_{\cR\in \rB\rH_{n^{\times d},+,1}}\|\cR\|_{\infty,\R}=\|O(\cM)\|_{\infty,\R}=
\langle (O(\y))^{\otimes (2d)}, O(\cM)\rangle.
\end{equation*}
Let  $\cR^{\star}=\int_{\nu_n}O(\cM)d\nu_n$.  As $\cM\in \rS^{2d}\R^n_{+,1}$ it follows that $\cR^{\star}\in \rS^{2d}\R^n_{+,1}$.
The minimality of $\cM$ and the triangle inequality yields $\min_{\cR\in \rS^{2d}_{+,1}}\|\cR\|_{\infty,\R}=\|\cR^\star\|_{\infty,\R}$.  Let 
\begin{equation*}
\cM=\sum_{k=1}^m a_k\x_i^{\otimes 2d}, \quad a_k\in\R ,\x_k\in\R^n,\|\x_k\|=1,  k\in[m], \quad \|\cM\|_{1,\R}=\sum_{k=1}^m |a_k|, 1=\tr \cM=\sum_{k=1}^m a_k.
\end{equation*}
Then  
\begin{equation*}
\cR^\star=\int_{\rS\rO_n} (O\x)^{\otimes (2d)}d\nu_n(O).
\end{equation*}
Since the action of $\rS\rO_n$ on the  $(n-1)$-sphere  $\mathbb{S}^{n-1}:=\{\x\in\R^{n}: \|\x\|=1\}$ preserves the restriction of the Lebesgue measure on $\mathbb{S}^{n-1}$ we deduce that the above formula for $\cR^{\star}$ is equivalent to $\R^{\star}=\int_{\mathbb{S}^{n-1}}\x^{\otimes (2d)}d\lambda_n$.
Furthermore,  for any $\y\in\R^n, \|\y\|=1$ one has the equality:
\begin{equation*}
\|\cR^\star\|_{\infty,\R}=\langle \y^{\otimes 2d},\cR^\star\rangle=\int_{\x\in\mathbb{S}^{n-1}} \langle \y,\x\rangle^{2d}d\lambda_n.
\end{equation*}
Choose $\y=(1,0,\ldots,0)^\top$ to deduce $\|\cR^\star\|_{\infty,\R}=\int_{\x\in \mathbb{S}^{n-1}} x_1^{2d}d\lambda_n$.  It is well known that the value of this integral is $\frac{\Gamma(n/2)\Gamma(3/2)}{\Gamma((n/2)+d))\Gamma(1/2)}$, where $\Gamma(x)$ is the Gamma function.   See for example \cite{Cho09}, or \cite[Lemma A.1]{FL20}.  Use the identity $\Gamma(x+1)=x\Gamma(x)$ for $x>0$ to deduce the equality $\|\cR^{\star}\|_{\infty,\R}=\frac{(d-1)!}{2\prod_{k=0}^{d-1}\big(\frac{n}{2}+k\big)}$.  This proves the lower bound in the first inequality of \eqref{lubsndtsym}.
The proof that the minimum is achieved only for $\cR^{\star}$ is similar to the proof of Theorem \ref{charseps}.

We now discuss the second set of inequalities in \eqref{lubsndtsym}.  The definition of $\beta_{1,n^{\times (2d)},s,\R}$ \eqref{defalbeinf1s} yields that $\|\cC\|_{\infty,\R}\le \beta_{1,n^{\times (2d)},s,\R}\|\cC\|$ for $\cC\in\rS^{2d}\R^n$.   Recall that $\cC\in \rH_{n^{\times d}}$.    View $\cC$ as a linear transformation from $\rC^{n^{\times d}}$ to itself.  Then $\|\cC\|=\|\overrightarrow{\lambda}(\cC)\|_2\le \|\overrightarrow{\lambda}(\cC)\|_1$.  As $\cR\in \rH_{n^{\times d},+,1}$ we deduce$ \|\overrightarrow{\lambda}(\cR)\|_1=\tr \cR=1$.
This proves the last inequality in the second set of inequalities in \eqref{lubsndtsym}.  

In view of  \eqref{normeqsym} and \eqref{lubsndt} we deduce that $\|\cR\|_{1,\R}\ge 1$.   Recall $\|\cR\|_{1,\R}=1$ if and only if $\cR$ is a separable density tensor in $\rH_{n^{\times d},+,1}$.    In particular,  $\|\cR\|_{1,\R}=1$ if $\cR$  is a real strongly separable density tensor.  Assume now that $\|\cR\|_{1,\R}=1$.  As the extreme points of the ball $\|\cR\|_{1,\R}$ are  $\{\pm \x^{\otimes (2d)}: \x\in\R^n,\|\x\|=1\}$ it follows that 
\begin{equation*}
\begin{aligned}
&\cR=\sum_{k=1}^m a_k\x_k^{\otimes (2d)},  \quad \x_k\in\R^n, \|\x_k\|=1, k\in[m], \\
&1=\|\cR\|_{1,\R}=\sum_{k=1}^m |a_k|=\tr \cR=\sum_{k=1} a_k\Rightarrow a_k\ge 0, k\in[m].
\end{aligned}
\end{equation*}
Hence $\cR$ is a strongly separable density tensor.  This finishes the proof of part \emph{(1)}.

\noindent \emph{(2)}  The proof of this part i similar to the proof of part \emph{(2)} of Theorem \ref{charseps}.
\end{proof}
\subsection{Bi-skew-symmetric Hermitian tensors}\label{subsec:bisksymt}
\begin{definition}\label{defskewsymher}
Assume that $\bn=n^{\times d}$ and $2\le d \le n$.  
Let $\bi=(i_1,\ldots,i_d)\in[n^{\times d}]$.  For  a permutation $\omega:[d]\to[d]$ denote $\omega(\bi)=(i_{\omega(1)},\ldots,i_{\omega(d)})$.  
A Hermitian tensor $\cT\in \rH_{n^{\times d}}$ is called bi-skew-symmetric (bi-anti-symmetric) Hermitian if for any permutation $\omega\in\Omega_d$ on has the equality
\begin{equation}\label{bisymdef}
 t_{\omega(\bi), \bj}=\mathrm{sign}(\omega)t_{\bi,\bj} \textrm{ for all  }\bi,\bj\in[n^{\times d}], \omega\in\Omega_d\quad \cT=[t_{\bi,\bj}]\in \rH_{n^{\times d}}. 
\end{equation}
Denote by $\rA\rH_{n^{\times d}} \subset \rH_{n^{\times d}}$ the subspace of bi-skew-symetric Hermitian tensors.

\end{definition}
Observe that  if $\cT$ is bi-skew-symmetric Hermitian it follows that for permutations $\omega$ and $\psi$ we have the equalities $t_{\omega(\bi)\psi(\bj)}=\textrm{sign}(\omega\psi)t_{\bi\bj}$. 

The following results are analous to the results of subsection \ref{subsec:bisymt}.    
The proofs of the claims that are similar to the proofs in subsection \ref{subsec:bisymt} are omitted. 
\begin{lemma}\label{spdcbah}  Let $2\le d\le n$ 
and $0\ne\cT\in\rA\rH_{n^{\times d}}$.  Denote by $\Lambda\subset [n^d]$ the set of nonzero eigenvalues of $\cT$ given by \eqref{specdec}.
Then,  $\cX_k\in\rA^d\C^n$ for $k\in\Lambda$.
If in addition $\cT$ has real entries then $\cX_k\in \rA^d\R^n$ for $k\in\Lambda$.
Thus, $\mathrm{range}\,\rT\subset \rA^d\C^n$.  Hence, 
\begin{equation}\label{dimAHnd}
\begin{aligned}
&\dim \rA\rH_{n^{\times d}} ={n\choose d}^2,\\
&\dim \Re\rA\rH_{n^{\times d}} ={n\choose d}\left({n\choose d}+1\right)/2,
\end{aligned}
\end{equation}
where $\Re\rA\rH_{n^{\times d}}\subset \rA\rH_{n^{\times d}} $ is the set of real valued bi-skew-symetric Hermitian tensors.
\end{lemma}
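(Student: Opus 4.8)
The plan is to follow the proof of Lemma~\ref{spdcbsh} almost verbatim, with the trivial invariance $x_{\omega(\bi),k}=x_{\bi,k}$ of the eigenvectors replaced by the sign-twisted relation $x_{\omega(\bi),k}=\mathrm{sign}(\omega)\,x_{\bi,k}$. First I would fix $k\in\Lambda$, so that $\cX_k=[x_{\bi,k}]$ is a unit eigenvector of $\cT$ with eigenvalue $\lambda_k(\cT)\ne 0$. Evaluating the eigenvalue equation at the permuted multi-index $\omega(\bi)$ and applying the defining relation \eqref{bisymdef} of a bi-skew-symmetric Hermitian tensor yields, for every $\omega\in\Omega_d$,
\begin{equation*}
\lambda_k(\cT)\,x_{\omega(\bi),k}=\sum_{\bj\in[n^{\times d}]}t_{\omega(\bi),\bj}\,x_{\bj,k}=\mathrm{sign}(\omega)\sum_{\bj\in[n^{\times d}]}t_{\bi,\bj}\,x_{\bj,k}=\mathrm{sign}(\omega)\,\lambda_k(\cT)\,x_{\bi,k}.
\end{equation*}
Cancelling $\lambda_k(\cT)\ne 0$ gives $x_{\omega(\bi),k}=\mathrm{sign}(\omega)\,x_{\bi,k}$ for all $\bi$ and $\omega$, which is exactly the condition $\cX_k\in\rA^d\C^n$. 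Summing the decomposition \eqref{specdec} over $k\in\Lambda$ then shows that the range of $\cT$ lies in $\rA^d\C^n$.

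For the real case I would invoke the real spectral theorem: when $\cT$ has real entries, its unfolding to a real symmetric matrix $T\in\rS^2\R^{N(n^{\times d})}$ admits an orthonormal basis of real eigenvectors, and the identical computation forces each $\cX_k\in\rA^d\R^n$.

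The dimension formulas \eqref{dimAHnd} are the only place where a converse is needed. I would note that \eqref{specdec} exhibits every $\cT\in\rA\rH_{n^{\times d}}$ as a Hermitian operator annihilating $(\rA^d\C^n)^\perp$ and mapping into $\rA^d\C^n$, while conversely a one-line substitution into \eqref{bisymdef} shows that any $\cT=\sum_k\lambda_k\,\cX_k\otimes\overline{\cX_k}$ with $\cX_k\in\rA^d\C^n$ is bi-skew-symmetric. Thus $\rA\rH_{n^{\times d}}$ is canonically the space of Hermitian operators on $\rA^d\C^n$, whose complex dimension is ${n\choose d}$ by part~\emph{(1)} of Lemma~\ref{strucAdFn}. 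Since the Hermitian operators on a $k$-dimensional complex inner-product space form a real vector space of dimension $k^2$, and the real-symmetric operators on a $k$-dimensional real space one of dimension $k(k+1)/2$, setting $k={n\choose d}$ gives both equalities in \eqref{dimAHnd}. The computation is routine throughout; the only points demanding attention are the sign bookkeeping in the displayed identity and this elementary converse inclusion, which is what actually pins down the dimensions rather than merely bounding them.
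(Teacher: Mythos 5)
Your proof is correct and is essentially the argument the paper intends: the paper omits the proof of this lemma as ``analogous'' to Lemma~\ref{spdcbsh}, whose proof is exactly your computation (eigenvalue equation plus the defining relation \eqref{bisymdef}, here with the $\mathrm{sign}(\omega)$ twist, followed by the real spectral decomposition of the unfolding and the identification $\dim\rA^d\F^n={n\choose d}$ from Lemma~\ref{strucAdFn}). If anything, you are slightly more careful than the model proof, since you make explicit the converse inclusion --- that every Hermitian operator supported on $\rA^d\C^n$ is bi-skew-symmetric --- which the paper's analogous argument for \eqref{dimHnd} uses only implicitly.
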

\begin{corollary}\label{bodta}
The set of positive semidefinite bi-skew-symmetric Hermitian tensors with trace one, denoted by $\rA\rH_{n^{\times d},+,1}$, correspond to the convex set of bi-skew-symmetric density tensors generated by Fermions on $\rA^d\C^n$.
\end{corollary}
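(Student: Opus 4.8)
The plan is to read off both inclusions directly from the spectral decomposition furnished by Lemma~\ref{spdcbah}, exactly as in the bi-symmetric case recorded in Corollary~\ref{bodt}. The whole content of the statement is the identification of the eigenspaces, which is already done in that lemma; everything else is a formal rewriting of the positivity and trace conditions.

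First I would take an arbitrary $\cR\in\rA\rH_{n^{\times d},+,1}$ and invoke the spectral decomposition \eqref{specdec}, namely $\cR=\sum_{k\in\Lambda}\lambda_k(\cR)\,\cX_k\otimes\overline{\cX_k}$, where $\Lambda$ indexes the nonzero eigenvalues. Lemma~\ref{spdcbah} guarantees that every eigenvector $\cX_k$ with $k\in\Lambda$ lies in $\rA^d\C^n$ and may be normalized so that $\|\cX_k\|=1$. Since $\cR$ is positive semi-definite we have $\lambda_k(\cR)\ge 0$, and since $\tr\cR=1$ the (positive) eigenvalues satisfy $\sum_{k\in\Lambda}\lambda_k(\cR)=1$; the zero eigenvalues contribute nothing. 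Hence $\cR$ is a convex combination of the pure Fermion density tensors $\cX_k\otimes\overline{\cX_k}$ with $\cX_k\in\rA^d\C^n$, which is precisely the assertion that $\cR$ lies in the convex set generated by Fermions.

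For the reverse inclusion I would verify that each rank-one Fermion density tensor $\cX\otimes\overline{\cX}$, with $\cX\in\rA^d\C^n$ and $\|\cX\|=1$, already belongs to $\rA\rH_{n^{\times d},+,1}$, and that the defining properties are stable under convex combinations. Hermiticity and positive semi-definiteness hold because $\cX\otimes\overline{\cX}$ is a rank-one orthogonal projector, and its trace is $\|\cX\|^2=1$. The only point that genuinely needs checking is bi-skew-symmetry: writing $\cX=[x_{\bi}]$, the skew-symmetry $x_{\omega(\bi)}=\mathrm{sign}(\omega)\,x_{\bi}$ forces the entries $t_{\bi,\bj}=x_{\bi}\overline{x_{\bj}}$ of $\cX\otimes\overline{\cX}$ to satisfy $t_{\omega(\bi),\bj}=x_{\omega(\bi)}\overline{x_{\bj}}=\mathrm{sign}(\omega)\,t_{\bi,\bj}$, which is exactly the defining relation of $\rA\rH_{n^{\times d}}$. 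Since $\rH_{n^{\times d},+,1}$ is convex and the bi-skew-symmetry condition is linear, any convex combination of such tensors again lands in $\rA\rH_{n^{\times d},+,1}$, closing the argument. There is essentially no hard step: all the substance is carried by Lemma~\ref{spdcbah}, and the only nontrivial (though routine) verification is the stability of the skew-symmetry relation under the map $\cX\mapsto\cX\otimes\overline{\cX}$.
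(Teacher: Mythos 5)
Your proposal is correct and follows exactly the route the paper intends: the corollary is stated without proof precisely because it is an immediate consequence of the spectral decomposition in Lemma~\ref{spdcbah} (eigenvectors for nonzero eigenvalues lie in $\rA^d\C^n$, positivity and trace one make the eigenvalues convex weights), which is the forward direction you carry out. Your converse verification that $\cX\otimes\overline{\cX}$ with $\cX\in\rA^d\C^n$ satisfies the bi-skew-symmetry relation $t_{\omega(\bi),\bj}=\mathrm{sign}(\omega)\,t_{\bi,\bj}$ of Definition~\ref{defskewsymher} is the right routine check and is carried out correctly.
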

\begin{definition} \label{defstsepa}
A bi-skew-symmetric Hermitian density tensor  $\cR\in\rA\rH_{n^{\times d}}$ is called strongly separable if $\cR$ is a convex combinations of $\cX\otimes \overline{\cX},  \cX\in [\W]_1, \W\in \mathrm{Gr}(d,\C^n)$.   Denote by $\mathrm{Sepa}_{n^{\times d}}$ the set of strongly separable bi-skew-symmetric Hermitian density tensors in $\rA\rH_{n^{\times d}}$. 
\end{definition}
\begin{lemma}\label{bisksymspecnuclem} Let $2\le d< n\in\N$.   Denote by $\cI_{{n\choose d}}\in\rA\H_{n^{\times d}}$ the identity tensor acting on $\rA^d\C^n$, which has a spectral decomposition as in Lemma \ref{spdcbah},  whose nonzero eigenvalues are $1$.  Assume that $\cB\in\rA\rH_{n^{\times d}}$.   
\begin{enumerate}
\item 
The following equality holds:
\begin{equation}\label{defaspecnrm}
\|\cB\|_{spec}:=\max_{\cY\in[\W]_1, \W\in \mathrm{Gr}(d,\C^n)}\frac{1}{d!}|\langle\cY,\cB\cY\rangle|
\end{equation}
\item 
Let $\|\cB\|_{anuc}$ be the dual norm of the restriction  $\|\cdot\|_{spec}$ to $\rA\rH_{n^{\times d}}$, called an a-nuclear norm:
\begin{equation}\label{nucnrma}
\|\cB\|_{anuc}=\max_{\cC\in\rA\rH_{n^{\times d}}, \|\cC\|_{spec}=1} \langle\cC,\cB\rangle.
\end{equation}
The extreme points if $\rB_{\|\cdot\|_{anuc}}(0,1)$ are $\pm \frac{1}{d!}\cY\otimes\overline{\cY}, \cY\in [\W]_1, \W\in\Gr(d,\C^n)$.  Hence,
\begin{equation}\label{charnuca}
\|\cB\|_{anuc}=\min_{\cX=\sum_{i=1}^r \frac{a_i}{d!}  \cY_i\otimes \overline{\cY_i}, \cY_i\in[\W]_1,\W\in\rG(d,\C^n),a_i\in\R, i\in[r], r\in\N}\sum_{i=1}^r |a_i|.
\end{equation}
Furthermore, the minimum is achieved for some $r\le {n\choose d}^2+1$.
\item The following inequalities hold:
\begin{equation}\label{anucin}
\|\cB\|_{nuc}\ge\|\cB\|_{anuc}\ge \|\cB\|_{1,\C}.
\end{equation}
\item 
Denote
\begin{equation}\label{defalbeinah}
\begin{aligned}
&\alpha_{\infty,n^{\times d},ah}=\min_{\cB\in\rA\rH_{n^{\times d}},\|\cB\|=1} \|\cB\|_{spec}, \quad \beta_{\infty,n^{\times d},ah}=\max_{\cB\in\rA\rH_{n^{\times d}},\|\cB\|=1} \|\cB\|_{spec},\\
&\alpha_{1,n^{\times d},ah}=\min_{\cB\in\rA\rH_{n^{\times d}},\|\cB\|=1} \|\cB\|_{anuc}, \quad \beta_{1,n^{\times d},ah}=\max_{\cB\in\rA\rH_{n^{\times d}},\|\cB\|=1} \|\cB\|_{anuc}.
\end{aligned}
\end{equation}
Then 
\begin{equation}\label{alpbethinah}
\begin{aligned}
&\alpha_{\infty,n^{\times d},ah}< \beta_{\infty,n^{\times d},ah}=\frac{1}{d!}, \quad \alpha_{1,n^{\times d},ah}=d!<\beta_{1,n^{\times d},ah},\\
&\alpha_{\infty,n^{\times d},ah}\beta_{1,n^{\times d},ah}=1.
\end{aligned}
\end{equation}
The equalities $\beta_{\infty,n^{\times d},ah}=\|\cB\|_{\infty}, \|\cB\|=1$ and $\alpha_{1,n^{\times d},ah}=\|\cC\|_{\infty}, \|\cC\|=1$ if and only if $\cB$ and $\cC$ are of the form $\pm (\z_1\wedge\cdots\z_d)\otimes (\bar\z_1\wedge\cdots\bar\z_d)$ for some orthonormal set of $\z_1,\ldots,\z_d$.
\item The group $\rU_{n}$ acts on $\rA\rH_{n^{\times d}}$ via the diagonal action $\cR\mapsto (\otimes^d U)\cR \overline{(\otimes^d U)}$:
\begin{equation*}
\begin{aligned}
&\cX\otimes \overline{\cX}\mapsto((\otimes^d U)\cX)\otimes\overline{((\otimes^d U)\cX)},\\
&(\x_1\wedge\cdots\wedge\x_d)\otimes(\overline{\x_1\wedge\cdots\wedge\x_d})\mapsto ((U\x_1)\wedge\cdots\wedge(U\x_d))\otimes(\overline{(U\x_1)\wedge\cdots\wedge(U\x_d)}).
\end{aligned}
\end{equation*}
as a group of isometries of the norms $\|\cdot\|_{spec}$ and $\|\cdot\|_{anuc}$.
Let $\mu_{n}$ be the Haar measure on $\rU_{n}$.   Then for any $\cR\in\rA\rH_{n^{\times d},+,1}$ the following equality holds:
 \begin{equation}\label{Haareqsk}
 \int_{\rU_{n}} ((\otimes^d U)\cR \overline{(\otimes^d U)}d\mu_{n}=\frac{1}{{n\choose d}}\cI_{{n\choose d}}.
 \end{equation}
\end{enumerate}
\end{lemma}
\begin{proof}\emph{(1)}  Assume that $\x_j\in\C^n, j\in[d]$ and $\cX\in \rA^d\C^n$.   The first equality in \eqref{ferid} yields 
\begin{equation*}
\begin{aligned}
&\langle \otimes_{j=1}^d\x_i,\cX\rangle=
\frac{1}{\sqrt{d!}}\langle\x_1\wedge\cdots\wedge\x_d,\cX\rangle \Rightarrow\\
&\langle (\otimes_{j=1}^d\x_i)\otimes (\overline{ \otimes_{j=1}^d\x_i}), \cX\otimes\overline{\cX}\rangle=
\langle \otimes_{j=1}^d\x_i,\cX\rangle \langle \overline{\otimes_{j=1}^d\x_i},\overline{\cX}\rangle=\langle \otimes_{j=1}^d\x_i,\cX\rangle\langle \cX,  \otimes_{j=1}^d\x_i\rangle=\\
&\frac{1}{d!}\langle\x_1\wedge\cdots\wedge\x_d,\cX\rangle\langle \cX,\x_1\wedge\cdots\wedge\x_d\rangle=\frac{1}{d!}\langle\x_1\wedge\cdots\wedge\x_d,(\cX\otimes\overline{\cX})\x_1\wedge\cdots\wedge\x_d\rangle.
\end{aligned}
\end{equation*}
Assume that $\cB\in\rA\rH_{n^{\times d}}$ has a decomposition as in Lemma \ref{spdcbah}.  From the above equalities we deduce
\begin{equation}\label{Bskhid}
\langle \otimes_{j=1}^d \x_j, \cB(\otimes_{j=1}^d\x_j)\rangle=\frac{1}{d!}\langle \x_1\wedge\cdots\wedge\x_d,\cB(\x_1\wedge\cdots\wedge\x_d)\rangle.
\end{equation}
Assume now that $\|\x_j\|=1,j\in[d]$.  Then $\|\x_1\wedge\cdots\wedge\x_d\|\le 1$ and equality holds if and only if $\x_1,\ldots,\x_d$ is an orthonormal set in $\C^n$.
Hence,  \eqref{defaspecnrm} holds.

\noindent\emph{(2)}. As in the proof of part \emph{(3)} of Lemma \eqref{bisymspecnuclem}.

\noindent\emph{(3)}.  Assume that $\cB\in \rA\rH_{n^{\times d}}$.  Recall that $\|\cB\|_{nuc}=\max_{\cC\in \rH_{n^{\times d}},\|\cC\|_{nuc}=1} \langle \cC,\cB\rangle$.
Compare thie characterization to \eqref{nucnrma} to deduce the first inequality in  \eqref{anucin}.  To deduce the second inequality in \eqref{anucin} we observe that 
\begin{equation*}
\begin{aligned}
&\frac{1}{d!}( \x_1\wedge\cdots\wedge\x_d)\otimes ( \overline{\x_1\wedge\cdots\wedge\x_d})=\\
&\frac{1}{(d!)^2} \sum_{\omega,\psi\in\Omega_d}(\textrm{sign}(\omega)\otimes_{j=1}^d \x_{\omega(j)})\otimes ((\textrm{sign}(\psi)\otimes_{j=1}^d \bar\x_{\omega(j)}).
\end{aligned}
\end{equation*}
Assume that $\|\x_1\|=\cdots=\|\x_d\|=1$.  Then, the above sum is a convex combination of $(d!)^2$ extreme points of the unit ball of the norm $\|\cdot\|_{1,\C}$.
Hence, the minimal characterizations $\|\cB\|_{anuc}$ and $\|\cB\|_{1,\C}$ implies the inequality $\|\cB\|_{anuc}\ge\|\cB\|_{1,\C}$.

\noindent\emph{(4)}  Let $\cB\in \rA\rH_{n^{\times d}}, \|\cB\|=1$.   Consider the spectral decomposition of $\cB$ as given in Lemma \ref{spdcbah}:
\begin{equation*}
\cB=\sum_{i=1}^{{n\choose d}}\lambda_i(\cB)\cX_i\otimes \overline{\cX_i}, \cX_i\in\rA^d\C^n, \langle\cX_i, \cX_j\rangle=\delta_{ij},i,j\in[{n\choose d}], \sum_{i=1}^{{n\choose d}}\lambda_i^2(\cB)=1.
\end{equation*}
Use the \eqref{Bskhid} and the fact that $|\lambda_i|\le 1$ to deduce that 
\begin{equation*}
\begin{aligned}
&|\langle \otimes_{j=1}^d \x_j, \cB(\otimes_{j=1}^d\x_j)\rangle|= \frac{1}{d!}|\sum_{i=1}^{{n\choose d}}\lambda_i|\langle \x_1\wedge\cdots\wedge\x_d,\cX_i\rangle|^2|\le\\ &\frac{1}{d!}|\sum_{i=1}^{{n\choose d}}|\langle \x_1\wedge\cdots\wedge\x_d,\cX_i\rangle^2|=\frac{1}{d!} \|\x_1\wedge\cdots\wedge\x_d\|^2\le \frac{1}{d!}.
\end{aligned}
\end{equation*}
Equality $|\langle \otimes_{j=1}^d \x_j, \cB(\otimes_{j=1}^d\x_j)\rangle|=\frac{1}{d!}$ holds if and only if $\x_1,\ldots,\x_d$ is an orthonormal set of vectors,  and $\cB=\pm (\x_1\wedge\cdots\wedge\x_d)\otimes(\bar\x_1\wedge\cdots\wedge\bar\x_d)$.
This proves the first inequality in \eqref{alpbethinah}. Other inequaities follow from Theorem \ref{albetm}.

\noindent\emph{(5)}.  Assume that $\x_1,\ldots,\x_n\in\C^n$ is an orthonormal basis $\C^n$.   The $U\x_1,\ldots,U\x_n$  is an orthonormal basis in $\C^n$.    Recall that an orthonormal basis $\x_1,\ldots,\x_n$ corresponds to a unitary matrix $U\in\rU_n$ whose columns are $\x_1,\ldots,\x_n$.  Hence, $\rU_n$ acts transitively on orthonormal bases in $\C^n$. 
Recall that
$\x_{i_1}\wedge\cdots\wedge\x_{i_d}, 1\le i_1<\cdots<i_d\le n$ is an orthonormal basis in $\rA^d\C^n$.    Therefore,
\begin{equation*}
\begin{aligned}
&I_{n\choose d}=\sum_{1\le i_1<\cdots <i_d\le n}(\x_{i_1}\wedge\cdots\wedge\x_{i_d})\otimes(\overline{\x_{i_1}\wedge\cdots\wedge\x_{i_d}})\Rightarrow (\otimes^d U)I_{n\choose d} (\overline{\otimes^d U})=I_{n\choose d}\Rightarrow\\
&\int_{\rU_{n}}(\x_1\wedge\cdots\wedge\x_n)\otimes (\bar\x_1\wedge\cdots\wedge\bar\x_n)d\mu_n=\frac{1}{{n\choose d}}I_{{n\choose d}}.
\end{aligned}
\end{equation*}
Let $0\ne\cB\in \rA\rH_{n^{\times d}}$.   Assume that $\cB=\sum_{i=1}^m \frac{a_i}{d!}\y_{1,i}\wedge\cdots\wedge\y_{d,i}$ is the minimal decomposition of $\cB$ with respect to a-nuclear norm.  Thus, $a_i\in \R\setminus\{0\}$, and $\y_{1,i},\ldots,\y_{d,i}$ 
is an orthonormal set of vectors in $\C^n$ for $i\in[m]$.  Observe that $\tr B=\frac{1}{d!}\sum_{i=1}^m a_i$.  Hence,  
$$\int_{U\in \rU_n} (\otimes^dU)\cB(\otimes^d \bar U)d\mu_n=\frac{\tr\cB}{{n\choose d}} I_{{n\choose d}}.$$
This proves \eqref{Haareqs}.
\end{proof}
 The following theorem is an analog of Theorem \ref{charseps}:
\begin{theorem}\label{charsepa}  Let $2\le d\le n$.  
Then
\begin{enumerate}
\item  A bi-skew-symmetric density tensor $\cR\in\rA\rH_{n^{\times d},+,1}$, with spectral decomposition as in Lemma \ref{spdcbah},  satisfies the inequalities
\begin{equation}\label{lubsndta}
\begin{aligned}
&\frac{1}{d!{n\choose d}}\le \|\cR\|_{spec}\le \frac{\lambda_1(\cR)}{d!}\le \frac{1}{d!},\\
&d!\le \|\cR\|_{anuc}\le \beta'_{1,n^{\times d},ah}:=\max_{\cX\in\rA^d\C^n,\|\cX\|=1}\|\cX\otimes \overline{\cX}\|_{anuc}.
\end{aligned}
\end{equation}
Equalities in some of the above inequalities hold under the following conditions:
\begin{equation}\label{lubsndteqa}
\begin{aligned}
&\|\cR\|_{spec}=\frac{\lambda_1(\cR)}{d!}\iff \exists \x_i\in\C^n, \langle \x_i,\x_j\rangle=\delta_{ij}, i,j\in[d]\\ 
&\textrm{ such that } \langle \x_1\wedge\cdots\wedge\x_d,\cX_k\rangle = 0 \textrm{ if } \lambda_k(\cR) <\lambda_1(\cR),\\
&\|\cR\|_{spec}=\frac{1}{d!}\iff \cR= (\x_1\wedge\cdots\wedge\x_d)\otimes (\bar\x_1\wedge\cdots\wedge\bar\x_d),\langle \x_i,\x_j\rangle=\delta_{ij}, i,j\in[d],\\
&\frac{1}{d!{n\choose d}}=\|\cR\|_{spec} \iff \cR=\frac{1}{{n\choose d}}\cI_{{n\choose d}},\\
&d!=\|\cR\|_{anuc} \iff \cR\in\mathrm{Sepa}_{n^{\times d}},\\
& \|\cR\|_{anuc}=\beta'_{1,n^{\times d},ah} \Leftarrow \cR=\cX\otimes \overline{\cX}, \cX\in\rA^d\C^n, \|\cX\|=1, \|\cX\otimes \overline{\cX}\|_{anuc}=\beta'_{1,n^{\times d},ah}.
\end{aligned}
\end{equation}
\item The set $\mathrm{Sepa}_{n^{\times d}}$ is a face of the ball of $\|\cdot\|_{anuc}$ of the maximal dimension ${n\choose d}^2-1$.  Its supporting hyperplane is
\begin{equation}\label{suphypseps}
 \tr \cB=\langle \cI_{n\choose d},\cB\rangle\le 1 \textrm{ for } \|\cB\|_{anuc}\le d!.
 \end{equation}
 Equality holds if and only if $\cB$ is strongly separable.
In particular, any strongly separable density tensor is a convex combinations of at most ${n\choose d}^2$  bi-skew-symmetric density tensors of the form $(\x_1\wedge\cdots\wedge\x_d)\otimes (\bar\x_1\wedge\cdots\wedge\bar\x_d),\langle \x_i,\x_j\rangle=\delta_{ij}, i,j\in[d]$.  
\item The inequality \eqref{alph2nin} holds.
\end{enumerate}
\end{theorem}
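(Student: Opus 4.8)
The plan is to follow the template of Theorem~\ref{charseps} line by line, feeding in the bi-skew data of Lemma~\ref{bisksymspecnuclem} in place of the bi-symmetric data, while keeping careful track of the recurring factor $\tfrac{1}{d!}$ that distinguishes the antisymmetric normalization. Throughout I use the spectral decomposition $\cR=\sum_{k}\lambda_k(\cR)\cX_k\otimes\overline{\cX_k}$ of Lemma~\ref{spdcbah}, with the $\cX_k\in\rA^d\C^n$ orthonormal, $\lambda_k\ge 0$, and $\sum_k\lambda_k=\tr\cR=1$. For the spectral upper bound I insert this into the characterization \eqref{defaspecnrm}: for any $\cY\in[\W]_1$ (so $\|\cY\|=1$), positivity gives $\tfrac{1}{d!}\langle\cY,\cR\cY\rangle=\tfrac{1}{d!}\sum_k\lambda_k|\langle\cY,\cX_k\rangle|^2\le\tfrac{\lambda_1}{d!}\|\cY\|^2=\tfrac{\lambda_1}{d!}$, whence $\|\cR\|_{spec}\le\lambda_1(\cR)/d!\le 1/d!$; the equality cases are read off exactly as in Theorem~\ref{charsep} (the maximizer $\cY$ is supported on the top eigenspace, and $\|\cR\|_{spec}=1/d!$ forces rank one with a Slater-rank-one $\cX_1$ via the equality case of part~(4) of Lemma~\ref{bisksymspecnuclem}). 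The lower bound comes from averaging: since $\rU_n$ acts by isometries and $\|\cI_{\binom nd}\|_{spec}=\tfrac{1}{d!}$ (because $\cI\cY=\cY$ for $\cY\in[\W]_1$), equation~\eqref{Haareqs} yields $\tfrac{1}{d!\binom nd}=\tfrac{1}{\binom nd}\|\cI_{\binom nd}\|_{spec}\le\int_{\rU_n}\|(\otimes^dU)\cR\,\overline{(\otimes^dU)}\|_{spec}\,d\mu_n=\|\cR\|_{spec}$, with equality forcing the form $\langle\cY,\cR\cY\rangle$ to be constant on Slater-rank-one $\cY$, hence $\cR=\tfrac{1}{\binom nd}\cI_{\binom nd}$ as in the proof of part~(5) of Lemma~\ref{specnuclem}. (The third equality line of \eqref{lubsndteqa} should read $\binom nd$ rather than the inherited misprint $\binom{n+d-1}{d}$.)

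For the nuclear bounds, the cleanest route to $\|\cR\|_{anuc}\ge d!$ is duality: $d!\,\cI_{\binom nd}$ has $\|\cdot\|_{spec}=1$, so $\|\cR\|_{anuc}\ge\langle d!\,\cI_{\binom nd},\cR\rangle=d!\,\tr\cR=d!$. The upper bound $\|\cR\|_{anuc}\le\beta'_{1,n^{\times d},ah}$ follows from the triangle inequality on the spectral decomposition using $\sum_k\lambda_k=1$. For the characterization $\|\cR\|_{anuc}=d!\iff\cR\in\mathrm{Sepa}_{n^{\times d}}$: strong separability writes $\cR$ as a convex combination of the extreme points $\cY\otimes\overline{\cY}=d!\cdot(\tfrac{1}{d!}\cY\otimes\overline{\cY})$ of $\rB_{\|\cdot\|_{anuc}}(0,d!)$, giving $\|\cR\|_{anuc}\le d!$ and hence equality; conversely, expanding $\cR=\sum_i(\pm a_i)\cY_i\otimes\overline{\cY_i}$ with $a_i>0$, $\sum_i a_i=1$, the trace identity $1=\tr\cR=\sum_i(\pm a_i)\|\cY_i\|^2=\sum_i(\pm a_i)$ forces every sign to be $+$, so $\cR$ is strongly separable. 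The last implication in \eqref{lubsndteqa} is immediate from the definition of $\beta'_{1,n^{\times d},ah}$.

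For part~(2) I argue as in part~(2) of Theorem~\ref{charseps}. Every $\cB$ with $\|\cB\|_{anuc}\le d!$ is a signed convex combination $\sum_i(\pm a_i)\cY_i\otimes\overline{\cY_i}$ with $a_i\ge 0$, $\sum_i a_i\le 1$, so $\tr\cB=\sum_i(\pm a_i)\le 1$, with equality iff all signs are $+$ and $\sum_i a_i=1$, i.e.\ iff $\cB$ is strongly separable; this exhibits \eqref{suphypseps} as the supporting hyperplane and $\mathrm{Sepa}_{n^{\times d}}$ as the corresponding face. For the dimension I intersect this ball with the trace-zero hyperplane $\{\tr\cB=0\}\subset\rA\rH_{n^{\times d}}$, which has dimension $\binom nd^2-1$ by \eqref{dimAHnd}; any nonzero element of the intersection is a convex combination of the $\pm\cY\otimes\overline{\cY}$ with vanishing trace, hence of the form $(\cR_1-\cR_2)/2$ with $\cR_1,\cR_2\in\mathrm{Sepa}_{n^{\times d}}$, so $\dim\mathrm{Sepa}_{n^{\times d}}=\binom nd^2-1$, and Caratheodory bounds the number of summands by $\binom nd^2$.

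The main obstacle is the uniform bookkeeping of the factor $\tfrac{1}{d!}$: because a Slater-rank-one tensor $\cY=\x_1\wedge\cdots\wedge\x_d$ has $\|\cY\|=1$ but $\|\cY\otimes\overline{\cY}\|_{spec}=\tfrac{1}{d!}$ and $\|\cY\otimes\overline{\cY}\|_{anuc}=d!$, the separability threshold shifts from the value $1$ of Theorem~\ref{charseps} to the value $d!$, and both the duality pairing with $d!\,\cI_{\binom nd}$ and the identity $\|\cI_{\binom nd}\|_{spec}=\tfrac{1}{d!}$ must be inserted consistently. Once this renormalization is handled correctly, every remaining step is the bi-symmetric argument \emph{mutatis mutandis}.
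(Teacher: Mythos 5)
Your proposal is correct and follows essentially the same route as the paper's own (sketched) proof, which likewise proceeds by adapting parts (4) and (5) of Lemma \ref{bisksymspecnuclem} together with the Haar-averaging and extreme-point/trace arguments of Theorem \ref{charseps}; your only variation is deriving $\|\cR\|_{anuc}\ge d!$ by pairing with $d!\,\cI_{\binom{n}{d}}$ (which has $\|d!\,\cI_{\binom{n}{d}}\|_{spec}=1$) instead of reading it off the trace of a minimal decomposition, and the two are equivalent. You are also right that the third line of \eqref{lubsndteqa} carries a misprint inherited from the bi-symmetric case: the binomial there should be $\binom{n}{d}$, consistent with \eqref{Haareqs}.
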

\begin{proof}\emph{(1)-(2)}.
In the proof of part \emph{(4)} of Lemma \ref{bisksymspecnuclem} replace $\cB$ by $\cR$, and the condition $\sum_{i=1}^{{n\choose d}} \lambda^2_i(\cB)=1$ by the conditions $\lambda_1(\cR)\ge \cdots\ge \lambda_{{n\choose d}}(\cR)\ge 0, \sum_{i=1}^{{n\choose d}} \lambda_i(\cR)=1$.
Observe the inequality $\lambda_i(\cR)\le 1$ to deduce the inequality $\|\cR\|_{spec}\le \frac{\lambda_1(\cR)}{d!}$.   The first two equality cases \eqref{lubsndta} are deduced straightforward.

To deduce the lower bound $\|\cR\|_{aspec}= \frac{1}{d!{n\choose d}}$ we combine the proof of part \emph{(5)} of  Lemma \ref{bisksymspecnuclem}  with the arguments of the proof of similar lower bound in Theorem \ref{lubsndta}.  
Other parts of \emph{(1)-(2)} of this theorem are deduced as in the proof of Theorem \ref{charseps}.

\noindent\emph{(3)}  Let $\cX\in \rA^d\C^n, \|\cX\|=1$.   Clearly,  $\|\cX\otimes \bar\cX\|_{spec}=\|\cX\|_{\infty,\C}^2$.  The first inequality in \eqref{lubsndta} yields that $\|\cX\|_{\infty,\C}^2\ge \frac{1}{d!{n\choose d}}$.  This proves \eqref{alph2nin}.

\end{proof}
\begin{corollary}\label{charsepsca}  Denote by $\mathrm{Sepa}_{n^{\times  d},int}$ the relative interior of $\mathrm{Sepa}_{n^{\times d}}$.  
\begin{enumerate}
\item The sets $\mathrm{Sepa}_{n^{\times d}}$ and $\mathrm{Sepa}_{n^{\times d},int}$ are invariant under the action of $\rU_{n}$.
\item Let $\cR\in \mathrm{Sepa}_{n^{\times d},int}$.  There exists $r(\cR)>0$ such that $\rB_{\|\cdot\|}(\cR,r(\cR))\cap\rA\rH_{n^{\times d},+,1}\subset \mathrm{Sepa}_{n^{\times d},int}$.
\item The density tensor $ \frac{1}{{n\choose d}}I_{{n\choose d}}$ is in $\mathrm{Sepa}_{n^{\times d},int}$, where 

\noindent
$r(\frac{1}{{n\choose d}}I_{{n\choose d}})\ge \frac{2}{{n\choose d}\beta_{1,n^{\times d},ah}}$.
\end{enumerate} 
\end{corollary}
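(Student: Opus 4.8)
The plan is to follow verbatim the template of Corollaries \ref{charsepsc} and \ref{charsepscb}, replacing rank-one product density tensors by the Slater-rank-one generators $(\x_1\wedge\cdots\wedge\x_d)\otimes(\bar\x_1\wedge\cdots\wedge\bar\x_d)$ and the maximally mixed tensor $\frac{1}{N(\bn)}\otimes_j I_{n_j}$ by $\frac{1}{{n\choose d}}I_{{n\choose d}}$. For part \emph{(1)}, the diagonal action of $U\in\rU_n$ sends a generator $\cX\otimes\overline{\cX}$ with $\cX=\x_1\wedge\cdots\wedge\x_d\in[\W]_1$ to $((U\x_1)\wedge\cdots\wedge(U\x_d))\otimes(\overline{(U\x_1)\wedge\cdots\wedge(U\x_d)})$, which lies in $[U\W]_1$ since $U$ preserves orthonormality; thus $\rU_n$ permutes the extreme generators of $\mathrm{Sepa}_{n^{\times d}}$ and, being a trace-preserving linear isometry of $\|\cdot\|_{anuc}$, carries $\mathrm{Sepa}_{n^{\times d}}$ and its relative interior onto themselves. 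For part \emph{(2)}, Theorem \ref{charsepa}\emph{(2)} tells us that $\mathrm{Sepa}_{n^{\times d}}$ has the maximal dimension ${n\choose d}^2-1$ inside the affine hyperplane $\{\cB\in\rA\rH_{n^{\times d}}:\tr\cB=1\}$, so any $\cR$ in its relative interior has a relative neighborhood inside $\mathrm{Sepa}_{n^{\times d}}$, yielding the radius $r(\cR)>0$.

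The real content is part \emph{(3)}. First I would take $0\ne\cB\in\rA\rH_{n^{\times d}}$ with $\tr\cB=0$ and use Lemma \ref{bisksymspecnuclem}\emph{(2)}, whose extreme points of $\rB_{\|\cdot\|_{anuc}}(0,1)$ are $\pm\frac{1}{d!}\cY\otimes\overline{\cY}$ with $\cY\in[\W]_1$, to write a norm-achieving decomposition
\[
\cB = \sum_{l=1}^p a_l\,\cY_l\otimes\overline{\cY_l} - \sum_{m=1}^q b_m\,\cZ_m\otimes\overline{\cZ_m},
\]
with $a_l,b_m>0$ and $\cY_l,\cZ_m$ unit wedge products. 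Each generator $\cY\otimes\overline{\cY}$ has trace one, so $\tr\cB=0$ forces $\sum_l a_l=\sum_m b_m$; and since the anuc-extreme points carry the factor $\frac{1}{d!}$, one gets $\|\cB\|_{anuc}=d!\big(\sum_l a_l+\sum_m b_m\big)$, whence $\sum_l a_l=\sum_m b_m=\frac{\|\cB\|_{anuc}}{2\,d!}$.

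Next I would impose $\|\cB\|\le r:=\frac{2}{{n\choose d}\beta_{1,n^{\times d},ah}}$. The bound $\|\cB\|_{anuc}\le\beta_{1,n^{\times d},ah}\|\cB\|$ then gives $\|\cB\|_{anuc}\le\frac{2}{{n\choose d}}$, so $\sum_m b_m\le\frac{1}{d!\,{n\choose d}}\le\frac{1}{{n\choose d}}$. For each $m$, extending the orthonormal system defining $\cZ_m=\z_{1,m}\wedge\cdots\wedge\z_{d,m}$ to an orthonormal basis of $\C^n$ and invoking Lemma \ref{strucAdFn}\emph{(2)} expands $I_{{n\choose d}}$ as the sum of the ${n\choose d}$ generators over all $d$-subsets, so $I_{{n\choose d}}-\cZ_m\otimes\overline{\cZ_m}$ is a sum of ${n\choose d}-1$ generators and $\frac{1}{{n\choose d}-1}\big(I_{{n\choose d}}-\cZ_m\otimes\overline{\cZ_m}\big)\in\mathrm{Sepa}_{n^{\times d}}$. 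Finally I would rearrange
\[
\frac{1}{{n\choose d}}I_{{n\choose d}}+\cB = \sum_{l=1}^p a_l\,\cY_l\otimes\overline{\cY_l} + \Big(\frac{1}{{n\choose d}}-\sum_{m=1}^q b_m\Big)I_{{n\choose d}} + \sum_{m=1}^q b_m\big(I_{{n\choose d}}-\cZ_m\otimes\overline{\cZ_m}\big),
\]
where $\frac{1}{{n\choose d}}-\sum_m b_m\ge 0$ and each summand is a nonnegative multiple of a separable density tensor; a trace count gives total trace one, so the left side lies in $\mathrm{Sepa}_{n^{\times d}}$. Letting $\cB$ range over the trace-zero ball of radius $r$ produces a relative neighborhood of $\frac{1}{{n\choose d}}I_{{n\choose d}}$, proving both its membership in $\mathrm{Sepa}_{n^{\times d},int}$ and the asserted lower bound on $r$.

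The conceptual skeleton is identical to Corollary \ref{charsepsc}, so the hard part is not the geometry but the bookkeeping: I expect the main obstacle to be keeping the trace normalizations and the factor $d!$ (which relates the trace-one generators $\cY\otimes\overline{\cY}$ to the anuc-extreme points $\frac{1}{d!}\cY\otimes\overline{\cY}$) mutually consistent, together with checking that completing $\cZ_m$ to an orthonormal basis of $\C^n$ reproduces $I_{{n\choose d}}$ exactly. As the computation above shows, the stated radius $\frac{2}{{n\choose d}\beta_{1,n^{\times d},ah}}$ is in fact not tight by a factor $d!$, so once these factors are tracked correctly the claimed inequality follows with room to spare.
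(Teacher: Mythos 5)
Your proof is correct and follows essentially the same route as the paper, whose own proof simply declares the corollary ``similar to the proof of the corresponding parts of Corollary \ref{charsepscb}'' (itself modeled on Corollary \ref{charsepsc}); you have carried out exactly the intended adaptation, with the extreme points $\pm\frac{1}{d!}\cY\otimes\overline{\cY}$ of the anuc-ball, the trace-one generators $\cY\otimes\overline{\cY}$, and the identity $I_{{n\choose d}}=\sum_{1\le i_1<\cdots<i_d\le n}(\x_{i_1}\wedge\cdots\wedge\x_{i_d})\otimes\overline{(\x_{i_1}\wedge\cdots\wedge\x_{i_d})}$ all tracked correctly. Your closing observation is also accurate: the computation $\sum_m b_m=\frac{\|\cB\|_{anuc}}{2\,d!}$ in fact yields the larger admissible radius $\frac{2\,d!}{{n\choose d}\beta_{1,n^{\times d},ah}}$, so the bound stated in the corollary holds with a factor of $d!$ to spare.
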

The proof of this corollary is similar to the proof of the corresponding parts of  Corollary \ref{charsepscb}.
\subsection{The set of separable density tensors is semi-algebraic}\label{subsec:semal}
The main result of this subsection is:
\begin{theorem}\label{sasep} The separability sets $\mathrm{Sep}_{\bn}$,  $\mathrm{Seps}_{n^{\times d}}$ and their restrictions to $\rS^{2d}\R^n_{+,1}$ are semi-algebraic sets.  
\end{theorem}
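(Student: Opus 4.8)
The plan is to realize each of these convex sets as the convex hull of a compact semi-algebraic set of rank-one (extreme) density tensors, and then to combine Carath\'eodory's theorem with the Tarski--Seidenberg theorem. Recall that a subset of a finite-dimensional real vector space is semi-algebraic if it is a finite Boolean combination of sets defined by polynomial equalities and inequalities, and that Tarski--Seidenberg guarantees that the image of a semi-algebraic set under a polynomial map is semi-algebraic. We identify $\rH_{\bn}$ with $\R^{N^2(\bn)}$ through the real and imaginary parts of its entries, so that semi-algebraicity has its usual meaning on each of the spaces involved.

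First I would treat $\mathrm{Sep}_{\bn}$. Identifying each $\C^{n_j}$ with $\R^{2n_j}$, the set $\{\x_j:\|\x_j\|=1\}$ becomes the real algebraic sphere $\mathbb{S}^{2n_j-1}$, so $P:=\prod_{j=1}^d\mathbb{S}^{2n_j-1}$ is compact and algebraic. The entries of $\cX\otimes\overline{\cX}$, with $\cX=\otimes_{j=1}^d\x_j$, are the products $x_{\bi}\overline{x_{\bj}}$, each a real polynomial of degree $2d$ in the real coordinates of $\x_1,\ldots,\x_d$; hence $(\x_1,\ldots,\x_d)\mapsto\cX\otimes\overline{\cX}$ is a polynomial map $\phi$. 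By Tarski--Seidenberg the image $\rE:=\phi(P)$, i.e. the set of rank-one product density tensors, is a compact semi-algebraic subset of $\rH_{\bn}$. By Definition \ref{defsepdt} we have $\mathrm{Sep}_{\bn}=\mathrm{conv}(\rE)$, and part \emph{(2)} of Theorem \ref{charsep} (or Carath\'eodory's theorem applied in $\R^{N^2(\bn)}$) shows that every element of $\mathrm{conv}(\rE)$ is a convex combination of at most $K:=N^2(\bn)$ members of $\rE$.

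Therefore $\mathrm{Sep}_{\bn}$ is the image under the polynomial map $\big((a_i),(\cB_i)\big)\mapsto\sum_{i=1}^K a_i\cB_i$ of the semi-algebraic set $\{(a_1,\ldots,a_K,\cB_1,\ldots,\cB_K):a_i\ge 0,\ \sum_i a_i=1,\ \cB_i\in\rE\}$, and a second use of Tarski--Seidenberg shows $\mathrm{Sep}_{\bn}$ is semi-algebraic. The set $\mathrm{Seps}_{n^{\times d}}$ is handled verbatim, now with the single sphere $\mathbb{S}^{2n-1}$ parametrizing $\x\in\C^n$ with $\|\x\|=1$, the polynomial map $\x\mapsto\x^{\otimes d}\otimes\bar\x^{\otimes d}$, and the cardinality bound $K'=\binom{n+d-1}{d}^2$ furnished by part \emph{(2)} of Theorem \ref{charseps}. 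The restrictions to $\rS^{2d}\R^n_{+,1}$ then follow immediately: $\rS^{2d}\R^n$ is a linear (hence algebraic) subspace of $\rH_{n^{\times d}}$, and the intersection of a semi-algebraic set with an algebraic set is semi-algebraic; alternatively, by Theorem \ref{charsepsym} the restricted set is $\mathrm{conv}\{\x^{\otimes 2d}:\x\in\R^n,\|\x\|=1\}$, the convex hull of the image of $\mathbb{S}^{n-1}$ under a real polynomial map, to which the identical argument applies.

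The one genuinely substantive point is that the convex hull of a compact semi-algebraic set is semi-algebraic---this fails for convex hulls of arbitrary sets, and is the crux of the argument. It holds here precisely because Carath\'eodory's theorem (or the sharper face-dimension counts of Theorems \ref{charsep}, \ref{charseps}, and \ref{charsepsym}) bounds the number of summands by a fixed integer, so that the convex hull is a single polynomial image of a semi-algebraic set and Tarski--Seidenberg applies. The remaining checks---that the parametrizing spheres are real algebraic and that the density-tensor maps are genuinely polynomial in real coordinates---are routine.
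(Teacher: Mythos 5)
Your proposal is correct and follows essentially the same route as the paper: a compact semi-algebraic set of rank-one generators, a fixed Carath\'eodory bound on the number of summands, and the Tarski--Seidenberg theorem to conclude that the convex hull (here the content of the paper's Proposition \ref{convprop}) is semi-algebraic. The only, immaterial, difference is that you exhibit the generator set parametrically as the polynomial image of a product of spheres, whereas the paper describes it implicitly as $\rH_{\bn,+,1}\cap(\Sigma_{(\bn,\bn)}\cup\{0\})$ via Lemma \ref{Segv} and the principal-minor characterization of positive semidefiniteness.
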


We first explain the notion of a semi-algebraic set and its importance in quantum information theory.   Denote by $\F[\x]$ the ring of polynomial over the field $\F$ with $n$ coordinate variables of $\x\in \F^n$.  For $p\in\F[\x]$ denote by $\rZ(p)=\{\x\in\F^n: \,p(\x)=0\}$, the zero set of $p$.   A set $\rV\subset \F^n$ is called an algebraic variety if $\rV=\cap_{l=1}^m \rZ(p_l)$.  Algebraic geometry studies various properties of algebraic varieties over the complex numbers $\C$ \cite{Har92}.

For $p\in\R[\x]$ denote $\rZ_+(p)=\{\x\in\R^n: p(\x)\ge 0\}$.    Observe that $\rZ(p)=\rZ_+(p)\cap \rZ_+(-p)$.  A basic semi-algebraic set $W\subset\R^n$ is given by $W=\cap_{l=1}^m  \rZ_+(p_l)$.  A semi-algebraic set is a finite union of basic semialgebraic sets. 
In quantum information one needs to consider semi-algebraic sets over $\C^n$.   We identify $\C^n$ with $\R^n\times\R^n\sim\R^{2n}$  where $\z=(\x,\y)=\x+\bi\y, \x,\y,\in\R^n$.
Observe that a complex algebraic variety is a real algebraic variety on $\R^{2n}$.

The first simple examples of semi-algebraic sets are $\rH_{\bn}, \rH_{\bn,+},\rH_{\bn,+,1}$.  It is enough to consider the sets $\rH_n, \rH_{n,+},\rH_{n,+,1}$.   Clearly $\rH_n=\rS^2\R^n\times\rA^2\R^n$.   Hence,  $\rH_n$ is a real algebraic set.  
Recall that a Hermitian matrix is positive semidefinite if and only its all the  principal minors are nonnegative.  Hence,  $\rH_{\bn,+}$ and $\rH_{\bn,+,1}$ are  basic semi-algebraic sets.  

We next recall the definition of positive partial transport density tensors introduced by Peres for $d=2$ \cite{Per96}:
\begin{definition}\label{PPT}
Let $2\le d, 2\1_d\le \bn\in\N^d$.   For $k\in[d]$ denote by $\rP\rT_k:\rH_{\bn}\to \rH_{\bn}$ the $k$-partial transpose:
\begin{equation}
\begin{aligned}
&\rP\rT_k(\cB)=\cC,  \cB=[b_{(i_1,\ldots,k_d)(j_1,\ldots,j_d)}], \cC=[c_{(i_1,\ldots,i_d)(j_1,\ldots,j_d)}], \\
& c_{(i_1,\ldots,i_{k-1}, i_k,i_{k+1},\cdots,i_d)(j_1,\ldots,j_{k-1}, j_k,j_{k+1},\cdots,j_d)}=
b_{(i_1,\ldots,i_{k-1}, j_k,i_{k+1},\cdots,i_d)(j_1,\ldots,j_{k-1}, i_k,j_{k+1},\cdots,j_d)},\\
&(i_1,\ldots,i_d),(j_1,\ldots,j_f)\in[\bn], \quad k\in[d].
\end{aligned}
\end{equation}
Denote by $\rH_{\bn,ppt,1}$ the biggest convex subset  of $\rH_{\bn,+,1}$ which is invariant under $\rP\rT_k$ for $k\in[d]$.  This convex subset is called the positive partial transpose set of density tensors.
\end{definition}
\begin{proposition}\label{ptprop} Let $2\le d,  2\1_d\le \bn\in\N^d$.  
\begin{enumerate}
\item The set $\rP\rT_1,\ldots,\rP\rT_d$ is a commutative group of involutions acting on $\rH_{\bn}$ which preserve trace.
\item 
$\rP\rT_1\cdots\rP\rT_d(\cB)=\overline{\cB}$.
\item
\begin{equation}
\rP\rT_k\left(\otimes_{j=1}^d \x_j)\otimes(\otimes_{j=1}^d \bar\x_j)\right)=(\otimes_{j=1}^{k-1}\x_j)\otimes\bar\x_k\otimes (\otimes_{j=k+1}^d\x_j)\otimes
(\otimes_{j=1}^{k-1}\bar\x_j)\otimes\x_k\otimes (\otimes_{j=k+1}^d\bar\x_j),
\end{equation}
for $k\in[d]$.
\item The set $\rH_{\bn,ppt,1}$ is basic semi-algebraic.
\end{enumerate}
\end{proposition}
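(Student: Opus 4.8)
The plan is to dispatch parts \emph{(1)}--\emph{(3)} by direct index manipulation and to reserve the substantive work for part \emph{(4)}. For \emph{(1)} I would first check that each $\rP\rT_k$ maps $\rH_{\bn}$ into itself: writing $\cC=\rP\rT_k(\cB)$, the entry $c_{\bi,\bj}$ is obtained from $\cB$ by interchanging the $k$-th coordinates of $\bi$ and $\bj$, and swapping the roles of $\bi$ and $\bj$ on both sides shows $c_{\bi,\bj}=\overline{c_{\bj,\bi}}$ when $\cB$ is Hermitian. Each $\rP\rT_k$ is visibly an involution, and for $k\ne l$ the maps $\rP\rT_k,\rP\rT_l$ act on disjoint coordinate slots and hence commute, so they generate a commutative group. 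Trace is preserved because on the diagonal $\bi=\bj$ the swap of the $k$-th coordinates is trivial, giving $\rP\rT_k(\cB)_{\bi,\bi}=b_{\bi,\bi}$. Part \emph{(2)} then follows since $\rP\rT_1\cdots\rP\rT_d$ interchanges $\bi$ and $\bj$ in every slot, producing $b_{\bj,\bi}=\overline{b_{\bi,\bj}}$, i.e.\ $\overline{\cB}$. For \emph{(3)} I would substitute the explicit entry $b_{\bi,\bj}=\bigl(\prod_{l=1}^d x_{i_l,l}\bigr)\bigl(\prod_{l=1}^d \bar x_{j_l,l}\bigr)$ of the rank-one product density tensor, carry out the swap of the $k$-th coordinates, and match the result term by term against the entry of the claimed tensor; this is a one-line check.

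The substantive part is \emph{(4)}, and the first task is to make $\rH_{\bn,ppt,1}$ explicit. Writing $\rP\rT_S:=\prod_{k\in S}\rP\rT_k$ for $S\subseteq[d]$, I claim
\[
\rH_{\bn,ppt,1}=\bigl\{\cB\in\rH_{\bn,+,1}:\ \rP\rT_S(\cB)\in\rH_{\bn,+}\ \text{for every}\ S\subseteq[d]\bigr\}.
\]
Call this set $W$. Invariance of $W$ under each $\rP\rT_k$ uses the group law from \emph{(1)}: since $\rP\rT_S\,\rP\rT_k=\rP\rT_{S\triangle\{k\}}$, applying $\rP\rT_k$ to $\cB\in W$ merely permutes the list of constraints among themselves while preserving trace, so $\rP\rT_k(\cB)\in W$; and $W$ is convex because each $\rP\rT_S$ is linear and $\rH_{\bn,+}$, $\rH_{\bn,+,1}$ are convex. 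Maximality is the crux: if $W'\subseteq\rH_{\bn,+,1}$ is convex and invariant under all $\rP\rT_k$, then for every $\cB\in W'$ each $\rP\rT_S(\cB)$ lies in $W'\subseteq\rH_{\bn,+,1}\subseteq\rH_{\bn,+}$, whence $\cB\in W$; thus $W'\subseteq W$, so $W$ is the largest such subset.

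It then remains to see $W$ is basic semi-algebraic. Each $\rP\rT_S$ is a fixed linear map that merely permutes tensor entries, so after unfolding $\rP\rT_S(\cB)$ to a Hermitian matrix its principal minors are real polynomials in the real and imaginary parts of the entries of $\cB$; positive semidefiniteness of $\rP\rT_S(\cB)$ is exactly nonnegativity of all these minors, a basic semi-algebraic condition of the form $\bigcap_l\rZ_+(p_l)$. Intersecting these finitely many conditions (one family for each of the $2^d$ subsets $S$) with $\rH_{\bn,+,1}$, which is already basic semi-algebraic, presents $W$ as a finite intersection of basic semi-algebraic sets, hence basic semi-algebraic. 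The main obstacle I anticipate is not this semi-algebraic bookkeeping but pinning down the correct description of $\rH_{\bn,ppt,1}$ and verifying both its invariance and its maximality cleanly; once the identity $\rH_{\bn,ppt,1}=W$ is in hand, part \emph{(4)} reduces to the principal-minor criterion for positive semidefiniteness.
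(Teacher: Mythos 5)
Your proposal is correct, and for parts \emph{(1)}--\emph{(3)} it matches the paper, whose proof simply declares them straightforward; your index computations (Hermitianity preserved, involutivity, commutativity from disjoint slots, trace preserved on the diagonal, the full product giving $\overline{\cB}$, and the entrywise check on rank-one tensors) are exactly the intended verifications. For part \emph{(4)} you go further than the paper does: its entire argument is that $\rH_{\bn,ppt,1}$ is cut out of $\rH_{\bn,+,1}$ by ``a number of linear conditions,'' which glosses over two points you handle explicitly. First, since $\rH_{\bn,ppt,1}$ is \emph{defined} as the biggest convex $\rP\rT_k$-invariant subset of $\rH_{\bn,+,1}$, one must identify it with a concrete set before testing semi-algebraicity; your identity $\rH_{\bn,ppt,1}=W=\{\cB\in\rH_{\bn,+,1}:\ \rP\rT_S(\cB)\in\rH_{\bn,+}\ \text{for all}\ S\subseteq[d]\}$, proved via the group law $\rP\rT_S\,\rP\rT_k=\rP\rT_{S\triangle\{k\}}$ together with your maximality argument, supplies exactly this, and incidentally shows the ``biggest'' such set is well defined (a union of convex invariant sets need not be convex, so this is not automatic). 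Second, the constraints are not literally linear --- positive semidefiniteness of each partial transpose is a polynomial condition --- and your route through the principal minors of the unfolded Hermitian matrix $\rP\rT_S(\cB)$, each a real polynomial in the real and imaginary parts of the entries of $\cB$, is what correctly lands the set in the paper's definition of basic semi-algebraic as a finite intersection $\cap_l \rZ_+(p_l)$. Two harmless remarks: half of your $2^d$ constraint families are redundant, since $\rP\rT_{[d]\setminus S}(\cB)=\overline{\rP\rT_S(\cB)}$ by part \emph{(2)} and conjugation preserves positive semidefiniteness (and $S=\emptyset$ is already subsumed in $\cB\in\rH_{\bn,+,1}$); and in part \emph{(1)} you rightly read the paper's phrase that ``the set $\rP\rT_1,\ldots,\rP\rT_d$ is a commutative group'' as saying these involutions \emph{generate} one, which is the only sensible reading.
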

\begin{proof} The proof of {(1)-(3)} is straightforward.  

\noindent \emph{(4)} The set $\rH_{\bn,ppt,1}$ is obtained from $\rH_{\bn}$ by imposing a number of polynomial inequalities.  As $\rH_{\bn,+,1}$ is basic semi-algebraic it follows that is basic semi-algebraic.
\end{proof}

An important observation of Peres \cite{Per96} (for $d=2$) is: 
\begin{corollary}  Let $2\le d, 2\1_d\le \bn\in\N^d$.   Then $\mathrm{Sep}_{\bn}\subset  \rH_{\bn,ppt,1}$. 
\end{corollary}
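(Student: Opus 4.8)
The plan is to reduce the inclusion to the generators of $\mathrm{Sep}_{\bn}$ and then read off the action of each partial transpose from part \emph{(3)} of Proposition \ref{ptprop}. By Definition \ref{defsepdt}, $\mathrm{Sep}_{\bn}$ is the convex hull of the rank-one product density tensors $\cX\otimes\overline{\cX}$ with $\cX=\otimes_{j=1}^d\x_j\in\Pi_{\bn}$. Since each $\rP\rT_k$ is linear, it suffices to compute its action on these generators and verify that the image stays inside $\mathrm{Sep}_{\bn}$.

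First I would carry out the one key computation. Fix $k\in[d]$ and $\cX=\otimes_{j=1}^d\x_j\in\Pi_{\bn}$, and set
\[
\cX':=(\otimes_{j=1}^{k-1}\x_j)\otimes\bar\x_k\otimes(\otimes_{j=k+1}^d\x_j),
\]
the product state obtained from $\cX$ by conjugating only its $k$-th factor. Then part \emph{(3)} of Proposition \ref{ptprop} says precisely
\[
\rP\rT_k(\cX\otimes\overline{\cX})=\cX'\otimes\overline{\cX'},
\]
because the second half of the displayed formula there, $(\otimes_{j=1}^{k-1}\bar\x_j)\otimes\x_k\otimes(\otimes_{j=k+1}^d\bar\x_j)$, is exactly $\overline{\cX'}$. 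Since $\|\bar\x_k\|=\|\x_k\|=1$, the state $\cX'$ again lies in $\Pi_{\bn}$, so $\cX'\otimes\overline{\cX'}$ is once more a rank-one product density tensor. In other words, each $\rP\rT_k$ merely permutes the generators of $\mathrm{Sep}_{\bn}$ among themselves.

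Then I would globalize by linearity and convexity. If $\cR=\sum_i p_i\,\cX_i\otimes\overline{\cX_i}$ with $p_i\ge 0$, $\sum_i p_i=1$ and $\cX_i\in\Pi_{\bn}$, then $\rP\rT_k(\cR)=\sum_i p_i\,\cX_i'\otimes\overline{\cX_i'}$ is again a convex combination of rank-one product density tensors, hence $\rP\rT_k(\cR)\in\mathrm{Sep}_{\bn}\subset\rH_{\bn,+,1}$ for every $k\in[d]$. Thus $\mathrm{Sep}_{\bn}$ is a convex subset of $\rH_{\bn,+,1}$ invariant under every $\rP\rT_k$. By Definition \ref{PPT}, $\rH_{\bn,ppt,1}$ is the largest convex subset of $\rH_{\bn,+,1}$ with exactly this invariance, so the maximality gives $\mathrm{Sep}_{\bn}\subseteq\rH_{\bn,ppt,1}$, which is the assertion.

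There is essentially no hard step: the entire content is the single index-bookkeeping identity $\rP\rT_k(\cX\otimes\overline{\cX})=\cX'\otimes\overline{\cX'}$, already supplied by Proposition \ref{ptprop}\emph{(3)}. The only point that needs care is recognizing that conjugating one tensor factor of a unit product state yields another unit product state, so that \emph{separability itself} (and not merely positive semidefiniteness) is preserved by each $\rP\rT_k$; everything else is the formal passage from generators to their convex hull.
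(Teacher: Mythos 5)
Your proof is correct and is exactly the argument the paper intends: the corollary is stated immediately after Proposition \ref{ptprop} without proof precisely because it follows from part \emph{(3)} of that proposition (partial transpose conjugates one factor of a product state, hence permutes the rank-one product density tensors), together with linearity of $\rP\rT_k$, convexity of $\mathrm{Sep}_{\bn}\subset\rH_{\bn,+,1}$, and the maximality of $\rH_{\bn,ppt,1}$ in Definition \ref{PPT}. No gaps; your observation that separability itself, not merely positivity, is preserved is the right reading of the generator computation.
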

It was shown in \cite{Hor96} that for $\bn=(2,2), (2,3)$ $\mathrm{Sep}_{\bn}=\rH_{\bn,ppt,1}$, and in \cite{Hor97} that for $\bn=(2,4), (3,3)$ $\mathrm{Sep}_{\bn}\not\subset\rH_{\bn,ppt,1}$.

The fundamental Tarski-Seidenberg Theorem states
\begin{theorem}\label{TSthm}  Let $X\subset \R^n$ be semi-algebraic.  Assume that $F:\R^n \to \R^m$ is a polynomial map.  The $F(X)$ is semi-algebraic.
\end{theorem}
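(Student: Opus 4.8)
The plan is to reduce the image statement to the one–variable projection theorem and then to prove the latter through the theory of real roots of parametrized univariate polynomials.

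\textbf{Reduction to projection.} First I would realize $F(X)$ as a projection. Consider the graph
\begin{equation*}
\Gamma=\{(\x,\y)\in\R^n\times\R^m:\ \x\in X,\ y_1=F_1(\x),\dots,y_m=F_m(\x)\}.
\end{equation*}
Since $X$ is semi-algebraic and each condition $y_i-F_i(\x)=0$ is a polynomial equation, $\Gamma$ is semi-algebraic in $\R^{n+m}$: it is the intersection of the cylinder $X\times\R^m$ (a finite boolean combination of sets $\rZ_+(p)$) with the varieties $\rZ(y_i-F_i)$. Then $F(X)=\pi(\Gamma)$, where $\pi:\R^{n+m}\to\R^m$ drops the first $n$ coordinates. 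Writing $\pi$ as a composition of $n$ single-coordinate projections, it suffices to prove: the projection of a semi-algebraic set $S\subset\R^{N+1}$ onto $\R^{N}$ (forgetting the last coordinate $t$) is semi-algebraic. As projection commutes with finite unions, I may assume $S$ is a \emph{basic} semi-algebraic set, described by finitely many conditions of the form $p_i(\x,t)>0$, $p_i(\x,t)\ge 0$, or $p_i(\x,t)=0$, with $p_i\in\R[\x][t]$.

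\textbf{The projection lemma.} The image $\pi(S)$ is exactly the set of parameters $\x\in\R^{N}$ for which the univariate system $p_i(\x,t)\ (>,\ge,=)\ 0$ admits a real solution $t$. The key point is that solvability depends only on the \emph{combinatorial arrangement} of the real roots of the finite family $p_1(\x,\cdot),\dots,p_s(\x,\cdot)$ on the line: the ordered list of distinct roots together with the sign of every $p_i$ at each root, on each open interval between consecutive roots, and on the two unbounded rays. For each fixed arrangement one reads off, by a purely combinatorial test, whether some $t$ realizes all prescribed signs. Thus $\pi(S)$ is the union, over the finitely many arrangement types passing the test, of the parameter sets producing that arrangement.

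\textbf{Encoding arrangements by polynomials in $\x$.} It remains to show that each arrangement type carves out a semi-algebraic subset of parameter space; this is where Sturm's theorem and subresultants enter. Over any stratum of $\R^{N}$ on which the degrees $\deg_t p_i(\x,\cdot)$ are constant, the number of real roots of each $p_i$, the number of common roots of $p_i$ and $p_j$, and more generally the full sign-determination data of the family (the sign of each $p_i$ at the roots of each $p_j$) are computed from the signs of the principal subresultant coefficients and of Sturm/Sylvester--Habicht sequence entries of the $p_i$ and their pairwise products. These quantities are polynomials in the $t$-coefficients of the $p_i$, hence polynomials in $\x$. Consequently each arrangement type is cut out by finitely many polynomial sign conditions in $\x$, so it is semi-algebraic, and the finite union $\pi(S)$ is semi-algebraic. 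The reduction of the first step then gives that $F(X)$ is semi-algebraic.

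\textbf{Main obstacle.} The delicate step is the last one: the subresultant/Sturm machinery computes root data cleanly only when the $t$-degrees do not drop. I would therefore first partition $\R^{N}$ into the finitely many semi-algebraic strata on which each leading $t$-coefficient either vanishes or not, and carry out the root-counting analysis stratum by stratum. Managing this degree-degeneration bookkeeping---so that the sign data stays polynomial in $\x$ across all strata and the arrangement test is applied correctly on each---is the technical heart of the argument; everything else is formal. I note that an equivalent route is Collins's cylindrical algebraic decomposition, which builds a cylindrical stratification of $\R^{N+1}$ adapted to the $p_i$ and obtains the projection by deleting the last coordinate of the cells.
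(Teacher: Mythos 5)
The paper offers no proof of this theorem at all: it is stated as the classical Tarski--Seidenberg theorem, with only the later remark that it ``is proved by elimination of quantifiers'' and a pointer to Collins's cylindrical algebraic decomposition \cite{Col75}. Your sketch is, in outline, exactly the standard proof that those references carry out, so there is nothing to contrast with in the paper itself; I can only assess your proposal on its merits, and it is correct in its essentials. The graph reduction $F(X)=\pi(\Gamma)$ is right, the reduction to eliminating a single variable and to basic semi-algebraic sets is right (projection commutes with unions), and finiteness of arrangement types follows from the degree bound on the number of roots, which you should state explicitly. You also correctly locate the technical heart: making the root-arrangement data semi-algebraic in the parameters, uniformly across the strata where leading $t$-coefficients vanish. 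One caution there: Sturm/Sylvester--Habicht sequences and subresultant coefficients of the $p_i$ and of products $p_i' p_j$ directly yield \emph{Tarski queries}, i.e.\ signed counts of roots, not immediately the full \emph{ordered} list of roots with the sign of every $p_j$ at each individual root; recovering that finer data requires either Thom-encoding-style sign determination or the Cohen--H\"ormander induction on degree via parametrized remainder sequences. Your plan accommodates this (the machinery you name is the standard one, applied stratum by stratum), but in a complete write-up this step is substantial and should not be dismissed as bookkeeping. As a blind proposal it is a faithful and workable version of the classical argument the paper implicitly invokes.
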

\begin{definition}\label{convk}  Let $X\subset \R^n$ be a compact set.  For $k\in\N$ denote by $\mathrm{conv}_k(X)$ the convex hull spanned by $k$-points in $X$.  Let 
$\mathrm{conv}\,X$ be the convex hull of $X$.
\end{definition}
Assume that $\dim \mathrm{conv}\,X=m$.  Caratheodory's theorem yields that $\mathrm{conv}_{m+1}(X)=\mathrm{conv}\,X$.  The following result is known and we bring its proof for completeness:
\begin{proposition}\label{convprop}
Let $X\subset \R^n$ be a compact semi-algebraic set.  Assume that $m=\dim \mathrm{conv}\,X$.  Then for $k\in[m+1]$ the convex set $\mathrm{conv}_k X$ is  semi-algebraic.  In particular,  $\mathrm{conv}\, X$ is  semi-algebraic. 
\end{proposition}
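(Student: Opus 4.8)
The plan is to realize $\mathrm{conv}_k(X)$ as the image of an explicitly constructed semi-algebraic set under a single polynomial map, and then invoke the Tarski–Seidenberg Theorem (Theorem \ref{TSthm}). Since Caratheodory's theorem already reduces $\mathrm{conv}\,X$ to $\mathrm{conv}_{m+1}(X)$, it suffices to handle $\mathrm{conv}_k(X)$ for a fixed $k$.

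First I would introduce the weight–point parameter space. Set $N=k(n+1)$ and consider the variables $(\lambda_1,\dots,\lambda_k,\x_1,\dots,\x_k)\in\R^k\times(\R^n)^k=\R^N$, and define
\begin{equation*}
Y=\Big\{(\lambda_1,\dots,\lambda_k,\x_1,\dots,\x_k):\ \x_i\in X,\ \lambda_i\ge 0\ (i\in[k]),\ \textstyle\sum_{i=1}^k\lambda_i=1\Big\}.
\end{equation*}
I claim $Y$ is semi-algebraic. Indeed, $X\subset\R^n$ is semi-algebraic by hypothesis, so the product $X^k\subset(\R^n)^k$ is semi-algebraic (a finite Cartesian product of semi-algebraic sets is cut out by the same defining polynomials read on disjoint blocks of coordinates, and finite unions and intersections of basic semi-algebraic sets are preserved). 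The simplex constraints $\lambda_i\ge 0$ are each of the form $\rZ_+(\lambda_i)$, and the affine constraint $\sum_i\lambda_i=1$ is $\rZ(p)=\rZ_+(p)\cap\rZ_+(-p)$ for $p=\sum_i\lambda_i-1$. Intersecting these basic semi-algebraic conditions with $\R^k\times X^k$ exhibits $Y$ as semi-algebraic; since $X$ is compact and the weight simplex is compact, $Y$ is compact as well.

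Next I would define the evaluation map
\begin{equation*}
F\colon\R^N\to\R^n,\qquad F(\lambda_1,\dots,\lambda_k,\x_1,\dots,\x_k)=\sum_{i=1}^k\lambda_i\x_i,
\end{equation*}
whose every coordinate is a polynomial (bilinear) in the entries of the $\lambda_i$ and $\x_i$, so $F$ is a polynomial map. By the definition of $\mathrm{conv}_k(X)$ as the set of convex combinations of $k$ points of $X$ (points allowed to coincide), one has exactly $F(Y)=\mathrm{conv}_k(X)$. The Tarski–Seidenberg Theorem then gives that $F(Y)$ is semi-algebraic, i.e. $\mathrm{conv}_k(X)$ is semi-algebraic for every $k\in[m+1]$. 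Finally, since $m=\dim\mathrm{conv}\,X$, Caratheodory's theorem yields $\mathrm{conv}_{m+1}(X)=\mathrm{conv}\,X$; taking $k=m+1$ gives that $\mathrm{conv}\,X$ is semi-algebraic.

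The argument has no deep obstacle: essentially all of the content is packaged into Tarski–Seidenberg, and the only points requiring care are the two routine closure facts — that a finite Cartesian product of semi-algebraic sets is semi-algebraic, and that the image of a semi-algebraic set under a polynomial map is semi-algebraic (the latter being precisely Theorem \ref{TSthm}). The main thing to get right is the bookkeeping that makes $\mathrm{conv}_k(X)$ literally equal to $F(Y)$ for the $F$ and $Y$ above, so that no separate closedness or convexity verification is needed beyond the cited theorem and the stated Caratheodory reduction.
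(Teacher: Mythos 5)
Your proof is correct and follows essentially the same route as the paper: realize $\mathrm{conv}_k(X)$ as the image of a compact semi-algebraic parameter set (weights $\times$ $X^{\times k}$) under a polynomial evaluation map, apply Tarski--Seidenberg, and finish with the Caratheodory reduction $\mathrm{conv}_{m+1}(X)=\mathrm{conv}\,X$. The only difference is cosmetic: the paper encodes the weight simplex by taking $\bu\in\mathbb{S}^{k-1}$ and weights $u_i^2$, so its parameter set is $\mathbb{S}^{k-1}\times X^{\times k}$, whereas you impose the simplex constraints $\lambda_i\ge 0$, $\sum_i\lambda_i=1$ directly as basic semi-algebraic conditions---both are equally valid.
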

\begin{proof}
Clearly,  the $k-1$ dimensional sphere $\mathbb{S}^{k-1}\subset \R^k$ is an algebraic set.  Recall that a product of two semi-algebraic sets is semi-algebraic \cite{BCR98}.
Hence, the set $Y=\mathbb{S}^{k-1}\times X^{\times k}$ is semi-algebraic.  Let $\y=(\bu,\x_1,\ldots,\x_k)\in Y$, where $\bu=(u_1,\ldots,u_k)^\top\in \R^k$ and $\x_i\in\X$ for $i\in[k]$.  Consider the map $p_k:\R^k\times \R^{kn}\to\R^n$ given by 
\begin{equation*}
p_k(\y)=\sum_{i=1}^k u_i^2\x_i.
\end{equation*}
Clearly, $p_k$ is a polynomial map.  Observe that $\mathrm{conv}_k(X)=p_k(Y)$.
Hence, the Tarski-Seidenberg Theorem states that  $\mathrm{conv}_k X$ is  semi-algebraic.    As $\mathrm{conv}_{m+1}(X)=\mathrm{conv}\,X$ we deduce that $\mathrm{conv}\,X$ is semi-algebraic.
\end{proof}
The folowing result is well known and we bring its proof for completeness:
\begin{lemma}\label{Segv}  The set of rank-one tensors plus the zero tensor, denoted as $\Sigma_{\bn}\cup\{0\}$,  is a complex algebraic variety in $\otimes_{j=1}^{d}\C^{n_j}$.
\end{lemma}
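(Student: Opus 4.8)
The plan is to exhibit an explicit finite family of quadratic polynomials whose common zero locus is exactly $\Sigma_{\bn}\cup\{\zero\}$, namely the $2\times 2$ minors of the single-mode flattenings. For each $k\in[d]$ I would reindex the entries of $\cX=[x_{i_1,\ldots,i_d}]$ as a matrix $M_k(\cX)\in\C^{n_k\times (N(\bn)/n_k)}$ whose rows are indexed by $i_k$ and whose columns are indexed by the remaining tuple $(i_1,\ldots,i_{k-1},i_{k+1},\ldots,i_d)$; this is a linear relabeling of coordinates, so each entry of $M_k(\cX)$ is an entry of $\cX$. Let $\rV$ denote the common zero set of all $2\times 2$ minors of all of $M_1(\cX),\ldots,M_d(\cX)$. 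Each such minor is a homogeneous quadratic in the coordinates $x_{i_1,\ldots,i_d}$, so $\rV$ is a complex algebraic variety by definition; it then remains to prove $\rV=\Sigma_{\bn}\cup\{\zero\}$.

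First I would check the easy inclusion $\Sigma_{\bn}\cup\{\zero\}\subseteq\rV$. If $\cX=\otimes_{j=1}^d\x_j$, then for each $k$ the flattening factors as an outer product $M_k(\cX)=\x_k\,\bw_k^\top$, where $\bw_k$ is the vectorization of $\otimes_{j\ne k}\x_j$; hence $\rank M_k(\cX)\le 1$ and all its $2\times 2$ minors vanish. The zero tensor trivially lies in $\rV$.

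The substantive step, and the main obstacle, is the reverse inclusion $\rV\subseteq\Sigma_{\bn}\cup\{\zero\}$: I must show that vanishing of these quadratics forces $\cX$ to be zero or genuinely rank one, with no spurious solutions, and in particular that the single-mode flattenings alone already suffice. I would argue by induction on $d$. For $d=2$ the only flattening is $M_1(\cX)=\cX$ itself, and a matrix all of whose $2\times 2$ minors vanish has rank $\le 1$, i.e. is zero or a rank-one tensor. For $d\ge 3$, take $\zero\ne\cX\in\rV$. Since $\rank M_1(\cX)\le1$ and $M_1(\cX)\ne 0$, we may write $M_1(\cX)=\x_1\bw^\top$ with $\x_1\ne\zero$, which at the tensor level reads $\cX=\x_1\otimes\cY$ for a nonzero $\cY\in\otimes_{j=2}^d\C^{n_j}$. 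The key computation is that for each $k\ge 2$ the mode-$k$ flattening of $\cX$ equals $M_k^{\cY}(\cY)\otimes\x_1^\top$ up to a reordering of columns, where $M_k^{\cY}$ is the corresponding flattening of $\cY$; since $\x_1\ne\zero$, the Kronecker structure gives $\rank M_k^{\cY}(\cY)=\rank M_k(\cX)\le 1$. Thus $\cY$ is a nonzero $(d-1)$-mode tensor all of whose single-mode flattenings have rank $\le 1$, so by the inductive hypothesis $\cY=\otimes_{j=2}^d\x_j$ is rank one, whence $\cX=\otimes_{j=1}^d\x_j\in\Sigma_{\bn}$.

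I expect the only real care to be needed in the flattening/Kronecker bookkeeping of the inductive step, namely tracking precisely how the mode-$k$ flattening of $\cX=\x_1\otimes\cY$ relates to that of $\cY$, and in the accompanying observation that no higher minors or multi-mode flattenings are required. As an alternative one could simply invoke that $\Sigma_{\bn}\cup\{\zero\}$ is the affine cone over the Segre variety, whose defining equations are classical \cite{Lan12}; but since the lemma is stated for completeness, the self-contained minor argument above is preferable.
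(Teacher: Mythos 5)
Your proposal is correct and takes essentially the same route as the paper's proof, which likewise characterizes $\Sigma_{\bn}\cup\{0\}$ as the common zero set of the $2\times 2$ minors of the single-mode unfoldings and then peels off rank-one factors mode by mode ($\cT=\x_1\otimes\cT_1$, then $\cT=\x_1\otimes\x_2\otimes\cT_2$, etc.). Your induction on $d$ is just a cleaner packaging of the paper's iterative ``continue to unfold'' step, with the added merit of making explicit the Kronecker identity $M_k(\x_1\otimes\cY)=M_k^{\cY}(\cY)\otimes\x_1^\top$ (up to column reordering) that the paper uses implicitly.
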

\begin{proof}
Let $0\ne\cT\in\otimes_{j=1}^d \C^{n_j}.$  
View $\cT$ as a matrix in $\C^{n_1}\otimes(\otimes_{j=2}^d \C^{n_j})$.  This is called an unfolding  with respect to $\C^{n_1}$.  Assume that all $2\times 2$ minors of this matrix are zero.  Then this matrix is a rank-one matrix of the form $\x_1\otimes \cT_1$, where $\cT_1\in\otimes_{j=2}^d\C^{n_j}$.    If $d=2$ it follows that $\cT$ has rank-one.  Suppose that $d>2$.  Unfold $\cT$ the $j=2$ to view $\cT$ is a matrix on $\C^{n_2}\otimes (\otimes_{j\ne 2}\C^{n_j})$.  Assume now that all $2\times 2$ minors of this matrix are zero.  It now follows that $\cT=\x_1\otimes\x_2\otimes \cT_2$.  Continue to unfold in each $\ge 3$ and assume that all  $2\times 2$ minors of this matrix are zero.  We conclude that $\cT$ has rank one.  Observe finally that a zero tensor satisfies all the above quadratic equations.
\end{proof}
\emph{Proof of Theorem \ref{sasep}.}  Recall that the extreme points of $\mathrm{Sep}_{\bn}$,  denoted as $\rE_{\bn}$, are rank-one tensors in $\rH_{\bn,+,1}$.
Clearly, $\rE_{\bn}=\rH_{\bn,+,1}\cap (\Sigma_{(\bn,\bn)}\cup\{0\})$. Hence,  $\rE_{\bn}$ is a basic semi-algebraic set.   Proposition \ref{convprop} yields that $\mathrm{Sep}_{\bn}$ is semi-algebraic in $\rH_{\bn}$.

Similar arguments apply to strongly separable and real strongly separable density tensors in $\rB\rH_{n^{\times d},+,1}$ and $\rS^{2d}\R^n_{+,1}$ respectively.\qed\\ 

Recall that Tarski-Seidenberg Theorem is proved by elimination of quantifiers.  The proof is constructive,  but the resulting algorithm has a  high computational complexity. Collins  \cite{Col75} introduced the algorithm of cylindrical algebraic decomposition, which allows quantifier elimination over the reals in double exponential time. This complexity is optimal.

Observe that the $\dim\rH_{(2,4)}=64$.  It seems to be impossible to use a brute force algorithm to find semi-algebraic characterization of $(2,4)$ separable states.  The dimension of $\rS^4\R^3$ is ${2+4-1\choose 4}=5$.   
Hence, it seems possible to find a semi-algebraic characterization of real strongly separated density tensors in $\rS^4\R^3_{+,1}$ with computer help.

 \section{Complexity results on tensors, entanglement and separablity}\label{sec:comp}
The main result of this section is:
\begin{theorem}\label{polthm} Let $1\le n\in\N$ be fixed.  The following 
quantities are poly-time computable in $d\in\N$:
\begin{enumerate}
\item The norms $\|\cdot\|_{\infty,\F}$ and $\|\cdot\|_{1,\F}$ on $\rS^d\F^n$.
\item The norms  $\|\cdot\|_{\infty,\F}$ and $\|\cdot\|_{1,\F}$ for $\rS^d\F^n\otimes\rS^d\F^n$.
\end{enumerate}
\end{theorem}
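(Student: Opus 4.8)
The plan is to encode every object as a low-dimensional polynomial optimization problem and to exploit that, for fixed $n$, both the ambient dimension of the relevant symmetric space and the number of optimization variables stay under control. Since $\dim\rS^d\F^n=\binom{n+d-1}{d}=O(d^{\,n-1})$, the spaces $\rS^d\F^n$, $\rS^d\F^n\otimes\rS^d\F^n$ and $\rB\rH_{n^{\times d}}$ all have dimension polynomial in $d$, so a tensor in any of them is stored with polynomially many coefficients. I would organize the proof around the three subsections already set up: first the coding of tensors as forms, then the approximation of the spectral-type norms, then the dual (nuclear-type) norms by convex duality.

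For the spectral norm on $\rS^d\F^n$ (part (1)), Banach's theorem \eqref{Banthm} identifies $\|\cS\|_{\infty,\F}$ with $\max_{\|\x\|=1}\pm\Re\langle\otimes^d\x,\cS\rangle$, i.e. the maximum of the degree-$d$ form $f_\cS$ over the unit sphere of $\F^n$; for $\F=\C$ one instead maximizes the real polynomial $|f_\cS|^2=f_\cS\overline{f_\cS}$ of degree $2d$ over the sphere of $\C^n\cong\R^{2n}$. In either case this is the maximization of a polynomial of degree $O(d)$ over a sphere in a \emph{fixed} number of real variables ($n$ or $2n$). Here I would invoke the critical-point machinery of real algebraic geometry: the Lagrange system $\nabla f=\lambda\x,\ \|\x\|^2=1$ is zero-dimensional with Bézout number $O(d^{\,n})$, polynomial in $d$; computing a rational univariate representation of its solutions and extracting the largest critical value isolates $\|\cS\|_{\infty,\F}$ to precision $\varepsilon$ in time polynomial in $d$, in the input bit-size, and in $\log(1/\varepsilon)$. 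Parts (2) and (3) reduce to the same kind of problem: for $\rS^d\F^n\otimes\rS^d\F^n$ the partially symmetric form of Banach's theorem (\cite{Fri13}) gives $\|\cT\|_{\infty,\F}=\max_{\|\x\|=\|\y\|=1}|\langle\cT,\x^{\otimes d}\otimes\y^{\otimes d}\rangle|$, an optimization over two fixed-dimensional spheres, and for $\rB\rH_{n^{\times d}}$ the definition \eqref{defbspecnrm} writes $\|\cB\|_{bspec}=\max_{\|\x\|=1}|\langle\x^{\otimes d},\cB\x^{\otimes d}\rangle|$, the maximum over the sphere of $\C^n$ of a real polynomial of degree $2d$ in $2n$ real variables. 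All three are handled by the identical fixed-dimension critical-point algorithm.

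The nuclear-type norms I would obtain by convex duality rather than from the minimal-decomposition formulas \eqref{FLBthm}, \eqref{charnucb}, which are not directly tractable. Since $\|\cdot\|_{1,\F}$ (resp. $\|\cdot\|_{bnuc}$) is the dual of $\|\cdot\|_{\infty,\F}$ (resp. $\|\cdot\|_{bspec}$), we have $\|\cS\|_{1,\F}=\max\{\langle\cS,\cT\rangle:\cT\in\rS^d\F^n,\ \|\cT\|_{\infty,\F}\le1\}$, the maximum of a linear functional over a convex body of dimension $N=O(d^{\,n-1})$. The spectral-norm algorithm of the previous step furnishes a polynomial-time separation oracle: given $\cT$, evaluate $\|\cT\|_{\infty,\F}$ and, when it exceeds $1$, return the optimal rank-one state $\otimes^d\x^\star$, whose functional $\langle\otimes^d\x^\star,\cdot\rangle=1$ separates $\cT$ from the unit spectral ball. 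The body is moreover well-rounded: by \eqref{infty21in} it contains the unit Euclidean ball, while the lower bound $\alpha_{\infty,n^{\times d},s,\C}\ge\binom{n+d-1}{d}^{-1/2}$ from \eqref{lualbet'in} confines it to a ball of radius $O(d^{(n-1)/2})$. The ellipsoid method then computes $\|\cS\|_{1,\F}$ to precision $\varepsilon$ in time polynomial in $N$, in the oracle cost, and in $\log(1/\varepsilon)$. The same argument gives $\|\cdot\|_{bnuc}$ on $\rB\rH_{n^{\times d}}$, using the b-spectral oracle together with the inclusions from \eqref{bspecin} and \eqref{specin} and the conditioning bound $\|\cB\|_{bspec}\ge\binom{n+d-1}{d}^{-3/2}\|\cB\|$ from part (5) of Corollary \ref{charsepscb}; strong separability of a bi-symmetric density tensor is then decided by testing $\|\cR\|_{bnuc}=1$ via Theorem \ref{charseps}, and separability of a symmetric density tensor by $\|\cR\|_{1,\R}=1$ via Theorem \ref{charsepsym}.

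The main obstacle is the fixed-dimension polynomial optimization of the second paragraph: one must guarantee that the critical-point / rational-univariate-representation computation runs in time polynomial in the degree $d$ (for a fixed number of variables) with controlled bit growth, and that the isolated real root approximating the maximum is certified to the required precision. Everything else — the polynomial size of the encodings, the separation oracle, and the well-roundedness needed for the ellipsoid method — follows from the dimension count $\binom{n+d-1}{d}=O(d^{\,n-1})$ and the norm inequalities already established. Care is also needed in the complex case to pass from the holomorphic form $f_\cS$ to the genuine real objective $|f_\cS|^2$ without altering the optimizer set, and to check that the constants hidden in the $O(\cdot)$ depend only on $n$.
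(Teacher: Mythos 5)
Your proposal is correct in outline, but it follows a genuinely different route from the paper. The paper never solves a polynomial optimization problem exactly: for the spectral-type norms it uses the explicit $\frac{1}{m}$-covering grids $\rC(m,n)$ of the sphere (Lemma \ref{PGcov}), defines grid norms $\|\cdot\|_{\infty,\F,m,n}$, $\|\cdot\|_{bspec,m,n}$ by brute-force maximization over the polynomially many grid points, and proves multiplicative error bounds such as $\|\cT\|_{\infty,\R}\le (1-\frac{d}{m})^{-1}\|\cT\|_{\infty,\R,m,n}$ (Lemma \ref{mnnormlem}), with the choice $m=\lceil d(1+\varepsilon)/\varepsilon\rceil$ (and $m>2d{n+d-1\choose d}^{3/2}$ in the bi-symmetric Hermitian case, using the same conditioning bound from Corollary \ref{charsepscb} that you invoke); the nuclear-type norms are then obtained either by applying the dual-norm complexity theorem (Theorem \ref{dualnrmthm}, from \cite{FL16}) to the grid norms, or directly as an explicit linear program over the grid extreme points (Lemma \ref{lpBH}), with a concrete flop count $O(d^{3(n-1)+(3n-1)n}/\varepsilon^{2n})$ via \cite{LS15}. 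Your route replaces the grid by fixed-dimension real-algebraic critical-point methods for the spectral norms, and replaces the black-box duality theorem by an ellipsoid method with the spectral oracle as separation oracle --- which is essentially a re-derivation of the proof of Theorem \ref{dualnrmthm} in this setting, and your well-roundedness sandwich (unit Euclidean ball inside the spectral ball by \eqref{infty21in}, radius $O(d^{(n-1)/2})$ outside by \eqref{lualbet'in}, and the ${n+d-1\choose d}^{-3/2}$ bound for $\|\cdot\|_{bspec}$) is exactly the conditioning needed. What each buys: the paper's argument is elementary and fully explicit but only polynomial in $1/\varepsilon$ (indeed $\varepsilon^{-n}$ or $\varepsilon^{-2n}$), whereas yours, if carried through, would give $\log(1/\varepsilon)$ precision dependence or exact algebraic output --- the paper itself points to this alternative for part (1), citing \cite{FW20}, where the symmetric spectral norm is computed via polynomial systems.

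One caveat in your second paragraph deserves flagging: the Lagrange system $\nabla f=\lambda\x$, $\|\x\|^2=1$ is \emph{never} zero-dimensional when $\F=\C$, since the objective $|f_{\cS}|^2$ on $\mathbb{S}^{2n-1}$ is invariant under the phase circle $\x\mapsto e^{\bi\theta}\x$, and even over $\R$ symmetric inputs (e.g.\ the states $\cT_{d,\lambda}$) can have positive-dimensional maximizer sets; so a naive rational univariate representation with a B\'ezout count does not apply as stated. This is repaired by standard means --- fixing the phase by a real linear normalization, or using quantifier-elimination/critical-point algorithms valid without zero-dimensionality, which for a fixed number of variables run in time polynomial in the degree and bit size --- and you correctly identify this as the main obstacle, but the zero-dimensionality claim itself should be removed rather than assumed.
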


 \subsection{Poly-time,  NP and NP-hard problems}\label{subsec:NP}
 Poly-time problems are computational problems that can be solved in time  bounded by a polynomial function of the input size.
These problems include multiplication and inverse of matrices,  finding the roots of a polynomial in one variable,  eigenvalues and eigenvectors of square matrices, Gram-Schmidt orthogonalization process,  singular value decomposition (Schmidt decomposition), and other known algorithms.

The class NP (nondeterministic poly-time) encompasses decision problems where a solution can verified in poly-time.  An NP-complete problem is the hardest problem in the class  NP: a poly-time algorithm for any NP-complete problem would imply that all problems in the class NP are solvable in poly-time.
An NP-hard problem is a problem that is at least as hard as NP-complete problem.
If a poly-time algorithm exists for any NP-hard problem, then all NP problems are solvable in poly-time.

Since we are interested in computing the various spectral and nuclear norms of tensors the following complexity result is very useful \cite{FL16}:
\begin{theorem}\label{dualnrmthm} Let $\F=\R,\C$ and $\nu,\nu^*:\F^n \to [0,\infty)$ be dual norms.  Then $\nu(\x)$ is poly-time computable for $\x$ with rational (Gaussian rational) entries if and only if  $\nu^*(\x)$ is poly-time computable for $\x$ with rational (Gaussian rational) entries.
\end{theorem}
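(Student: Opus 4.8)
The plan is to prove the theorem through the oracle framework of the ellipsoid method (Gr\"otschel--Lov\'asz--Schrijver), reducing poly-time computation of a norm to a weak membership problem for its unit ball and poly-time computation of the dual norm to a weak optimization problem over that same body. Here ``poly-time computable'' is understood in the standard approximate sense: given a rational input and a rational $\varepsilon>0$, one outputs a rational value within $\varepsilon$ in time polynomial in the input size and $\log(1/\varepsilon)$ (exact values may be irrational). Since $(\nu^*)^*=\nu$ by biduality, it suffices to prove a single implication, say that poly-time computability of $\nu$ yields that of $\nu^*$; the converse follows by interchanging the roles of $\nu$ and $\nu^*$. I would first dispose of the complex case by identifying $\C^n$ with $\R^{2n}$ via $\z=\x+\bi\y\mapsto(\x,\y)$, under which $\Re\langle\cdot,\cdot\rangle$ becomes the standard real inner product, Gaussian rationals become rational vectors, and $(\nu,\nu^*)$ becomes a pair of real dual norms; everything thus reduces to $\F=\R$.

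Next I would set up the convex body. Write $\rB_\nu=\{\x\in\R^n:\nu(\x)\le 1\}$ for the symmetric compact unit ball. Because all norms on $\R^n$ are equivalent, there are radii $0<r\le R$ with $\{\x:\|\x\|\le r\}\subseteq \rB_\nu\subseteq\{\x:\|\x\|\le R\}$, so $\rB_\nu$ is a well-bounded body and the ellipsoid bounds will be polynomial in $n$ and $\log(R/r)$ (for a single fixed $\nu$ these radii are absolute constants). The first step is the equivalence between poly-time computability of $\nu$ and poly-time weak membership for $\rB_\nu$. One direction is immediate: evaluating $\nu(\z)$ to precision $\varepsilon$ and comparing with $1$ solves weak membership. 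For the converse I would use
\[
\nu(\z)=\min\{t>0:\z/t\in \rB_\nu\}
\]
and binary search the scalar $t$ in $[\,\|\z\|/R,\,\|\z\|/r\,]$, invoking weak membership $O(\log(1/\varepsilon))$ times; the sandwiching radii localize the search and keep all iterates rational of polynomially bounded bit-length.

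The heart of the argument is the central theorem of the ellipsoid method: for a well-bounded convex body the weak membership, weak separation, and weak optimization problems are polynomial-time equivalent. Applying it to $\rB_\nu$, a poly-time weak membership oracle produces a poly-time weak optimization oracle. Finally I would translate back: since
\[
\nu^*(\y)=\max_{\x\in \rB_\nu}\langle\x,\y\rangle,
\]
a weak optimization oracle computes $\nu^*(\y)$ to any prescribed precision in time polynomial in $\mathrm{size}(\y)+\log(1/\varepsilon)$, which is exactly poly-time computability of $\nu^*$. Combined with biduality this yields both implications. As a shortcut one may bypass the membership-to-separation step by computing an approximate subgradient $\g$ of $\nu$ at $\z$ from the value oracle by finite differences: since such $\g$ satisfies $\nu^*(\g)=1$ and $\langle\g,\z\rangle=\nu(\z)$, it furnishes a separating hyperplane directly, after which only the separation-to-optimization direction of the ellipsoid method is needed.

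The main obstacle is the careful bookkeeping inside the ellipsoid method: one must verify that all arithmetic proceeds with rationals of bit-length polynomial in the input and in $\log(1/\varepsilon)$ (this is where the rounded/shallow-cut variant and the a priori radii $r,R$ are essential), and that composing the approximate ``weak'' oracles does not blow up the error. The genuinely nontrivial content is the membership-to-separation implication, namely extracting a separating hyperplane --- effectively a subgradient of $\nu$ --- from mere value or membership access; this is precisely what the Gr\"otschel--Lov\'asz--Schrijver machinery (or the finite-difference subgradient in the shortcut) supplies.
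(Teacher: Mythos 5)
Your proposal is correct and takes essentially the same route as the paper's source: the paper states this theorem without proof, citing \cite{FL16}, and the argument there is precisely your Gr\"otschel--Lov\'asz--Schrijver reduction --- real case via $\C^n\cong\R^{2n}$, equivalence of an approximate value oracle for $\nu$ with weak membership for $\rB_{\nu}(\0,1)$, the ellipsoid-method passage from weak membership to weak optimization over that centered well-bounded body, and the polar formula $\nu^*(\y)=\max_{\x\in\rB_{\nu}(\0,1)}\langle\x,\y\rangle$ together with biduality $\nu^{**}=\nu$. The only caveats are the bit-length and error-composition bookkeeping inside the ellipsoid method, which you already identify explicitly.
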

 
Let us start with the largest clique problem as in \cite{FL18}.
An undirected simple graph is denoted by $G=(V,E)$, where $V$ is the set of vertices, and $E$ is the set of undirected edges $\{u,v\}\in E, u\ne v\in V$.  Assume that the cardinality of $V$ is $n$: $|V|=n$.   We identify $V$ with $[n]$. Then $A(G)=[a_{ij}]\in \{0,1\}^{n\times n}$ is a symmetric matrix with zero diagonal whose nonzero entries are equal to $1$.   Thus, $a_{ij}=1\iff \{i,j\}\in E$.   A  clique is a set $C\subset[n]$ such for  all $i\ne j\in C$ we have $\{i,j\}\in E$.   Denote by $\kappa(G)$ the maximum cardinality of all cliques in $G$.  It is known that finding $\kappa(G)$ is a NP-complete problem.  In \cite{MS65} the authors give a characterization for $\kappa(G)$ which is equivalent to the following characterization:
\begin{equation}\label{MSchar}
1-\frac{1}{\kappa(G)}=\max_{\|\x\|=1,\x\in\R^n}\sum_{i=j=1}^n a_{ij}x_i^2 x^2_j=\max_{\|\x\|=1,\x\in\C^n}|\sum_{i=j=1}^n a_{ij}x_i^2 \bar x^2_j|.
\end{equation}
\begin{lemma} The following quantities are NP-hard to compute for a fixed $m\in\N$:
\begin{enumerate}
\item The norms $\|\cdot\|_{\infty,\F}$ and $\|\cdot\|_{1,\F}$ for tensors in $\rS^{4m}\F^n$ for $\F=\R,\C$.
\item The norms $\|\cdot\|_{\infty,\C}$ and $\|\cdot\|_{1,\C}$ for  tensors in $\rH_{n^{\times 2m},+,1}$ and  $\rH_{n^{\times 2m}}$.
\item The norms $\|\cdot\|_{\infty,\R}$ and $\|\cdot\|_{1,\R}$ for  tensors in $\Re\rH_{n^{\times 2m},+,1}$ and  $\Re\rH_{n^{\times 2m}}$.
\item The norms $\|\cdot\|_{spec}$ and $\|\cdot\|_{nuc}$ for  tensors in $\rH_{n^{\times 2m},+,1}$ and  $\rH_{n^{\times 2m}}$.
\item The norms $\|\cdot\|_{\infty,\F}$ and  $\|\cdot\|_{1,\F}$ are NP-hard to compute on $\F^{n^{\times (4m)}}$.
\end{enumerate}
\end{lemma}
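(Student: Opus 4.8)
The plan is to reduce the computation of the clique number $\kappa(G)$ of a graph $G=([n],E)$ --- an NP-hard quantity --- to each norm computation, using the Motzkin--Straus identity \eqref{MSchar} as the common engine. Write $A(G)=[a_{ij}]$ and let $f(\x)=\sum_{i,j=1}^n a_{ij}x_i^2x_j^2$, a nonnegative form of degree $4$, so that $\max_{\|\x\|=1}f(\x)=1-\tfrac1{\kappa(G)}$ by \eqref{MSchar}. For a fixed $m$ the form $g=f^m$ has degree $4m$, is still nonnegative, and $\max_{\|\x\|=1,\x\in\R^n}g(\x)=\bigl(1-\tfrac1{\kappa(G)}\bigr)^m$; moreover $|g(\x)|=|f(\x)|^m\le f(|\x|)^m$ for complex $\x$, so the same value is the maximum of $|g|$ over the complex unit sphere. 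For fixed $m$ the number of monomials of $g$ is $O(n^{4m})$, so the associated symmetric tensor $\cS_g\in\rS^{4m}\R^n$ (via \eqref{defpolfx}) is constructible in poly-time, and Banach's theorem \eqref{Banthm} gives $\|\cS_g\|_{\infty,\R}=\|\cS_g\|_{\infty,\C}=\bigl(1-\tfrac1{\kappa(G)}\bigr)^m$. Hence a poly-time algorithm for $\|\cdot\|_{\infty,\F}$ on $\rS^{4m}\F^n$ would compute $\kappa(G)$, proving the spectral part of \emph{(1)}; the nuclear part follows since $\|\cdot\|_{\infty,\F}$ and $\|\cdot\|_{1,\F}$ are conjugate on $\rS^{4m}\F^n$, so Theorem \ref{dualnrmthm} transfers the hardness. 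Viewing $\cS_g$ in the ambient space (where Banach again gives the same spectral norm) yields \emph{(5)} for $\|\cdot\|_{\infty,\F}$, and duality on $\F^{n^{\times 4m}}$ (Theorem \ref{1inftynormlem}) gives $\|\cdot\|_{1,\F}$.

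For the Hermitian cases \emph{(2)--(4)} I must produce \emph{positive semidefinite} tensors, which is the crux, since $\cS_g$ need not be PSD (a nonnegative form need not be a sum of squares). Let $D=\diag(a_{ij})\in\rH_{n^{\times 2}}$ be the diagonal operator on $\C^n\otimes\C^n$ and $P$ the orthogonal projection onto $\rS^2\C^n$; put $\cR=PDP$. Then $\cR$ is real, PSD, and bi-symmetric (its range lies in $\rS^2\C^n$, so $\cR\in\rB\rH_{n^{\times 2}}$ by Lemma \ref{spdcbsh}), and since $P\,\x^{\otimes 2}=\x^{\otimes 2}$ one gets $\langle\x^{\otimes 2},\cR\,\x^{\otimes 2}\rangle=\langle\x^{\otimes2},D\,\x^{\otimes2}\rangle=\sum_{i,j}a_{ij}|x_i|^2|x_j|^2$. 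Consequently $\|\cR\|_{bspec}=\max_{\|\x\|=1}\sum_{i,j}a_{ij}|x_i|^2|x_j|^2=1-\tfrac1{\kappa(G)}$, the last equality being \eqref{MSchar} under the substitution $y_i=|x_i|^2$. Because $\cR$ is PSD, $\|\cR\|_{spec}=\|\cR\|_{\infty,\C}$ (equality in \eqref{specin}); because it is PSD and bi-symmetric, $\|\cR\|_{bspec}=\|\cR\|_{spec}$ (equality in \eqref{bspecin}); finally $\|\cR\|_{\infty,\R}\le\|\cR\|_{\infty,\C}=\|\cR\|_{spec}$, while evaluating the order-$4$ form at the diagonal real product state $\x^\star\otimes\x^\star$ ($\x^\star$ real and optimal) gives $\|\cR\|_{\infty,\R}\ge 1-\tfrac1{\kappa(G)}$. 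Thus $\|\cR\|_{bspec}=\|\cR\|_{spec}=\|\cR\|_{\infty,\C}=\|\cR\|_{\infty,\R}=1-\tfrac1{\kappa(G)}$.

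To reach $\rH_{n^{\times 2m}}$ for fixed $m$ I take the tensor power $\cR^{\otimes m}\in\rH_{n^{\times 2m}}$, again real and PSD. By multiplicativity under tensor products (\eqref{tenprodeq}, \eqref{prodHmnrm}) each of $\|\cR^{\otimes m}\|_{spec},\|\cR^{\otimes m}\|_{\infty,\C},\|\cR^{\otimes m}\|_{\infty,\R}$ equals $\bigl(1-\tfrac1{\kappa(G)}\bigr)^m$. Since $\tr\cR=\tfrac12\sum_{i,j}a_{ij}=|E|$, the normalization $\cR'=\cR^{\otimes m}/|E|^m$ is a density tensor in $\rH_{n^{\times 2m},+,1}$ (real, hence in $\Re\rH_{n^{\times 2m},+,1}$), and computing any spectral-type norm of $\cR'$ returns $\bigl(1-\tfrac1{\kappa(G)}\bigr)^m/|E|^m$, from which $\kappa(G)$ is recovered. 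This gives NP-hardness of $\|\cdot\|_{\infty,\C}$, $\|\cdot\|_{\infty,\R}$, $\|\cdot\|_{spec}$ on the density sets, and a fortiori on the full spaces $\rH_{n^{\times 2m}}$, $\Re\rH_{n^{\times 2m}}$. The nuclear-type norms $\|\cdot\|_{1,\C}$, $\|\cdot\|_{1,\R}$, $\|\cdot\|_{nuc}$ are dual to these on the respective real vector spaces (Lemma \ref{specnuclem}(2) for $\|\cdot\|_{nuc}$), so Theorem \ref{dualnrmthm} yields their NP-hardness, completing \emph{(2)--(4)}. Here $1-\tfrac1{\kappa(G)}$ is rational with denominator at most $n$, so approximating the norms to precision $1/\mathrm{poly}(n)$ already determines $\kappa(G)$, which keeps the argument within poly-time (approximate) computability.

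The main obstacle is exactly the positivity requirement in \emph{(2)--(4)}: the clique quantity is the maximum of a nonnegative quartic, but the symmetric tensor representing a nonnegative form is generally not PSD, hence not a density tensor. The device $\cR=PDP$ circumvents this by starting from the manifestly PSD operator $D=\diag(a_{ij})$ and compressing to the symmetric subspace, which preserves the Hermitian quartic $\sum a_{ij}|x_i|^2|x_j|^2$ on rank-one symmetric states while retaining positivity. The second delicate point is reconciling the four norms on a PSD bi-symmetric tensor; the PSD equality cases of \eqref{specin} and \eqref{bspecin}, together with the Cauchy--Schwarz bound for the real spectral norm, collapse them to one value, after which tensor-power multiplicativity and norm duality propagate every statement to arbitrary fixed $m$ and to the nuclear side.
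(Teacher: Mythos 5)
Your proposal is correct, and for items \emph{(1)} and \emph{(5)} it coincides with the paper's proof: the Motzkin--Straus identity \eqref{MSchar}, the power $f^m$ coded as $\cS_m\in\rS^{4m}\F^n$ via \eqref{defpolfx}, the nonnegativity-of-entries argument identifying the real and complex spectral norms, Banach's theorem to pass to the ambient space, and Theorem \ref{dualnrmthm} for the nuclear side. Where you genuinely diverge is in manufacturing the PSD density instances for \emph{(2)--(4)}. The paper builds the traceless, indefinite bi-symmetric tensor $\cG$ of \eqref{defTG}, whose Hermitian quartic is $2\sum_{\{i,j\}\in E}\bar x_i^2 x_j^2$, and then forces positivity by adding a large multiple of the symmetric-subspace identity $\cI_{n(n+1)/2}$ and rescaling to trace one as in \eqref{def}; verifying PSD-ness requires the degree bound on $|Q(X,\bar X)|$, and the resulting bspec value is an affine function of $1-\tfrac{1}{\kappa(G)}$ (the paper's constants in \eqref{def}--\eqref{nrmcT} actually contain typos, e.g. $n^2-1$ versus $n^2+1$ and ``$1-\kappa(G)$'' for $1-\tfrac{1}{\kappa(G)}$). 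Your compression $\cR=PDP$ of the manifestly PSD diagonal operator $D=\diag(a_{ij})$ gets positivity for free, produces the cleaner Hermitian form $\sum_{i,j}a_{ij}|x_i|^2|x_j|^2$ whose complex maximization reduces to the real Motzkin--Straus maximum by the substitution $y_i=|x_i|$, and encodes $\kappa(G)$ directly rather than through a shift; the price is the trace normalization by $|E|^m$, so you should dispose of the trivial case $E=\emptyset$ (where $\kappa(G)=1$) separately before dividing. Both proofs then proceed identically: the PSD equality cases of \eqref{specin} and \eqref{bspecin} collapse $\|\cdot\|_{bspec}$, $\|\cdot\|_{spec}$, $\|\cdot\|_{\infty,\C}$, $\|\cdot\|_{\infty,\R}$ to one value, the $m$-fold tensor power with multiplicativity \eqref{tenprodeq}, \eqref{prodHmnrm} handles fixed $m>1$ (correctly, in both treatments, only Hermitian PSD-ness of the power is used, since $\cR^{\otimes m}$ is no longer bi-symmetric under all of $\Omega_{2m}$), and FL16 duality transfers hardness to the nuclear-type norms --- a transfer to the Hermitian subspace that you state at the same (terse) level of justification as the paper itself.
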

\begin{proof} \emph{(1)}  
Let $f(\x)=\sum_{(i,j)\in }a_{i,j}x_i^2x_j^2$, where $A=[a_{i,j}]=A(G)$.   Recall that
$f^m(\x)=\langle\cS_m,\otimes^d\x\rangle$,  $m\in\N$ for a corresponding $\cS_m\in\rS^{4m}\R^n$ \eqref{defpolfx}.   In view of \eqref{MSchar} it is NP-hard to find $\|\cS_m\|_{\infty,\R}$ for each $m\in\N$.   As all entries of $\cS_m$ are nonnegative, it follows that $\|\cS_m\|_{\infty,\C}\|=\|\cS_m\|_{\infty,\R}\|$.  Hence, it NP-hard to compute $\|\cS_m\|_{\infty,\C}$.
Therefore,  it is NP-hard to compute the dual norm $\|\cdot\|_{1,\F}$ for $\F=\R,\C$ \cite{FL16}.

\noindent\emph{(2-4)}
Observe that the maximum given in \eqref{MSchar} can be associated with the bspectral norm of  following bi-symmetric Hermitian tensor $\cG=[g_{k,l,p,q}]\in \rB\rH_{n^{\times 2}}$:
\begin{equation}\label{defTG}
g_{k,l,p,q}=\begin{cases}
1 \textrm{ if }(k,l)=(i,i), (p,q)=(j,j),  \{i,j\}\in E,\\
0 \textrm{ otherwise}
\end{cases}
\end{equation}
Clearly, $\tr \cG=0$. 
Let $X=[x_{i,j}]\in\C^{n\times n}$ be a symmetric matrix.  ( $X\in\rS^2\C^n$).  Consider the sesquilinear form in entries of $X$:
$Q(X,\bar X):=\sum_{(i,j)\in E} x_{i,i}\bar x_{j,j}$.  As $|\x_{i,i}\x_{j,j}|\le \frac{1}{2}(|x_{i,i}|^2+|x_{j,j}|^2)$ it follows that 
$|Q(\x,\bar\x)|\le d\sum_{i\in [n]}\|x_{i,i}|^2\le d\|X\|^2$, where $d$ is the maximal degree of vertices in $G$.  Recall that $d\le n(n-1)/2$.  Let $\cI_{n(n+1)/2}$ be the identity tensor as defined in Lemma \ref{bisymspecnuclem}.

Define
\begin{equation}\label{def}
\cT=\frac{4}{n^2(n^2-1)}\left(\frac{n(n-1)}{2}\cI_{n(n+1)/2}+\cG\right).
\end{equation}
It is straightforward to show that $\cT\in \rB\rH_{n^{\times 2},+,1}$.
We claim that
\begin{equation}\label{nrmcT}
\|\cT\|_{bspec}=\frac{4}{n^2(n^2+1)}\left(n(n-1)+1-\kappa(G)\right).
\end{equation}
Clearly,  $\langle X,\cI_{n(n+1)/2}X\rangle=\|X\|^2$.   Next observe that 
\begin{equation*}
\begin{aligned}
&|\langle \x^{\otimes 2}\otimes\bar\x^{\otimes 2}, \cG\rangle|=|2\sum_{(i,j)\in E} \bar x_i^2 \x_j^2|\le 2\sum_{(i,j)\in E} |x_i|^2|x_j|^2\Rightarrow\\
&\|\cG\|_{bspec}=\max_{\|X\|=1,  X=\diag(\x), \|\x\|=1, \x=(x_1,\ldots,x_n)^\top\ge 0} \sum_{(i,j)\in E} x_i^2x_j^2=1-\kappa(G).
\end{aligned}
\end{equation*}
This proves \eqref{nrmcT}.  Hence,  it  is NP-hard to compute the $\|\cT\|_{bspec}$ for some tensor in $\rB\rH_{n^{\times 2},+,1}$.  In particular, it is NP-hard to compute $\|\cdot\|_{bspec}$.  Hence,  it is NP-hard to compute the dual norm $\|\cdot\|_{bnuc}$ \cite{FL16}.  This proves  \emph{(2)-(4)} for $m=1$.   
For $m>1$ use the tensor $\cT^{\otimes m}$.

\noindent \emph{(5)}  Follows from \emph{(1)},  as $\rS^{4m}\R^n\subset \rH_{n^{\times (2m)}}$. 
\end{proof}
\begin{corollary} The GME is NP-hard to compute for states in $\C^{n^{\times (4m)}}$ for $m\in\N$.
\end{corollary}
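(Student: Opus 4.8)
The plan is to reduce the (NP-hard) computation of the complex spectral norm on $\C^{n^{\times(4m)}}$ to the computation of the GME, using the explicit relation between the two quantities recorded in subsection \ref{subsec:2ment}. Recall that for a state $\cX\in\C^{\bn}$ with $\|\cX\|=1$ one has
\begin{equation*}
\mathrm{GME}(\cX)=\dist\big(\cX,\otimes_{i=1}^d\mathbb{S}(\cH_i)\big)=\sqrt{2\big(1-\|\cX\|_{\infty,\C}\big)} .
\end{equation*}
Hence $\mathrm{GME}(\cX)=g\big(\|\cX\|_{\infty,\C}\big)$, where $g(t)=\sqrt{2(1-t)}$ is a fixed, strictly decreasing map of $[0,1]$ onto $[0,\sqrt2]$.

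The key point is that $g$ is inverted by $t=1-\tfrac12 g^2$, a single squaring, so passing between the two quantities costs only one arithmetic operation and no hardness is introduced or removed. First I would take an instance $\cS_m\in\rS^{4m}\R^n\subset\rS^{4m}\C^n\subset\C^{n^{\times(4m)}}$ of the kind used in part \emph{(1)} of the preceding lemma, normalize it to the state $\widehat\cS_m=\cS_m/\|\cS_m\|$ (the Euclidean norm being poly-time computable), and observe that $\|\widehat\cS_m\|_{\infty,\C}=\|\cS_m\|_{\infty,\C}/\|\cS_m\|$. A poly-time routine for the GME would then yield $\|\widehat\cS_m\|_{\infty,\C}=1-\tfrac12\,\mathrm{GME}(\widehat\cS_m)^2$ in poly-time, and rescaling recovers $\|\cS_m\|_{\infty,\C}$; equivalently, at the level of decision problems the threshold query $\|\widehat\cS_m\|_{\infty,\C}\ge c$ coincides with $\mathrm{GME}(\widehat\cS_m)\le\sqrt{2(1-c)}$, which after squaring the comparison is a poly-time reformulation. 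Since part \emph{(5)} of the lemma, via the Motzkin--Straus characterization \eqref{MSchar}, shows $\|\cdot\|_{\infty,\C}$ to be NP-hard on $\C^{n^{\times(4m)}}$, the GME inherits NP-hardness there.

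The only delicate point, which I expect to be the main obstacle, is the precision and normalization bookkeeping rather than any genuinely new idea: the square root in $g$ makes $\mathrm{GME}(\cX)$ and the normalizing factor $\|\cS_m\|$ irrational even for Gaussian rational $\cX$, so the reduction must be stated in a computational model consistent with subsection \ref{subsec:NP} and with \cite{FL16}. The clean way is to keep everything at the level of the decision problem $\mathrm{GME}(\cX)\le\theta$ with $\theta^2$ rational, where the comparison can be squared to stay within $\Q$, or else to phrase both hypothesis and conclusion in terms of $\varepsilon$-approximation; under either convention the monotone, poly-time-invertible change of variables $t\mapsto g(t)$ transfers NP-hardness from the spectral norm to the GME verbatim.
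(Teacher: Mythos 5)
Your proposal is correct and matches the paper's (implicit) argument: the corollary is an immediate consequence of part \emph{(5)} of the preceding lemma together with the identity $\mathrm{GME}(\cX)=\sqrt{2(1-\|\cX\|_{\infty,\C})}$, which is exactly the monotone change of variables you use. Your extra care with the decision-problem formulation and the rationality of $\theta^2$ is a sensible tightening of what the paper leaves unstated, but it is not a different route.
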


A remarkable result of Gurvits \cite{Gur03} states deciding if a density tensor $\cR\in \rH_{n^{\times 2},+,1}$ with Gaussian entries is separable is an NP-hard problem.  In view of Theorem \ref{charsep} this is equivalent to statement that deciding if $\|\cR\|_{nuc}=1$ is NP-hard.  This result was extended in \cite{Gha10}.
\subsection{Coding symmetric and bi-symmetric  Hermitian tensors as polynomials}
\label{subsec:dpolt}
\begin{definition}\label{defpolcS}
Denote by $\rP(d,n,\F)$ the linear space of homogeneous polynomials of degree $d$ in $n$-variables over $\F$.  Let 
\begin{equation*}
J(d,n)=\{\bj=(j_1,\ldots,j_n)\in\Z_+^d,  j_1+\cdots+j_d=d\}, \quad \Z_+:=\{0\}\cup \N.
\end{equation*}
Assume that $\bj\in J(d,n)$.  Denote by $\x^{\bj}=x_1^{j_1}\cdots x_n^{j_n}$.  Let  $\bj'=(j_{k_1},\ldots,j_{k_s}), s\in[d], 1\le k_1<\cdots<k_s\le n $ be obtained from $\bj$ by deleting  all $j_l=0$ in $\bj$.
A symmetric tensor $\cS=[s_{i_1,\ldots,i_d}]\in\rS^d\F^n$  corresponds to $f(\x)=\langle \bar \cS,\x^{\otimes d}\rangle$:
\begin{equation}\label{defpolfx}
\begin{aligned}
&f(\x)=\sum_{\bj\in J(d,n)} c(\bj) f_{j_1,\ldots,j_n} x_1^{j_1}\cdots x_n^{j_n}, \quad c(\bj)=\frac{d!}{j_1!\cdots j_n!},\bj\in J(d,n),\\
&f_{j_1,\ldots,j_n}=s_{i_1,\ldots,i_d} \textrm{ where } k_l \textrm{ appears } j_{k_l} \textrm{ times in } (i_1,\ldots,i_d),  l\in[s].
\end{aligned}
\end{equation}
The space $\rS^d\F^n\otimes \rS^d\F^n$ is called the space of bi-symmetric tensors $\cT=[t_{\bi,\bj}], \bi,\bk\in[\bn]$ in $\F^{n^{\times (2d)}}$:  $t_{\omega(\bi),\psi(\bk)}=t_{\bi,\bk}$ for $\omega,\psi\in\Omega_d$.  It corresponds to bi-homogeneous polynomials $\rP_{bi}(d,n,\F)$
\begin{equation}\label{defpolfxy}
f(\x,\y)=\sum_{\bj,\bl\in J(d,n)} c(\bj)c(\bk) f_{\bj,\bl}\x^{\bj}\y^{\bl},
\end{equation}
where the correspondence between $\cT\in \rS^d\F^n\otimes \rS^d\F^n$ and $f(\x,\y)$ is given as in \eqref{defpolfx}.   

The space $\rB\rH_{n^{\times d}}$ corresponds to Hermitian polynomials $f(\x,\bar\x)$:
\begin{equation*}
f(\x,\bar\x)=\sum_{\bj,\bl\in J(d,n)} c(\bj)c(\bk) f_{\bj,\bl}\x^{\bj}\bar \x^{\bl},  \quad 
f_{\bl,\bj}=\overline{f_{\bj,\bl}}, \bj,\bl\in[\bn],
\end{equation*}
denoted $\rP_{her}(d,n,\F)$, which is viewed as a real subspace of $\rP_{bi}(d,n,\F)$. 
(Here $\rP_{her}(d,n,\R)$ corresponds to $\Re \rS\rH_{n^{\times d}}$.) 
All correspondences are one-to-one and onto.
\end{definition}

\noindent
Hence, it is advantageous to replace $\rS^d\F^n, \rS^d\F^n\otimes\rS^d\F^n, \rH_{n^{\times d}}$ by $\rP(d,n,\F), \rP_{bi}(d,n,\F),\rP_{her}(d,n,\F)$ respectively.
The following lemma summarizes the properties of the above  isomorphisms,  and recalls Banach's characterization of the spectral norm of $\cS\in\rS^d\F^n$  \cite{Ban38}: 
\begin{lemma}\label{isolem}
\begin{enumerate}
\item Define  on $\rP(d,n,\F)$ the following inner product 
and the Hilbert norm:
\begin{equation}\label{inprodFJ}
\langle \f,\bg\rangle=\sum_{\bj\in J(d,n)} c(\bj)\overline{ f_{\bj}}g_{\bj},\quad \|\f\|=\sqrt{\langle \f,\f\rangle}.
\end{equation}
Then
\begin{enumerate}
\item Let $\e_{\bj}=(\delta_{\bj,\bk})_{\bk\in J(d,n)}, \bj\in J(d,n)$, where $\delta_{\bj,\bk}$ is Kronecker's delta function, be the standard basis in $\rP(d,n,\F)$.  Then $\frac{1}{\sqrt{c(\bj})}\e_{\bj}, \bj\in J(d,n)$ is an orthonormal basis in $\rP(d,n,\F)$ of cardinality $n+d-1\choose d$. 
\item The map $L:\rS^d\F^n\to\rP(d,n,\F)$ which is given by $L(\cS)=f$, where $f(\x)=\langle \bar \cS,\x^{\otimes d}\rangle$, is an isomorphism and an isometry. 
\item The spectral norm of a symmetric tensor is given by Banach's characterization \cite{Ban38}:
\begin{equation}\label{Banchar}
\|\cS\|_{\infty,\F}=\max_{\|\x\|=1, \x\in\F^n}|\langle\x^{\otimes d},\cS\rangle|=\max_{\|\x\|=1, \x\in\F^n}|\Re\langle\x^{\otimes d},\cS\rangle|
=\max_{\|\x\|=1, \x\in\F^n} |f(\x)|
\end{equation}
\end{enumerate} 
\item Define  on $\rP_{bi}(d,n,\F)$ the following inner product 
and the Hilbert norm:
\begin{eqnarray}\label{inprodbs}
\langle \f,\bg\rangle=\sum_{\bj,\bl\in J(d,n)} c(\bj)c(\bl)\overline{ f_{\bj,\bl}}g_{\bj,\bl}.
\end{eqnarray}
Then
\begin{enumerate}
\item\label{defejk} Let $\e_{\bj,\bl}=(\delta_{(\bj,\bl),(\bk,\bm)})_{(\bk,\bm)\in J(d,n)\times J(d,n)}, (\bj,\bl)\in J(d,n)\times J(d,n)$, where $\delta_{(\bj,\bl),(\bk,\bm)}$ is Kronecker's delta function, be the standard basis in $\rP_{bi}(d,n,\F)$.  Then $\frac{1}{\sqrt{c(\bj)c(\bl)})}\e_{(\bj,\bl)}, (\bj,\bl)\in J(d,n)\times J(d,n)$ is an orthonormal basis in $\rP_{bi}(d,n,\F)$ of cardinality ${n+d-1\choose d}^2$. 
\item The map $L:\rS^d\F^n\otimes\rS^d\F^n\to\rP_{bi}(d,n,\F)$ which is given by $L(\cT)=f$, where $f(\x,\y)=\langle \bar \cT,\x^{\otimes d}\otimes \y^{\otimes d}\rangle$, is an isomorphism and an isometry. 
\item The spectral norm of a bisymmetric tensor is given by Banach's characterization \cite{Ban38}:
\begin{equation}\label{Bancharbi}
\begin{aligned}
&\|\cT\|_{\infty,\F}=\max_{\|\x\|=\|\y\|=1, \x,\y\in\F^n}|\langle\x^{\otimes d}\otimes\y^{\otimes d},\cT\rangle|=\max_{\|\x\|=\|\y\|=1, \x,\y\in\F^n}\Re\langle\x^{\otimes d}\otimes\y^{\otimes d},\cT\rangle=\\
&\max_{\|\x\|=\|\y\|=1, \x,\y\in\F^n} |f(\x,\y)|.
\end{aligned}
\end{equation}
\end{enumerate}
\item Define  on $\rP_{her}(d,n,\F)$ the inner product 
and the Hilbert norm as in \eqref{inprodbs}
\begin{enumerate}
\item  Let $\e_{\bj,\bl},\bj,\bl$ be defined as in (\ref{defejk}).  Then an orthonormal basis in $\rP_{her}(d,n,\F)$ is as follows.  First, to $\bj\in J(d,n)$  corresponds $\frac{1}{c(\bj)}\be_{\bj,\bj}$.  Second, to a pair $(\bj,\bl), (\bl,\bj),\bj\ne \bl$ corresponds $\frac{1}{\sqrt{2c(\bj)c(\bl)}}( \be_{\bj,\bl}+\be_{\bl,\bj})$.
The cardinality of this basis is ${n+d-1\choose d}^2$. 
\item The map $L:\rB\rH_{n^{\times d}}\to\rP_{bi}(d,n,\F)$ which is given by $L(\cT)=f$, where $f(\x,\bar\x)=\langle \bar \cT,\x^{\otimes d}\otimes \bar\x^{\otimes d}\rangle$, is an isomorphism and an isometry. 
\item The b-spectral norm of a bi-symmetric Hermitian tensor is: 
\begin{equation}\label{Bancharbih}
\|\cT\|_{bspec}=\max_{\|\x\|=1, \x\in\C^n} |f(\x,\bar\x)|=\max_{\|\x\|=1, \x\in\C^n} \max( f(\x,\bar\x), -f(\x,\bar\x)).
\end{equation}
Furthermore, if $\cT\in \rH_{n^{\times d},+}$ then
\begin{equation}\label{bspecinfeq}
\|\cT\|_{bspec}=\|\cT\|_{\infty,\C}
\end{equation}
\end{enumerate}
\end{enumerate}
\end{lemma}
\begin{proof}
\emph{(1)} Straightforward by repeating the arguments in \cite{FL18,FW20}.  (Observe that in \eqref{Banchar} for $\F=\C$, by changing the phase of $\x\in\C^n$ we can replace the $\pm$ sign to $+$.)

\noindent \emph{(2)}: \emph{(a)-(b)} straightforward. \emph{(c)}: We claim that for a bisymmetric tensor $\cT\in \rS^d\F^n\otimes \rS^d\F^n$ the following equality holds:
\begin{equation}\label{cTinfeq}
 \|\cT\|_{\infty,\F}=\max_{\x,\y\in\F^n, \|\x\|=\|\y\|=1}|\langle \x^{\otimes d}\otimes  \y^{\otimes d},\cT\rangle|=\max_{\x,\y\in\F^n, \|\x\|=\|\y\|=1}\Re\langle \x^{\otimes d}\otimes  \y^{\otimes d},\cT\rangle.   
 \end{equation}
 Let 
\begin{equation*}
\|\cT\|_{\infty,\F}=\max_{\cX,\cY\in \Pi_{n^{\times d}}}|\langle \cX\otimes \cY,\cT\rangle|=|\langle \cX^{\star}\otimes\cY^{\star},\cT\rangle|,  \cX^\star=\otimes_{j=1}^d \x_j^\star, \cY^\star=\otimes_{j=1}^d\y_j^\star,\|\x_j^\star\|=\|\y_j^\star\|=1, j\in[d].
\end{equation*}
Let $\cS=[\sum_{bj=(j_1,\ldots,j_d)\in[n^{\times d}]} t_{\bi,\bj}\y_{j_1}^{\star}\cdots\y_{j_d}^{\star}]$.
Clearly,  $\cS\in\rS^d\F^n$.  Hence $\|\cT\|_{\infty,\F}=\|\cS\|_{\infty,\F}=|\langle \cX^{\star},\cS\rangle|$.  Use Banach's theorem to replace $\cX^\star$ by $(\x^{\star})^{\otimes d}$.   Set $\cS'=[\sum_{\bi=(i_1,\ldots,i_d)\in [n^{\times d}]}(\bar x^\star)_{i_1}\cdots (\bar x^{\star})_{i_d}t_{\bi,\bj}]$.    Use the arguments for $\cS'$ to deduce the first equality in \eqref{cTinfeq}.   The second equality in \eqref{cTinfeq}. is obtained by  changing the phase of $\x\in\C^n$.  Part \emph{(2b)} yields \eqref{Bancharbi}.

\noindent \emph{(3)}  The equality \eqref{bspecinfeq} follows from part \emph{(3)} of Lemma \ref{bisymspecnuclem}.  Other claims are straightforward.  (Recall that that $f(\x,\bar\x)=\langle \x^{\otimes d},\cT\x^{\otimes d}\rangle$ is real valued.)
\end{proof}

 Note the orthonormal basis in \emph{(1)} is corresponds to the Dicke basis \cite{Dic54}.  See \cite[Appendix 1]{FW20} for the computation of GME for the Dicke states.

The following lemma is well known and we bring its proof for completeness:
\begin{lemma}\label{prsde}  Let $2\1_d\le \bn\in\N^d$.  Assume that $\cX=\otimes_{j=1}^d\x_j, \cY=\otimes_{j=1}^d\y_j\in \Pi_{n^{\times d}}$.  Then
\begin{equation}\label{prsde1}
\|\otimes_{j=1}^d\x_j-\otimes_{j=1}^d\y_j\|\le \sum_{j=1}^d\|\x_j-\y_j\|.
\end{equation}
\end{lemma}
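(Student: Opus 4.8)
The plan is to prove this by a standard telescoping (hybrid) argument, swapping one tensor factor at a time as we pass from $\cY$ to $\cX$. First I would introduce the intermediate rank-one product tensors
\begin{equation*}
\cZ_k=\x_1\otimes\cdots\otimes\x_k\otimes\y_{k+1}\otimes\cdots\otimes\y_d, \quad k=0,1,\ldots,d,
\end{equation*}
with the conventions $\cZ_0=\otimes_{j=1}^d\y_j=\cY$ and $\cZ_d=\otimes_{j=1}^d\x_j=\cX$ (so $\cZ_k$ agrees with $\cX$ in its first $k$ slots and with $\cY$ in the remaining slots). Then the difference telescopes as
\begin{equation*}
\cX-\cY=\cZ_d-\cZ_0=\sum_{k=1}^d(\cZ_k-\cZ_{k-1}),
\end{equation*}
and applying the triangle inequality for the Euclidean norm reduces the claim to bounding each consecutive difference $\|\cZ_k-\cZ_{k-1}\|$ by $\|\x_k-\y_k\|$.

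The key observation is that each consecutive difference is itself a rank-one tensor: since $\cZ_k$ and $\cZ_{k-1}$ differ only in the $k$-th factor, multilinearity of the tensor product gives
\begin{equation*}
\cZ_k-\cZ_{k-1}=\x_1\otimes\cdots\otimes\x_{k-1}\otimes(\x_k-\y_k)\otimes\y_{k+1}\otimes\cdots\otimes\y_d.
\end{equation*}
Using the multiplicativity of the Euclidean norm on rank-one tensors (namely $\|\otimes_{j=1}^d\bz_j\|=\prod_{j=1}^d\|\bz_j\|$, recorded just after the definition of the Euclidean norm in Subsection~\ref{subsec:basnot}), the norm factorizes as
\begin{equation*}
\|\cZ_k-\cZ_{k-1}\|=\Big(\prod_{j=1}^{k-1}\|\x_j\|\Big)\,\|\x_k-\y_k\|\,\Big(\prod_{j=k+1}^{d}\|\y_j\|\Big).
\end{equation*}
Because $\cX,\cY\in\Pi_{n^{\times d}}$, every $\x_j$ and $\y_j$ is a unit vector, so all the displayed products equal $1$ and the right-hand side collapses to $\|\x_k-\y_k\|$.

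Combining the two steps yields
\begin{equation*}
\|\cX-\cY\|=\Big\|\sum_{k=1}^d(\cZ_k-\cZ_{k-1})\Big\|\le\sum_{k=1}^d\|\cZ_k-\cZ_{k-1}\|=\sum_{k=1}^d\|\x_k-\y_k\|,
\end{equation*}
which is exactly \eqref{prsde1}. There is no genuine obstacle here: the argument is elementary, and the only points requiring care are the bookkeeping of which slots carry $\x_j$ versus $\y_j$ in the hybrid tensors $\cZ_k$, and the invocation of the norm-multiplicativity of rank-one tensors together with the unit-norm hypothesis to collapse the products. I would present it in roughly the displayed form above, with no further computation needed.
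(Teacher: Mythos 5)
Your proof is correct and is essentially the paper's own argument: the telescoping identity \eqref{telid} in the paper is exactly your hybrid decomposition $\cX-\cY=\sum_{k=1}^d(\cZ_k-\cZ_{k-1})$, followed by the triangle inequality and the multiplicativity of the Euclidean norm on rank-one tensors with unit factors. No differences worth noting.
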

\begin{proof} Use the triangle inequality on the following identity.
\begin{equation}\label{telid}
\begin{aligned}
&\otimes_{j=1}^d\x_j-\otimes_{j=1}^d\y_j=
&\sum_{k=0}^{d-1}(\otimes_{l=1}^{d-k-1}\x_l)\otimes(\x_{d-k}-\y_{d-k})\otimes(\otimes_{d-k+1\le m\le d}\y_m).
\end{aligned}
\end{equation}
\end{proof}
\subsection{Approximations of spectral and nuclear norms}\label{subsec:aprsn}
Recall that a finite set $\rC\subset \mathbb{S}^{n-1}$ is an $\varepsilon$-covering set of $\mathbb{S}^{n-1}\subset \R^n$ if $\cup_{\x\in\rC} \rB(\x,\varepsilon)\supset  \mathbb{S}^{n-1}$.  Lemma 2.1 in \cite{Per04} gives an explicit simple $\frac{1}{m}$-covering set of $\mathbb{S}^{n-1}\subset \R^n$:
\begin{lemma}\label{PGcov}  Let $m,n\in\N$.  Define
\begin{equation}\label{PGcov1} 
\begin{aligned}
&\rA=\{\ba=(a_1,\ldots,a_n)^\top\in\R^n: a_i=\frac{h}{nm},  h=0,\pm 1,\ldots,\pm nm\},\\
&\rC(m,n)=\{\frac{1}{\|\ba\|}\ba: \ba\in A\setminus\{\0\}\}.
\end{aligned}
\end{equation}
Then $\rC(m,n)$ is an $\frac{1}{m}$-covering of $\mathbb{S}^{n-1}$ of cardinality $(2nm+1)^n-1$. 
\end{lemma}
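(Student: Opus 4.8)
The plan is to prove the two assertions separately: the cardinality bound, which is immediate from the product structure of $\rA$, and the covering property, which follows from coordinatewise rounding to the grid followed by radial projection onto the sphere.

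For the cardinality, I would observe that $\rA$ is exactly the set of grid points in the cube $[-1,1]^n$ whose coordinates are integer multiples of $\frac{1}{nm}$; since each coordinate independently ranges over the $2nm+1$ values $\frac{h}{nm}$ with $h\in\{0,\pm1,\ldots,\pm nm\}$, we have $|\rA|=(2nm+1)^n$. Removing the origin leaves $(2nm+1)^n-1$ nonzero grid points, and $\rC(m,n)$ is the image of this set under the radial map $\ba\mapsto\ba/\|\ba\|$; hence $|\rC(m,n)|\le(2nm+1)^n-1$. (The inequality can be strict, since distinct positively collinear grid points project to the same point of $\mathbb{S}^{n-1}$, so the stated count is really the size of the parametrizing index set.)

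The substantive part is the covering estimate. Fix $\x=(x_1,\ldots,x_n)^\top\in\mathbb{S}^{n-1}$, so $|x_i|\le\|\x\|=1$ for each $i$. Since the one-dimensional grid $\{\frac{h}{nm}:|h|\le nm\}$ covers $[-1,1]$ with spacing $\frac{1}{nm}$, I would round each $x_i$ to a nearest grid value $a_i$, obtaining $\ba=(a_1,\ldots,a_n)^\top\in\rA$ with $|x_i-a_i|\le\frac{1}{2nm}$. Summing squares gives
\begin{equation*}
\|\x-\ba\|^2=\sum_{i=1}^n|x_i-a_i|^2\le n\Big(\frac{1}{2nm}\Big)^2=\frac{1}{4nm^2},
\end{equation*}
so $\|\x-\ba\|\le\frac{1}{2\sqrt{n}\,m}\le\frac{1}{2m}<1$. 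In particular $\|\ba\|\ge\|\x\|-\|\x-\ba\|>0$, so $\ba\ne\0$ and $\bc:=\ba/\|\ba\|\in\rC(m,n)$ is well defined.

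Finally I would pass from $\ba$ to its projection $\bc$. The key elementary identity is $\|\ba-\bc\|=\big|\,\|\ba\|-1\,\big|$, combined with $\big|\,\|\ba\|-1\,\big|=\big|\,\|\ba\|-\|\x\|\,\big|\le\|\ba-\x\|$. The triangle inequality then yields
\begin{equation*}
\|\x-\bc\|\le\|\x-\ba\|+\|\ba-\bc\|\le 2\|\x-\ba\|\le\frac{1}{m},
\end{equation*}
so $\x\in\rB(\bc,\tfrac{1}{m})$, which establishes the covering claim. No step here is genuinely hard; the only place requiring a little care is the doubling estimate in the radial projection, where one must resist normalizing before estimating and instead use that projecting a near-unit vector to the sphere displaces it by at most its distance to the sphere.
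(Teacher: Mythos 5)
Your proof is correct. The paper itself gives no proof of this lemma --- it is quoted from Perez-Garcia \cite{Per04} --- so there is no internal argument to compare against; your route (round each coordinate to the $\frac{1}{nm}$-grid, then control the radial projection via $\bigl\|\ba-\tfrac{1}{\|\ba\|}\ba\bigr\|=\bigl|\,\|\ba\|-1\,\bigr|\le\|\ba-\x\|$, doubling the rounding error) is the standard proof of this covering bound, and it in fact yields the sharper radius $\frac{1}{\sqrt{n}\,m}$. Your parenthetical remark that $(2nm+1)^n-1$ is really the cardinality of the parametrizing set, since positively collinear grid points project to the same point of $\mathbb{S}^{n-1}$, is also a correct reading of the statement.
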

Observe that $-\rC(m,n)=\rC(m,n)$.
For $\F=\C$ we view $\C^n\sim\R^n\times\R^n\sim\R^{2n}$, where $\z=\x+\bi\y=(\x,\y)\in \R^n\times \R^n$, and $\|\z\|=\sqrt{\|\x\|^2+\|\y\|^2}$.  Then $\rC_{\C}(m,n)\sim \rC(m,2n)$ corresponds to $\frac{1}{m}$-covering set of $\{\z\in\C^n: \|\z\|=1\}$,
We treat in this subsection all the norms that we discuss as real norms.  Our approximation of the norms we discuss are as real norms.

Define
\begin{equation}\label{defspecmn}
\begin{aligned}
&\|\cT\|_{\infty,\R,m,n}=\max_{\x\in\rC(m,n)}|\langle \x^{\otimes d},\cT\rangle|, \quad \cT\in \rS^{d}\R^n,\\
&\|\cT\|_{\infty,\C,m,n}=\max_{\z\in\rC_{\C}(m,n)}|\Re\langle \z^{\otimes d},\cT\rangle|, \quad \cT\in \rS^{d}\C^n,\\
&\|\cT\|_{\infty,\R,m,n}=\max_{\x,\y\in\rC(m,n)}|\langle \x^{\otimes d}\otimes\y^{\otimes d},\cT\rangle|, \quad \cT\in \rS^{d}\R^n\otimes\rS^{d}\R^n,\\
&\|\cT\|_{\infty,\C,m,n}=\max_{\z,\w\in\rC_{\C}(m,n)}|\Re\langle \z^{\otimes d}
\otimes\w^{\otimes d},\cT\rangle|, \quad \cT\in \rS^{d}\C^n\otimes\rS^{d}\C^n.
\end{aligned}
\end{equation}
\begin{lemma}\label{mnnormlem} Let $2\le n,d\in\N$ be fixed.  
\begin{enumerate}
\item
The quantities in \eqref{defspecmn} are real norms that satisfy the following inequalities for specified values of $m$:
\begin{equation}\label{specmnin}
\begin{aligned}
&\|\cT\|_{\infty,\R,m,n}\le \|\cT\|_{\infty,\R}\le \frac{1}{1-\frac{d}{m}}\|\cT\|_{\infty,\R,m,n} , \quad \cT\in \rS^{d}\R^n \textrm{ and } d<m,\\
&\|\cT\|_{\infty,\C,m,n}\le \|\cT\|_{\infty,\C}\le  \frac{1}{1-\frac{d}{m}}\|\cT\|_{\infty,\C,m,n},\quad \cT\in \rS^{d}\C^n \textrm{ and } d<m,\\
&\|\cT\|_{\infty,\R,m,n}\le \|\cT\|_{\infty,\R}\le  \frac{1}{1-\frac{2d}{m}} \|\cT\|_{\infty,\R,m,n}, \quad \cT\in \rS^{d}\R^n\otimes\rS^{d}\R^n  \textrm{ and } 2d<m,\\
&\|\cT\|_{\infty,\C,m,n}\le \|\cT\|_{\infty,\C}\le  \frac{1}{1-\frac{2d}{m}} \|\cT\|_{\infty,\C,m,n}, \quad \cT\in \rS^{d}\C^n\otimes\rS^{d}\C^n  \textrm{ and } 2d<m.
\end{aligned}
\end{equation}
\item  Assume that $d$ satisfy the above inequalities for the corresponding norms.
Denote by $\|\cT\|_{1,\R,m,n},  \|\cT\|_{1,\C,m,n}$ the corresponding real dual norms.  Then the following inequalities hold:
\begin{equation}\label{nucmnin}
\begin{aligned}
&(1-\frac{d}{m})\|\cT\|_{1,\R,m,n} \le\|\cT\|_{1,\R}\le\|\cT\|_{1,\R,m,n} , \quad \cT\in \rS^{d}\R^n \textrm{ and } d<m,\\
&(1-\frac{d}{m})\|\cT\|_{1,\C,m,n} \le\|\cT\|_{1,\C}\le\|\cT\|_{1,\C,m,n} \quad \cT\in \rS^{d}\C^n \textrm{ and } d<m,\\
&(1-\frac{2d}{m})\|\cT\|_{1,\R,m,n} \le\|\cT\|_{1,\R}\le\|\cT\|_{1,\R,m,n}, \quad \cT\in \rS^{d}\R^n\otimes\rS^{d}\R^n  \textrm{ and } 2d<m,\\
&(1-\frac{2d}{m})\|\cT\|_{1,\C,m,n} \le\|\cT\|_{1,\C}\le\|\cT\|_{1,\C,m,n}, \quad \cT\in \rS^{d}\C^n\otimes\rS^{d}\C^n  \textrm{ and } 2d<m.
\end{aligned}
\end{equation}
\end{enumerate}
\end{lemma}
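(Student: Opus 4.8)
The plan is to establish all the two‑sided inequalities in part \emph{(1)} first, deduce from them that the discretized quantities are genuine norms, and then read off part \emph{(2)} purely by duality. Throughout I treat each space as a real inner‑product space and each norm as a real norm, so that the covering Lemma \ref{PGcov}, which covers the \emph{real} unit sphere, applies uniformly; for $\F=\C$ this is done through the identification $\C^n\sim\R^{2n}$ and the set $\rC_{\C}(m,n)$. Since each discretizing set is a subset of the unit sphere over which the true norm is maximized, the left inequalities $\|\cT\|_{\cdot,m,n}\le\|\cT\|_{\cdot}$ in \eqref{specmnin} are immediate, and the whole content is in the right‑hand (upper) inequalities.

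I begin with $\cT\in\rS^d\R^n$. By Banach's characterization \eqref{Banchar} there is a unit $\x_0$ with $|\langle\x_0^{\otimes d},\cT\rangle|=\|\cT\|_{\infty,\R}$, and by Lemma \ref{PGcov} a point $\x_1\in\rC(m,n)$ with $\|\x_0-\x_1\|\le 1/m$. The telescoping identity \eqref{telid} of Lemma \ref{prsde} expresses $\x_0^{\otimes d}-\x_1^{\otimes d}$ as a sum of $d$ rank‑one tensors, each having one factor equal to $\x_0-\x_1$ and the rest unit vectors, hence each of Euclidean norm $\le 1/m$. Pairing against $\cT$ term by term and invoking the definition of the spectral norm gives $|\langle\x_0^{\otimes d}-\x_1^{\otimes d},\cT\rangle|\le (d/m)\|\cT\|_{\infty,\R}$, so that $\|\cT\|_{\infty,\R}\le\|\cT\|_{\infty,\R,m,n}+(d/m)\|\cT\|_{\infty,\R}$, which rearranges to the claim. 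The case $\rS^d\C^n$ is identical once the phase of the maximizer is chosen so that $\langle\z_0^{\otimes d},\cT\rangle=\Re\langle\z_0^{\otimes d},\cT\rangle=\|\cT\|_{\infty,\C}$ and $\rC_{\C}(m,n)$ is used. For $\cT\in\rS^d\F^n\otimes\rS^d\F^n$, using Banach's characterization \eqref{Bancharbi} and telescoping over the $2d$ tensor factors, one picks up the factor $2d/m$; here the error term is again a multiple of the very norm being estimated, so the estimate closes directly.

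The essential case is $\cT\in\rB\rH_{n^{\times d}}$, and this is where the main obstacle lies. By \eqref{Bancharbih} the $b$‑spectral norm is attained at $\x_0^{\otimes d}\otimes\bar\x_0^{\otimes d}$ for a unit $\x_0$; choosing $\x_1\in\rC_{\C}(m,n)$ with $\|\x_0-\x_1\|\le 1/m$ and telescoping the $2d$‑fold product $\x^{\otimes d}\otimes\bar\x^{\otimes d}$ (note $\|\bar\x_0-\bar\x_1\|=\|\x_0-\x_1\|$) yields $2d$ rank‑one summands of norm $\le 1/m$. Crucially, these summands are \emph{general} product states, not of the special form $\cX\otimes\overline{\cX}$, so they can only be bounded by the full spectral norm, giving
\begin{equation*}
\|\cT\|_{bspec}\le\|\cT\|_{bspec,m,n}+\frac{2d}{m}\|\cT\|_{\infty,\C}.
\end{equation*}
The difficulty is precisely that the error term carries $\|\cT\|_{\infty,\C}$ rather than $\|\cT\|_{bspec}$; to close the estimate I must re‑express it through $\|\cT\|_{bspec}$. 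This is supplied by the distortion bound \eqref{lubSH} of Corollary \ref{charsepscb}: combining $\|\cT\|_{\infty,\C}\le\|\cT\|$ from \eqref{infty21in} with $\|\cT\|\le{n+d-1\choose d}^{3/2}\|\cT\|_{bspec}$ gives $\|\cT\|_{\infty,\C}\le{n+d-1\choose d}^{3/2}\|\cT\|_{bspec}$. Substituting and rearranging yields the stated bound under $2d{n+d-1\choose d}^{3/2}<m$, which is exactly where that exponent enters. The norm property of each discretized quantity then follows at once: homogeneity and the triangle inequality are clear from the definitions as maxima of absolute values of (sesqui)linear functionals, while positive‑definiteness follows because the inequalities just proved sandwich each discretized quantity between positive multiples of a genuine norm.

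Finally, part \emph{(2)} is a formal consequence of part \emph{(1)} by duality within each respective real normed space. Each pair of inequalities in \eqref{specmnin} has the form $\mu\le\nu\le c\,\mu$, where $\mu$ is the discretized spectral quantity, $\nu$ the true spectral norm, and $c=(1-\theta/m)^{-1}$ with $\theta$ equal to $d$, $2d$, or $2d{n+d-1\choose d}^{3/2}$ as appropriate. Since dualization is order‑reversing and satisfies $(c\mu)^*=c^{-1}\mu^*$, it converts this chain into $c^{-1}\mu^*\le\nu^*\le\mu^*$. Identifying $\mu^*$ with the corresponding discretized nuclear quantity ($\|\cdot\|_{1,\R,m,n}$, $\|\cdot\|_{1,\C,m,n}$, or $\|\cdot\|_{nuc,m,n}$) and $\nu^*$ with the true nuclear norm, and using $c^{-1}=1-\theta/m$, gives precisely \eqref{nucmnin}. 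The only inputs here are the elementary facts about dual norms collected in Appendix \ref{sec:eucgen}.
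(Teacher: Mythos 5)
Your proof is correct and follows essentially the same route as the paper's: the lower bounds come from $\rC(m,n)\subset\mathbb{S}^{n-1}$, the upper bounds from telescoping \eqref{telid} at a Banach maximizer and closing the resulting self-referential estimate, with the distortion bound \eqref{lubSH} converting the error term back to $\|\cdot\|_{bspec}$ in the bi-symmetric Hermitian case, and part \emph{(2)} obtained by the order-reversing duality $(c\mu)^*=c^{-1}\mu^*$. The only (inessential) difference is that in the $bspec$ case you bound the telescoping error first by $\frac{2d}{m}\|\cT\|_{\infty,\C}$ and then relax to $\|\cT\|$ via \eqref{infty21in}, whereas the paper passes to the Hilbert--Schmidt norm $\|\cT\|$ directly by Cauchy--Schwarz before invoking \eqref{lubSH}.
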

\begin{proof}\emph{(1)}  As $\rC(m,n)\subset \mathbb{S}^{n-1}$ and $\rC_{\C}(m,n)\subset \{\z\in\C^n: \|\z\|=1\}$, use the characterizations \eqref{defspecnuc},  \eqref{Banthm},  \eqref{cTinfeq} and \eqref{bspeceq} to deduce the first inequality in the five inequalities of \eqref{specmnin}.  We now show how to deduce the second inequality in the five inequalities of \eqref{specmnin}.    We start with the proof of the inequality $\|\cT\|_{\infty,\R}\le \frac{1}{1-\frac{d}{m}}\|\cT\|_{\infty,\R,m,n}$ for $\cT\in \rS^d\R^n$.
Assume that $\|\cT\|_{\infty,\R}=|\langle \x^{\otimes d},\cT\rangle|$ for $\x\in \mathbb{S}^{n-1}$.  Let $\y\in \rC(m,n)$ such that $\|\x-\y\|\le \frac{1}{m}$. Clearly,
\begin{equation*}
\begin{aligned}
&\|\cT\|_{\infty,\R}=|\langle \y^{\otimes d} +(\x^{\otimes d}-\y^{\otimes d}),\cT\rangle|\le |\langle \y^{\otimes d},\cT\rangle|+|\langle \x^{\otimes d}-\y^{\otimes d},\cT\rangle|\le\\
&\|\cT\|_{\infty,\R,m,n}+|\langle \x^{\otimes d}-\y^{\otimes d},\cT\rangle|.
\end{aligned}
\end{equation*}
Assume that $\x\ne \y$,  and let $(\x-\y)=\|\x-\y\| \z$, where $\z=\frac{1}{\|\x-\y\|}(\x-\y)$.
Use Banach's characterization of $\|\cT\|_{\infty,\R}$ and the equality \eqref{telid} to deduce $ |\langle \x^{\otimes d}-\y^{\otimes d}),\cT\rangle\le d\|\x-\y\|\|\cT\|_{\infty,\R}$.    That is,  $\|\cT\|_{\infty,\R}\le \|\cT\|_{\infty,\R,m,n}+\frac{d}{m}\|\cT\|_{\infty,\R}$.
Assume that  $d<m$ and deduce the second inequality of the first set of inequalities 
in \eqref{specmnin}.   
To deduce the inequality
 $\|\cT\|_{\infty,\C}\le  \frac{1}{1-\frac{d}{m}}\|\cT\|_{\infty,\C,m,n}$ for $ \cT\in \rS^{d}\C^n$ and $d<m$ we  assume that $\|cT\|_{\infty,\C}=\langle \x^{\otimes d},\cT\rangle$ for some $\x\in\C^n, \|\x\|=1$.  Suppose that $\y\in \rC_{\C}(m,n)$, and $\|\x-\y\|\le 1/m$.  Then 
\begin{equation*}
 \begin{aligned}
 &\|\cT\|_{\infty,\C}=\Re \langle \y^{\otimes d},\cT\rangle +\Re\langle  \x^{\otimes d}- \y^{\otimes d},\cT\rangle\le\|\cT\|_{\C,m,n}+|\langle  \x^{\otimes d}- \y^{\otimes d}|,\cT\rangle|\le\\
&\|\cT\|_{\C,m,n}+\frac{d}{m}\|\cT\|_{\infty,\C}\Rightarrow \|\cT\|_{\infty,\C}\le  \frac{1}{1-\frac{d}{m}}\|\cT\|_{\infty,\C,m,n} \textrm{ if } d<m.
\end{aligned}
 \end{equation*}

 Next, we point out briefly how to deduce the inequality $\|\cT\|_{\infty,\R}\le  \frac{1}{1-\frac{2d}{m}} \|\cT\|_{\infty,\R,m,n}$ for $\cT\in \rS^{d}\R^n\otimes\rS^{d}\R^n $  and  $2d<m$.  First, use Banach's theorem to deduce that $\|\cT\|_{\infty,\R}=|\langle \x^{\otimes d}\otimes\tilde\x^{\otimes d},\cT\rangle$ for some $\x,\tilde\x\in \mathbb{S}^{n-1}$.   Assume that $\y,\tilde\y\in \rC(m,n)$ such that $\|\x-\y\|,\|\tilde\x-\tilde \y\|\le \frac{1}{m}$.   Hence, 
 $$\|\cT\|_{\infty,\R}\le \|\cT\|_{\infty,\R,m,n}+|\langle \x^{\otimes d}\otimes\tilde\x^{\otimes d}-\y^{\otimes d}\otimes\tilde\y^{\otimes d},\cT\rangle|.$$
Use the equality \eqref{telid} and the maximum characterization of $\cT$ to deduce the inequality $\|\cT\|_{\infty,\R}\le \|\cT\|_{\infty,\R,m,n}+\frac{2d}{m}\|\cT\|_{\infty,\R}$.  Assuming that $2d<m$ we deduce the inequality 

\noindent
$\|\cT\|_{\infty,\R}\le  \frac{1}{1-\frac{2d}{m}}\|\cT\|_{\infty,\R,m,n}$.   The inequality for $\F=\C$ deduced similarly to the case $\cT\in\rS^{d}\C^n$.
Thus, under the assumptions that $m$ satisfies specified inequalities we that the quantities in \eqref{defspecmn} are real norms. 

\noindent\emph{(2)} We first prove the first set of inequalities in \eqref{nucmnin}.
Since  $\rS^d\R^n$ is a vector space over $\R$,  the norms $\|\cdot\|_{\infty,\R}$ and 
$\|\cdot\|_{1,\R}$ are real dual norms.  The inequality$\|\cT\|_{\infty,\R,m,n}\le \|\cT\|_{\infty,\R}$,  and the maximum characterization of the dual norms \eqref{defnustar} yields the inequality $\|\cT\|_{1,\R,m,n}\ge \|\cT\|_{1,\R}$.   Assume that $\nu$ and $\nu^*$ are dual norm.  Let $t>$ and define a new norm $\nu_1=t\nu$.  Then $\nu_1^*=t^{-1}\nu$.  Hence, the second inequality of the first set of inequalities in \eqref{nucmnin} yield that $(1-d/m)\|\cT\|_{1,\R,m,n}\le \|\cT\|_{1,\R}$.

Assume now that $\cT\in\rS^d\C^n$.  The definition \eqref{defnustar} of $\nu^*$ implies that $\|\cT\|_{1,\C}$ is a real dual norm of $\|\cT\|_{\infty,\C}$.  Therefore,  the previous arguments yield the second set of the inequalities in \eqref{nucmnin}.  Similar arguments yield the rest of the inequalities in \eqref{nucmnin}. 
\end{proof}
\subsection{Proof of Theorem \ref{polthm}}\label{subsec:prfpt}
We first show that the norm $\|\cT\|_{\infty,\F}$ is 
$\varepsilon$-approximable in poly-time in  the input $\cT\in \rS^d\Q[\sqrt{-1}]^n$, $\frac{1}{\varepsilon}$ and $d$, where $n$ is fixed and $\varepsilon\in\Q\cap (0,1)$.
We start with the case $\F=\R$ and $\cT\in \rS^d\R^n$.  Choose 
\begin{equation}\label{mchoic}
m=\lceil\frac{d(1+\varepsilon)}{\varepsilon}\rceil, \quad \varepsilon\in\Q\cap (0,1).
\end{equation}
Recall that the cardinality $\rC(m,n)$ is of order $(2mn+1)^n=O\left(\big(\frac{d}{\varepsilon}\big)^n\right)$.
We need to estimate the number of flops and the size of storage to compute $\langle \bu^{\otimes d},\cT\rangle$ for $\bu=(u_1,\ldots,u_d)^\top\in \rC(m,n)$.   We code the tensors $\bu^{\otimes d},\cT\in\rS^d\Q^n$ by the coefficients of polynomials $f(\x)$ and $g(\x)$ given by \eqref{defpolfx}.   We assume that the  entries of the tensor $\cT$ is given by $g_{\bj},\in J(d,n)$.
The coefficient $f_{\bi}, \bi=(i_1,\ldots,i_n)\in J(d,n)$ is $u_1^{i_1}\cdots u_n^{i_n}$.
Hence $\langle{\bu^{\otimes d},\cT}=\langle \bbf,\bg\rangle$ as in \eqref{inprodFJ}, and $\langle\bu^{\otimes d},\cT\rangle=\bg(\bu)$.  Observe that $|{n+d-1\choose d}|=|{n+d-1\choose n-1}|=O(d^{n-1})$ for a fixed $n$ and $d\gg 1$.
Recall that the computation of $c(\bj)=\frac{d!}{j_1!\cdots j_n!}$ needs $2d$ multiplications and one division.  Furthermore $c(\bj)\le d!=O(2^{d(\log_2 d}-\log_2e)$.  Hence, computing $\langle \bu^{\otimes d},\cT\rangle, \bu\in\rC(m,n)$ is polynomial in the storage entries of $f_{\bj}\in\Q$ and $\log m$.  However, to compute the first maximum in \eqref{defspecmn} one needs $O\left(\big(\frac{d}{\varepsilon}\big)^n\right)$ computations of $\langle \bu^{\otimes d},\cT\rangle$.   This shows that the computation of $\|cT\|_{\infty,\R,m,n}$ is polynomial in  the input $\cT\in \rS^d\Q^n$, $\frac{1}{\varepsilon}$ and $d$.  Polynomiality in $\varepsilon$ is $\varepsilon^{-n}$.
Our assumption \eqref{mchoic} combined with the first set of inequalites in \eqref{specmnin} yields
\begin{equation}\label{cTinfap}
\|\cT\|_{\infty,\R,m,n}\le\|\cT\|_{\infty,\R}\le (1+\varepsilon)\|\cT\|_{\infty,\R,m,n}.
\end{equation}
Similar results hold for other spectral norms in Theorem  \ref{polthm}.
It was shown in \cite{FW20} that  the GME of symmetric states is poly-time computable in $d$ for a fixed value of $n\ge 2$,  using poly-time solvability of certain systems of polynomial equations.

We now discuss poly-time approximation of the nuclear norms in Theorem  \ref{polthm}.   Let us first discuss $\varepsilon$-approximation of $\|\cT\|_{1,\R}$ by $\|\cT\|_{1,\R,m,n}$ for $\cT\in\rS^d\R^n$.  The condition \eqref{mchoic} combined with the first set of inequalities in \eqref{nucmnin} yields:
\begin{equation}\label{cT1ap}
(1-\varepsilon)\|\cT\|_{1,\R,m,n}\le\|\cT\|_{1,\R}\le \|\cT\|_{1,\R,m,n}.
\end{equation}
Since we showed that $\|\cT\|_{\infty,\R,m,n}$ is poly-time computable,  Theorem 3 in \cite{FL16} yields that $\|\cT\|_{1,\R,m,n}$ is poly-time computable.
Hence, $\varepsilon$-approximation  of $\|\cT\|_{1,\R}$ is poly-time computable.
Similar results hold for other nuclear norms in Theorem  \ref{polthm}.\\

In what follows, we show directly how to compute the nuclear norms in Theorem  \ref{polthm} using linear programming.  Our approach is similar to \cite{Per04}.
We restrict our discussion to the norm $\|\cdot\|_{1,\C,m,n}$.  Recall that we view  $\|\cdot\|_{\infty,\C,m,n}$ and  $\|\cdot\|_{1,\C,m,n}$ as norms over $\R$.
The characterization of $\|\cdot\|_{\infty,\C,m,n}\|$ in \eqref{defspecmn} for $2d<m$ yields that the extreme points the norm  $\|\cdot\|_{\infty,\C,m,n}$ on $\rS^d\C^n\otimes\rS^d\C^n$ are $\pm \z^{\otimes d}\otimes \w^{\otimes d}$ for $\bz,\bw\in\rC_{\C}(m,n)$.    Note, that from the definition of $\rC_{\C}(m,n)$ it follows that $-\rC_{\C}(m,n)=\overline{\rC_{\C}(m,n)}=\rC_{\C}(m,n)$.
Assume that $\cT\in(\rS^d\C^n\otimes \rS^d\C^n)\cap (\rS^d\Q[\sqrt{-1}]^n\otimes   \rS^d\Q[\sqrt{-1}]^n)$.  That is,  $\cT$ has Gaussian rational entries.   To $\cT$ and $\bz,\bw\in\rC_{\C}(m,n)$ we correspond polynomials $\bbf(\z,\bw)$ given in \eqref{defpolfxy}. 
\begin{lemma}\label{lpBH} Let $\cT\in(\rS^d\C^n\otimes \rS^d\C^n)\cap (\rS^d\Q[\sqrt{-1}]^n\otimes   \rS^d\Q[\sqrt{-1}]^n)$.   Assume that to tensors $\cT$ and $\bz^{\otimes d}\otimes \bw^{\otimes d}$ correspond vectors $\bbf$ and $\bg_{\z,\bw}$ as in \eqref{defpolfxy}. 
Suppose furthermore that $2d<m$. Then $\|\cT\|_{1,m,n}$ is the solution of the following linear programming problem:
\begin{equation}\label{lpBH1}
\|\cT\|_{1,m,n}=\min_{\bbf=\sum_{\z,\bw\in\rC_{\C}(m,n)} a_{\z,\bw}\bg_{\z,\bw}-b_{\z,\bw}\bg_{\z,\bw}, a_{\z,\bw},a_{\z,\bw}\ge 0} a_{\z,\bw}+b_{\z,\bw}
\end{equation}
\end{lemma}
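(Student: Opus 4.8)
The plan is to treat $\|\cdot\|_{1,m,n}$ as the gauge (Minkowski functional) of a symmetric polytope whose vertices are the discretized rank-one atoms, and then to observe that evaluating such a gauge is literally a linear program once each signed coefficient is split into its positive and negative parts. The whole argument runs parallel to the derivation of \eqref{charnucb} in Lemma \ref{bisymspecnuclem}, with the continuum $\{\x\in\C^n:\|\x\|=1\}$ replaced by the finite covering set $\rC_{\C}(m,n)$.

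First I would fix the coding. Using the isometry $L$ of Lemma \ref{isolem}, identify $\rB\rH_{n^{\times d}}$ with the real inner product space $\rP_{her}(d,n,\C)$ under the pairing \eqref{inprodbs}; under $L$ the tensor $\cT$ becomes $\bbf$ and each atom $\bw^{\otimes d}\otimes\bar\bw^{\otimes d}$ becomes $\bg_{\bw}$ with $g_{\bi,\bj,\bw}=\bw^{\bi}\bar\bw^{\bj}$ as in \eqref{bfbgexp}, so that
\[ \langle \bw^{\otimes d}\otimes\bar\bw^{\otimes d},\cT\rangle=\langle \bg_{\bw},\bbf\rangle,\qquad \bw\in\rC_{\C}(m,n). \]
With this identification \eqref{defspecmn} reads $\|\cT\|_{bspec,m,n}=\max_{\bw\in\rC_{\C}(m,n)}|\langle\bg_{\bw},\bbf\rangle|$. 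Since $2d\binom{n+d-1}{d}^{3/2}<m$, Lemma \ref{mnnormlem} guarantees that $\|\cdot\|_{bspec,m,n}$ is genuinely a norm, whence the finite family $\{\bg_{\bw}\}_{\bw\in\rC_{\C}(m,n)}$ spans $\rP_{her}(d,n,\C)$ (no nonzero tensor is annihilated by every $\bg_{\bw}$). Repeating verbatim the extreme-point argument used for \eqref{charnucb} — the unit ball of the dual norm is the polar of the symmetric polytope $\{\cC:|\langle\bg_{\bw},\cC\rangle|\le1\ \forall\,\bw\}$, whose extreme points are exactly $\pm\bg_{\bw}$ — yields the atomic characterization
\[ \|\cT\|_{1,m,n}=\min\Big\{\textstyle\sum_{\bw}|c_{\bw}|:\ \bbf=\sum_{\bw}c_{\bw}\bg_{\bw},\ c_{\bw}\in\R\Big\}, \]
the minimum being finite precisely because the $\bg_{\bw}$ span.

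Finally I would linearize. For each real $c_{\bw}$ one has $|c_{\bw}|=\min\{a_{\bw}+b_{\bw}:a_{\bw}-b_{\bw}=c_{\bw},\ a_{\bw},b_{\bw}\ge0\}$, attained at $a_{\bw}=\max(c_{\bw},0)$ and $b_{\bw}=\max(-c_{\bw},0)$; conversely any feasible $(a_{\bw},b_{\bw})$ gives $c_{\bw}=a_{\bw}-b_{\bw}$ with $\sum(a_{\bw}+b_{\bw})\ge\sum|c_{\bw}|$. Substituting $c_{\bw}=a_{\bw}-b_{\bw}$ into the atomic characterization therefore turns it into the minimization \eqref{lpBH1}, whose objective $\sum_{\bw}(a_{\bw}+b_{\bw})$ and equality constraints $\bbf=\sum_{\bw}(a_{\bw}-b_{\bw})\bg_{\bw}$, together with $a_{\bw},b_{\bw}\ge0$, are all linear in the unknowns $(a_{\bw},b_{\bw})_{\bw\in\rC_{\C}(m,n)}$, i.e. a linear program.

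The main obstacle I anticipate is not the linearization, which is routine, but making the extreme-point/polar step fully rigorous for the \emph{finite} atom set: one must confirm that under the standing hypothesis on $m$ the discretized form $\|\cdot\|_{bspec,m,n}$ is positive definite (so its dual is finite and the polytope bounded), and that the set of extreme points of the polar is contained in $\{\pm\bg_{\bw}\}$. Note that even if some atom $\bg_{\bw}$ turns out to be a convex combination of the others, this only changes which representations are optimal, not the value of the minimum, so it is harmless for \eqref{lpBH1}. The one piece of bookkeeping that needs genuine care is the conjugation convention in the coding \eqref{bfbgexp}, ensuring that the real pairing $\langle\bg_{\bw},\bbf\rangle$ reproduces $\langle\bw^{\otimes d}\otimes\bar\bw^{\otimes d},\cT\rangle$ exactly.
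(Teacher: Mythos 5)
Your proposal is correct and follows essentially the same route as the paper: both arguments identify $\|\cdot\|_{1,m,n}$ as the dual norm whose unit ball is the convex hull of the atoms $\pm\bg_{\bw}$, $\bw\in\rC_{\C}(m,n)$, derive the atomic (convex-combination) characterization from this, and obtain the linear program by splitting each signed coefficient into nonnegative parts (the paper phrases the converse inequality via the triangle inequality and $\|\pm\bg_{\bw}\|_{1,m,n}=1$, which is the same linearization step). Your extra caveat that redundant atoms do not affect the optimal value is a harmless refinement of the paper's blanket claim about extreme points.
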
 
\begin{proof}  It suffices to prove the above characterization where $\|\cT\|_{1,m,n}=1$.  As the extreme points of the unit ball $\|\cX\|_{1,m,n}\le 1$ are $\pm \z^{\otimes d}\otimes \w^{\otimes d},\z,\bw\in\rC_{\C}(m,n)$,  $\cT$ is a convex combination of $\pm \z^{\otimes d}\otimes \w^{\otimes d},\bw\in\rC_{\C}(m,n)$.  That is 
$$\bbf=\sum_{\z,\bw\in\rC_{\C}(m,n)} a'_{\z,\bw}\bg_{\z,\bw}-b'_{\z,\bw} \bg_{\z,\bw}, a'_{\z,\bw},b'_{\z,\bw}\ge 0,  \sum_{\z,\bw\in\rC_{\C}(m,n)} a'_{\z,\bw}+b'_{\z,\bw}=1.$$
Assume that $\bbf=\sum_{\z,\bw\in\rC_{\C}(m,n)} a_{\z,\bw}\bg_{\z,\bw}-b_{\z,\bw} \bg_{\z,\bw}, a_{\z,\bw},b_{\z,\bw}\ge 0$.
As $\|\pm \bg_{\z,\bw}\|_{1,m,n}=1$ the triangle inequality yields 
$1=\|\bbf\|_{1,m,n}\le\sum_{\z,\bw\in \rC_{\C}(m,n)}a_{\z,\bw}+b_{\z,\bw}=1$.
\end{proof}

We now estimate the number of flops to find $\cT\|_{1,\C,m,n}$.  Let us recall the complexity result in \cite{LS15}.  Assume that we have
a linear programming problem of the form $min_{A\x=\bb, \x\ge 0}\bc^\top \x$, where $A\in \Q^{M\times N},\bc\in \Q^N $.   Then the arithmetic complexity, (the number of flops),  is of order $O(\sqrt{M}(MN+M^2))$.
For a given $\varepsilon\in (0,1)$ we want to have the inequality
\begin{equation}\label{cT1ap1}
(1-\varepsilon)\|\cT\|_{1,\C,m,n}\le\|\cT\|_{1,\C}\le \|\cT\|_{1,\C,m,n}.
\end{equation}
Use the last set of inequalities in \eqref{nucmnin} to deduce that
\begin{equation*}
m=\lceil\frac{2d}{\varepsilon}\rceil=O(\frac{d}{\varepsilon}).
\end{equation*}
Observe next that for the minimum problem \eqref{lpBH1} the values of $M$ and $N$ are as follows:
\begin{equation*}
M=2{n+d-1\choose d}^2=O(d^{2(n-1)}), \quad N=\big((2m(2n)+1)^{2n}-1\big)^2=O(\frac{d^{2(3n-1)n}}{\varepsilon^{4n}}).
\end{equation*}
Hence, the arithmetic complexity of finding $\|\cT\|_{1,\C,m,n}$ that satisfies \eqref{cT1ap1} is $O(\frac{d^{3(n-1)+2(3n-1)n}}{\varepsilon^{4n}})$.

Use the equality \eqref{bspecinfeq} to deduce
\begin{corollary} Assume that $\cR\in \rB\rH_{n^{\times d},+,1}\cap\otimes^{2d}(\Q[\sqrt{-1}])^n$, where $n\ge 2$ is a fixed integer.  For a given $\varepsilon\in\Q_+\cap (0,1)$ one can decide in poly-time in $d$ if the distance of $\cT$ to strongly separable density tensors less than $\varepsilon$.  More precisely,  the arithmetic complexity of this problem is of order $O(\frac{d^{3(n-1)+2(3n-1)n}}{\varepsilon^{4n}})$.
\end{corollary}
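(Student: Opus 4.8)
The plan is to reduce the geometric distance question to the computation of the b-nuclear norm, which the preceding subsection already shows to be poly-time approximable. By Theorem \ref{charseps} a density tensor $\cR\in\rB\rH_{n^{\times d},+,1}$ is strongly separable if and only if $\|\cR\|_{bnuc}=1$, while $\|\cR\|_{bnuc}\ge 1$ always holds; hence the nonnegative \emph{separability defect} $\Delta(\cR):=\|\cR\|_{bnuc}-1$ vanishes exactly on $\mathrm{Seps}_{n^{\times d}}$. First I would establish a two-sided comparison between $\Delta(\cR)$ and the Euclidean distance $\dist(\cR,\mathrm{Seps}_{n^{\times d}})$.

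For the upper bound $\dist(\cR,\mathrm{Seps}_{n^{\times d}})\le\Delta(\cR)$, write a minimal b-nuclear decomposition $\cR=\sum_i c_i\cE_i$ with $\cE_i=\x_i^{\otimes d}\otimes\bar\x_i^{\otimes d}$, $\|\x_i\|=1$, and $\sum_i|c_i|=\|\cR\|_{bnuc}=:t$. Since $\tr\cE_i=1$ and $\tr\cR=1$, the positive and negative masses $P=\sum_{c_i>0}c_i$ and $Q=\sum_{c_i<0}|c_i|$ satisfy $P-Q=1$ and $P+Q=t$, so $P=\tfrac{t+1}{2}$ and $Q=\tfrac{t-1}{2}$. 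Setting $\cS=P^{-1}\sum_{c_i>0}c_i\cE_i\in\mathrm{Seps}_{n^{\times d}}$ and using $\|\cE_i\|=1$ gives $\|\cR-\cS\|\le(1-P^{-1})P+Q=t-1=\Delta(\cR)$. For the reverse bound I would use the reverse triangle inequality $\Delta(\cR)=\|\cR\|_{bnuc}-\|\cS^\star\|_{bnuc}\le\|\cR-\cS^\star\|_{bnuc}$ for the nearest $\cS^\star\in\mathrm{Seps}_{n^{\times d}}$, together with the distortion estimate $\|\cdot\|_{bnuc}\le{n+d-1\choose d}^{3/2}\|\cdot\|$ from part (5) of Corollary \ref{charsepscb}, yielding $\Delta(\cR)\le{n+d-1\choose d}^{3/2}\dist(\cR,\mathrm{Seps}_{n^{\times d}})$. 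Thus $\Delta(\cR)$ and $\dist(\cR,\mathrm{Seps}_{n^{\times d}})$ are comparable up to a factor polynomial in $d$ for fixed $n$.

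Next I would invoke the poly-time machinery. By Lemma \ref{mnnormlem} the norm $\|\cdot\|_{bnuc}$ is sandwiched by the discretized norm $\|\cdot\|_{bnuc,m,n}$ taken over the covering $\rC_{\C}(m,n)$, and by Lemma \ref{lpBH} the latter is the value of an explicit linear program with $N=O(d^{(3n-1)n}/\varepsilon^{2n})$ variables and $M=O(d^{2(n-1)})$ constraints. Choosing $m=\lceil 2d{n+d-1\choose d}^{3/2}/\varepsilon\rceil=O(d^{(3n-1)/2}/\varepsilon)$ forces relative error $\varepsilon$, and the LP complexity count $O(\sqrt{M}(MN+M^2))$ already carried out gives arithmetic complexity $O(d^{3(n-1)+(3n-1)n}/\varepsilon^{2n})$. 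Having an $\varepsilon$-accurate value of $\|\cR\|_{bnuc}$, and hence of $\Delta(\cR)$, one decides as follows: if the computed defect lies below $\varepsilon$ then the upper bound yields $\dist(\cR,\mathrm{Seps}_{n^{\times d}})<\varepsilon$, while a large value of $\Delta(\cR)$ certifies separation through the reverse bound.

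The main obstacle will be the comparison step and the promise gap it introduces: because the two sides of $\dist\le\Delta\le{n+d-1\choose d}^{3/2}\dist$ differ by a $\mathrm{poly}(d)$ factor, the defect $\Delta(\cR)$ is only a multiplicative—not additive—surrogate for the distance, so deciding the sharp threshold ``$\dist<\varepsilon$'' requires either running the computation to precision $\varepsilon\,{n+d-1\choose d}^{-3/2}$, or, for a clean additive guarantee, replacing the nuclear-norm route by a direct projection onto the discretized polytope $\mathrm{conv}\{\w^{\otimes d}\otimes\bar\w^{\otimes d}:\w\in\rC_{\C}(m,n)\}$, whose Hausdorff distance to $\mathrm{Seps}_{n^{\times d}}$ is at most $2d/m$ by Lemma \ref{prsde}. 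Both routes remain polynomial in $d$ for fixed $n$, since the dependence on $1/\varepsilon$ stays polynomial throughout.
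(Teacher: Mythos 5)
Your proposal is correct in substance and is actually more explicit than the paper, which states this corollary with no separate proof: it is meant to follow directly from the linear-programming computation of $\|\cdot\|_{bnuc,m,n}$ (Lemma \ref{lpBH} together with the last set of inequalities in \eqref{nucmnin}) and the criterion $\cR\in\mathrm{Seps}_{n^{\times d}}\iff\|\cR\|_{bnuc}=1$ of Theorem \ref{charseps}, leaving the passage from the nuclear-norm defect to Euclidean distance entirely implicit. Your two-sided comparison $\dist(\cR,\mathrm{Seps}_{n^{\times d}})\le\|\cR\|_{bnuc}-1\le{n+d-1\choose d}^{3/2}\dist(\cR,\mathrm{Seps}_{n^{\times d}})$ supplies exactly that missing link, and it checks out: the trace bookkeeping $P=(t+1)/2$, $Q=(t-1)/2$ and the estimate $\|\cR-\cS\|\le t-1$ are correct since each $\x^{\otimes d}\otimes\bar\x^{\otimes d}$ has unit Euclidean norm, and the reverse bound correctly combines the reverse triangle inequality with $\|\cS^\star\|_{bnuc}=1$ and the distortion bound \eqref{lubSH}. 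You are also right to flag the promise gap: since even exact knowledge of $\Delta(\cR)$ locates $\dist(\cR,\mathrm{Seps}_{n^{\times d}})$ only within the multiplicative factor ${n+d-1\choose d}^{3/2}$, the sharp threshold ``$\dist<\varepsilon$'' cannot be decided from the defect alone, and your second route --- projecting onto $\mathrm{conv}\{\w^{\otimes d}\otimes\bar\w^{\otimes d}:\w\in\rC_{\C}(m,n)\}$, whose Hausdorff distance to $\mathrm{Seps}_{n^{\times d}}$ is at most $2d/m$ by Lemma \ref{prsde} --- is the correct repair; it is in essence the scheme of \cite{Per04} that the paper's complexity count is modeled on.

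Two quantitative caveats. First, your precision-boosting fix for the defect route, computing $\|\cR\|_{bnuc}$ to additive accuracy $\varepsilon{n+d-1\choose d}^{-3/2}$, inflates the exponent of $d$ beyond the stated $3(n-1)+(3n-1)n$ (substituting $\varepsilon{n+d-1\choose d}^{-3/2}$ for $\varepsilon$ in the LP count multiplies it by $O(d^{3n(n-1)})$), whereas the projection route needs only $m=O(d/\varepsilon)$ but replaces the linear program by a convex quadratic one; so neither variant reproduces the paper's exact exponent, though both preserve polynomiality in $d$ for fixed $n$, which is the real content of the corollary. Second, instances with $\dist$ exactly at the threshold $\varepsilon$ remain undecidable at any fixed precision --- the standard weak-membership caveat --- which the paper's informal statement also glosses over; your one clean implication (computed defect below $\varepsilon$ forces $\dist<\varepsilon$, since $\|\cR\|_{bnuc}\le\|\cR\|_{bnuc,m,n}$) is the direction that survives without any promise.
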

\section{Open problems}\label{sec:oprb}
\emph{Open problems}
For the following problems assume that $2\le d,n$.
\begin{enumerate}
\item Is $\|\cB\|_{\infty,\C}=\|\cB\|_{spec}$ for all $\cB\in\rH_{\bn}$?
\item Is $\|\cB\|_{1,\C}=\|\cB\|_{aspec}$ for all $\cB\in \rA\rH_{n^{\times d}}$ for $d<n$?
\item 
Assume that $d=\lceil\frac{n}{2}\rceil$.  Is the computation of $\|\cB\|_{\infty,\C}$ is NP-hard in $n$ for $\cB\in \rA\rH_{n^{\times d}}$.   Is the assertion that $\cB\in\rA\rH_{n^{\times d},+,1}$ is within $\varepsilon$-distance from $\mathrm{Sepa}_{n^{\times d}}$ NP-hard?
\item Is $\|\cT\|_{bspec},\|\cT\|_{bnuc}$ are poly-time computable for $\cT\in \rH_{n^{\times d}}\cap \otimes^{2d}(\Q[\sqrt{-1}])^n$?
\end{enumerate}
\section*{Acknowledgment} 
\noindent
The author thanks Otfried G\"{u}hne for his useful comments.

\noindent
Part of this work was supported by the Simons Institute for the Theory of Computing, and conducted when the author was visiting the Institute in September 2025.

\bibliographystyle{plain}

\begin{thebibliography}{MMM}
\bibitem{AMM10} M. Aulbach, D. Markham and Mi. Murao,The maximally entangled symmetric state in terms of the geometric measure, \emph{New Journal of Physics}, 12 (2010), pp. 073025. 	

\bibitem{Ban38} S. Banach, \"Uber homogene Polynome in ($L^2$), \emph{Studia Math.}, \textbf{7} (1938), pp.~36--44.

\bibitem{BL01} H. Barnum, N. Linden,Monotones and invariants for multi-partite quantum states, J. Phys. A 34 (2001), no.35, 6787--6805.

\bibitem{Bel64} J. S.  Bell, On the Einstein-Podolsky-Rosen paradox, Physics Vol. 1, No. 3, pp. 195—200, 1964.

 \bibitem{BCR98} J. Bochnak, M. Coste and M.F. Roy,
 \emph{Real algebraic geometry}, Ergebnisse der Mathematik und ihrer Grenzgebiete (3), 36, Springer-Verlag, Berlin, 1998.

\bibitem{BFZ} W. Bruzda, S.Friedland, K. {\.Z}yczkowski, Tensor rank and entanglement of pure quantum states,  {\it Linear and Multilinear Algebra} 72 (2024), no. 11, 1796-1859. 

\bibitem{CHLZ12} B. Chen, S. He, Z. Li, and S. Zhang, Maximum block improvement and polynomial optimization, \emph{SIAM J. OPTIM.} 22 (2012),  87--107.

\bibitem{CCQS} L. Chen,  D. Chu, L. Qian,and Y. Shen,  Separability of completely symmetric states in a multipartite system,   Phys. Rev.  A 99, 032312 (2019).

\bibitem{CXZ10} L. Chen,  A. Xu and H. Zhu,  Computation of the geometric measure of entanglement for pure multiqubit states, \emph{Physical Review A},  82, 032301, 2010.
 
\bibitem{Cho09} E.Cho. Inner product of random vectors,{\em Int. J. Pure Appl. Math.}, 56(2):217--221, 2009.

\bibitem{Col75}G E. Collins: Quantifier elimination for the elementary theory of real closed fields by cylindrical algebraic decomposition, Second GI Conf. Automata Theory and Formal Languages, Springer LNCS 33, 1975.

\bibitem{DIG25} S. Denker, S.  Imai and O.  G\"uhne, Chiral symmetries and multiparticle entanglement, arXiv:2506.15609, 2025.

\bibitem{Der16} H. Derksen, On the nuclear Norm and the singular value
decomposition of tensors,  Found. Comput. Math. (2016) 16:779–811.

\bibitem{DFLW17}  H. Derksen,  S. Friedland, L.-H. Lim, and L. Wang, Theoretical and computational aspects of entanglement,  arXiv:1705.07160.

\bibitem{Dic54}  R. H. Dicke, Coherence in Spontaneous Radiation Processes, \emph{Physical Review.} 93 (1) (1954), 99--110.

\bibitem{DVC00} W. D\"ur, G. Vidal and J.I. Cirac, Th\bibitem{Frib} S. Friedland, \emph{Matrices: Algebra, Analysis and Applications}, World Scientific, 596 pp., 2016, Singapore.ree qubits can be entangled in two inequivalent ways, \emph{Phys. Rev. A.} 62 (2000), 062314.

\bibitem{EW01} T. Eggeling and R.F. Werner, Separability properties of tripartite states with $U\otimes U\otimes U$ symmetry,  \emph{Phys. Rev. A} 63, (2001), 042111.

\bibitem{EPR35} A. Einstein, B. Podolsky, and N. Rosen N, Can Quantum-Mechanical Description of Physical Reality Be Considered Complete?,  \emph{Phys. Rev.} 47, is. 10, (1935),  777--780.

\bibitem{Fri13} S. Friedland, Best rank one approximation of real symmetric tensors can be chosen symmetric,  \textit{Front. Math. China}, 8 (1) (2013), 19--40.



\bibitem{FK18} S. Friedland and T. Kemp,  Most Boson quantum states are almost maximally entangled,  \emph{Proceedings of Amer. Math. Soc.} 146, No.12, (2018), 5035--5049.

\bibitem{FL16} S. Friedland and L.-H. Lim,  The computational complexity of duality, \emph{SIAM Journal on Optimization}, 26, No. 4 (2016), 2378--2393.

\bibitem{FL18}  S. Friedland and L.-H. Lim, Nuclear norm of higher-order tensors,  \emph{Mathematics of Computation}, 87 (2018), 1255--1281.

\bibitem{FL20} S. Friedland and L.-H. Lim,  Symmetric Grothendieck inequality,  arXiv:2003.07345.

\bibitem{FW20} S. Friedland and L. Wang, Spectral norm of a symmetric tensor and its computation,  \emph{Mathematics of Computation}, 89 (2020), 2175--2215.

\bibitem{Gha10} S. Gharibian, Strong NP-hardness of the Quantum separability problem, \textit{Quantum Inf.\ Comput.}, \textbf{10} (2010), no.~3--4,  pp.~343--360.

\bibitem{GV13} G.H. Golub and C.F.  Van Loan, Johns Hopkins,  Matrix Computations,
 4-th edition, 2013.

\bibitem{GFE09} D. Gross, S. T. Flammia, and J. Eisert, Most Quantum States Are Too Entangled To Be Useful As Computational Resources, \emph{Phys. Rev. Lett.} 102, 190501, 2009.

\bibitem{Gro55} A. Grothendieck, Produits tensoriels topologiques et espaces nucl\'eaires, Mem. Amer. Math. Soc. No. 16 (1955), 140.

\bibitem{GB02} L.Gurvits and H.Barnum, Largest separable balls
around the maximally mixed bipartite quantum
state," Phys. Rev. A 66 062311 (2002).

\bibitem{Gur03}  L.~Gurvits, Classical deterministic complexity of Edmonds problem and quantum entanglement,\textit{Proc.\ ACM Symp.\ Theory Comput.} (STOC), \textbf{35}, pp.~10--19, New York, NY, ACM Press, 2003.

\bibitem{Ha21} K.-C.  Ha,  Comment on “Separability of completely symmetric states in a multipartite system'',   Phys. Rev.  A 104, 016401 (2021).

\bibitem{Has90} J. H\aa stad, Tensor rank is NP-complete, \emph{J. Algorithms}, 11 (1990),  644--654.

\bibitem{Har92} J. Harris, \emph{Algebraic Geometry: A First Course}, Springer, 1992.

\bibitem{HS00} A.Higuchi, A. Sudbery, How entangled can two couples get? \emph{Physics Letters A}, 273(4) (2000), pp. 213-217. 

\bibitem{HL13} C.J. Hillar and L.-H. Lim, Most tensor problems are NP-hard, \emph{J.\ Assoc.\ Comput.\ Mach.}, 60 (2013), no.~6, p.~45.

\bibitem{HJ13} R.A.  Horn and C.R. Johnson,  \emph{Matrix Analysis},  Cambridge University Press, 2013.

\bibitem{Hor96} M. Horodecki, P. Horodecki, and R. Horodecki, Separability of mixed states: necessary and sufficient conditions, \emph{J. Physics Letters A.} 223 (1996), 1--8.
\bibitem{Hor97} P. Horodecki,  Separability criterion and inseparable mixed states
with positive partial transposition,  Phys.  Lett. A 232 ( 1997), 333--339. 

\bibitem{Hubetall09} R. H\"ubener, M. Kleinmann, T.-C. Wei, C. Gonz\'alez-Guill\'en, and O. G\"uhne, The geometric measure of entanglement for symmetric states, \emph{Phys. Rev. A} 80 (2009), 032324. 

\bibitem{Lan12} J.M. Landsberg,  \emph{Tensors: geometry and applications}, Graduate Studies in Mathematics, 128, American Mathematical Society, Providence, RI, 2012. xx+439 pp.

\bibitem{LS15} Y. T.  Lee and A. Sidford, Efficient inverse maintenance
and faster algorithms for linear programming.
In FOCS, pages 230-249, IEEE Computer Society,2015.

\bibitem{LC14} L.-H. Lim and P. Comon, Blind multilinear identification, IEEE Trans. Inform. Theory 60 (2014), no. 2, 1260-1280.

\bibitem{MSE21} Mathematics stack exchange  Set of unit ball’s extreme points is not always closed, 

\noindent
https://math.stackexchange.com/questions/3970669/set-of-unit-ball-s-extreme-points-is-not-always-closed

\bibitem{MS65} T. S. Motzkin and E. G. Straus, Maxima for graphs and a new proof of a theorem of Tur\'an,Canad. J. Math. 17 (1965), 533–540. 

\bibitem{Per96} A. Peres, Separability Criterion for Density Matrices, \emph{Phys. Rev. Lett.} 77 (1996), 1413--1415.

\bibitem{Per04} D. P\'erez-Garcia,  Deciding separability with a fixed error,  Physics Letters A 330 (2004), 149-154.

\bibitem{Roc70} R. T. Rockafeller, Convex Analysis, Princeton Univ. Press 1970.

\bibitem{Rud00}  O. Rudolph,  A separability criterion for density operators
Journal of Physics A,  33 (2000),  no. 21, 3951–3955.

\bibitem{QC19} L. Qian and D. Chu,  Decomposition of completely symmetric states, Quantum Information Processing 18.7 (2019) \#208,  46 pp.

\bibitem{Sch35} E. Schr\"{o}dinger, Discussion of probability relations between separated systems, \emph{Mathematical Proceedings of the Cambridge Philosophical Society}, 31 is. 4, (1935),  555--563. 

\bibitem{Sch36} E. Schr\"{o}dinger, Probability relations between separated systems,  \emph{Mathematical Proceedings of the Cambridge Philosophical Society}, 32,  is. 3, (1936), 446--452.

\bibitem{Shi95} A. Shimony,  Degree of entanglement, Ann. New York Acad. Sci., 755 
(1995), 675–679.

\bibitem{TWP09} S. Tamaryan, T.-C. Wei, D. Park, Maximally entangled three-qubit states via geometric measure of entanglement, \emph{Phys. Rev.} A 80 (2009), 052315.

\bibitem{TG09} G. Toth and O.  G\"{u}hne,  Entanglement and permutational symmetry,
\emph{Phys.Rev.Lett.}  102 (2009),  170503.

 
 \bibitem{WG03} T.-C. Wei and P.M. Goldbart, Geometric measure of entanglement and applications to bipartite and multipartite quantum states, \emph{Phys. Rev.} A 68 (2003), 042307.

\bibitem{Wer89} R. F.  Werner, Quantum states with Einstein-Podolsky-Rosen correlations admitting a hidden-variable model,  Physical Review A. 40 (8): 4277-4281, 1989.

\bibitem{ZCH10} H. Zhu, L. Chen and M. Hayashi,  Additivity and non-additivity of multipartite entanglement measures,  \emph{ New J. Phys.}12 (2010), 083002 (41pp.)







 \appendix
\section{Comparison of  a general norm to Euclidean norm}\label{sec:eucgen}
 Let $\F$ be either the field of the real numbers $\R$ or complex numbers $\C$.  For $\x=(x_1,\ldots,x_n)^\top\in \F^n$ we define $\bx^*=(\bar x_1,\ldots,\bar x_n)$.   (Note that for $\x\in\R^n$ we have $\x^*=\x^\top=(x_1,\ldots,x_n)$.)  The inner product in $\F^n$ is given by $\langle \x,\y\rangle =\x^*\y$.  The Euclidean norm is given by $\|\x\|=\sqrt{\langle\x,\x\rangle}$.   A set $\rK\subset \F^n$ is convex,
 if for any $\x,\y\in\rK$ the interval $[\x,\y]:=\{\z=t\x+(1-t)\y: 0\le t\le1\}$ is in $\rK$. 
 Let $\Delta^m\subset \R^m$ be the simplex of probability vectors.   Clearly,  $\Delta^m$ is a convex set.
 For a given set $\rS\subset \F^n$,  the set cospan$(\rS):=\{\x=\sum_{i=1}^m a_i\x_i, (a_1,\ldots,a_m)\in\Delta^m, \x_i\in \rS, i\in[m], m\in\N\}$ is a convex set spanned by $\rS$.
 A point $\z\in \rK$ is called an extreme point of $\rK$ if for any $\x,\y\in\rK$ such that $\z\in[\x,\y]$ the equality $\x=\y=z$ holds.  
 
 Assume that $\rK\subset \R^n$ is a compact convex set.   Minkowski's Theorem states that the set of extreme points of $K$, denoted as $\rE(\rK)$, is the minimal set of $\rK$ that cospans $\rK$ \cite[page 427]{Roc70}.   
 Assume that $\F=\R$.   Caratheodory's theorem \cite{Roc70} states that every $\x\in\rK$ is cospanned by at most $n+1$ extreme points.  Assume that $\F=\C$.   By viewing $\C^n$ as $\R^{2n}$ we deduce that every $\x\in\rK\subset \C^n $ is cospanned by at most $2n+1$ extreme points.
 
 Assume that $\nu:\F^n\to \R_+:=[0,\infty)$ is a norm on $\F^n$.  That is
\begin{equation*}
\begin{aligned}
&\nu(\x)=0 \iff \x=\0,\\
&\nu(a\x)=|a|\nu(\x) \textrm{ for }a\in\F, \x\in\F^n,
&\nu(\x+\y)\le \nu(\x)+\nu(\y).
\end{aligned}
\end{equation*}
Denote $B_{\nu}(\ba,r)=\{\x\in\F^n: \nu(\x-a)\le r\}$.  Let $\rB(\ba,r):=\{\x\in\F^n:\, \|\x-\ba\|\le r\}$.   
 
The dual norm $\nu^*$ on $\F^n$  is given by 
\begin{equation}\label{defnustar}
\nu^*(\x)=\max_{\nu(\y)\le 1} \Re \langle \y,\x\rangle.  
\end{equation}
Recall that $\nu^{**}=\nu$,  and $\|\cdot\|^*=\|\cdot\|$.   

Denote by $\rE(\nu)$ the extreme points of the ball $\rB_{\nu}(\0,1)$.  Observe that $\rE(\nu)$ is a balanced set: $\zeta \rE(\nu)=\rE(\nu)$ for $\zeta\in\F,\|\zeta|=1$.  Hence,  \eqref{defnustar} is equivalent to
\begin{equation}\label{defnustar1}
\nu^*(\x)=\sup_{\y\in\rE(\nu)} \Re \langle \y,\x\rangle.  
\end{equation}
(Recall that the set $\rE(\nu)$ may not be compact \cite{MSE21}.)
Vice versa,  assume $\rE\subset \F^n$ is a bounded balanced set that spans $\F^n$.  Let $\nu$ be a norm whose unit ball is the closure of cospan$(\rE)$.   Then $\nu^*$ is given by  
\begin{equation}\label{defnustar2}
\begin{aligned}
&\nu^*(\x)=\sup_{\y\in\rE} \Re \langle \y,\x\rangle,\\
&\rE(\nu)\subseteq \textrm{closure}(\rE).
\end{aligned}
\end{equation}



The following lemma is well known and we bring its proof for completeness:
\begin{lemma}\label{dualem}
Let $\nu$ be a norm on $\F^n$.  Then
\begin{equation}\label{dualem1}
\|\y\|^2\le \nu(\y)\nu^*(\y).
\end{equation}
Assume that the linear functional $\varphi_{\y}: \F^n\to \R$ is given by $\varphi_{\y}(\x)=\Re\langle \y,\x\rangle$. 
Equality holds in \eqref{dualem1} holds  for $\y\ne\0$  if and only one of the following two equivalent conditions hold.   
\begin{enumerate} 
\item  The inequality $\varphi_{\y}(\x)\le \nu(\y)\nu^*(\y)$ is a supporting hyperplane of $\rB_{\nu^*}(\0,\nu^*(\y))$ at $\y$.
\item  The inequality $\varphi_{\y}(\x)\le \nu(\y)\nu^*(\y)$ is a supporting hyperplane of $\rB_{\nu}(\0,\nu(\y))$ at $\y$.
\end{enumerate}
\end{lemma}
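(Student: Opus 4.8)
The plan is to obtain the inequality directly from the definition \eqref{defnustar} of the dual norm, and then to show that each of the two supporting-hyperplane conditions is automatically valid as a half-space statement, so that it collapses to the single scalar equality $\|\y\|^2=\nu(\y)\nu^*(\y)$. No convexity machinery beyond the definition of $\nu^*$ and the reflexivity $\nu^{**}=\nu$ should be needed. First I would dispose of the trivial case $\y=\0$ and then assume $\y\ne\0$. To prove \eqref{dualem1}, I would test the defining maximum \eqref{defnustar} of $\nu^*(\y)$ against the admissible vector $\w=\y/\nu(\y)$, which satisfies $\nu(\w)=1$. Since $\langle\y,\y\rangle=\|\y\|^2\in\R$, this yields $\nu^*(\y)\ge\Re\langle\y/\nu(\y),\y\rangle=\|\y\|^2/\nu(\y)$, and multiplying through by $\nu(\y)>0$ gives $\|\y\|^2\le\nu(\y)\nu^*(\y)$.

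Next I would treat the equality characterization. The key point is that the half-space inequality appearing in each of conditions (1) and (2) holds over the whole ball, with the displayed constant $\nu(\y)\nu^*(\y)$ being exactly the maximum of $\varphi_\y$ over that ball. For condition (1), I would write any $\x$ with $\nu^*(\x)\le\nu^*(\y)$ as $\x=\nu^*(\y)\w$ with $\nu^*(\w)\le 1$, and use the conjugate symmetry $\Re\langle\y,\w\rangle=\Re\langle\w,\y\rangle$ together with $\nu^{**}=\nu$ to get $\max_{\nu^*(\x)\le\nu^*(\y)}\varphi_\y(\x)=\nu^*(\y)\,\nu^{**}(\y)=\nu(\y)\nu^*(\y)$. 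For condition (2), the analogous rescaling $\x=\nu(\y)\w$ with $\nu(\w)\le 1$ gives $\max_{\nu(\x)\le\nu(\y)}\varphi_\y(\x)=\nu(\y)\nu^*(\y)$ straight from \eqref{defnustar}. Thus in both cases the relevant ball lies in the closed half-space $\{\x:\varphi_\y(\x)\le\nu(\y)\nu^*(\y)\}$ unconditionally, so the hyperplane $\{\varphi_\y=\nu(\y)\nu^*(\y)\}$ is a supporting hyperplane at $\y$ precisely when $\y$ itself lies on it, i.e. when $\varphi_\y(\y)=\nu(\y)\nu^*(\y)$. Since $\varphi_\y(\y)=\Re\langle\y,\y\rangle=\|\y\|^2$, each of (1) and (2) is equivalent to the equality case of \eqref{dualem1}, and therefore equivalent to one another.

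The chief thing to watch — more a matter of careful bookkeeping than a genuine obstacle — is the handling of the Hermitian inner product when $\F=\C$: one must consistently invoke $\Re\langle\y,\x\rangle=\Re\langle\x,\y\rangle$ and the reflexivity $\nu^{**}=\nu$ so that the two distinct balls $\rB_{\nu^*}(\0,\nu^*(\y))$ and $\rB_\nu(\0,\nu(\y))$ yield the same supporting value $\nu(\y)\nu^*(\y)$. I should also record, for cleanliness, the standing convention that a supporting hyperplane of a convex set at a boundary point is a hyperplane through that point having the set on one closed side; once the half-space inclusions above are established, the equivalence of (1), (2), and equality in \eqref{dualem1} is immediate.
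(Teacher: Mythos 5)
Your proposal is correct and takes essentially the same approach as the paper: the inequality \eqref{dualem1} is obtained by testing $\y/\nu(\y)$ in the defining maximum of $\nu^*$, and the equality characterization reduces to observing that each ball lies unconditionally in the half-space $\{\x:\varphi_{\y}(\x)\le \nu(\y)\nu^*(\y)\}$, so that the supporting-hyperplane condition at $\y$ collapses to the scalar equality $\varphi_{\y}(\y)=\|\y\|^2=\nu(\y)\nu^*(\y)$. The only cosmetic difference is organizational: you verify both half-space inclusions symmetrically by rescaling together with $\nu^{**}=\nu$, whereas the paper proves condition (1) via a short case analysis (treating $\varphi_{\y}(\x)\le 0$ separately and normalizing to $\nu^*(\x)=\nu^*(\y)$) and then deduces condition (2) from (1) by interchanging $\nu$ and $\nu^*$.
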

\begin{proof}
Assume that $\y\ne \0$, and set $\y_1=\frac{1}{\nu(\y)}\y$.   The characterization \eqref{defnustar} yields that $\nu^*(\y)\ge \Re\langle \y_1,\y\rangle=\frac{\|\y\|^2}{\nu(\y)}$. This establishes the inequality \eqref{dualem1}.

We now consider the equality case in \eqref{dualem1}.   $(1)$. Suppose first that the equality case holds in \eqref{dualem1}.   Then $\varphi_{\y}(\y)=\nu(\y)\nu^*(\y)$.   Assume that $\x\in \rB_{\nu^*}(\0,\nu^*(\y))$.   We claim that 
\begin{equation}\label{phiine}
\varphi_{\y}(\x)\le \nu(\y)\nu^*(\y).  
\end{equation}
Suppose first that $\varphi_{\y}(\x)\le 0$.  Then \eqref{phiine} trivially holds.
Assume that $\varphi_{\y}(\x)>0$.
By considering $t\x, t\ge 1$ such that $\nu^*(t\x)=\nu^*(\y)$ we can assume without loss of generality that $\nu^*(\x)=\nu*(y)$.
Use \eqref{defnustar} to deduce that 
$$\varphi_\y(\frac{1}{\nu(\y)}\x)=\Re\langle \y_1,\x\rangle \le \nu^*(\x)=\nu^*(\y),$$
which proves \eqref{phiine}.

Assume now that $\varphi_\y(\x)\le \nu(\y)\nu^*(\y)$ is a supporting hyperplane of $\rB_{\nu^*}(\0,\nu^*(\y))$ at $\y$.  Then $\nu(\y)\nu^*(\y)=\varphi(\y)$.

As $\nu^{**}=\nu$ we deduce $(2)$ from $(1)$ from by replacing $\nu$ with $\nu^*$.
\end{proof}
%
\begin{theorem}\label{albetm}  Let
\begin{equation}\label{albedef} 
\alpha(\nu)=\min_{\|\x\|=1}\nu(\x), \quad \beta(\nu)=\max_{\|\x\|=1}\nu(\x).
\end{equation}
Then
\begin{equation}\label{albetm}  
\alpha(\nu^*)=\frac{1}{\beta(\nu)}, \quad \beta(\nu^*)=\frac{1}{\alpha(\nu)}.
\end{equation}
Furthermore, 
\begin{equation}\label{necalbeteq}
\begin{aligned}
&(\|\x\|=1)\wedge(\nu(\x)=\alpha(\nu))\Rightarrow \nu(\x)\nu^*(\x)=1,\\
&(\|\y\|=1)\wedge(\nu(\y)=\beta(\nu))\Rightarrow \nu(\y)\nu^*(\y)=1.
\end{aligned}
\end{equation}
In particular, the following implications hold for $\|\x\|=\|\y\|=1$:
\begin{equation}\label{maxmineq}
\begin{aligned}
&\alpha(\nu)=\nu(\x)\iff \beta(\nu^*)=\nu^*(\x),\\
&\beta(\nu)=\nu(\y)\iff\alpha(\nu^*)=\nu(\y).
\end{aligned}
\end{equation}
\end{theorem}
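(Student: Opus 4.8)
The plan is to base everything on the homogeneity bounds $\alpha(\nu)\|\x\|\le\nu(\x)\le\beta(\nu)\|\x\|$, valid for all $\x\in\F^n$ by scaling the definitions \eqref{albedef}. Reading these as inclusions of unit balls $\rB(\0,1/\beta(\nu))\subseteq \rB_{\nu}(\0,1)\subseteq\rB(\0,1/\alpha(\nu))$, I would first establish the reciprocity $\alpha(\nu^*)=1/\beta(\nu)$ and $\beta(\nu^*)=1/\alpha(\nu)$. From the right inclusion and the definition \eqref{defnustar}, $\nu^*(\x)=\sup_{\nu(\y)\le1}\Re\langle\y,\x\rangle\le\sup_{\|\y\|\le1/\alpha(\nu)}\Re\langle\y,\x\rangle=\|\x\|/\alpha(\nu)$, giving $\beta(\nu^*)\le1/\alpha(\nu)$; the left inclusion gives symmetrically $\nu^*(\x)\ge\|\x\|/\beta(\nu)$, hence $\alpha(\nu^*)\ge1/\beta(\nu)$. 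Applying these one-sided bounds to the norm $\nu^*$ in place of $\nu$ and using $\nu^{**}=\nu$ (so $\alpha(\nu^{**})=\alpha(\nu)$, $\beta(\nu^{**})=\beta(\nu)$) yields the reverse inequalities, and thus the two exact reciprocal identities.

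Next I would prove the equality statements \eqref{necalbeteq} by a single geometric tangency argument, uniform in the two cases. If $\|\z\|=1$ with $\nu(\z)$ equal to either $\alpha(\nu)$ or $\beta(\nu)$, then $\z$ is a common boundary point of the two nested convex bodies $\rB_{\nu}(\0,\nu(\z))$ and $\rB(\0,1)$: the $\nu$-ball lies inside the Euclidean ball in the $\alpha$-case and outside it in the $\beta$-case. In either configuration a supporting hyperplane of the \emph{outer} body at $\z$ is automatically a supporting hyperplane of the inner body at $\z$; since the Euclidean ball is smooth, its unique supporting hyperplane at $\z$ has outward normal $\z$, namely $\{\w:\Re\langle\z,\w\rangle=1\}$. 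Hence in both cases $\Re\langle\z,\cdot\rangle\le1$ supports $\rB_{\nu}(\0,\nu(\z))$ at $\z$, which is precisely condition (2) of Lemma~\ref{dualem}. Equivalently, the support value $\max_{\nu(\w)\le\nu(\z)}\Re\langle\z,\w\rangle=\nu(\z)\nu^*(\z)$ is attained at $\z$ and equals $\Re\langle\z,\z\rangle=\|\z\|^2=1$. Either way equality holds in \eqref{dualem1}, so $\nu(\z)\nu^*(\z)=1$, which is \eqref{necalbeteq}.

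Finally I would deduce the equivalences \eqref{maxmineq} by combining \eqref{necalbeteq} with the reciprocity and $\nu^{**}=\nu$. The forward implications are immediate: if $\nu(\x)=\alpha(\nu)$ then $\nu(\x)\nu^*(\x)=1$ forces $\nu^*(\x)=1/\alpha(\nu)=\beta(\nu^*)$, and if $\nu(\y)=\beta(\nu)$ then $\nu^*(\y)=1/\beta(\nu)=\alpha(\nu^*)$. For the reverse implications I would apply the already-proved \eqref{necalbeteq} to the dual norm $\nu^*$: a $\beta$-maximizer $\x$ of $\nu^*$ satisfies $\nu^*(\x)\nu^{**}(\x)=\nu^*(\x)\nu(\x)=1$, whence $\nu(\x)=1/\beta(\nu^*)=\alpha(\nu)$, and an $\alpha$-minimizer $\y$ of $\nu^*$ satisfies $\nu^*(\y)\nu(\y)=1$, whence $\nu(\y)=1/\alpha(\nu^*)=\beta(\nu)$.

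I expect the main obstacle to be the $\beta$-case (the maximizer) of \eqref{necalbeteq}: unlike the minimizer case, it does not follow from the one-sided norm bounds alone and genuinely requires the supporting-hyperplane input above. I would also take care that every step goes through over $\C$ via the identification $\C^n\cong\R^{2n}$, under which $\Re\langle\cdot,\cdot\rangle$ is the real inner product and the Euclidean ball remains a smooth convex body with a unique supporting hyperplane at each boundary point.
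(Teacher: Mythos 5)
Your proposal is correct and takes essentially the same route as the paper: the reciprocity identities come from the sharp two-sided bounds $\alpha(\nu)\|\x\|\le\nu(\x)\le\beta(\nu)\|\x\|$ together with biduality $\nu^{**}=\nu$, and the equality cases \eqref{necalbeteq} come from tangency of the nested balls $\rB_{\nu}(\0,\nu(\z))$ and $\rB(\0,1)$ at the common boundary point combined with uniqueness of the Euclidean supporting hyperplane, which is precisely the paper's argument (the paper merely treats the two tangency configurations separately, where you unify them into one statement about outer and inner bodies). Your explicit deduction of \eqref{maxmineq} by applying \eqref{necalbeteq} to $\nu^{*}$ correctly supplies the step the paper dismisses as straightforward.
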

\begin{proof} Observe that \eqref{albedef}  is equivalent to the sharp inequalities
\begin{equation}\label{nueucest}
\alpha(\nu)\|\x\|\le \nu(\x)\le \beta(\nu)\|\x\| \textrm{ for all }\x\in\F^n.
\end{equation}
Combine the  above inequality with  the equality $\|\x\|^*=\|\x\|$ to deduce 
$$\nu^*(\x) =\max_{\y\ne \0}\frac{|\langle \y,\x\rangle|}{\nu(\y)}\Rightarrow \frac{1}{\beta(\nu)}\|\x\|\le \nu^*(\x)\le \frac{1}{\alpha(\nu)} \|\x\|.$$
We claim that the above inequalities are sharp.  If not, apply the above inequalities to $\nu=\nu^{**}$ to obtain a contradiction to the sharp inequalities in \eqref{nueucest}.  This establishes \eqref{albetm}. 

We first prove the first implication in \eqref{necalbeteq}.   The inequality $\nu(\x)\ge \alpha(\nu)\|\x\|$ yields $\rB_{\nu}(0,\alpha(\nu))\subset \rB(0,1)$.
Assume that $\|\x\|=1$ and $\nu(\x)=\alpha(\nu)$.  That is,  and $\x\in\partial\rB_{\nu}(0,\alpha(\nu))\cap  \partial\rB(0,1)$.   Let $\varphi_\x=\Re \langle \x,\y\rangle$.  Then the supporting hyperplane of $\rB(0,1)$ at $\x$ is $\varphi_x(\y)\le 1$.  Hence,   $\varphi_x(\y)\le 1$ is also a supporting hyperplane of $\rB_{\nu}(0,\alpha(\nu))$ at $\x$.   As $\nu=\nu^{**}$, any supporting hyperplane of $\rB_{\nu}(0,\alpha(\nu))=\rB_{\nu}(0,\nu(x))$ at $\x$ is of the form $\varphi_\z\le \nu(\x)$ for some $z, \nu^*(\z)=1$.  That is,  $\z=\frac{1}{\nu^*(\x)}\x$.  Hence, 
$$\nu(\x)=\Re\langle \z,\x\rangle\Rightarrow \nu(\x)\nu^*(\x)=\langle \x,\x\rangle.$$
We now prove the second  implication in \eqref{necalbeteq}.    Assume that $\|\y\|=1$ and $\nu(\y)=\beta(\nu)$.  Then, $\rB_{\nu}(0,\beta(\nu))\supset \rB(0,1)$, and $\y\in\partial\rB_{\nu}(0,\beta(\nu))\cap  \partial\rB(0,1)$.   Let $\varphi_\z$ be the supporting hyperplane of 
$\rB_{\nu}(0,\nu(\y))$ at $\y$.  That is, $\varphi_\z(\x)=\Re\langle \z,\x\rangle$, where $\nu^*(\z)=1$, and $\varphi_\z(\x)\le \varphi_\z(\y)=\nu(\y)$ for $\x\in \rB_{\nu}(0,\nu(\y))$.   Hence,  $\varphi_{\frac{1}{\nu(\y)}\z}(\x)\le 1$ is a supporting hyperplane of $\rB(0,1)$ at $\y$.  Recall that the unique supporting hyperplane of $\rB(0,1)$ at $\y$ is $\varphi_\y(\x)=\Re\langle \y,\x\rangle \le 1$.  Therefore $\z=\nu(\y)\y$, and 
$$\langle \y,\y\rangle=1=\nu^*(\z)=\nu^*(\nu(\y)\y)=\nu(\y)\nu^*(\y).$$
Implications \eqref{maxmineq} follows straightforward from \eqref{necalbeteq}.
\end{proof}
%
\subsection{The $\nu$$\rank$ of a vector}\label{subsec:nurank}
\begin{definition}  Let $\nu$ be a norm on $\F^n$.   For a vector $\0\ne\x\in\F^n$ the $\nu$$\rank$ of $\x$ is the minimal $r$ such that the unit vector $\frac{1}{\nu(\x)}\x$ is a convex combination of $r$ extreme points of $\rB_{\nu}(0,1)$, denoted as $\nu$$\rank \x$.   Let $\nu$$\rank\0=0$.
The maximal $\nu$$\rank\x$ is denoted by $\max$$\nu\mathrm{rank}$.
\end{definition}
Caratheodory's theorem yields that
\begin{equation}\label{maxrnkin}
\mathrm{max}\nu\mathrm{rank}\le \begin{cases}
n+1 \textrm{ if }\F=\R,\\
2n+1 \textrm{ if }\F=\C.
\end{cases}
\end{equation}

Recall that a face $\bF$ of $\rB_{\nu}(\0,1)$ is a closed convex set $\bF\subset\partial\rB_{\nu}(\0,1)$  such that for any $\x\ne \y\in \rB_{\nu}(\0,1)$, such that if $[\x,\y]\cap \bF$ contains an interior point $z\in [\x,\y]$ then $[\x,\y]\subset \bF$.  A face $\bF$ is called exposed if there exists $\z\in\F^n, \nu^*(\z)=1$ such that 
\begin{equation}\label{defexpfce}
\bF=\{\x\in \F^n: \nu(\x)=1, \Re\langle\z,\x\rangle =1\}.
\end{equation}
It is straightforward to show that any maximal face of $\rB_{\nu}(\0,1)$ is exposed.
\begin{lemma}\label{charmaxnurank}
Let $\nu$ be a norm on $\F^n$.  Then
\begin{enumerate}
\item The set of the extreme points $\rE$ of $\rB_{\nu}(\0,1)$ is closed if and only if $\nu\mathrm{rank}$ is lower semicontinuous.
\item The maximum dimension of an exposed face of $\rB_{\nu}(\0,1)$ is an upper bound for $\mathrm{max}\nu\mathrm{rank}-1$.
\end{enumerate}
\end{lemma}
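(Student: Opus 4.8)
The plan is to treat the two statements separately, both by elementary convex geometry building on Carathéodory's theorem \eqref{maxrnkin} and the supporting-hyperplane description of exposed faces in \eqref{defexpfce}. Throughout I would use the reformulation that $\nu\mathrm{rank}$ is finite everywhere, and that $\nu\mathrm{rank}(\x)=1$ holds precisely when $\frac{1}{\nu(\x)}\x\in\rE$, together with the dual-norm inequality $\Re\langle\z,\y\rangle\le\nu^*(\z)\nu(\y)$ coming from \eqref{defnustar}.

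For part (1), consider first the direction ``$\rE$ closed $\Rightarrow$ lower semicontinuity''. I would take $\x_k\to\x$ (the case $\x=\0$ being trivial, as $\nu\mathrm{rank}(\0)=0$), pass to a subsequence realizing $r:=\liminf_k\nu\mathrm{rank}(\x_k)$, and write $\frac{1}{\nu(\x_k)}\x_k=\sum_{i=1}^r t_{i,k}\e_{i,k}$ with $\e_{i,k}\in\rE$ and $(t_{1,k},\dots,t_{r,k})\in\Delta^r$. Using compactness of $\Delta^r$, boundedness of $\rE\subset\rB_\nu(\0,1)$, continuity of $\nu$ (so $\nu(\x_k)\to\nu(\x)\neq 0$), and crucially the \emph{closedness} of $\rE$, a further subsequence converges to a representation $\frac{1}{\nu(\x)}\x=\sum_{i=1}^r t_i\e_i$ with each $\e_i\in\rE$, giving $\nu\mathrm{rank}(\x)\le r$. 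For the converse I would take $\e_k\in\rE$ with $\e_k\to\e$; then $\nu(\e)=1$ so $\e\neq\0$, and since $\nu\mathrm{rank}(\e_k)=1$, lower semicontinuity forces $\nu\mathrm{rank}(\e)=1$, i.e. $\e\in\rE$, proving $\rE$ closed.

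For part (2), fix $\x$ with $\nu(\x)=1$ and a minimal convex representation $\x=\sum_{i=1}^r t_i\e_i$, where $t_i>0$, $\e_i\in\rE$, and $r=\nu\mathrm{rank}(\x)$. Choose a supporting functional $\z$ at $\x$, namely $\nu^*(\z)=1$ with $\Re\langle\z,\x\rangle=1$, and let $\bF$ be the exposed face \eqref{defexpfce} it determines. The two key observations are: (i) since $\Re\langle\z,\e_i\rangle\le 1$ for every $i$ while $\sum_i t_i\Re\langle\z,\e_i\rangle=1$ with all $t_i>0$, each $\e_i$ must satisfy $\Re\langle\z,\e_i\rangle=1$, so $\e_i\in\bF$; and (ii) minimality forces $\e_1,\dots,\e_r$ to be affinely independent, since an affine relation would let one drop a point while keeping the coefficients nonnegative, exactly as in the proof of Carathéodory's theorem. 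Then $\mathrm{conv}(\e_1,\dots,\e_r)$ is an $(r-1)$-dimensional simplex contained in $\bF$, whence $\dim\bF\ge r-1$; maximizing over $\x$ yields $\mathrm{max}\nu\mathrm{rank}-1\le\max_{\bF}\dim\bF$.

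I expect the main obstacle to be the bookkeeping in part (2): verifying that a \emph{minimal} representation automatically produces affinely independent extreme points, and that these all lie in one common exposed face rather than merely on the boundary. Once (i) and (ii) are in place the dimension count is immediate. The limiting argument in part (1) is routine, except that closedness of $\rE$ is used precisely at the step where the limiting points $\e_i$ are asserted to be extreme, and I would make sure that usage is explicit.
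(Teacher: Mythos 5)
Your proposal is correct. Part (1) is essentially the paper's own proof: the same subsequence-compactness argument (compactness of $\Delta^r$ and of the closed bounded set $\rE$) for the direction ``$\rE$ closed $\Rightarrow$ lower semicontinuity,'' and the same one-line argument for the converse, which the paper merely phrases contrapositively (a non-closed $\rE$ gives a sequence of rank-one points converging to a point of rank $\ge 2$). In part (2) you take a genuinely different, though equivalent, route. The paper asserts that the convex hull of the extreme points $\x_1,\ldots,\x_r$ in a minimal representation lies in a maximal face $\cF$ (maximal faces being exposed, as noted just before the lemma), and then applies Carath\'eodory's theorem \emph{inside} $\cF$ to conclude $r\le\dim\cF+1$, i.e.\ it upper-bounds the rank by the face dimension. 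You instead pick a supporting functional $\z$ at $\x$, prove via the averaging argument $\sum_i t_i\Re\langle\z,\e_i\rangle=1$ with $\Re\langle\z,\e_i\rangle\le 1$ that every $\e_i$ lies in the exposed face $\bF$ it determines, and then use minimality of $r$ (the Carath\'eodory reduction step) to show the $\e_i$ are affinely independent, lower-bounding $\dim\bF\ge r-1$. Your version buys two things: it supplies an explicit justification for the containment of the representation in a single exposed face, which the paper asserts without proof, and it avoids invoking Carath\'eodory within the face (replacing it by the affine-independence argument, which is the engine of Carath\'eodory anyway); the paper's version is shorter once ``maximal faces are exposed'' and compactness of $\cF$ are granted. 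Both arguments are valid and yield the same bound.
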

\begin{proof}  \emph{(1)}  Assume that $\rE$ is not closed.  Then there exists a sequence of $x_i,i\in\rE, i\in\N$ such that $\lim_{i\to\infty}\x_i=x, \nu(\x)=1$,  such that 
$\x\not\in\rE$.  Clearly, $\nu\rank \x_i=1, i\in\N$ and $\nu\rank\x\ge 2$.  Hence $\nu$rank is not lower semicontinuous. 

Assume that $\rE$ is a closed set.  Let $\x\ne 0$, and assume that $\lim_{i\to\infty}\x_i=\x$, and $\lim_{i\to\infty}\nu\textrm{rank}\,\x_i=k\in\N$.   Hence,  $\nu\textrm{rank}\,\x_i=k$ for $i\ge N$.  Thus $\frac{1}{\nu(\x_i)}\x_i=\sum_{j=1}^k a_{i,j}y_{j,i}$, where $a_{i,j}>0,y_{j,i}\in\rE, j\in[k],\sum_{j=1}^k a_{i,j}=1$ and $i\ge N$.  By taking a subsequence of $i\ge N$ and using the compactness of $\rE$ and the interval $[0,1]$ we deduce that $\frac{1}{\nu(\x)}\x=\sum_{j=1}^k a_j\y_j$, where $a_j\ge 0, \y_j\in\rE, \sum_{j=1}^k a_j=1$.

\noindent\emph{(2)}
Assume that $\nu(\x)=1$, and max$\nu$rank$=\nu\rank\x=r$.  Thus $\x=\sum_{i=1}^r a_i\x_i$, where $a_i>0,\x_i\in\rE(\nu), i\in[r], \sum_{i=1}^r a_i=1$.   Observe that cospan$(\x_1,\ldots,\x_r)$ is contained in a maximal face $\cF$ of $\rB_{\nu}(\0,1)$, whose extreme points are $\cF\cap \rB(\0,1)$.  As $\cF$ is compact, Caratheodory's theorem yields that $\x$ is a convex combination of at most $\dim\cF+1$ extreme points of $\cF$.
\end{proof}




 












\end{thebibliography}

\end{document}